\keywords{} 
\newcommand \termsName[1] {\mathtt{T}_{#1}}
\newcommand \redName[1] {\mathtt{#1}}
\newcommand \ctxName[1] {\mathtt{#1}}
\newcommand \rterms {\termsName R}
\newcommand \pterms {\termsName P}
\newcommand \llset {\texttt T} 
\newcommand \purelterms {\texttt U} 
\newcommand \rewrel {\mathcal R}
\newcommand \redrule {\redName r}
\newcommand \lcalc {\lambda}
\newcommand \whl {\redName {whr}}
\newcommand \db {\redName{dB}}
\newcommand \acalc {\lambda a}
\newcommand \rcalc {\lambda R}
\newcommand \apprr {\redName{app}}
\newcommand \absrr {\redName{abs}}
\newcommand \varrr {\redName{var}}
\newcommand \distrr {\redName{dist}}
\newcommand \permlr {\pi}
\newcommand \ndlrdb {\redName {dB}}
\newcommand \ndlrdist {\redName {spl}} 
\newcommand \ndlrabs {\redName {ls}} 
\newcommand \nrsub {\redName{sub}} 
\newcommand \nrep {\redName R} 
\newcommand \whlr {\redName {name}} 
\newcommand \whdblr {\redName {n}\db} 
\newcommand \whslr {\redName {n}\nrsub} 
\newcommand \skelss {\redName {st}} 
\newcommand \ndlr {\redName {flneed}} 
\newcommand \purelr[1] {#1'}
\newcommand \factorise {\redName F}
\newcommand \whlrdbroot {(\textsc{db})}
\newcommand \whlrdbapp {(\textsc{appdb})}
\newcommand \whlrdbsub {(\textsc{subdb})}
\newcommand \whlrsubroot {(\textsc{s})}
\newcommand \whlrsubapp {(\textsc{apps})}
\newcommand \whlrsubsub {(\textsc{subs})}
\newcommand \skelsstop {(\textsc{ctx1})}
\newcommand \skelssdeep {(\textsc{ctx2})}
\newcommand \ec {\Diamond} 
\newcommand \fc {\ctxName C} 
\newcommand \lc {\ctxName L} 
\newcommand \hlc {\verdear{\ctxName L}} 
\newcommand \llc {\ctxName{LL}} 
\newcommand \ndc {\ctxName N} 
\newcommand \id {\mathtt I}
\newcommand \MM {\mathcal M}
\newcommand \MN {\mathcal N}
\newcommand \ans {\mathtt a}
\newcommand \sysWR {\cap R}
\newcommand \many {(\textsc{many})}
\newcommand \ruleAxR {(\textsc{ax})}
\newcommand \ruleAbsR {(\textsc{abs})}
\newcommand \ruleAnsR {(\textsc{ans})}
\newcommand \ruleAppR {(\textsc{app})}
\newcommand \ruleCutR {(\textsc{cut})}
\newcommand \lx {\lambda x}
\newcommand \ly {\lambda y}
\newcommand \lz {\lambda z}
\newcommand \app[2] {#1\!\cdot\!#2}
\newcommand{\letb}[3]{\text{\lstinline!let!}\ #1\ \text{\lstinline!=!}\ #2\ \text{\lstinline!in!}\ #3}
\newcommand \msub[2] {\{#1/#2\}}
\newcommand \esub[2] {[#1/#2]}
\newcommand \dist[2] {[#1/\!\!/#2]}
\newcommand \cut[2] {[#1 \triangleleft #2]}
\newcommand \inter {\uplus}
\newcommand \emult {\mult{\, }}
\newcommand \mult[1] {[#1]}
\newcommand \multunion {\sqcup}
\newcommand{\seq}[3]{\ensuremath{#1 \vdash #2:#3}}
\newcommand{\der}[3]{\ensuremath{#1 \vdash #2:#3}}
\newcommand{\dem}[4]{#1 \rhd \der{#2}{#3}{#4}}
\newcommand \rrule[1] {\mapsto_{#1}}
\newcommand \rew[1] {\rightarrow_{#1}}
\newcommand \rewp[1] {\twoheadrightarrow_{#1}^+}
\newcommand \rewn[1] {\twoheadrightarrow_{#1}}
\newcommand \skelbs[1] {\Downarrow^{#1}}
\newcommand \funcrel \Downarrow
\newcommand \obseq \equiv
\newcommand \fv[1] {\operatorname{fv}(#1)} 
\newcommand \ndv[1] {\operatorname{ndv}(#1)} 
\newcommand \ctx[2] {#1 \langle  #2 \rangle} 
\newcommand \ctxnc[2] {#1 \langle\!\langle  #2 \rangle\!\rangle} 
\newcommand \lv[2] {\operatorname{lv}_{#1}(#2)} 
\newcommand \isES[1] {\operatorname{ES}(#1)}
\newcommand \tmsz[1] {|#1|} 
\newcommand \nbocc[2] {\tmsz{#2}_{#1}} 
\newcommand \domlc[1] {\operatorname{dom}(#1)} 
\newcommand \dom[1] {\operatorname{dom}(#1)}
\newcommand \cmin[2] {#1 {\setminus\!\!\setminus}\, #2}
\newcommand \dersz[1] {\operatorname{sz}({#1})} 
\newcommand \dermeas[1] {\mathsf{D}\left(#1\right)} 
\newcommand \dermeasaux[2] {\mathsf{M}\left(#1, #2\right)} 
\newcommand \msetsz[1] {|#1|} 
\newcommand \maprl[1] {#1^\downarrow}
\newcommand \skel[2] {\{ \!\{ {#2}\}\!\}^{#1}}
\newcommand \askel[2] {\{\!\{\!\{{#2}\}\!\}\!\}^{#1}}
\newcommand \eqdef {\coloneqq}
\newcommand \tuple[1] {\langle #1 \rangle}
\newcommand \pair[2] {\tuple{#1, #2}}
\newcommand \omult[1] {[#1]}
\newcommand \deft[1] {\textbf{#1}}
\newcommand \gramTitle[1] {\mbox{(\textbf{#1})}}
\newcommand \subobj {\mathcal O}
\DeclareMathOperator \subesname {a}
\DeclareMathOperator \subdistname {b}
\newcommand \subes[2] {\subesname(#1,#2)}
\newcommand \subdist[1] {\subdistname(#1)}
\newcommand \subge {>^\subobj}
\newcommand \subgeq {\geq^\subobj}
\newcommand \subgemult {>^\subobj_{\texttt{MUL}}} 
\newcommand \subgeqmult {\geq^\subobj_{\texttt{MUL}}}
\newcommand \submeas[1] {\mathsf{CL}\left(#1\right)}
\newcommand \subgelex {>_{\texttt{LEX}}}
\DeclareMathOperator \subproj P
\newcommand \subka {\mathbb L}
\newcommand \suben {\mathbb S}
\DeclareMathOperator \level {fst}
\DeclareMathOperator \snd {snd}
\newcommand{\ih}{\textit{i.h.}}
\newcommand{\ie}{\textit{i.e.}}
\newcommand{\eg}{\textit{e.g.}}
\newcommand{\seeappendix}[1]{(see appendix on \autopageref{#1})}
\newcommand \iI {{i \in I}}
\newcommand \jJ {{j \in J}}
\newcommand \jJi {{j \in J_i}}
\newcommand \multii[1] {\mult{#1}_{\iI}}
\newcommand \interii {\inter_\iI}
\newcommand \Del \Delta
\newcommand \Gam \Gamma
\newcommand \sig \sigma
\newcommand \lam[2] {\lambda #1.#2}
\newcommand \es[2] {[#1/#2]}
\newcommand \ft \to
\newcommand{\neutg}{\overline{\texttt{Ne}}}
\newcommand{\normg}{\texttt{Ne}}
\newcommand{\nmneutg}{\overline{\texttt{Na}}}
\newcommand{\nmnormg}{\texttt{Na}}
\definecolor{LightGray}{gray}{.10}
\definecolor{verdelight}{RGB}{171,232,194}
\newcommand{\verdear}[1]{\colorbox{verdelight}{\ensuremath{#1}}}
\definecolor{naranjalight}{RGB}{255,204,153}
\newcommand{\anaranjear}[1]{\colorbox{naranjalight}{\ensuremath{#1}}}
\definecolor{celeste}{RGB}{178,255,255}
\newcommand{\celeastar}[1]{\colorbox{celeste}{\ensuremath{#1}}}
\newcommand{\sep}{\hspace{.5cm}}
\begin{document}

\title{Node Replication: Theory and Practice}
\author[D.~Kesner]{Delia Kesner}[a,c]
\author[L.~Peyrot]{Loïc Peyrot}[a]
\author[D.~Ventura]{Daniel Ventura}[b]

\address{Université Paris Cité, CNRS, IRIF, Paris, France}
\email{kesner@irif.fr, lpeyrot@irif.fr}

\address{Univ. Federal de Goiás, INF, Goiânia, Brazil}
\email{ventura@ufg.br}

\address{Institut Universitaire de France}

\begin{abstract}
  We define and study a term calculus implementing higher-order node replication.
It is used to specify two different (weak) evaluation strategies: call-by-name and fully
lazy call-by-need, that are shown to be observationally equivalent
by using type theoretical technical tools.
\end{abstract}

\maketitle

\section{Introduction}

Computation in the $\lambda$-calculus is based on higher-order
substitution, a complex operation being able to erase and copy terms
during evaluation.  Several formalisms have been proposed to model
higher-order substitution, going from explicit substitutions
(ES)~\cite{abadi90} (see a survey in~\cite{kesner09}) and labeled
systems~\cite{LevyPhD,blanc05} to pointer graphs~\cite{wadsworth71} or
optimal sharing graphs~\cite{lamping90}. The operational
  semantics used in each formalism to implement copying
  (\ie\ duplication) of terms  is not the same.

Meanwhile, the Curry-Howard isomorphism~\cite{soerensen06} uncovers a
deep connection between logical systems and term calculi. Also in this
framework, different ways to normalize proofs in different logical
systems correspond to different implementations of substitution.
Indeed, the process of \emph{full substitution} is somehow
  analogous to the normalization process in natural deduction, while
\emph{linear substitution} corresponds to cut elimination in
Proof-Nets~\cite{PNESAccattoli18}.  \emph{Node-by-node replication} is
based on a Curry-Howard interpretation of deep
inference~\cite{guglielmi10,gundersen13a}.  Let us illustrate these
different models using an example.

Indeed, suppose one wants to substitute all the free occurrences of some
variable $x$ in a term $t= \app x x$ by the term $u =\app y {(\app z
  w)}$ ($\app \_ \_$ denotes the
application constructor).
The variable substituted at each reduction step  is going to be highlighted in \celeastar{\mbox{light blue}}:
\begin{align}
  (\app {\celeastar x}{\celeastar x}) \esub x u
  &\rew{} \app u u \label{ex:fsub}\\
  (\app{\celeastar x} x) \esub x u
  &\rew{} (\app u {\celeastar x}) \esub x u
  \rew{} \app u u \label{ex:psub}\\
  (\app{\celeastar x}{\celeastar x}) \esub x {\app y {(\app w z)}}
  &\rew{} (\app{(\app y {\celeastar{x'}})}{(\app y {\celeastar{x'}})})
  \esub {x'} {\app w z}
  \rew{} \app u u \label{ex:fnrsub}
\end{align}

Full (or \emph{non-linear}) substitution \eqref{ex:fsub} is a
one-shot substitution, replacing simultaneously \emph{all} the free occurrences of $x$ in $t$ by the whole term $u$.
This notion is generally defined by induction on the structure of the term $t$.
Linear (or \emph{partial})
substitution \eqref{ex:psub} replaces \emph{one} free occurrence of $x$ at a time.  This
notion is generally defined by induction on the number of free occurrences of
$x$ in the term $t$.
Node replication \eqref{ex:fnrsub} is yet another approach
which replicates \emph{one} term-constructor of $u$
\emph{at a time}, instead of replicating $u$ as a whole.
This notion can be defined by
induction on the structure of the term $u$. 
In the Curry-Howard interpretation of deep inference, node replication is full:
all occurences of $x$ are replaced by a node of $u$. A
linear version of the node replication approach can be formally defined by
combining the last two models, although this does not
correspond to any logical system we are aware of:
\begin{equation}
  \label{ex:pnrsb}%
  \begin{alignedat}{1}
  (\app{\celeastar x} x) \esub x u
  &\rew{} (\app{(\app y {x'})}{\celeastar x}) \esub x {\app y {x'}} \esub {x'} {\app w z}
  \rew{} (\app{(\app y {\celeastar{x'}})}{(\app y {x'})}) \esub {x'}{\app w z}\\
  &\rew{} (\app{(\app y {(\app w z)})}{(\app y {\celeastar{x'}})}) \esub {x'}{\app w z}
  \rew{} \app{(\app y {(\app w z)})}{(\app y {(\app w z)})}
  = \app u u
  \end{alignedat}
\end{equation}
In this sense, node
  replication is orthogonal to the full/linear aspect of
  \emph{classical} substitution, since it can be 
  implemented by either full or linear operations, as explained above.
Unsurprisingly, different notions of
substitution give rise to different evaluation strategies.
Indeed, linear substitution is the common model in
well-known abstract machines for call-by-name and
call-by-value~(see \eg\ \cite{accattoli14b}), while (linear) node replication
can be used to implement fully
lazy sharing~\cite{wadsworth71}.  However, node replication, originally
introduced to implement optimal graph
reduction in a graphical formalism, has only
been studied from a Curry-Howard perspective by means of a term
language known as the atomic
$\lambda$-calculus~\cite{gundersen13a}.

\paragraph{The Atomic Lambda-Calculus.}
Logical aspects of intuistionistic deep inference are captured by the
atomic $\lambda$-calculus $\acalc$~\cite{gundersen13a}, where copying
of terms proceeds \emph{atomically}, \ie\ node by node, similar to the
optimal graph reduction of Lamping~\cite{lamping90}. The atomic
$\lambda$-calculus is based on \emph{explicit control of resources},
such as erasure and duplication. Its operational semantics explicitly
handles the structural constructors of weakening and contraction, as
in the calculus of resources $\lambda {\tt
  lxr}$~\cite{kesner07,kesner11}.  As a result,
understanding the meta-properties of the
term-calculus, in a higher-level, and its application to concrete
implementations of reduction strategies in programming languages, turn
out to be quite difficult.  In this paper, we take one step back, by
studying the paradigm of \emph{node replication} based on
\emph{implicit}, rather than \emph{explicit}, weakening and
contraction.  This gives a new concise formulation of node replication
which is simple enough to model different programming languages based
on reduction strategies.

\paragraph{Call-by-Name, Call-by-Value, Call-by-Need.}

The theory of programming is usually based on some notion of
\emph{calculus}, which can be often seen as a higher-order rewriting
system. A calculus
  definition results in general in a
non-deterministic and unrestricted relation: a term can be reduced in
different ways, some of them being more efficent than others.  When
implementing  programming languages based on these theories, a
specific (deterministic) \emph{reduction strategy} is necessary to provide a
concrete mechanism to evaluate a program. 

In the specific case of functional programming, the theory is given by the $\lambda$-calculus, giving rise to different reduction strategies used by different functional programming languages.

\emph{Call-by-name} is used to implement programming languages in which
arguments of functions are first copied, then evaluated.  This is frequently
expensive, and may be improved by \emph{call-by-value}, in which arguments are
evaluated first, then consumed.
The difference can be illustrated by the term $t=\Delta(\id \id)$, where
$\Delta=\lx.xx$ and $\id=\lz.z$: call-by-name first duplicates the
argument $\id \id$, so that its evaluation is also duplicated, while
call-by-value first reduces $\id \id$ to (the value) $\id$, so that duplications
of the argument do not cause any duplicated evaluation.  It is not always the
best solution, though, because evaluating erasable arguments is useless.
Compare for instance $(\lx.z)(\id\id) \rew\beta (\lx.z) \id
\rew\beta z$ to $(\lx.z)(\id\id) \rew\beta z$.

\emph{Call-by-need}, instead, combines the best of call-by-name and
call-by-value: as in call-by-name, erasable arguments are not evaluated at all,
and as in call-by-value, reduction of arguments occurs at most once.
Furthermore, call-by-need implements a
\emph{demand-driven} evaluation, in which erasable arguments are never
needed (so they are not evaluated), and non-erasable arguments are
evaluated only if needed.  Technically, some sharing mechanism is
necessary, for example by extending the $\lambda$-calculus with explicit
substitutions/let constructs~\cite{ariola97}. Then, any $\beta$-reduction
step can be
decomposed in at least two steps: one creating an explicit (pending)
substitution, and the other ones (linearly)
substituting \emph{values}.  Thus for example,
$(\lx.xx)(\id\id)$ reduces to  $ (xx) \es x {\id\id}$,
and the substitution argument is thus evaluated in order to find  a value before
performing the linear substitution.

Even when adopting this wise evaluation scheme, there are still some
unnecessary copies of redexes: while only \emph{values}
(\ie\ abstractions) are duplicated, they may contain redexes as subterms,
\eg\ $\lz.z(\id\id)$ whose
subterm $\id \id$ is a redex. Duplication of such values
might cause redex duplications in \emph{weak} (\ie\ when evaluation is forbidden inside abstractions) call-by-need.
This happens in particular in the \emph{confluent} variant of weak reduction in~\cite{levy99}:
\[ \begin{array}{llllll}
  (\lx.xx)(\lz.z(\id\id))
  &\rew{} & (xx) \es x {\lz.z(\id\id)} 
  & \rew{} & ((\lz.z(\id\id)) x )\es x {\lz.z(\id\id)}
                     & \\
  & \rew{} & (z(\id\id)) \es z x \es x {\lz.z(\id\id)} & \rew{}
                                 & (x(\id\id)) \es x {\lz.z(\id\id)} & \\
  &  \rewn{} & {(\id\id)(\id\id) } & \rewn{} & \id &  \\
\end{array} \]

\paragraph{Full laziness.}
Alas, it is not possible to keep all values shared forever,
typically when they potentially contribute to the creation of a future
$\beta$-reduction step. The key idea to gain efficiency is then
to keep the subterm $\id\id$ as a \emph{shared} redex.
Therefore, the  value $\lz.z(\id\id)$
to be copied is split into two separate parts. The  first one, called
\emph{skeleton}, contains the minimal information
preserving the
bound structure of the value, \ie\ the linked structure between the binder and
each of its bound variables.  In our example, this is the term $\lz.zy$, where $y$ is a fresh variable. The second one is a multiset
of \emph{maximal free expressions} (MFE), representing all the
shareable expressions (here only the term $\id\id$).
Only the skeleton is then copied, while
the problematic redex $\id \id$ remains shared:
\[ (\lx.xx)(\lz.z(\id\id))
\rew{} (xx) \es x {\lz.z(\id\id)} \rew{} ((\lz.zy) x) \es
x {\lz.z y} \es y {\id \id} \]
When the subterm $\id\id$ is
needed ahead, it is first reduced inside the explicit substitution,
as it is usual in  call-by-need, thus
avoiding to compute the redex twice.
This optimization is called \emph{fully lazy sharing}
and is due to Wadsworth~\cite{wadsworth71}.

As shown by Balabonski~\cite{balabonski13}, any weak strategy
  of the $\lambda$-calculus which is not using sharing is undecidable,
  as for example the \textit{innermost needed reduction}. However, in
the confluent weak setting evoked earlier~\cite{levy99}, the fully
lazy optimization is even optimal in the sense of
Lévy~\cite{levy80}. This means that the strategy reaches the weak
normal form in the same number of $\beta$-steps as the shortest
possible weak reduction sequence in the usual $\lambda$-calculus
without sharing.  Thus, fully lazy sharing turns out to be a
\emph{decidable} optimal strategy

\paragraph{Quantitative Types}
Intersection types were introduced as \emph{(denotational)}
    models capturing computational properties of functional
  programming in a broader
  sense~\cite{coppo78,CoppoDezaniVenneri81,BarendregtCoppoDezani83}.
  They extend simple types with a new
  constructor $\cap$, thus allowing to support
  polymorphism in a finitary way: a
  program $t$ is typable with $\sigma \cap \tau$ if $t$ is typable
  with both types $\sigma$ and $\tau$ independently. Originally,
  intersection enjoys associativity, commutativity, and in particular
  idempotency (\ie\ $\sigma \cap \sigma = \sigma$).  By changing to a
  \emph{non-idempotent} intersection constructor, one naturally comes
  to represent such a type by a
  multiset, also known as
  \emph{multi-type}. Idempotent as well as
    non-idempotent types allow for a characterization of several
  operational properties of programs, \eg\ termination
    of different evaluations strategies can be characterized by
    typability in some appropriate intersection type system. There
      is however a major difference between the two: while idempotent
      types provide \emph{qualitative} information, non-idempotent
      ones also provide \emph{quantitative} knowledge.
     More precisely, it is not only possible to prove
  that typability in one typing system characterizes
  termination of some particular evaluation strategy, \ie\ that a
  program is terminating if and only if it is typable in the corresponding system, but also an {\it
    upper bound} or {\it exact measure} for the time needed for its
  evaluation can be derived from its typing information
  (see~\cite{bucciarelli17} for a survey).

\paragraph{Contributions.}

The first contribution of this paper is a term
calculus implementing (full) node replication (\autoref{sec:calculus}) and internally
encoding skeleton extraction (\autoref{sec:ndlr}). We study some of
its main operational properties: termination of the
substitution calculus, confluence, and its relation with the $\lambda$-calculus.

Our second contribution is the use of the node replication paradigm to
give an alternative specification of two evaluation
strategies usually described by means of full or linear substitution:
call-by-name (\autoref{sec:whlr}) and weak fully lazy reduction
(\autoref{sec:ndlr}), based on the key notion of skeleton.
The former can be related to (weak) head
reduction, while the latter is a fully lazy version of (weak)
call-by-need. In contrast to other implementations of fully lazy
reduction relying on (external) meta-level definitions, our
implementation is based on formal operations internally defined over
the syntax of the calculus.

Our implementation of fully lazy reduction is based on
  a class of \emph{restricted} terms, called $\purelterms$,
 which simplifies the formal reasoning. This restriction is not ad-hoc in the sense it is stable through evaluations based on weak strategies, i.e. is an invariant property of evaluations starting from pure $\lambda$-terms.

Furthermore, while it is known that call-by-name and call-by-need
specified by means of full/linear substitution are observationally
equivalent~\cite{ariola97}, it was not clear at first whether the same
property would hold in our case. Our third contribution is a proof of this
result (\autoref{sec:observational}) using semantical tools coming
from proof theory --notably intersection types.  This proof
technique~\cite{kesner16} considerably simplifies other
approaches~\cite{ariola97,maraist98} based on syntactical
tools. Moreover, the use of intersection types has another important consequence: 
  standard call-by-name and call-by-need turn out to be
  observationally equivalent to call-by-name and call-by-need
  with node replication, as
  well as to the more semantical notion of neededness (see~\cite{KesnerRV18}).

Intersection types provide
quantitative information about fully lazy evaluation so that a fourth
contribution of this work is a measure based on type
derivations which turns out to be an upper bound to the
length of reduction sequences to normal forms in a fully lazy
implementation. 

More generally, our work bridges the gap between the Curry-Howard
theoretical understanding of node replication and concrete
implementations of fully lazy sharing.
Related works are presented in the concluding~\autoref{sec:conclusion-atomic}.

\section{A Calculus for Node Replication}%
\label{sec:calculus}

We now present the $\rcalc$-calculus (as in $\nrep$eplication).
From a syntactical point of view, we add two new constructors to the
$\lambda$-calculus: explicit substitutions and explicit distributors. From
an operational point of view, we provide a rewriting
system on $\rcalc$-terms together with a notion of \emph{levels}
which will play a key role in the next sections.

\subsection{Syntax}
\label{sec:syntax}

Given a countably infinite set of variables $x,y,z,...$, we consider the following grammars.
\[ \begin{array}{l@{\hspace{-.5mm}}rcll}
  \gramTitle{Terms} & t, u, r, s & \Coloneqq & x \mid \lx.t \mid t u \mid
  t \es x u \mid t \dist x {\ly.u}\\
  \gramTitle{Pure Terms}    &  p, q & \Coloneqq & x \mid \lx.p
  \mid p q\\
  \gramTitle{Term Contexts} & \fc & \Coloneqq &
  \ec \mid \lx.\fc \mid \fc t \mid t \fc \mid
  \fc \es x t \mid \fc \dist x {\ly.u} \mid
  t \es x \fc \mid t \dist x {\ly.\fc}\\
  \gramTitle{List Contexts} & \lc & \Coloneqq &
  \ec \mid \lc \es x u \mid \lc \dist x {\ly.u}
\end{array} \]
The set of terms is denoted by $\rterms$ and the subset of \deft{pure} terms is
denoted by $\pterms$. 
We write $\id$ for the identity function $\lx.x$, and $\tmsz{t}$ for the number of symbols of the term $t$.

The construction $\esub x u$ is an \deft{explicit substitution}, which
allows the \emph{sharing} of the subterm $u$.  A term $t \esub
x u$ can be simply seen as a more concise notation
for a \emph{let-binding} $\letb x u t$, where $u$ is shared across the
free occurences of the variable $x$ in $t$.  The second construction
  $\dist x {\ly.u}$ is an \deft{explicit distributor} (or
  simply \emph{distributor}), which is used specifically in the
  duplication of abstractions.

An \deft{explicit cut} is denoted by $\cut x u$, which is either $\esub x u$, or
$\dist x u$ when $u$ is $\ly.u'$, typically to factorize some definitions and
proofs where explicit substitutions and distributors behave similarly.
A characterization function $\isES \cdot$ on explicit cuts distinguishes these two cases: $\isES{\cut x u} = 1$ if $\cut x u=\esub x u$, and $0$ otherwise.
The application constructor associates to the left, \ie\ $t_1t_2  \ldots t_n$ means $((t_1t_2) \ldots t_n)$. Explicit cuts affects the rightmost constructor, \ie\ $tu\cut x s$ means $t (u\cut x s)$, otherwise we will  write parenthesis like $(t u)\cut x s$. 

Two notions of \deft{contexts} are used.  Term contexts $\fc$ extend
those of the $\lambda$-calculus to explicit cuts. A term context is said to be \deft{pure} if it does not contain any explicit cut.  List contexts $\lc$ denote an
arbitrary list of explicit cuts.  They will be used in particular to
implement reduction at a \textit{distance} (see \autoref{sec:operational-semantics}).

\deft{Free} and \deft{bound variables} of terms are defined as
expected, in particular $\fv {\lx.t} \eqdef \fv t \backslash \{x\}$
and $\fv {t \cut x u} \eqdef \fv t \backslash \{x\} \cup \fv u$.
We write $\nbocc x t$ to denote the number of free occurrences of
the variable $x$ in the term $t$.
These notions are extended
to contexts as expected. In particular $\fv\ec \eqdef \emptyset$
and $\fv{\lc \cut x u} \eqdef \fv\lc \backslash \{x\} \cup \fv{u}$.
The \deft{domain} of a list context is defined as
$\domlc\ec \eqdef \emptyset$
and $\domlc{\lc \cut x u} \eqdef \domlc{\lc} \cup \{ x \} $.

The standard notion of \emph{$\alpha$-conversion}~\cite{barendregt85} on
$\lambda$-terms is extended to
the full set of $\rcalc$-terms as expected,
and we systematically assume $\alpha$-conversion
whenever necessary to avoid capture of free
variables. We write $t \msub x u$ for the \deft{meta-level
substitution} operation simultaneously replacing all
the free occurrences of the variable $x$ in $t$ by the term $u$.

The \deft{application of a context
$\fc$ to a term $t$}, written $\ctx\fc t$, replaces the hole $\ec$ of $\fc$ by $t$.  Thus, for
instance, $\ctx\ec t = t$ and $\ctx{(\lx.\ec)} t = \lx.t$.
This operation is not defined modulo
$\alpha$-conversion, so that capture of variables eventually
happens, such as in the second example if $x \in \fv{t}$.  Thus,
another kind of application of contexts to
terms is also considered, identified by double brackets, and is only defined if
there is no capture of variables. For
instance,
$\ctx{(\ly.\ec)} x = \ly.x$ while $\ctxnc{(\lx.\ec)} x$ is undefined.

\subsection{Operational semantics}
\label{sec:operational-semantics}

Before presenting the dynamics of the calculus, let us introduce some notations
and definitions concerning reduction.
Let $\rew\rewrel$ be any reduction
relation. We write $t \rew\rewrel
  t'$ when  $(t,t') \in \;\rew\rewrel$, and $\rewn \rewrel$
(resp. $\rewp{\rewrel}$) for the reflexive-transitive (resp. transitive) closure
of $\rew\rewrel$.  A term $t$ is said to be \deft{$\rewrel$-confluent}
iff $t \rewn\rewrel u$ and $t \rewn\rewrel s$ implies there is $t'$
such that $u \rewn\rewrel t'$ and $s \rewn\rewrel t'$. The relation
$\rewrel$ is \deft{confluent} iff every term is $\rewrel$-confluent.
A term $t$ is said to be in \deft{$\rewrel$-normal form} (written also
$\rewrel$-nf) iff there is no $t'$ such that $t \rew\rewrel t'$. A
term $t$ is said to be \deft{$\rewrel$-terminating} or
\deft{$\rewrel$-normalizing} iff there is no infinite
$\rewrel$-sequence starting at $t$.  The reduction $\rewrel$ is said
to be \deft{terminating} iff every term is $\rewrel$-terminating.

Explicit substitutions may block some expected \emph{meaningful}
(\ie\ non-structural) reductions.  For instance, $\beta$-reduction is
blocked in $(\lx.t) \es y s u$ because an explicit substitution
lies between the function and its argument.  This situation does not happen in graphical representations
(\eg~\cite{girard96}), but it is typical in the sequential structure
of \emph{term} syntaxes.

There are at least two ways to handle this issue. The first one
is based on  \emph{structural/permutation} rules, as
in~\cite{gundersen13a}.  Therefore, in the previous example, the
substitution is first pushed outside the application node by means of
a permutation rule, as $(\lx.t) \es y s u \rew{} ((\lx.t) u) \es y s$,
so that $\beta$-reduction is finally unblocked.  The second,
less elementary, possibility is given by an operational semantics
\emph{at a distance}~\cite{accattoli10,accattoli14a}, where the
$\beta$-redex can be fired by a rule like $\ctx\lc{\lx.t} u
\rew{} \ctx\lc{t \es x u}$, where $\lc$ is an arbitrary list
context. The distant paradigm is therefore used to gather meaningful
and permutation rules in only one reduction step.  In $\rcalc$, we
combine these two complementary technical tools.  First, we consider
the following permutation rules:
\[ \begin{array}{llll}
  \ly.t \cut x u & \rrule{\permlr} & (\ly.t) \cut x u & \text{ if } y \notin \fv u\\
  t \cut x u s & \rrule{\permlr}  & (ts) \cut x u & \text{ if } x \notin \fv s \\
  t s \cut x u & \rrule{\permlr}  & (ts) \cut x u & \text{ if } x \notin \fv t\\
  t \cut x {u \cut y s} & \rrule\permlr & t \cut x u \cut y s & \text{ if } y \notin \fv t
\end{array} \]
The reduction relation $\rew\permlr$ is defined as the closure of the four
rules $\rrule\permlr$ under \emph{all} contexts.

\begin{exa}
  Let $t = x \esub x {w \dist {z_1}{\id} (\ly.y \esub {z_2}{z_3})}$.
  Both inner explict cuts $\dist {z_1}{\id}$ and $\esub {z_2}{z_3}$ are pushed
  outside the main explicit substitution, which results in a pure term followed
  by a list of explicit cuts.
  \[ \begin{array}{llllll}
    t
    &\rew\permlr & x \esub x {w \dist {z_1}{\id} (\ly.y) \esub {z_2}{z_3}} &
    &\rew\permlr & x \esub x {(w \dist{z_1}{\id} (\ly.y)) \esub{z_2}{z_3}}\\
    &\rew\permlr & x \esub x {w \dist{z_1}{\id}(\ly.y)} \esub{z_2}{z_3} &
    &\rew\permlr & x \esub x {(w(\ly.y)) \dist{z_1}{\id}} \esub{z_2}{z_3}\\
    &\rew\permlr & x \esub x {w(\ly.y)} \dist{z_1}{\id} \esub{z_2}{z_3}
  \end{array} \]
\end{exa}

Permutations do not hold any computational content, only a structural one.  Indeed, all terms in the
reduction sequence above could be naturally translated
to the same graphical notation. In order to highlight the
computational content of node replication, we combine distance and
permutations within a single rewriting semantics.
In addition, because operational semantics at a distance is
directly inspired from graph formalisms, there is a better  correspondence between the syntactic representation of terms and the graphical
representations of their associated reduction
notion~\cite{kesner07,accattoli18}.
The resulting  reduction relation
$\rcalc$ is given by
the closure under all  contexts of the following rules:
\label{d:nrep}%
\[ \begin{array}{lllll}
  \ctx\hlc{\lx.t} u & \rrule{\db}&   \ctx\hlc{t \es x u} \\
  t \es x {\ctx\hlc{us}} & \rrule{\apprr} & \ctx\hlc{t \msub x {yz} \es y u
    \es z s} &  \mbox{where $y$ and $z$ are fresh} \\
  t \es x {\ctx\hlc{\ly.u}} & \rrule{\distrr} & \ctx\hlc{t \dist x {\ly.z \es z u}} & \mbox{where $z$ is fresh}  \\
  t \dist x {\ly.u} & \rrule{\absrr}& \ctx\hlc{t \msub x {\ly.p}} & \mbox{where }
  u \rewn\permlr \ctx\hlc p \mbox{ and } y \notin \fv\hlc \\
  t \es x {\ctx\hlc y} & \rrule{\varrr} & \ctx\hlc{t \msub x y} \\
\end{array} \]
where the distant contexts are highlighted in \verdear{green} to make it easier to read.
The \deft{$\rcalc$-calculus} is defined by the set of terms  $\rterms$ equipped
with this reduction relation. 
In the five rules just above, a list context $\lc$ is pushed outside
the term. We assume in all these cases that there is no capture of
variables caused by this transformation, \eg\ in rule $\db$ this means
that $\domlc\lc \cap \fv{u} = \emptyset$.  Apart from the \emph{distant
${\redName B}$eta} rule $\db$ used to fire $\beta$-reduction, there are
four substitution rules used to copy nodes of \emph{pure} terms
pushing outside all the cuts surrounding the node to be copied.
Rule~$\apprr$ copies one application node, while
rule~$\varrr$ copies one variable node.
To copy abstractions, both rules $\distrr$ and $\absrr$ are needed. 
Notice that the (meta-level and capture-free) substitution is \emph{full}, in the sense that
it is performed simultaneously on all ocurrences of the free variable $x$ at the same time.

The reduction relation $\rrule{\nrsub}$ is defined as
  $\rrule\apprr \cup \rrule\distrr \cup \rrule\absrr \cup
  \rrule\varrr$, while the $\redName s$ubstitution relation $\rew
\nrsub$ (resp.  distant Beta relation $\rew \db$) is defined as the
closure of $\rrule{\nrsub}$ (resp. $\rrule \db$) under \emph{all}
contexts, and the reduction relation $\rew\nrep$ is the union of $\rew
\nrsub$ and $\rew \db$.

\begin{exa}
  \label{ex:lrapp}%
  This example illustrates the use of rules $\apprr$ and $\varrr$ to
  replicate application and variables nodes,
  as well as rule $\db$ to fire reduction.
  No distance is involved in this example.
  \[\begin{array}{lll}
    (\lx.xx)(yz)
    &\rew\db (xx) \esub x {yz}
    &\rew\apprr ((x_1x_2)(x_1x_2)) \esub {x_1} y \esub {x_2} z\\
    &\rew\varrr (yx_2)(yx_2) \esub {x_2} z
    &\rew\varrr (yz)(yz)
  \end{array} \]
\end{exa}

\begin{exa}
  \label{ex:lrabs}%
  Replication of abstractions is more involved, as illustrated below.
  \begin{align}
    (\lx.xx)(\ly.(ww)y)
    &\rew\db (xx) \esub x {\ly.(ww)y}\\
    &\rew\distrr (xx) \dist x {\ly.z \esub z {(ww)y}} \label{ex:lrabs-dist}\\
    &\rew\apprr (xx) \dist x {\ly.(z_1z_2) \esub {z_1}{ww} \esub {z_2} y}
    \label{ex:lrabs-app}\\
    &\rew\varrr (xx) \dist x {\ly.(z_1y) \esub {z_1}{ww}} \label{ex:lrabs-var}\\
    &\rew\apprr (xx) \dist x {\ly.((z_3 z_2)y) \verdear{\esub {z_3}{w} \esub {z_2}{w}}}
    \label{ex:lrabs-appw}\\
    &\rew\absrr ((\ly.(z_3z_2)y)(\ly.(z_3z_2)y)) \esub {z_3}{w} \esub{z_2} w
      \label{ex:lrabs-abs}\\
    &\rew\varrr ((\ly.(wz_2)y)(\ly.(wz_2)y)) \esub{z_2} w
      \label{ex:lrabs-varwone}\\
    &\rew\varrr (\ly.(ww)y)(\ly.(ww)y) \label{ex:lrabs-varwtwo}
  \end{align}
\end{exa}

The specificity in copying an abstraction $\ly.u$ is due to the
(binding) relation between the binder $\ly$ and all the free
occurrences of $y$ in its body $u$.  Abstractions are thus copied in
two stages.  The first one is implemented by the rule~$\distrr$, which
creates a distributor in which a potentially replicable abstraction is
placed, while moving its body inside a new explicit substitution.
Thus, in line~\eqref{ex:lrabs-dist}, we create a distributor
over the abstraction $\ly$, while $(ww)y$ is placed inside an explicit
substitution $\esub z {(ww)y}$.  Notice that this substitution is in
the scope of abstraction $\ly$.  The distributor is marking the
fact that the abstraction needs to be further duplicated.
There are then two kinds of potentially replicable nodes shared in the body of
the corresponding abstraction.

\begin{enumerate}
    \item All free occurences of the
      variable bound by the main abstraction (here $\ly$)
      must be replicated by means of the rule~$\varrr$ \eqref{ex:lrabs-var},
      so as to keep the correct binding structure.
      This means that all the nodes leading to these
      occurences must also be duplicated:
      this is why rule~$\apprr$ is first used in~\eqref{ex:lrabs-app}.
    \item All nodes which are neither a free
      occurence of the bound variable nor in the path to
      such a node can be arbitrarily copied inside the distributor
     (\eg\ the internal application node in line~\eqref{ex:lrabs-appw}),
      or replicated later (\eg\ the two variable nodes $w$
      in \eqref{ex:lrabs-varwone} and \eqref{ex:lrabs-varwtwo}).
\end{enumerate}
Components which are not replicated inside the
distributor form a list of explicit cuts, which can occur at
  different depths inside this distributor. Indeed, in~\eqref{ex:lrabs-appw},
  there are two explicit substitutions $\esub
  {z_3}{w}$ and $\esub {z_2} w$.
The cuts can be
gathered together into a list context, called $\lc$ in the definition
of rule~$\absrr$, which is pushed outside by using
permutation rules, before performing the substitution of the pure
body containing all the bound occurrences of $y$ (here $\ly.(z_1z_2)y$).
This operation is in general hard to specify using only distance since the cuts can
appear at arbitrary depth in the distributor, and this
is one of the reasons to introduce the use of permutation rules in
rule~$\absrr$.

Other choices are possible, such as replicating all the nodes, or only the
uppermost application and the node $y$ (corresponding to fully lazy
duplication), as long as at least all free occurences of $y$ are duplicated.

\label{page:example-reduction}

\begin{exa}
    This last example showcases different reduction steps with distance,
    highlighted in green.
    \begin{align*}
      (\lx.x) \verdear{\esub{z_4}{z_5}} (w \dist{z_1}{\id} (\ly.y\esub{z_2}{z_3}))
    &\rew\db x \esub x {w \dist{z_1}{\id} (\ly.y\esub{z_2}{z_3})}
    \esub{z_4}{z_5}\\
    &\rew\apprr (x_1x_2) \esub {x_1}{w \verdear{\dist{z_1}{\id}}}
    \esub{x_2}{\ly.y \esub{z_2}{z_3}} \esub{z_4}{z_5}\\
    &\rew\varrr (wx_2) \dist{z_1}\id \esub{x_2}{\ly.y \esub{z_2}{z_3}}
    \esub{z_4}{z_5}\\
    &\rew\distrr (wx_2) \dist{z_1}\id \dist{x_2}{\ly.x \esub x {y
    \verdear{\esub{z_2}{z_3}}}} \esub{z_4}{z_5}\\
    &\rew\varrr (wx_2) \dist{z_1}\id \dist{x_2}{\ly.y \verdear{\esub{z_2}{z_3}}} \esub{z_4}{z_5}\\
    &\rew\absrr (w(\ly.y)) \dist{z_1}\id \esub{z_2}{z_3} \esub{z_4}{z_5}
    \end{align*}
\end{exa}

Notice that an $\nrep$-step can be decomposed into some
$\permlr$-steps followed by a simpler step not involving any
list context.  Indeed, $t \rew\nrep u$ could be simulated by $t \rewn\permlr t'
\rew{\purelr\nrep} u$, where $\rew{\purelr\nrep}$ is the closure under
all contexts of the following set of rewriting rules:
\[ \begin{array}{llll}
  (\lx.t) u & \rrule{\purelr\db} & t \es x u &  \\
  t \es x {us} & \rrule{\purelr\apprr} & t \msub x {yz} \es y u \es z s & \\
  t \es x {\ly.u} & \rrule{\purelr\distrr} & t \dist x {\ly.z \es z u} &  \\
  t \dist x {\ly.p} &  \rrule{\purelr\absrr} & t \msub x {\ly.p} \\
  t \es x y & \rrule{\purelr\varrr} & t \msub x y & \\
\end{array} \]
For instance, step (\ref{ex:lrabs-abs}) in \autoref{ex:lrabs} can be decomposed
as follows, where $r = \ly.(z_3z_2)y$:
\[  (xx)
  \dist x {r \esub {z_3}{w} \esub {z_2}{w}}
  \rewn{\pi} (xx) \dist x {r} \esub {z_3}{w} \esub {z_2}{w}
\rew{\purelr\absrr}  (r r) \esub {z_3}{w} \esub {z_2}{w} \]
This decomposition will be useful in some of our proofs, but we prefer to integrate
distance inside the rules, as initially defined on
page~\pageref{d:nrep},  to highlight the computational behavior and
execute permutations only when strictly necessary.

\subsection{Levels}
\label{sec:levels}

This subsection introduces the syntactical notion of level and its
associated properties.  Intuitively, the level of a variable in a term
indicates the maximal depth (\emph{only} w.r.t. explicit substitutions
and not w.r.t. explicit distributors) of its free
occurrences. However, in order to be sound with respect to
the permutation rules, levels do not consider depth in the
usual sense only, but also across linked chains of explicit
substitutions. For instance, the level of $z$ in both
$(xx) \esub x {y \esub y z}$
and $(xx) \esub x y \esub y z$
is  the same.
Levels will play a key role in the next sections:
they will be the combinatorial witnesses of
the progress of $\nrsub$-substitution steps, necessary to prove
termination of the $\nrsub$-relation.
They will also be helpful to define a decreasing measure on typing derivations
in \autoref{sec:types}.
The \deft{level} of a variable $z$ in a term $t$ is defined  by induction:
\begin{align*}
  \lv z x &\eqdef  0  \\
  \lv z {t_1 t_2} &\eqdef \max (\lv z {t_1}, \lv z {t_2}) \\
  \lv z {\lx.t} &\eqdef \lv z t\\
  \lv z {t \cut x u} &\eqdef
  \begin{cases}
    \lv z t & \text{if } z \notin \fv u \\
    \max (\lv z t, \lv x t + \lv z u + \isES{\cut x u})
            & \text{otherwise}
          \end{cases} \\
\end{align*}
In the two last cases, we can always suppose $z \neq x$, because we work modulo
$\alpha$-conversion.
Notice that $\lv {z} t = 0$ whenever $z \notin \fv{t}$ or $t$ is pure.
We illustrate the concept of level by an example.
Consider $t = x \es x {z \es y w} \es w {w'}$,
then $\lv{z}{t} = 1$, $\lv{w'}{t} = 3$ and $\lv y t = 0$ because $y \notin
\fv t$.
This notion is also extended to contexts as expected, \ie\ $\lv \ec \fc =
\lv z {\ctx\fc z}$, where $z$ is a fresh variable.
Remark that for any variable $x$, $\lv \ec \fc \leq \lv x {\ctxnc\fc x}$
and $\lv x {\ctxnc\fc p} \leq \lv x {\ctxnc\fc x}$ for any $p \in \pterms$.

\begin{restatable}{lem}{levelsubstitution}
  \label{l:level-substitution}%
  Let $x \neq z$, $t \in \rterms$ and $p \in \pterms$:
  \begin{enumerate}
    \item \label{l:level-substitution-void}%
      If $ z \notin \fv p$, then $\lv
      z {t \msub x p}= \lv z t$.
    \item \label{l:level-substitution-non-void}%
      If $z \in \fv p$,
      then $\lv {z} {t \msub x p} = \max (\lv {z} t, \lv
      x t)$.
  \end{enumerate}
\end{restatable}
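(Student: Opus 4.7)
I would prove both items simultaneously by structural induction on $t$, repeatedly invoking the observation (from the paragraph just before the lemma) that $\lv a p = 0$ for every variable $a$ whenever $p$ is pure. Throughout, $\alpha$-conversion lets me assume every bound variable of $t$ is distinct from $x$ and $z$ and does not appear in $\fv p$.

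The base case $t = y$ is immediate: if $y = x$ then $t\msub x p = p$ is pure, so both sides equal $0$; if $y \neq x$ then $t\msub x p = y$ and $\lv z y = \lv x y = 0$, so again both sides agree. The application case $t = t_1 t_2$ follows by unfolding the definition of level over applications, distributing $\max$, and applying the induction hypothesis to each $t_i$ (for item~(2) one uses that $\max$ distributes over $\max$). The abstraction case $t = \lambda y.t'$ reduces directly to the inductive hypothesis on $t'$, since binding does not change levels.

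The only nontrivial case is the cut $t = t' \cut y u$, where I will split on whether $z \in \fv u$ and whether $x \in \fv u$, exploiting the identity $\fv{u\msub x p} = (\fv u \setminus \{x\}) \cup (\fv p$ if $x \in \fv u)$ to decide whether $z$ lives in $\fv{u\msub x p}$.
\begin{itemize}
\item For item~(1), since $z \notin \fv p$ the condition $z \in \fv{u\msub x p}$ coincides with $z \in \fv u$. In the ``out'' subcase both expressions collapse to $\lv z {t'\msub x p} = \lv z {t'}$ by IH; in the ``in'' subcase I apply IH(1) to $t'$ with $z$, to $t'$ with $y$ (permitted because $y \notin \fv p$), and to $u$ with $z$, then recombine.
\item For item~(2), there are four subcases. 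In each I apply IH(2) to $t'$ with $z$, IH(1) to $t'$ with $y$, and, when $x \in \fv u$, IH(2) to $u$ with $z$ (which may contribute a new $\lv x u$ term). Using the arithmetic identity $n + \max(a,b) = \max(n+a, n+b)$ I then regroup the resulting $\max$-expression to match $\max(\lv z t, \lv x t)$ obtained by unfolding the level definition twice.
\end{itemize}

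The main obstacle is the bookkeeping in the cut case: one has to track carefully which of $x, z$ are free in $u$ before and after the substitution, and match the additive contribution $\lv y {t'} + \lv\cdot u + \isES{\cdot}$ on both sides. Everything else is routine unfolding of definitions together with standard properties of $\max$.
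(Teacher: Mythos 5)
Your proposal is correct and follows essentially the same route as the paper's proof: structural induction on $t$, with the explicit-cut case carrying all the bookkeeping on which of $x,z$ occur free in $u$ before and after substitution, and with item~(1) (instantiated at the bound variable $y$, legitimate since $y\notin\fv p$ by $\alpha$-conversion) feeding into item~(2). The only cosmetic difference is that you run a single simultaneous induction where the paper proves item~(1) first and then reuses it in item~(2), which changes nothing of substance.
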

\begin{proof}
  By induction on $t$ \seeappendix{p:level-substitution}.
\end{proof}

\begin{restatable}{lem}{stabilitylevels}
  \label{l:stability-levels}%
  Let $t \in \rterms$ and $w$ be any variable.
  \begin{enumerate}
    \item \label{l:stability-levels-permutation}%
      If $t_0 \rew\permlr t_1$,
      then $\lv w {t_0} \geq \lv w {t_1}$.
    \item \label{l:stability-levels-substitution}%
      If $t_0 \rew\nrsub t_1$, then $\lv w {t_0} \geq \lv w {t_1}$.
  \end{enumerate}
\end{restatable}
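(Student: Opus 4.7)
I will prove the two parts in order, as the $\absrr$ case of part (2) invokes part (1). For each part, the strategy is first to establish the inequality for a root reduction step, and then lift it to reductions inside arbitrary contexts by induction on the context $\fc$. The lifting relies on a monotonicity property: if $\lv v {s_0} \geq \lv v {s_1}$ for every variable $v$, then $\lv w {\ctx\fc{s_0}} \geq \lv w {\ctx\fc{s_1}}$ for every $w$. This is routine by induction on $\fc$, using that reductions preserve free variables (by inspection of the rules, with the substitution cases relying on Lemma \ref{l:level-substitution}) so that the case-branches in the definition of $\lv \cdot \cdot$ align on both sides. The case $\fc = \fc' \cut x u$ is why the lemma must be quantified over all variables: one needs the hypothesis at both $w$ and $x$.

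For part (1), each of the four permutation rules merely rearranges cuts and binders, so expanding $\lv w \cdot$ on both sides of a root $\rrule\permlr$-step yields identical expressions (in fact equality of levels, which is stronger than required). For instance, for $t \cut x {u \cut y s} \rrule\permlr t \cut x u \cut y s$ with $y \notin \fv t$, both sides collapse to the same $\max$-expression after the appropriate case analysis on $w \in \fv u \cup \fv s$, using $\lv y t = 0$.

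For part (2), the rules $\apprr$ and $\distrr$ also preserve levels exactly; in $\distrr$, the loss of a $+1$ from turning an $\es x \cdot$ into $\dist x \cdot$ is compensated by the freshly introduced $\es z u$. The rule $\varrr$, $t \es x y \rrule\varrr t \msub x y$, applies Lemma \ref{l:level-substitution}: at $w = y$, the LHS yields $\max(\lv y t, \lv x t + 1)$ while the RHS yields $\max(\lv y t, \lv x t)$, which is the expected (possibly strict) decrease.

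The subtle case is $\absrr$: $t \dist x {\ly.u} \rrule\absrr \ctx\hlc{t \msub x {\ly.p}}$ with $u \rewn\permlr \ctx\hlc p$. I first invoke part (1) and the monotonicity statement above, reducing the problem to a comparison of $\lv w {t \dist x {\ly.\ctx\hlc p}}$ against $\lv w {\ctx\hlc{t \msub x {\ly.p}}}$. On the LHS the distributor contributes with $\isES = 0$, giving $\max(\lv w t, \lv x t + \lv w {\ctx\hlc p})$ when $w$ is relevant; on the RHS one unfolds $\lv w \cdot$ through each cut of $\hlc$ and then applies Lemma \ref{l:level-substitution} to resolve $\lv \cdot {t \msub x {\ly.p}}$. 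The computation leverages purity of $\ly.p$ (so $\lv v {\ly.p} = 0$ for all $v$) together with freshness of $\domlc\hlc$ with respect to $t$ (so $\lv v t = 0$ for $v \in \domlc\hlc$, by $\alpha$-conversion). The main obstacle is the bookkeeping showing that every $+1$ arising from a cut of $\hlc$ is absorbed at least as much into the single contribution $\lv x t + \lv w {\ctx\hlc p}$ on the LHS as it is into the separate, direct unfolding through $\hlc$ on the RHS, which finally yields the desired $\geq$.
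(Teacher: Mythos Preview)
Your overall structure---root-step analysis plus a context lifting via monotonicity quantified over all variables---matches the paper's. However, there is a concrete error in your treatment of part~(1).

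You claim that for the fourth permutation rule $t \cut x {u \cut y s} \rrule\permlr t \cut x u \cut y s$ with $y \notin \fv t$, ``both sides collapse to the same $\max$-expression'' and equality of levels holds. This is false. Take the instance $a \es a {z \es y w} \rrule\permlr a \es a z \es y w$ (all variables distinct): one computes $\lv w {a \es a {z \es y w}} = 2$ but $\lv w {a \es a z \es y w} = 1$. The paper flags exactly this phenomenon immediately after the lemma. The point is that when $y \notin \fv u$ (here $y \notin \fv z$), the right-hand side has $\lv y {t \cut x u} = \lv y t = 0$, so the contribution $\lv x t + \isES{\cut x u}$ that the nesting produced on the left is simply dropped on the right. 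The paper's proof handles this by a four-way case split on membership of $w$ in $\fv u$ and $\fv s$, establishing only $\geq$, not $=$; your argument for this rule needs to be redone.

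For part~(2), your direct comparison of $\lv w {t \dist x {\ly.\ctx\hlc p}}$ against $\lv w {\ctx\hlc{t \msub x {\ly.p}}}$ in the $\absrr$ case is harder than necessary and is not actually carried out. The paper instead factors every root $\nrsub$-step as $t_0 \rewn\permlr \ctx\lc{t'_0} \rew{\purelr\nrsub} \ctx\lc{t'_1} = t_1$, reducing to part~(1) plus the distance-free rules; for $\purelr\absrr$ the body is already pure and the comparison becomes a one-line application of \autoref{l:level-substitution}. Your ``bookkeeping'' obstacle disappears under this decomposition.
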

\begin{proof}
  By induction on the reduction relation
  \seeappendix{p:stability-levels}.
\end{proof}

Notice that there are two cases when the level of a variable in a term may decrease:
\begin{itemize}
  \item Moving an explicit cut out of another one
    with a permutation rule when the
    first cut is a void cut, \ie\ its domain does not bind any other variable.
    Thus \eg\  if $t = x \es x {z \es y w} \es w {w'}
    \rew\permlr x \es x z \es y w \es w {w'} = u$,
    then $\lv{w'} t = 3 > 2=\lv{w'} u$.
  \item Using rule $\rrule\varrr$.
    Thus \eg\  if $t = (xx) \es {x} {y} \es
    {y} z \rew\varrr (yy) \es {y} z = u$, then
    $\lv z t = 2 > 1 = \lv z u$.
\end{itemize}

Hence, levels alone are not enough
to prove termination of $\rew\nrsub$. We thus define a decreasing measure for
$\rew\nrsub$ in which not only variables are indexed by a
level, but also constructors.  For instance, in the term $t \es x {\ly.yz}$,
we can consider that the level of
\emph{all} the constructors of $\ly.yz$,
including the abstraction and the application, have level $\lv x t$.
This will ensure that the level of an abstraction will
decrease when applying rule $\distrr$, as well as the level of an
application when applying rule $\apprr$.

\section{Operational Properties}%
\label{sec:properties}

We now prove  three key properties of the
$\rcalc$-calculus: termination of the
reduction system $\rew\nrsub$,
relation between $\rcalc$ and the $\lambda$-calculus, and confluence of the
reduction system $\rew\nrep$.

\paragraph{Termination of $\rew\nrsub$}%

Some (rather informal) arguments are provided in~\cite{gundersen13a} to justify
termination of the substitution subrelation of their  calculus.
We expand these ideas into an alternative full formal proof adapted to our case,
which is based on a measure being strictly decreasing w.r.t. $\rew\nrsub$.

We consider a set $\subobj$ of objects of the form $\subes  k n$ or
$\subdist  k\ (k,n \in \mathbb N)$,
which is equipped with the following ordering $\subge$
($\subgeq$ denotes its reflexive closure):
\[ \begin{array}{rll@{\hspace{.25cm}}rll}
  \subes{k}{n}
  & \subge \subes{k'}{n'}  &\mbox{if } k > k',
  \mbox{ or } (k = k' \mbox{ and } n > n')
  & \subdist k &\subge  \subes{k'}{n} &\mbox{if } k \geq k'\\
  \subes  k n &\subge  \subdist{k'} &\mbox{if } k >  k' &
  \subdist  k &\subge  \subdist{k'} &\mbox{if } k > k'
\end{array} \]

Notice that the symbols $\subesname$ and $\subdistname$ are
  just formal expressions, \ie\ $\subobj$ could be alternatively (but
  less clearly) defined as
  $(\mathbb{N} \times \mathbb{N}) \uplus \mathbb{N}$.
\begin{lem}
  \label{t:plus-S-wf}%
  The order $\subge$ on the set $\subobj$ is well-founded.
\end{lem}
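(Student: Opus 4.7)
The plan is to prove well-foundedness by embedding $\subobj$ into a known well-founded set. Since $\mathbb{N}^3$ with the strict lexicographic order is well-founded, it suffices to produce a mapping $\Phi : \subobj \to \mathbb{N}^3$ that is strictly monotone, \ie, $o \subge o'$ implies $\Phi(o) >_{\texttt{LEX}} \Phi(o')$. Then any infinite descending $\subge$-chain in $\subobj$ would induce an infinite descending lexicographic chain in $\mathbb{N}^3$, contradicting its well-foundedness.

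The first step is to choose the embedding. The subtlety to handle is the asymmetry between $\subdist k$ and $\subes{k'}{n}$: when $k = k'$, one has $\subdist k \subge \subes{k}{n}$ (since $k \geq k'$) but not $\subes{k}{n} \subge \subdist{k}$ (which would require $k > k'$). This forces the embedding to rank $\subdist k$ strictly above $\subes{k}{n}$ whenever their first component agrees. The natural choice is therefore
\[ \Phi(\subes k n) \eqdef (k, 0, n) \qquad \Phi(\subdist k) \eqdef (k, 1, 0). \]
The second component acts as a tie-breaker that places distributor-objects above explicit-substitution-objects at the same level $k$, while the third component tracks the secondary index $n$ of the substitution case.

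The next step is a straightforward case analysis verifying strict monotonicity, treating each of the four clauses of the definition of $\subge$: comparing $\subes k n$ with $\subes {k'}{n'}$ recovers exactly the lexicographic ordering on $(k,n)$; for $\subdist k \subge \subes{k'}{n}$, the condition $k \geq k'$ yields $(k,1,0) >_{\texttt{LEX}} (k',0,n)$ either by $k > k'$ or by $k = k'$ and $1 > 0$; for $\subes k n \subge \subdist{k'}$, the condition $k > k'$ directly gives the lexicographic inequality on the first coordinate; and the $\subdist{\cdot}$--$\subdist{\cdot}$ case is immediate.

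No part of this argument is really hard; the only point requiring attention is getting the second coordinate right so that the non-strict clause $\subdist k \subge \subes{k}{n}$ is still reflected by a strict lexicographic gap. Once $\Phi$ is defined and checked on each of the four clauses, well-foundedness of $\subge$ on $\subobj$ follows immediately from that of $>_{\texttt{LEX}}$ on $\mathbb{N}^3$.
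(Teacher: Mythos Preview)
Your proof is correct and follows essentially the same approach as the paper: both establish well-foundedness by a strictly monotone embedding into a lexicographic order, with the first coordinate tracking $k$ and a tie-breaker ensuring $\subdist{k}$ dominates every $\subes{k}{n}$. The only cosmetic difference is the target order: the paper maps into $\mathbb{N} \times \mathbb{N}_\infty$ (sending $\subdist{k}$ to $(k,\infty)$ and $\subes{k}{n}$ to $(k,n)$), whereas you use $\mathbb{N}^3$ with an explicit $\{0,1\}$-valued middle coordinate in place of the $\infty$ trick.
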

\begin{proof}
  Let us consider the set $\mathbb N$ equipped with the
  standard order $>_{\mathbb N}$ on natural numbers.  Let us also consider
  the set $\mathbb N_\infty := \mathbb N \uplus \{\infty\}$ equipped with the
  order $>_{\infty} := >_{\mathbb N} \cup \{ \pair{\infty}{n} \mid n \in
  \mathbb N\}$.  Since $>_{\mathbb N}$ and $>_{\infty}$ are both WF, then the
  lexicographic order induced by $\pair{>_{\mathbb N}}{>_{\infty}}$ on
  $\mathbb N \times \mathbb N_\infty$, written
  $\subgelex$, is also WF. We show that $\subge$ is WF
  by projecting it into the WF order $\subgelex$,
  \ie\ we define a projection function $\subproj$
  such that $s \subge s'$ implies $\subproj(s) \subgelex \subproj(s')$, for
  any $s,s' \in \subobj$.
  Let us define $\subproj(s) = \pair{\subka(s)}{\suben(s)}$, where
  $\subka(\subes k n) \eqdef k$ and $\subka(\subdist k) \eqdef k$ while
  $\suben(\subes k n) \eqdef n$ and  $\suben(\subdist k) \eqdef \infty$.
  We reason by cases.
  \begin{itemize}
    \item $s_0=\subes k n \subge
      \subes {k'} {n'} = s_1$. Then $\pair{\subka(s_0)}{\suben(s_0)} = \pair{k}{n} \subgelex \pair{k'}{n'} =
      \pair{\subka(s_1)}{\suben(s_1)}$ holds by definition since
      either $k > k'$ or $k = k'$ and $n > n'$.
    \item $s_0=\subdist k \subge
      \subdist {k'}  = s_1$. Then $\pair{\subka(s_0)}{\suben(s_0)} = \pair{k}{\infty} \subgelex \pair{k'}{\infty} =
      \pair{\subka(s_1)}{\suben(s_1)}$ holds by definition since $k >
      k'$. 
    \item $s_0=\subes  k n \subge
      \subdist {k'}  = s_1$. Then $\pair{\subka(s_0)}{\suben(s_0)} = \pair{k}{n}
      \subgelex \pair{k'}{\infty} =
      \pair{\subka(s_1)}{\suben(s_1)}$ holds by definition since $k >
      k'$. 
    \item $s_0=\subdist k  \subge  \subes {k'} {n'} = s_1$. Then $\pair{\subka(s_0)}{\suben(s_0)} = \pair{k}{\infty} \subgelex \pair{k'}{n'} =
      \pair{\subka(s_1)}{\suben(s_1)}$ holds by definition
      since either $k > k'$ or $k = k'$ and $\infty > n$. 
      \qedhere
  \end{itemize}
\end{proof}

We write $\subgemult$ for the multiset extension of the
order $\subge$ on $\subobj$, which turns out to be
well-founded~\cite{Nipkow-Baader} by \autoref{t:plus-S-wf}.
Some operations on multisets are needed to build the measure
$\submeas{\_}$ on terms. Indeed,
let $M$ be a multiset of objects in $\subobj$.
Multiset sum is denoted $\sqcup$. Furthermore:
\begin{enumerate}
  \item The \deft{$\subesname$-elements} (resp. \deft{$\subdistname$-elements}) of
    the multiset $M$ are all the objects of the form
    $\subes{k}{n}$ (resp. $\subdist{k}$) in $M$. We then may write
    $M$ as $M_{\subesname} \sqcup M_{\subdistname}$, where $M_{\subesname}$ (resp.
    $M_{\subdistname}$)
    contains all the $\subesname$-elements (resp $\subdistname$-elements) of $M$.
  \item Given $K \in \mathbb N$, we write $M^{\leq K}$ (resp. $M^{> K}$)
    for the multiset containing all $o \in M$ such that the first element of
    $o$ is less than $K$ (resp. strictly greater than $K$).
    We write $M^{> K}_{\subesname}$ for $M^{> K} \sqcap M_{\subesname}$,
    where $\sqcap$ denotes multiset intersection.
  \item $M$  can thus be decomposed in three disjoint multisets
    $M_{\subdistname}, M_{\subesname}^{\leq K}$ and $M_{\subesname}^{> K}$, for every $K \in \mathbb N$.
  \item  We also define the following operation on $M$:
    \[ p \cdot M \eqdef  \mult{\subes{p+k}{n} \mid
      \subes{k}{n} \in M }
    \sqcup \mult{\subdist{p+k} \mid \subdist{k} \in M} \]
\end{enumerate}
We are now ready to (inductively) define our \deft{cuts level} measure
$\submeas\cdot$ on terms.
\[
  \renewcommand{\arraystretch}{1.15}
  \begin{array}{l@{\hspace{1cm}}l@{\hspace{1cm}}l}
    \submeas x \eqdef \emult &
    \submeas {\lx.t} \eqdef \submeas t &
    \submeas {t u} \eqdef \submeas t \sqcup \submeas u\\
    \multicolumn{3}{l}{
      \submeas {t \es x u} \eqdef
      \submeas t \sqcup ( (\lv x t + 1) \cdot \submeas u ) \sqcup [\subes{\lv x     t + 1}{\tmsz u}]}\\
    \multicolumn{3}{l}{
      \submeas {t \dist x u} \eqdef
    \submeas t \sqcup ( \lv x t \cdot \submeas u ) \sqcup [\subdist{\lv x t}]}
  \end{array}
\]
Intuitively, the integer $k$ in $\subes k n$
and $\subdist k$ counts the level of variables bound by explicit substitutions,
while $n$ counts the size of terms to be substituted by an ES.
Remark that for every pure term $p$ we have $\submeas p = \emult$.

\begin{exa}
    \label{ex:ms}%
    Consider the following reduction sequence:
    \begin{align*}
      t_0 = (yy) \es y {(\lz.x) w}
      &\rew\apprr (y_1 y_2)(y_1 y_2) \es {y_1}{\lz.x} \es {y_2} w &= t_1\\
      &\rew\varrr (y_1 w)(y_1 w) \es {y_1}{\lz.x} &= t_2\\
      &\rew\distrr (y_1 w)(y_1 w) \dist {y_1} {\lz.x' \es {x'} x} &= t_3\\
      &\rew\absrr ((\lz.x') w)((\lz.x') w) \es{x'}{x} &= t_4\\
      &\rew\varrr ((\lz.x) w)((\lz.x) w) &= t_5
    \end{align*}
    We have $\submeas {t_0} = \mult{\subes 1 4}$,
    $\submeas {t_1} = \mult{\subes 1 1, \subes 1 2}$,
    $\submeas {t_2} = \mult{\subes 1 2}$,
    $\submeas {t_3} = \mult{\subes 1 1, \subdist 0}$,
    $\submeas {t_4} = \mult{\subes 1 1}$ and $\submeas {t_5} = \emult$.
\end{exa}

\begin{fact}
  Some properties on multisets are straightforward:
  \begin{itemize}
    \item If $M_1 \subgemult{} M_2$, then $M_1 \sqcup M \subgemult{} M_2 \sqcup M$.
    \item If $M_1\subgemult{} M_2$, then $k \cdot M_1 \subgemult{} k \cdot M_2$
      for any $k \in \mathbb N$.
    \item $k_1 \cdot k_2 \cdot M = (k_1 + k_2) \cdot M$.
  \end{itemize}
\end{fact}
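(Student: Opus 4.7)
The plan is to dispatch the three items in turn, relying on the Dershowitz--Manna characterization of the multiset extension of a strict order: $M \subgemult N$ iff $M = X \sqcup Y$ and $N = X \sqcup Z$ for some $Y \neq \emptyset$ such that every $z \in Z$ is dominated (w.r.t.\ $\subge$) by some $y \in Y$. For the first item, I would take such a witnessing decomposition of $M_1$ and $M_2$ and absorb the extra multiset $M$ into the common part, replacing $X$ by $X \sqcup M$ while leaving $Y$ and $Z$ untouched; this exhibits the required decomposition for $M_1 \sqcup M$ and $M_2 \sqcup M$ and hence $M_1 \sqcup M \subgemult M_2 \sqcup M$.

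For the second item the crux is to first establish a pointwise monotonicity lemma: for all $s, s' \in \subobj$ and every $k \in \mathbb N$, if $s \subge s'$ then the objects obtained from $s$ and $s'$ by adding $k$ to the first coordinate still satisfy the corresponding inequality. This is a short case analysis mirroring the four clauses defining $\subge$; since addition of $k$ preserves both the strict inequality and the equality of first coordinates and does not touch the second coordinate, each case is preserved verbatim. Applying this shift element-wise to the Dershowitz--Manna decomposition of $M_1$ and $M_2$ then produces a valid decomposition witnessing $k \cdot M_1 \subgemult k \cdot M_2$.

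The third item is a plain equality of multisets, verified by unfolding the definition of $p \cdot (-)$ twice on the left-hand side and once on the right, splitting according to $\subesname$-elements and $\subdistname$-elements, and invoking associativity of addition on $\mathbb N$ on each element (the shifts on the two kinds of elements are defined by the same formula, so both sub-cases go through identically). Overall, I expect the only mildly non-trivial step to be the pointwise monotonicity sub-lemma used in the second item; the multiset lifting in that item and the other two items reduce to routine bookkeeping with the Dershowitz--Manna decomposition.
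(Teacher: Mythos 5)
Your proposal is correct; the paper states this Fact without proof (labelling it ``straightforward''), and your argument --- the Dershowitz--Manna decomposition for the first item, the pointwise shift-monotonicity of $\subge$ lifted elementwise for the second, and associativity of addition for the third --- is a complete and valid way to discharge all three items. The one step you flag as mildly non-trivial, namely that adding $k$ to the first coordinate preserves each of the four defining clauses of $\subge$, does indeed go through verbatim in every case.
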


\begin{lem}
  \label{l:ms-lc}%
  If $\submeas t \subgemult{} \submeas u$ and $\lv x t \geq \lv x u$ holds for every
  $x \in \domlc\lc$, then $\submeas {\ctx \lc t} \subgemult{} \submeas {\ctx \lc u}$.
\end{lem}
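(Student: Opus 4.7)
The proof proceeds by induction on the structure of $\lc$. The base case $\lc = \ec$ is immediate: $\ctx\ec t = t$ and $\ctx\ec u = u$, so the conclusion reduces to the hypothesis $\submeas t \subgemult \submeas u$.

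For the inductive step, write $\lc = \lc' \cut y v$. Consider for concreteness the case $\cut y v = \esub y v$ (the distributor case is analogous, replacing $+1$ by $0$ and $\subes{}{}$ by $\subdist{}$). By definition,
\[
  \submeas{\ctx\lc t} = \submeas{\ctx{\lc'}{t}} \sqcup (\lv y {\ctx{\lc'}{t}} + 1) \cdot \submeas v \sqcup [\subes{\lv y {\ctx{\lc'}{t}} + 1}{\tmsz v}],
\]
and symmetrically for $u$. Since $\domlc{\lc'} \subseteq \domlc\lc$, the level hypothesis restricts and the induction hypothesis applied to $\lc'$ yields $\submeas{\ctx{\lc'}{t}} \subgemult \submeas{\ctx{\lc'}{u}}$. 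To assemble the three summands we additionally need $\lv y {\ctx{\lc'}{t}} \geq \lv y {\ctx{\lc'}{u}}$: granting this, the definition of $k \cdot M$ and of $\subge$ immediately give $(\lv y {\ctx{\lc'}{t}} + 1) \cdot \submeas v \subgeqmult (\lv y {\ctx{\lc'}{u}} + 1) \cdot \submeas v$ and $[\subes{\lv y {\ctx{\lc'}{t}} + 1}{\tmsz v}] \subgeqmult [\subes{\lv y {\ctx{\lc'}{u}} + 1}{\tmsz v}]$, and the multiset-union monotonicity recorded in the preceding Fact closes the case.

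It thus remains to establish $\lv y {\ctx{\lc'}{t}} \geq \lv y {\ctx{\lc'}{u}}$. I would isolate this as a separate auxiliary statement: \emph{if $\lv w t \geq \lv w u$ for every $w \in \domlc{\lc'} \cup \{z\}$, then $\lv z {\ctx{\lc'}{t}} \geq \lv z {\ctx{\lc'}{u}}$}, proved by a secondary induction on $\lc'$. In the step case $\lc' = \lc'' \cut{y'}{v'}$, one unfolds the definition of $\lv z{\cdot \cut{y'}{v'}}$ with a case split on whether $z \in \fv{v'}$; the two summands $\lv z {\ctx{\lc''}{t}}$ and $\lv{y'}{\ctx{\lc''}{t}}$ appearing in the $\max$ are both inductively dominated by their $u$-counterparts, since $z$ and $y'$ lie in $\domlc{\lc''} \cup \{z\}$. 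Instantiated with $z = y$, this yields exactly the inequality needed above.

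The main obstacle is the auxiliary level-monotonicity: the recursive definition of $\lv z \cdot$ mixes the level of $z$ itself with that of the bound variable of each enclosing cut, so care is needed to ensure the level hypotheses propagate intact through the nested induction. Once this is in place, the main argument is just a routine assembly of multiset inequalities.
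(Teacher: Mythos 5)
Your proof is correct and follows the same route as the paper, which simply states ``by induction on $\lc$, straightforward''; your elaboration, including the auxiliary level-monotonicity claim $\lv y {\ctx{\lc'}{t}} \geq \lv y {\ctx{\lc'}{u}}$ needed to compare the weighted summands, is exactly the content the paper leaves implicit. The only detail worth making explicit is that combining one strict and two non-strict multiset inequalities under $\sqcup$ yields a strict one, but this is a standard property of the multiset extension and poses no problem.
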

\begin{proof}
  By induction on $\lc$. The property is straightforward.
\end{proof}

\begin{restatable}{lem}{mssubstitution}
  \label{l:ms-substitution}%
  Let $t$ be a term, $x$ a variable and $p$ a pure term.
  Let $K = \lv x t$.
  Then there is $N \in\mathbb N$ such that $\submeas {t \msub x p} \sqsubseteq
  \submeas {t}_{\subdistname} \sqcup \submeas {t}^{>K}_{\subesname} \sqcup
  \mult{\subes k n \mid k \leq K \mbox{ and } n \leq N}$.
\end{restatable}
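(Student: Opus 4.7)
The plan is to proceed by structural induction on $t$, tracking carefully the three categories of elements that decompose the RHS: unchanged distributor elements, unchanged high-level es-elements, and an arbitrary pool of low-level ($k \leq K$) es-elements that absorbs any new contribution coming from the substitution. Before starting, I would $\alpha$-convert $t$ so that the bound variables of its cuts are fresh with respect to $\fv p$, and recall from \autoref{l:level-substitution}\ref{l:level-substitution-void} that $\lv y {t' \msub x p} = \lv y {t'}$ whenever $y \notin \fv p$ --- this is essential to keep the shift factor $\lv y {t'} + 1$ stable under substitution.

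The base cases are immediate: $\submeas x = \emult$ and $\submeas{x \msub x p} = \submeas p = \emult$ since $p$ is pure, and similarly for $y \neq x$. The abstraction case follows directly from the IH. For the application $t = t_1 t_2$, I apply the IH to each $t_i$ with $K_i = \lv x {t_i} \leq K$; the distributor parts and high-level ($> K$) es-parts of $\submeas{t_i}$ sit inside the corresponding parts of $\submeas t$, and any es-elements of $\submeas{t_i}$ with level in the range $(K_i, K]$ (which are $> K_i$ but $\leq K$) fold harmlessly into the low-level pool.

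The delicate case is $t = t' \cut y u$. After $\alpha$-conversion, the formula for the measure yields
\[\submeas{t \msub x p} = \submeas{t' \msub x p} \sqcup (\lv y{t'}+1) \cdot \submeas{u \msub x p} \sqcup [\subes{\lv y{t'}+1}{|u \msub x p|}]\]
(with the analogous decomposition for $\dist y u$, producing a $\subdist{\lv y{t'}}$). I apply the IH to $t'$ at $K' = \lv x {t'} \leq K$ and to $u$ at $K_u = \lv x u$; the key numerical fact is that when $x \in \fv u$ the definition of level gives $\lv y{t'} + K_u + 1 \leq K$. Then I classify the shifted elements of $\submeas{u \msub x p}$: shifted distributors of $\submeas u$ become distributor elements of $\submeas t$; shifted es-elements of $\submeas u^{> K_u}$ either have shifted level $> K$ (and then live in $\submeas t^{>K}_{\subesname}$) or level $\leq K$ (and then drop into the low-level pool); low-level elements of $\submeas u$ shift to level at most $K_u + \lv y{t'} + 1 \leq K$, so again they are low-level.

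The main obstacle is the last element $[\subes{\lv y{t'}+1}{|u \msub x p|}]$, since the size component may differ from $|u|$. Here the trick is a boundary observation: if $\lv y{t'} + 1 > K$, then $x \notin \fv u$ is forced (otherwise $K \geq \lv y{t'} + \lv x u + 1 > K$, contradiction), hence $u \msub x p = u$ and the element is literally in $\submeas t^{>K}_{\subesname}$; if $\lv y{t'} + 1 \leq K$, the element is low-level and fits in the pool. The analogous boundary check disposes of the distributor case. Combining these contributions with the ones coming from $\submeas{t' \msub x p}$ (handled exactly as in the application case by splitting levels at $K$) gives the desired decomposition.
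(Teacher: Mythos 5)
Your proposal is correct and follows essentially the same route as the paper's proof: structural induction on $t$, with the cut case resting on the stability of $\lv y {t'}$ under substitution (Lemma~\ref{l:level-substitution}, void case, after $\alpha$-renaming $y$ away from $\fv p$), the inequality $\lv y{t'} + \lv x u + \isES{\cut y u} \leq K$ when $x \in \fv u$, and a split at level $K$ to route shifted elements into the three parts of the right-hand side. The only cosmetic difference is that you organize the treatment of the singleton $\subes{\lv y{t'}+1}{\tmsz{u \msub x p}}$ by first splitting on whether $\lv y{t'}+1 > K$ and deducing $x \notin \fv u$, whereas the paper splits first on $x \in \fv u$; the underlying facts are identical.
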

\begin{proof}
  By induction on $t$
  \seeappendix{p:ms-substitution}.
\end{proof}

\begin{restatable}{lem}{mspi}
  \label{l:ms-stable-by-pi}
  Let $t \in \rterms$.
  Then $t \rew\permlr t'$ implies $\submeas t \subgeqmult \submeas{t'}$.
\end{restatable}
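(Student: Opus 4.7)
I would proceed by induction on the derivation of $t \rew\permlr t'$, handling separately the four root cases (closure at the empty context $\ec$) and the closure under an arbitrary term context $\fc$. The base cases require direct computation of $\submeas\cdot$ on each side of each root rule, while the inductive step will be a monotonicity argument.

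For the root cases, rules 1--3 (pushing a cut outward through an abstraction or an application) should actually preserve the measure exactly. Indeed, rule 1 uses $\lv x {\ly.t} = \lv x t$, and rules 2--3 use the side condition $x \notin \fv s$ (resp.\ $x \notin \fv t$) together with the fact that $\lv x s = 0$ when $x \notin \fv s$, so that $\lv x {ts} = \lv x t$. Once these level identities are substituted into the definition of $\submeas\cdot$, the two sides unfold to syntactically identical multisets. Rule 4 is where a strict decrease arises, and it is the main obstacle. On the source $t \cut x {u \cut y s}$, the outer cut must be an explicit substitution (because $u \cut y s$ is never an abstraction), and the same form is kept on the target for the cut $\cut x u$. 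Using $k_1 \cdot (k_2 \cdot M) = (k_1 + k_2) \cdot M$ and the computation
\[
\lv y {t \cut x u} \;=\;
\begin{cases}
0 & \text{if } y \notin \fv u,\\
\lv x t + \lv y u + \isES{\cut x u} & \text{if } y \in \fv u,
\end{cases}
\]
(where $y \notin \fv t$ by the side condition), the two unfolded multisets cancel everywhere except for a single object $\subes{\lv x t + 1}{\tmsz{u \cut y s}}$ on the source, to be compared with $\subes{\lv x t + 1}{\tmsz u}$ on the target. Since $\tmsz{u \cut y s} > \tmsz u$, the residual source object strictly dominates the target one under $\subge$, giving $\submeas t \subgemult \submeas{t'}$. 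A small case split on whether $y \in \fv u$ and on whether $\cut y s$ is an ES or a distributor confirms the cancellation in all sub-cases.

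For the inductive step, I close under an arbitrary term context $\fc$ by induction on $\fc$. The crucial auxiliary facts are: (i) $\rew\permlr$ is size-preserving (each root rule trivially preserves $\tmsz\cdot$, since it merely rearranges the same constructors), so the $\subes{\cdot}{\tmsz\cdot}$ contributions are untouched when the reduction fires under the cut body; (ii) $\lv x\cdot$ does not increase under $\rew\permlr$ (Lemma~\ref{l:stability-levels}.\ref{l:stability-levels-permutation}), so the multiset prefactors $(\lv x t + 1) \cdot \submeas u$ and $[\subes{\lv x t + 1}{\tmsz u}]$ can only shrink or stay equal when the reduction fires inside $t$. Combined with the induction hypothesis on the subderivation, this propagates $\subgeqmult$ through each of the context constructors (abstraction, application, explicit substitution, distributor), essentially along the same lines as Lemma~\ref{l:ms-lc} but generalized from list contexts to full term contexts.

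The main obstacle, as indicated, is the bookkeeping in rule 4: matching the shifted multisets $(\lv x t + 1) \cdot \submeas{u \cut y s}$ on the source against $\submeas{t \cut x u}$-derived contributions on the target requires careful use of the level identity above and of the $\isES{\cdot}$ convention, and this calculation has to be repeated cleanly across the sub-cases determined by $y \in \fv u$ and by the ES/distributor nature of $\cut y s$. The remaining rules and the context closure are then routine given the size-invariance and Lemma~\ref{l:stability-levels}.
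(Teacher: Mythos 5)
Your proposal follows essentially the same route as the paper's proof: induction on the reduction, a direct computation showing that the first three permutation rules preserve the measure exactly, the fourth rule settled by comparing the residual objects $\subes{\lv x t+1}{\tmsz{u \cut y s}}$ and $\subes{\lv x t+1}{\tmsz u}$ via $\tmsz{u \cut y s}>\tmsz u$ together with the level identity for $\lv y{t\cut x u}$, and the context closure handled through \autoref{l:stability-levels} and size-invariance of $\rew\permlr$. The only nuance is that in the sub-case $y\notin\fv u$ the $s$-contributions do not cancel exactly but are dominated (the source shifts $\submeas s$ by $\lv x t+2$ versus $1$ on the target), which still gives $\subgeqmult$, so your argument goes through.
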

\begin{proof}
  By induction on the reduction $t \rew\permlr t'$.
  We only detail here the case where
  $t = s \esub y {r \esub x u} \rrule\permlr s \esub y r \esub x u = t'$ at root,
  where $x \notin \fv t$
  \seeappendix{p:ms-stable-by-pi}.
  \begin{align*}
    \submeas{t}
        &= \submeas s \sqcup (\lv y s + 1) \cdot \submeas{r \es x u}
        \sqcup \mult{\subes{\lv y s + 1}{\tmsz {r \es x u}}} \\
        &= \submeas s \sqcup (\lv y s + 1) \cdot
        \left(\submeas r \sqcup (\lv x r + 1) \cdot \submeas u
        \sqcup \mult{\subes{\lv x r + 1}{\tmsz u}} \right) \\
        &\quad \sqcup \mult{\subes{\lv y s + 1}{\tmsz {r \es x u}}} \\
        &= \submeas s \sqcup (\lv y s + 1) \cdot \submeas r 
        \sqcup (\lv y s + \lv x r + 2) \cdot \submeas u \\
        &\quad \sqcup \mult{
          \subes{\lv y s + \lv x r + 2}{\tmsz u},
        \subes{\lv y s + 1}{\tmsz{r \es x u}}} \\
        &= \left( \submeas s \sqcup (\lv y s + 1) \cdot \submeas r
        \sqcup \mult{\subes{\lv y s + 1}{\tmsz{r \es x u}}} \right) \\
        &\quad \sqcup (\lv y s + \lv x r + 2) \cdot \submeas u
        \sqcup \mult{\subes{\lv y s + \lv x r + 2}{\tmsz u}} \\
        & \subgemult{} \left( \submeas s \sqcup (\lv y s + 1) \cdot \submeas r
        \sqcup \mult{\subes{\lv y s + 1}{\tmsz r}} \right) \\
        &\quad \sqcup (\lv x {s \es y r} + 1) \cdot \submeas u
        \sqcup \mult{\subes{\lv x {s \es y r} + 1}{\tmsz u}} \\
        &= \submeas{t'}
        \end{align*}
      The $\subgemult{}$ inequation is justified by the following facts:
      \begin{itemize}
        \item $\tmsz {r\es x u} > \tmsz r$.
        \item $\lv y s + \lv x r +2  =  \max(0,\lv y s + \lv x r +1)+1 = \lv x {s \es y r} +1$.
          \qedhere
      \end{itemize}
\end{proof}

\begin{lem}
  \label{l:ms-stable-by-s}
  Let $t \in \rterms$.
  Then $t \rew\nrsub t'$ implies $\submeas t \subgemult \submeas{t'}$.
\end{lem}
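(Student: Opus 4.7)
The plan is to proceed by induction on the derivation of $t \rew\nrsub t'$, distinguishing the four root reduction cases from the contextual closure cases.

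For the root cases, I compute $\submeas{t}$ and bound $\submeas{t'}$ explicitly for each of the four rules. Rules $\apprr$, $\distrr$ and $\varrr$ all have right-hand sides involving a meta-level substitution $\msub{x}{p}$ with $p$ pure, so Lemma~\ref{l:ms-substitution} provides the key upper bound on $\submeas{t \msub{x}{p}}$. The strict decrease arises from the fact that the root-level cut contributes an object ($\subes{\lv{x}{t}+1}{\tmsz{u}}$ for an explicit substitution, or $\subdist{\lv{x}{t}}$ for a distributor) to $\submeas{t}$ which disappears, being replaced by objects of strictly smaller order, either because the sizes of the newly-shared subterms drop or because a $\subdistname$-object dominates any $\subesname$-object at the same level (recall $\subdist{k} \subge \subes{k}{n}$ for all $n$). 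Rule $\absrr$ requires one extra preliminary step: its side condition $u \rewn\permlr \ctx\hlc{p}$ and the subsequent hoisting of $\hlc$ call for Lemma~\ref{l:ms-stable-by-pi} to first pass from $\submeas{u}$ to a multiset no larger than $\submeas{\ctx\hlc p}$, after which the distributor-versus-substitution comparison yields the strict decrease.

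The contextual closure is handled by case analysis on the surrounding constructor. The cases $\lx.\fc_0$, $\fc_0\,r$ and $r\,\fc_0$ are immediate, since the measure is a disjoint union over subterms and $\subgemult$ is compatible with $\sqcup$. The cases $\fc_0 \cut{x}{u}$, $u \cut{x}{\fc_0}$, and $u \dist{x}{\ly.\fc_0}$ require Lemma~\ref{l:stability-levels}(\ref{l:stability-levels-substitution}) to ensure that the levels appearing in the measure do not grow across the step, combined with the monotonicity of $\sqcup$ and of the scaling operation $k \cdot M$ from the Fact preceding Lemma~\ref{l:ms-lc}.

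The main obstacle is the context closure case $u \es{x}{s_0} \rew u \es{x}{s_1}$. Here the singleton $\subes{\lv{x}{u}+1}{\tmsz{s_0}}$ may be replaced by $\subes{\lv{x}{u}+1}{\tmsz{s_1}}$ with $\tmsz{s_1} > \tmsz{s_0}$, since rules such as $\apprr$ can inflate the size of an inner term by duplicating free occurrences through the meta-substitution $\msub{x}{yz}$. The resolution exploits the invariant that every $\subesname$-element produced by $\submeas{\cdot}$ has level at least $1$ by construction: any $\subesname$-object consumed inside $\submeas{s_0}$ by the inductive decrease becomes, after shifting by $(\lv{x}{u}+1)\cdot$, an object of level at least $\lv{x}{u}+2$, which strictly dominates the size-inflated singleton in the Dershowitz--Manna ordering. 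The symmetric situation under a distributor is handled analogously, which completes the induction.
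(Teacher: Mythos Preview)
Your plan is sound and follows the paper's structure (induction on the surrounding context, root cases treated one by one, Lemma~\ref{l:ms-substitution} for the meta-substitution bound). Two small corrections at the root level: rule $\distrr$ involves no meta-substitution, so its decrease is pure size/level arithmetic rather than an application of Lemma~\ref{l:ms-substitution}; and the list context $\lc$ is hoisted in \emph{all four} root rules, not only in $\absrr$. The paper handles this uniformly by factoring each root step as $t_0 \rewn\permlr \ctx\lc{t'_0} \rew{\purelr\nrsub} \ctx\lc{t'_1}$ via Lemma~\ref{l:ms-stable-by-pi}, then closing under $\lc$ with Lemma~\ref{l:ms-lc} together with Lemma~\ref{l:stability-levels}. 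You should adopt the same factorisation for $\apprr$, $\distrr$, $\varrr$ rather than attempt a direct computation of $\submeas{\ctx\lc{\ldots}}$.

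Where your plan is actually sharper than the paper is the contextual case $u\es{x}{s_0} \rew\nrsub u\es{x}{s_1}$. The paper writes only ``by induction on $\fc$'' and never confronts the fact that the singleton $\subes{\lv x u + 1}{\tmsz{s_0}}$ can be replaced by one with a \emph{larger} second component. Your argument---that every $\subesname$-element of $\submeas{s_0}$ has level $\geq 1$ (and any $\subdistname$-element $\subdist{j}$ still satisfies $\subdist{j+k}\subge\subes{k}{n}$ after shifting), so that any element consumed by the inductive decrease, once shifted by $k=\lv x u + 1$, dominates the inflated singleton in the Dershowitz--Manna order---is exactly what is needed to make that step rigorous. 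One nit: the distributor case $u \dist x {\ly.s_0}$ is not ``symmetric'' but strictly easier, since its singleton $\subdist{\lv x u}$ does not depend on $\tmsz{s_0}$ at all.
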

\begin{proof}
  Let $t = \ctx \fc {t_0} \rew\nrsub \ctx \fc {t_1} = t'$,
  where $t_0 \rew\nrsub t_1$ is a reduction step at the root
  position. We proceed by induction on $\fc$.
  We detail the base case which is $\fc=\ec$.  In all such
  cases we use \autoref{l:ms-stable-by-pi} to push $\lc$ outside, \ie\ we
  can write $t_0 \rew{\nrsub} t_1$ as $t_0 \rewn\permlr
  \ctx{\lc}{t'_0} \rew{\nrsub'} \ctx{\lc}{t'_1} = t_1$,
  where $t'_0 \rew{\nrsub'} t'_1$ is a root step which does not
  push any list context outside. We then show the property for root
  steps $t'_0 \rew{\nrsub'} t'_1$, and we conclude
  with \autoref{l:ms-stable-by-pi} then \autoref{l:ms-lc}
  by $\submeas {t_0} \subgeqmult \submeas
  {\ctx{\lc}{t'_0}} \subgemult \submeas { \ctx{\lc}{t'_1}} = \submeas {t_1}$ since $\lv x {t'_0} \geq \lv x {t'_1}$ holds for every
  $x \in \domlc{\lc}$ by \autoref{l:stability-levels}.
  Let us analyse all  the cases $t_0' \rew{\nrsub'} t_1'$.
  \begin{enumerate}
    \item If $t'_0 = t \es x {us} \rew\apprr t \msub x {yz} \es y u \es z s = t'_1$,
      where $y$ and $z$ are fresh variables,
      then \[ \submeas{t'_0}
        = \submeas t \sqcup (\lv x t + 1) \cdot \submeas{us}
      \sqcup \mult{\subes{\lv x t + 1}{\tmsz{us}}} \text{ and} \]
      \begin{align*}
        \submeas{t'_1}
        &= \submeas{t \msub x {yz} \es y u} \sqcup (\lv z {t \msub x {yz} \es y
        u} + 1) \cdot \submeas s\\
        &\quad \sqcup \mult{\subes{\lv z {t \msub x {yz} \es y u} + 1}{\tmsz s}} \\
        & = (\submeas{t \msub x {yz}} \sqcup (\lv y {t \msub x {yz}} + 1) \cdot \submeas u
        \sqcup \mult{\subes{\lv y {t \msub x {yz}} + 1}{\tmsz u}}) \\
        &\quad \sqcup (\lv z {t \msub x {yz} \es y u} + 1) \cdot \submeas s
        \sqcup \mult{\subes{\lv z {t \msub x {yz} \es y u} + 1}{\tmsz s}} \\
        &= (\submeas{t \msub x {yz}} \sqcup (\lv x t + 1) \cdot \submeas u
        \sqcup \mult{\subes{\lv x t + 1}{\tmsz u}}) \\
        &\quad \sqcup (\lv x t + 1) \cdot \submeas s
        \sqcup \mult{\subes{\lv x t + 1}{\tmsz s}} \\
        &= \submeas{t \msub x {yz}} 
        \sqcup (\lv x t + 1) \cdot \submeas {us}
        \sqcup \mult{\subes{\lv x t + 1}{\tmsz u},
        \subes{\lv x t + 1}{\tmsz s}}
      \end{align*}
      By \autoref{l:ms-substitution},
      $\submeas{t \msub x {yz}} \sqsubseteq
      \submeas t_{\subdistname} \sqcup \submeas t_{\subesname}^{>\lv x
        t} \sqcup \mult{\subes k n \mid k \leq \lv x t, n \leq
          N}$ for some $N \in \mathbb{N}$.
      We also have that $\mult{\subes {\lv x t + 1}{\tmsz{us}}}
      \subgemult \mult{\subes{\lv x t + 1}{\tmsz u}, \subes{\lv x t + 1}{\tmsz s}}
      \subgemult {\mult{\subes k n \mid k \leq \lv x t, n \leq N}}$.
      Moreover, $\submeas t \sqsupseteq \submeas t_{\subdistname} \sqcup \submeas
      t_{\subesname}^{>\lv x t} $
      and $\mult{\subes{\lv x t + 1}{\tmsz{us}}}
      \subgemult \mult{\subes k n \mid k \leq \lv x t, n \leq N}$.
      Thus we conclude  $\submeas{t_0'} \subgemult \submeas{t_1'}$.

    \item If $t'_0 = t \es x {\lam y u} \rew\distrr t \dist x {\lam y {z \es z u}} = t'_1$,
      then \[ \submeas{t'_0}
        = \submeas t \sqcup (\lv x t + 1) \cdot \submeas u 
      \sqcup \mult{\subes{\lv x t + 1}{\tmsz u + 1}} \text{ and} \]
      \begin{align*}
        \submeas{t'_1}
        &= \submeas t \sqcup \lv x t \cdot \submeas{\lam y {z \es z u}}
        \sqcup \mult{\subdist{\lv x t}} \\
        &= \submeas t \sqcup \lv x t \cdot \submeas{z \es z u} 
        \sqcup \mult{\subdist{\lv x t}} \\
        &= \submeas t \sqcup \lv x t \cdot
        (\submeas z \sqcup (\lv z z + 1) \cdot \submeas u \sqcup \mult{\subes{\lv z z + 1}{\tmsz u}}) 
        \sqcup \mult{\subdist{\lv x t}} \\
        &= \submeas t \sqcup \lv x t \cdot
        (1 \cdot \submeas u \sqcup \mult{\subes 1 {\tmsz u}}) 
        \sqcup \mult{\subdist{\lv x t}} \\
        &= \submeas t \sqcup (\lv x t + 1) \cdot \submeas u 
        \sqcup \mult{\subes{\lv x t + 1}{\tmsz u}, \subdist{\lv x t}}
      \end{align*}
      $\submeas{t'_0} \subgemult \submeas{t'_1}$ because the multisets are
      the same except for $\subes{\lv x t + 1}{\tmsz u}$
      and $\subdist{\lv x t}$ on the right
      which are smaller than
      $\subes{\lv x t + 1}{\tmsz u + 1}$  on the left.

    \item If $t'_0 = t \dist x {\lam y u} \rew\absrr
      t \msub x {\lam y
      u} = t'_1$,  then we have $\submeas{t'_0} =
      \submeas t \sqcup \mult{\subdist{\lv x t}}.$ By
      \autoref{l:ms-substitution}, $\submeas{t_1'} \sqsubseteq \submeas t_{\subdistname} \sqcup
      \submeas t^{>\lv x t} \sqcup \mult{\subes k n \mid k \leq \lv x
        t, n \leq N}$ for some $N
        \in \mathbb{N}$.
      Since  $\mult{\subdist{\lv x t}} \subgemult \mult{\subes k n \mid k
      \leq \lv x t, n \leq N}$ and $\submeas t \sqsupseteq \submeas t_{\subdistname} \sqcup
      \submeas t^{>\lv x t} $, then we conclude $\submeas{t_0'} \subgemult \submeas{t_1'}$.

    \item If $t'_0 = t \es x y \rew\varrr t \msub x y = t'_1$,
      we have $\submeas{t'_0} = \submeas t \sqcup \mult{\subes  {\lv x t + 1} 1}$.
      By \autoref{l:ms-substitution},
      $\submeas {t'_1} \sqsubseteq
      \submeas t_{\subdistname} \sqcup \submeas t_{\subesname}^{>\lv x
        t} \sqcup \mult{\subes k n \mid k \leq \lv x t, n \leq N}$ for some $N
        \in \mathbb{N}$.
      Since 
      $\mult{\subes{\lv x t + 1} 1} \subgemult \mult{\subes k n \mid k
      \leq \lv x t, n \leq N}$ and $\submeas t \sqsupseteq \submeas t_{\subdistname} \sqcup \submeas t^{>\lv x t}
      $, we conclude $\submeas{t_0'} \subgemult \submeas{t_1'}$
  \qedhere
  \end{enumerate}
\end{proof}

The sequence of \autoref{ex:ms} illustrates this phenomenon:
  indeed, $\submeas{t_i} \subgemult \submeas{t_{i+1}}$ for
$0 \leq i < 5$.

\begin{cor}
  \label{l:s_term}%
  The reduction relation $\rew\nrsub$ is terminating.
\end{cor}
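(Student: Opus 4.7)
The plan is to derive termination directly from the strict-decrease lemma \autoref{l:ms-stable-by-s} together with the well-foundedness of the multiset order $\subgemult$.

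First I would recall that by \autoref{t:plus-S-wf} the base order $\subge$ on the set $\subobj$ is well-founded. A standard result on multiset orderings (the one cited from Nipkow--Baader just after \autoref{t:plus-S-wf}) then guarantees that its multiset extension $\subgemult$ is also well-founded on finite multisets of elements of $\subobj$. Note that for every term $t$ the measure $\submeas t$ is, by a straightforward induction on $t$, a finite multiset of objects in $\subobj$, so it is a legitimate element of the domain of $\subgemult$.

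Next I would assemble the argument as follows: \autoref{l:ms-stable-by-s} states that $t \rew\nrsub t'$ implies $\submeas t \subgemult \submeas{t'}$, \emph{strictly}. Consequently, any infinite $\rew\nrsub$-reduction sequence $t_0 \rew\nrsub t_1 \rew\nrsub t_2 \rew\nrsub \ldots$ would yield an infinite strictly decreasing chain $\submeas{t_0} \subgemult \submeas{t_1} \subgemult \submeas{t_2} \subgemult \ldots$ in the well-founded order $\subgemult$, a contradiction. Hence no such sequence exists, i.e.\ every term is $\nrsub$-terminating, which is the claim.

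There is essentially no obstacle here: the hard work has already been done in \autoref{l:ms-stable-by-s} (strict decrease of the measure) and \autoref{t:plus-S-wf} (well-foundedness of $\subge$). The corollary is a one-line consequence of the general principle that a function into a well-founded order that strictly decreases along a reduction relation forces termination of that relation. The only thing to be careful about is making explicit that $\submeas t$ is finite for every $t$, so that the multiset extension theorem applies in its standard form.
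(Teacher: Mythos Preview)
Your proposal is correct and is exactly the intended argument: the paper states the result as an immediate corollary of \autoref{l:ms-stable-by-s}, relying on the well-foundedness of $\subgemult$ established just after \autoref{t:plus-S-wf}. Your extra remark that $\submeas t$ is always a finite multiset is a helpful clarification but not a departure from the paper's approach.
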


\paragraph{Simulations}%
\label{sec:simulation}

We show the relation between
$\rcalc$ and $\lcalc$,
as well as the atomic $\lambda$-calculus $\acalc$.
For that, we introduce a projection from $\rterms$ to
$\pterms$ implementing the unfolding of all the explicit cuts:
\[ \begin{array}{cccc}
  \maprl x \eqdef x &
  \maprl {(\lx.t)} \eqdef \lx.\maprl t &
  \maprl {(tu)} \eqdef \maprl t \maprl u &
  \maprl {(t \cut x u)} \eqdef \maprl t \msub x {\maprl u}.
\end{array} \]
Thus \eg\
$\maprl{x \es x {z \es y w} \es w {w'}}
= x \msub x {z \msub y w} \msub w {w'}
= z$.
The previous projection can be extended from list contexts to
substitutions  as follows:
$\maprl \ec  := \{ \} $ and
$\maprl {(\lc \cut x u)} := \maprl \lc \circ \msub x {\maprl u}$,
where $\circ$ denotes standard composition of substitutions.

\begin{restatable}{lem}{simullres}
  \label{l:simul-lr-es}%
  Let $t \in \rterms$.
  If $t \rew\nrep t'$, then $\maprl t \rewn\beta \maprl
  {t'}$. In particular, if either $t \rew\permlr t'$ or $t
  \rew\nrsub t'$, then $\maprl t = \maprl{t'}$.
\end{restatable}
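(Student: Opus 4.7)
The plan is to prove the statement by induction on the reduction. I would start by establishing two auxiliary facts: first, a substitution lemma for the projection, namely $\maprl{t \msub x u} = \maprl t \msub x {\maprl u}$, proved by a routine induction on $t$ using $\alpha$-conversion to avoid capture; second, the commutation with list contexts $\maprl{\ctx\hlc t} = \maprl\hlc(\maprl t)$, which follows directly from the definition of $\maprl\hlc$ and is easily extended to arbitrary term contexts.

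With these in hand, I would verify that each root-level $\permlr$-step preserves the projection by direct computation. For instance, for $\ly.t \cut x u \rrule\permlr (\ly.t) \cut x u$ one computes $\maprl{\ly.t \cut x u} = \ly.(\maprl t \msub x {\maprl u}) = (\ly.\maprl t) \msub x {\maprl u} = \maprl{(\ly.t) \cut x u}$, where the swap is legal because $y \notin \fv u$ implies $y \notin \fv{\maprl u}$. The other three permutations are analogous. For the root-level substitution rules, $\apprr$, $\distrr$, and $\varrr$ reduce to straightforward manipulations of substitutions, using the substitution lemma and, for $\distrr$, the identity $z \msub z {\maprl u} = \maprl u$. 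The only nontrivial case is $\absrr$: one uses the $\permlr$-preservation just proved, lifted to $\rewn\permlr$, together with the second auxiliary lemma, to deduce $\maprl u = \maprl\hlc(p)$ since $p$ is pure; then extracting $\hlc$ to the outside of the substitution on $x$ is justified by the freshness assumption $y \notin \fv\hlc$ and an $\alpha$-renaming of $\domlc\hlc$. For the $\db$ rule, $\maprl{\ctx\hlc{\lx.t} u} = \maprl\hlc((\lx.\maprl t)\maprl u) \rew\beta \maprl\hlc(\maprl t \msub x {\maprl u}) = \maprl{\ctx\hlc{t \es x u}}$, which gives exactly one $\beta$-step.

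Finally, I would lift the root-case analysis to arbitrary contexts $\fc$ by induction on $\fc$. When the step happens under a $\lambda$ or an application, context compatibility of $\rewn\beta$ suffices. When it happens inside an explicit cut, say under $t \cut x u$ with $u \rew\nrep u'$, the projections become $\maprl t \msub x {\maprl u}$ and $\maprl t \msub x {\maprl{u'}}$, and the conclusion follows from the induction hypothesis $\maprl u \rewn\beta \maprl{u'}$ together with stability of $\rewn\beta$ under meta-level substitution. The main obstacle is the $\absrr$ case, due to the combination of list-context extraction with the permutation closure on the body of the distributor, and the need to align freshness constraints throughout; once that is handled via the $\permlr$-preservation lemma and $\alpha$-renaming, the remaining cases are routine bookkeeping.
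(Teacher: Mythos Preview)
Your proposal is correct and follows essentially the same approach as the paper, which simply states ``by induction on the stated relations'' without further detail. Your auxiliary lemmas (substitution lemma for $\maprl{\cdot}$ and commutation with list contexts) and your treatment of the $\absrr$ case via the already-established $\permlr$-preservation are exactly the ingredients one needs to flesh out that induction.
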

\begin{proof}
  By induction on the stated relations.
\end{proof}

The relation $\rew \nrsub$ enjoys
\deft{full composition} on \emph{pure} terms.
Namely:
\begin{lem}
  \label{l:full-composition}
  For any  $p \in \pterms$, $t \es x p \rewp\nrsub t \msub x p$.
\end{lem}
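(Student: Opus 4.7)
My plan is to prove the statement by induction on the pure term $p$, using the fact that the structure of $p$ (variable, application, abstraction) directly determines which substitution rule of $\rew\nrsub$ fires at the root of $t \es x p$.

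The base case is $p = y$ (a variable): then $t \es x y \rew\varrr t \msub x y$ in one step, by definition of rule $\varrr$. For the inductive case $p = q r$, with $q, r \in \pterms$ both strictly smaller, I would first fire rule $\apprr$ at the root, obtaining $t \es x {qr} \rew\apprr t \msub x {y_1 y_2} \es {y_1} q \es {y_2} r$ for fresh $y_1, y_2$. Then I would apply the induction hypothesis to $q$, instantiated with the term $t \msub x {y_1 y_2}$, to reduce $t \msub x {y_1 y_2} \es {y_1} q \rewp\nrsub t \msub x {y_1 y_2} \msub {y_1} q$ under the outer context $\ec \es {y_2} r$; since $y_1$ is fresh in $t$, this equals $t \msub x {q y_2}$. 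Then the induction hypothesis on $r$ applied to $t \msub x {q y_2}$ yields $t \msub x {q y_2} \es {y_2} r \rewp\nrsub t \msub x {q y_2} \msub {y_2} r = t \msub x {qr}$, since $y_2$ is fresh in $t$ and does not occur in $q$. Chaining these reductions gives the desired $\rewp\nrsub$.

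The abstraction case $p = \ly.q$, with $q$ pure, is the main subtlety. A naive single application of $\distrr$ followed by $\absrr$ does not work, because after $t \es x {\ly.q} \rew\distrr t \dist x {\ly.z \es z q}$, the list context $\es z q$ inside the distributor would have to be extracted by rule $\absrr$, but $y$ may occur freely in $q$, violating the side condition $y \notin \fv\lc$. The fix is to first collapse $z \es z q$ to $q$ by applying rule $\varrr$ inside the distributor (which is allowed since $\rew\nrsub$ is closed under all contexts), obtaining $t \dist x {\ly.q}$. Since $q$ is pure, the side condition of $\absrr$ is trivially satisfied with $\lc = \ec$, so a final $\absrr$ step yields $t \msub x {\ly.q}$.

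The main obstacle is exactly the abstraction case above: noticing that one must reduce inside the distributor before applying $\absrr$. Aside from that, everything reduces to simple bookkeeping about meta-substitution and freshness of variables, namely that for fresh $y_i$, the compositions $t \msub x {y_1 y_2} \msub {y_1} q \msub {y_2} r$ collapse to $t \msub x {qr}$, which follows from the standard substitution lemma on pure $\lambda$-terms.
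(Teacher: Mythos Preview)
Your proof follows the same induction on $p$ as the paper's, and the variable and application cases are correct and essentially identical to the paper's argument.

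In the abstraction case, however, there is a genuine gap. You claim that $z \es z q$ collapses to $q$ ``by applying rule $\varrr$ inside the distributor''. This is incorrect: rule $\varrr$ has the form $t \es x {\ctx\lc y} \rrule\varrr \ctx\lc{t \msub x y}$, so it only fires when the content of the explicit substitution is (a list context applied to) a \emph{variable}. Since $q$ is an arbitrary pure term, $\varrr$ need not apply to $z \es z q$. The correct step---which is exactly what the paper does---is to invoke the \emph{induction hypothesis} on the strictly smaller pure term $q$: since $q \in \pterms$, the induction hypothesis (instantiated with $t := z$) gives $z \es z q \rewp\nrsub z \msub z q = q$. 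By closure of $\rew\nrsub$ under all contexts, this lifts to $t \dist x {\ly.z \es z q} \rewp\nrsub t \dist x {\ly.q}$, after which rule $\absrr$ applies with the empty list context, as you correctly observed.

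So your overall strategy is right, including the key observation that one must reduce inside the distributor before firing $\absrr$; you just named the wrong tool for that intermediate reduction.
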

\begin{proof}
  By induction on $p$.
  \begin{itemize}
    \item If $p = y$, then
      $t \es x y \rew\varrr t \msub x y$.
    \item If $p = p_1 p_2$, then
      \begin{align*}
        t \es x {p_1 p_2}
        &\rew\apprr t \msub x {yz} \es y {p_1} \es z {p_2} \\
        &\rewp\ih t \msub x {yz} \msub y {p_1} \es z {p_2}
        \rewp\ih t \msub x {yz} \msub y {p_1} \msub z {p_2} \\
        &= t \msub x {p_1 p_2}
      \end{align*}
    \item If $p = \lam y q$, then
      $t \es x {\lam y q}
      \rew\absrr t \dist x {\lam y {z \es z q}}
      \rewp\ih t \dist x {\lam y q}
      \rew\distrr t \msub x {\lam y q}$.
      \qedhere
  \end{itemize}
\end{proof}

This property does not hold in general if $p$ is not pure.
Indeed, if $t = xx$, then $(xx) \es x {y \es {y} {z}}$
does not $\nrsub$-reduce to $(y \es {y} {z})(y \es {y} {z})$, but to
$(yy) \es {y} z$.  However, full composition restricted to
pure terms is sufficient to prove simulation of the $\lambda$-calculus.

\begin{lem}[Simulation of the $\lambda$-calculus]
  \label{l:sim_beta}%
  Let $p_0 \in \pterms$.
  If $p_0 \rew\beta p_1$,
  then $p_0 \rew\db \rewp\nrsub p_1$.
\end{lem}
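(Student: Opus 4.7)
The plan is to reduce the statement to a root-level $\beta$-redex via induction on the context surrounding the redex, and then invoke full composition (\autoref{l:full-composition}) to translate the resulting explicit substitution into the meta-level substitution produced by $\beta$.

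More precisely, since $p_0 \rew\beta p_1$ in the pure $\lambda$-calculus, there exists a pure context $\fc$ and pure terms $r, s \in \pterms$ such that $p_0 = \ctx\fc{(\lx.r)s}$ and $p_1 = \ctx\fc{r\msub x s}$, where we assume $\alpha$-conversion to avoid any capture. I would first fire the $\db$-rule at the redex position (with empty distance list context), obtaining $p_0 \rew\db \ctx\fc{r\es x s}$. This is a legitimate $\db$-step because no list context is required here.

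Next, since $s \in \pterms$ is pure, \autoref{l:full-composition} gives $r \es x s \rewp\nrsub r \msub x s$. Closure of $\rew\nrsub$ under all contexts then lifts this reduction inside $\fc$, yielding $\ctx\fc{r \es x s} \rewp\nrsub \ctx\fc{r \msub x s} = p_1$. Concatenating the initial $\db$-step with this non-empty $\nrsub$-sequence produces the desired derivation $p_0 \rew\db \rewp\nrsub p_1$.

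The only subtle point, rather than a genuine obstacle, is the side condition on $\db$ that no variable of $s$ is captured when the redex is fired inside $\fc$; this is handled in the standard way by working up to $\alpha$-conversion, which the paper assumes systematically. Everything else is a direct application of the previously established machinery: the induction on contexts is routine because both $\db$ and $\nrsub$ are closed under all contexts, and the full composition lemma does exactly the work needed to turn the explicit substitution created by $\db$ into the meta-level substitution required by $\beta$.
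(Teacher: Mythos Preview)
Your proposal is correct and follows essentially the same route as the paper: decompose the $\beta$-step into a pure context around a root redex $(\lx.r)s$, fire $\db$ to produce $r\es x s$, invoke full composition (\autoref{l:full-composition}) on the pure argument $s$, and lift through the context by closure of both relations. The paper's proof is terser but identical in structure.
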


\begin{proof}
  Let $p_0 = \ctx\fc {t_0} \rew\beta \ctx\fc{t_1} = p_1$, where
  $t_0 = (\lam x q) p \rrule\beta q \msub x p = t_1$.
  By \autoref{l:full-composition}, $t_0 \rew\db q \es x p \rewp\nrsub t_1$.
  The inductive cases for $\fc$ are straightforward.
\end{proof}

The previous results have an important consequence relating the
atomic $\lambda$-calculus and the $\rcalc$-calculus.  Indeed, it can
be shown that reduction in the atomic $\lambda$-calculus is
captured by $\acalc$, and vice-versa.
More precisely, the $\rcalc$-calculus can be
simulated into the atomic $\lambda$-calculus by
\autoref{l:simul-lr-es} and~\cite{gundersen13a}, while the
converse holds by~\cite{gundersen13a} and
\autoref{l:sim_beta}.

A more structural correspondence between $\rcalc$ and $\acalc$ could also be
established. Indeed, $\rcalc$ can be first refined into a
(non-linear) calculus \emph{without} distance, let say $\rcalc'$,
so that permutation rules are integrated in the intermediate
calculus as independent rules. Then a structural relation can be
established between $\rcalc$ and $\rcalc'$ on one side, and
$\rcalc'$ and the atomic $\lambda$-calculus on the other side (as
for example done in~\cite{kesner07} for the $\lambda$-calculus).

\paragraph{Confluence}%
\label{sec:confluence}

By \autoref{l:s_term} the reduction relation $\rew\nrsub$ is terminating.
It is then  not difficult to prove confluence of $\rew\nrsub$
by using the unfolding function $\maprl \cdot$.

\begin{lem}
  \label{l:s-normal-pure}
  Let $t \in \rterms$. Then $t$ is in $\nrsub$-nf if and only if $t$ is pure.
\end{lem}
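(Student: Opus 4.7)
The plan is to prove each direction separately, with the forward direction being almost immediate and the reverse direction going by structural induction on $t$.

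For the direction ``pure implies $\nrsub$-nf'', I observe that the grammar of pure terms $\pterms$ contains no explicit cut constructor, and every redex pattern in the rules $\apprr$, $\distrr$, $\absrr$, $\varrr$ has either an explicit substitution $\cdot\es{}{\cdot}$ or an explicit distributor $\cdot\dist{}{\cdot}$ at its root. Since reduction under contexts does not eliminate this requirement (the redex itself must still be present inside $t$), a term with no explicit cut anywhere admits no $\nrsub$-step.

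For the converse, I would prove the contrapositive (``not pure implies not in $\nrsub$-nf'') by induction on the structure of $t$. The cases $t = x$, $t = \lx.t_1$, and $t = t_1 t_2$ are dispatched by the induction hypothesis applied to a non-pure immediate subterm. The interesting cases are the two explicit cuts. For $t = t_1 \esub x {t_2}$, if either $t_1$ or $t_2$ is non-pure we conclude by IH; otherwise $t_2$ is pure and we fire a root step by case analysis on the shape of $t_2$: if $t_2$ is a variable we apply $\varrr$ (with $\hlc = \ec$), if an application we apply $\apprr$, and if an abstraction we apply $\distrr$. For $t = t_1 \dist x {\ly.u}$ (the grammar forces the body to be an abstraction), again if $t_1$ or $u$ is non-pure we use IH; otherwise $u$ is pure and we apply $\absrr$ at the root with $\hlc = \ec$ and $p = u$: the side conditions $u \rewn\permlr \ctx\ec u$ and $y \notin \fv\ec = \emptyset$ hold trivially.

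The only potential obstacle is the $\absrr$ case, because the rule carries a non-trivial premise involving $\permlr$-decomposition and a freshness condition on the extracted list context. However, this obstacle dissolves once one notices that we are free to reduce \emph{inside} $u$ via the induction hypothesis whenever $u$ is non-pure, and so we only need to invoke $\absrr$ at the root in the case where $u$ is already pure, where the empty list context trivially satisfies both side conditions.
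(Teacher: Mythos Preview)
Your proof is correct and follows essentially the same approach as the paper's: both argue the forward direction is immediate (no cut, no $\nrsub$-redex) and handle the converse by induction on $t$, using the key observation that $\absrr$ only needs to be fired once the body of the distributor is already pure, which the induction hypothesis guarantees. The paper's presentation is slightly more condensed---it picks an arbitrary explicit cut inside $t$ and fires a root rule there directly (using the fact that \emph{any} substitution body $u$ decomposes as $\ctx\lc v$ with $v$ a variable, application or abstraction), whereas you do a full top-level case analysis and fall back on the IH for non-pure subterms---but the underlying argument is the same.
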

\begin{proof}
  It is obvious that a pure term is $\nrsub$-normal.
  Let us show the left-to-right implication
  and consider a $\nrsub$-normal term $t$.
  We reason by induction on $t$.
  Suppose that $t$ is not pure, so that  $t =  \ctx\fc{t_0 \cut x u}$.
  If the explicit cut is an explicit substitution, then one of the
  rules $\apprr,\distrr,\varrr$ apply, which contradicts the
  hypothesis. Otherwise the cut is a distributor, and $u$ is an
  abstraction $\lam y {u'}$, where $u'$ is in particular a
  $\nrsub$-normal form.
  By the  \ih\  $u'$ is pure so that the rule $\absrr$ applies,
  which contradicts the hypothesis again.
\end{proof}

\begin{cor}
  \label{l:nfs}
  Let $t \in \rterms$. If $t$ is in $\nrsub$-nf, then
  $\maprl t = t$.
\end{cor}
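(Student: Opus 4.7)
The plan is to combine Lemma \ref{l:s-normal-pure} with a straightforward induction showing that the unfolding $\maprl\cdot$ acts as the identity on pure terms.

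First, by Lemma \ref{l:s-normal-pure}, since $t$ is in $\nrsub$-normal form, $t$ is a pure term, so $t \in \pterms$. It then remains to establish the auxiliary claim that $\maprl p = p$ for every $p \in \pterms$. This follows by a trivial induction on the structure of $p$, using the three cases of the grammar of pure terms: for $p = x$, the definition gives $\maprl x = x$ directly; for $p = \lx.q$, the definition gives $\maprl{(\lx.q)} = \lx.\maprl q$, and the induction hypothesis on $q$ yields $\maprl q = q$, so $\maprl p = \lx.q = p$; for $p = q_1 q_2$, the definition gives $\maprl{(q_1 q_2)} = \maprl{q_1}\,\maprl{q_2}$, and applying the induction hypothesis twice gives $\maprl{q_1} = q_1$ and $\maprl{q_2} = q_2$, hence $\maprl p = q_1 q_2 = p$. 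Crucially, the case $p = q \cut x u$ does not arise since pure terms contain no explicit cuts, which is precisely why the identity-on-pure-terms property holds: the only clause of $\maprl\cdot$ that performs a non-trivial meta-level substitution is the one for explicit cuts.

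There is no real obstacle here; the argument is just a chain of two results. The only subtlety worth mentioning is that the statement would be false without the reduction to pure terms, since $\maprl\cdot$ genuinely modifies terms containing explicit cuts (for instance $\maprl{x \es x y} = y \neq x \es x y$), so the key content is really concentrated in Lemma \ref{l:s-normal-pure}.
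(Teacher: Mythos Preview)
Your proof is correct and matches the paper's intended argument: the paper states this as an immediate corollary of Lemma~\ref{l:s-normal-pure} without spelling out the induction, and your proof simply makes explicit the trivial fact that $\maprl{\cdot}$ is the identity on pure terms.
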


\begin{lem}
  \label{l:s_conf} \mbox{}
  The reduction relation $\rew\nrsub$ is terminating and confluent.
\end{lem}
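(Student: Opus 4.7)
\medskip

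The plan is to combine the already-established termination with the unfolding function to show that every term has a \emph{unique} $\nrsub$-normal form, which immediately yields confluence. Termination itself is free, since it is exactly the content of \autoref{l:s_term}.

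For confluence, I would proceed as follows. Suppose $t \rewn\nrsub u$ and $t \rewn\nrsub s$. By \autoref{l:s_term}, the relation $\rew\nrsub$ is terminating, so both $u$ and $s$ reduce to some $\nrsub$-normal forms, say $u \rewn\nrsub u'$ and $s \rewn\nrsub s'$. By \autoref{l:s-normal-pure}, both $u'$ and $s'$ are pure terms. By \autoref{l:simul-lr-es}, the unfolding is invariant under $\rew\nrsub$, so $\maprl t = \maprl{u'}$ and $\maprl t = \maprl{s'}$, whence $\maprl{u'} = \maprl{s'}$. Finally, \autoref{l:nfs} gives $\maprl{u'} = u'$ and $\maprl{s'} = s'$ (as both are $\nrsub$-normal). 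Therefore $u' = s'$, providing a common reduct of $u$ and $s$.

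The argument uses only results already established in the excerpt and requires no case analysis on redexes; there is no real obstacle. The only thing worth being careful about is ensuring the chain of equalities goes through the right intermediates — that the normal forms of $u$ and $s$ (not $u, s$ themselves) are what get identified via the unfolding — but this is entirely routine given the preceding lemmas. Newman's lemma would be an alternative route, but it would demand a tedious local-confluence case analysis over the four substitution rules and their critical pairs with the distant contexts and permutations; the unfolding-based argument bypasses this entirely.
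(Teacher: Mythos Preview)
Your argument is correct and is essentially the same as the paper's own proof: terminate to normal forms, use \autoref{l:simul-lr-es} to identify their unfoldings with $\maprl t$, and use \autoref{l:nfs} to conclude the normal forms coincide. The explicit appeal to \autoref{l:s-normal-pure} is harmless but redundant, since \autoref{l:nfs} already subsumes it.
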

\begin{proof}
  Termination holds by \autoref{l:s_term}.
  For confluence, suppose $t \rewn\nrsub t_1$ and $t \rewn\nrsub t_2$.
  Let $t_1 \rewn\nrsub t'_1$ and
  $t_2 \rewn\nrsub t'_2$, where
  $t'_1$ and $t'_2$ are in $\nrsub$-nf.
  Then by \autoref{l:nfs}, $\maprl{(t_i')} = t_i'$
    for both $i = 1,2$.
    By \autoref{l:simul-lr-es},
    $\maprl {(t_i')} = \maprl{t_i} = \maprl t$ so that $t_1' = t_2'$, closing the diagram.
\end{proof}

By termination of $\rew\nrsub$ any $t \in \rterms$ has
a $\nrsub$-nf, and by confluence this $\nrsub$-nf is
unique.
By \autoref{l:simul-lr-es} and \autoref{l:nfs} one obtains:
\begin{cor}
  \label{l:nfs_eq_maprl}
  Let $t \in \rterms$. Then the unique $\nrsub$-nf of $t$ is $\maprl t$.
\end{cor}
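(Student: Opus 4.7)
The plan is to package together the three ingredients already established just before this corollary: termination (\autoref{l:s_term}), confluence (\autoref{l:s_conf}), and the fact that $\maprl\cdot$ is invariant under $\nrsub$-reduction (\autoref{l:simul-lr-es}), together with the characterization of $\nrsub$-normal forms as pure terms (\autoref{l:nfs}). Since the comment immediately preceding the statement already observes that existence and uniqueness of the $\nrsub$-nf follow from termination plus confluence, the only remaining content of the corollary is the identification of this unique normal form with $\maprl t$.

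First, by \autoref{l:s_conf}, let $t^*$ denote the unique $\nrsub$-nf of $t$, so that $t \rewn\nrsub t^*$. Next, applying \autoref{l:simul-lr-es} inductively along the reduction sequence $t \rewn\nrsub t^*$ (each $\nrsub$-step preserves the projection, by the second part of that lemma) yields $\maprl t = \maprl{t^*}$. Finally, since $t^*$ is in $\nrsub$-nf, \autoref{l:nfs} gives $\maprl{t^*} = t^*$. Combining these two equalities, $t^* = \maprl{t^*} = \maprl t$, which is the desired identification.

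There is no real obstacle here: the corollary is essentially a bookkeeping consequence of the preceding lemmas, and the proof should be at most a couple of lines long. The only mild subtlety is to be explicit that the invariance in \autoref{l:simul-lr-es} is stated for a single $\nrsub$-step, so one invokes it along the entire reduction sequence $t \rewn\nrsub t^*$ to conclude $\maprl t = \maprl{t^*}$; this is immediate by induction on the length of the reduction.
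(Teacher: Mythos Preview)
Your proof is correct and follows exactly the approach the paper takes: the paper simply writes ``By \autoref{l:simul-lr-es} and \autoref{l:nfs} one obtains'' before stating the corollary, relying on the existence and uniqueness of the $\nrsub$-nf already noted from termination and confluence. Your write-up just makes these implicit steps explicit.
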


\begin{thm}
  The reduction relation $\rew\nrep$ is confluent.
\end{thm}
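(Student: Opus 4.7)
The plan is an interpretation argument through the pure $\lambda$-calculus, exploiting the unfolding $\maprl{\cdot}$ and Church-Rosser for $\rew\beta$ on $\pterms$. The key commutative diagram expresses that $\rew\nrep$ projects to $\rewn\beta$ and, dually, that $\rew\beta$ on pure terms lifts back to $\rewn\nrep$, after which confluence of $\rew\nrep$ reduces to confluence of $\rew\beta$.

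First, I would extend \autoref{l:simul-lr-es} to many-step reductions by a trivial induction on length: if $t \rewn\nrep t'$, then $\maprl t \rewn\beta \maprl{t'}$. Analogously, \autoref{l:sim_beta} yields by induction: if $p \rewn\beta q$ with $p, q \in \pterms$, then $p \rewn\nrep q$, since each individual $\beta$-step is simulated by one $\db$-step followed by several $\nrsub$-steps, all of which are $\nrep$-steps.

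Now, given $t \rewn\nrep t_1$ and $t \rewn\nrep t_2$, the first projection gives $\maprl t \rewn\beta \maprl{t_1}$ and $\maprl t \rewn\beta \maprl{t_2}$. Classical Church–Rosser for the $\lambda$-calculus supplies a pure term $p$ with $\maprl{t_1} \rewn\beta p$ and $\maprl{t_2} \rewn\beta p$. Lifting back yields $\maprl{t_i} \rewn\nrep p$ for $i \in \{1,2\}$. It remains to connect each $t_i$ to its unfolding inside $\rew\nrep$: by \autoref{l:nfs_eq_maprl}, $\maprl{t_i}$ is the $\nrsub$-normal form of $t_i$, so $t_i \rewn\nrsub \maprl{t_i}$, hence $t_i \rewn\nrep \maprl{t_i} \rewn\nrep p$, closing the diagram.

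I do not expect a real obstacle: every ingredient has been established. The only mild point of care is that in the induction lifting \autoref{l:sim_beta} from one step to many, the intermediate terms are indeed pure, which is immediate since $\rew\beta$ preserves $\pterms$. No new combinatorial analysis of the $\db$/$\nrsub$ interaction is needed, precisely because the termination and confluence of $\rew\nrsub$ (\autoref{l:s_conf}) already reduce everything to the pure calculus.
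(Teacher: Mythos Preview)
Your proposal is correct and follows essentially the same approach as the paper: project along $\maprl{\cdot}$ via \autoref{l:simul-lr-es}, apply Church--Rosser for $\rew\beta$, lift back via \autoref{l:sim_beta}, and bridge each $t_i$ to $\maprl{t_i}$ using that the latter is the $\nrsub$-normal form (\autoref{l:nfs_eq_maprl}). The paper's proof is organized identically, with the same diagram and the same supporting lemmas.
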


\begin{proof}
  Let $t \in \rterms$  such that $t \rewn\nrep
  t_1$ and $t \rewn\nrep t_2$. By
  simulation (\autoref{l:simul-lr-es}), we have $\maprl t
  \rewn\beta \maprl {t_1}$ and $\maprl t \rewn\beta \maprl {t_2}$. By 
  \autoref{l:s_conf}, there exist $t'_1$ (resp.~$t'_2$) the unique $\nrsub$-nf of $t_1$ (resp.~$t_2$).
  By \autoref{l:nfs_eq_maprl} we have
  $t'_1 = \maprl {t_1}$ and $t'_2 = \maprl {t_2}$.
  Because $\rew\beta$ is confluent, there is $u$ such
  that $\maprl {t_1} \rewn\beta u$ and $\maprl {t_2} \rewn\beta u$,
  and by
  \autoref{l:sim_beta}, $\maprl {t_1}  \rewn\nrep u$ and $\maprl {t_2} 
  \rewn\nrep u$. The diagram is then closed by $t_1 \rewn \nrsub t'_1
  = \maprl {t_1} \rewn\nrep u $ and $t_2 \rewn \nrsub t'_2  = \maprl {t_2}
  \rewn\nrep u$. Graphically,
  \[
    \begin{tikzcd}[baseline=(u.base)]
  & t  \arrow[->>]{dl}[above left]{\nrep}
  \arrow[->>]{dr}[above right]{\nrep}
  \arrow[->>]{d}[left]{\nrep} \\
  t_1 \arrow[->>]{d}[left]{\nrep}
  & \maprl{t} \arrow[->>]{dr}[above right]{\beta}
  \arrow[->>]{dl}[above left]{\beta}
  & t_2 \arrow[->>]{d}[right]{\nrep} \\
  t'_1 = \maprl{t_1} \arrow[->>]{dr}[below left]{\nrep}
  \arrow[->>]{dr}[bend right]{\beta}
  &
  & \maprl{t_2} = t'_2 \arrow[->>]{dl}[above right]{\beta}
  \arrow[->>]{dl}[bend left]{\nrep} \\
  & |[alias=u]| u
    \end{tikzcd}
    \tag*{\qedhere}
  \]
\end{proof}

\section{Encoding Evaluation Strategies}

In the theory of programming languages~\cite{plotkin75}, the notion of
\emph{calculus} is usually based on a non-deterministic rewriting
relation, while the deterministic notion of \emph{strategy} is
associated to a concrete machinery being able to implement a specific
evaluation procedure.  Typical evaluation strategies are call-by-name,
call-by-value and call-by-need, to name a few.

Although the atomic $\lambda$-calculus was
introduced as a technical tool to implement full laziness, only
its (non-deterministic) equational theory
was studied. In this paper we bridge the gap between
the theoretical presentation of the atomic $\lambda$-calculus and concrete
specifications of evaluation strategies.  Indeed, we use the
$\rcalc$-calculus to investigate two concrete cases: a call-by-name strategy
implementing weak head reduction, based on full substitution, and the
call-by-need fully lazy strategy, which uses linear
substitution.

In this work, we choose to implement full laziness for pure terms, that is, for
the usual $\lambda$-calculus without cuts.
Indeed, we see explicit cuts as a tool for a fully lazy implementation of the $\lambda$-calculus.
We thus keep in line with the definitions found in the literature.
Defining full laziness for terms with explicit cuts also brings technical
difficulties, which might divert from the main point: using node replication to
implement a fully lazy strategy.

We then restrict the set of terms to a subset
$\purelterms$, which simplifies the formal reasoning of explicit cuts inside
distributors. Indeed, distributors will all be of the
shape $\dist x {\ly.\ctx\llc p}$, where $p$ is a pure
term containing the constructors that have been (symbolically) shared in the
distributor, and $\llc$ is a \emph{commutative list} (defined
below).
We argue that this restriction is natural
in a weak implementation of the $\lambda$-calculus: it
is true on pure terms and is preserved through
evaluation.
We consider the following grammars.
\[ \begin{array}{llll}
  \gramTitle{Linear Cut Values} & \llset & \Coloneqq & \lx.\ctx\llc p
  \mbox{ where } y \in \domlc\llc \implies \nbocc y p = 1 \\
  \gramTitle{Commutative Lists} & \llc & \Coloneqq & \ec \mid \llc \es x p \mid \llc \dist x \llset
  \mbox{ where } \nbocc x \llc = 0 \\
  \gramTitle{Values}  & v & \Coloneqq & \lx.p \\
  \gramTitle{Restricted Terms} & \purelterms & \Coloneqq &
  x \mid v \mid \purelterms\ \purelterms \mid \purelterms \es x \purelterms
  \mid \purelterms \dist x \llset
\end{array} \]
 A term $t$ generated by any of the grammars $G$ defined above is
written $t \in G$.
Thus \eg\ $\lx.(yz) \es y \id \es z \id \in \llset$
but $\lx.(yy) \es y \id \notin \llset$,
$\ec \es x {yz} \es {x'} \id \in \llc$
but $\ec \es x {yz} \es y \id \notin \llc$,
and $(yz) \dist y \id \in \purelterms$ but
$(yz) \dist y {\lx.(yy) \es y \id} \notin \purelterms$.

The set $\llset$ is stable by the relation $\rew\nrsub$ (\autoref{l:t-stable}),
but $\purelterms$ is clearly not stable under the whole $\rew\nrep$ relation, where $\db$-reductions
may occur under abstractions. For instance, let
$t_1 = (yz) \dist y {\lx.(\ly.yy) \id} \rew\db
(yz) \dist y {\lx.(yy) \es y \id} = t_2$.
Then $t_1 \in \purelterms$ but $t_2 \notin \purelterms$, since $ \nbocc y {yy} = 2$.
However, $\purelterms$ is stable under both weak strategies
to be defined: call-by-name and call-by-need.
We factorize the proofs by proving stability for a more general relation
$\rew{\factorise}$, defined
as the relation $\rew\nrep$ with $\db$-reductions forbidden under abstractions and inside
distributors.

\begin{lem}
  \label{l:t-stable}
  If $t \in \llset$ and $t \rew\nrsub t'$, then $t' \in \llset$.
\end{lem}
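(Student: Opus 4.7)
My plan is to proceed by strong induction on $\tmsz t$. Since $t \in \llset$ means $t = \lx.\ctx\llc p$ with $p$ pure, $\llc$ commutative, and $\nbocc y p = 1$ for every $y \in \domlc\llc$, and since any $\nrsub$-step of $t$ occurs strictly beneath the outer $\lx$, I will write the step as $t = \ctx\fc{r_0} \rew\nrsub \ctx\fc{r_1} = t'$ with $r_0 \rrule\nrsub r_1$ a root step, and case on where the hole of $\fc$ sits inside $\ctx\llc p$.

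If the hole lies inside $p$ or inside the pure right-hand side of some substitution cut of $\llc$, then the reduction would be taking place inside a pure subterm, which contradicts \autoref{l:s-normal-pure}. If the hole lies strictly inside the body of some distributor $\dist{x_i}{v_i}$ of $\llc$, then $v_i \in \llset$ is strictly smaller than $t$, so the induction hypothesis yields $v_i' \in \llset$; the other cuts of $\llc$ and the body $p$ are untouched, so commutativity and the occurrence condition transfer unchanged to $t'$.

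The substantive cases are those in which the root step fires at the top of some cut $c_k$ of $\llc$. For $\apprr$, the cut $\es{x_k}{qq'}$ is replaced by $\es y q \es z {q'}$ while $\msub{x_k}{yz}$ is applied to the preceding term; commutativity of $\llc$ forces this substitution to hit only $p$, and the unique occurrence of $x_k$ splits into one $y$ and one $z$, matching the two new cuts. For $\distrr$, the new distributor's right-hand side $\lam y {z \es z u}$ is immediately in $\llset$ by construction. For $\varrr$, the cut is eliminated and the substitution $\msub{x_k}{y}$ preserves the occurrence counts of all remaining variables by freshness.

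The main technical effort lies in the $\absrr$ case. The cut $\dist{x_k}{\lam y u}$ has $\lam y u \in \llset$, so $u = \ctx{\llc'}{p'}$ for some commutative $\llc'$ and pure $p'$, and the rule fires via some $u \rewn\permlr \ctx\hlc p$ with $y \notin \fv\hlc$ and $p$ pure. Since the cuts of $\llc'$ already sit at the top level of $u$ and $p'$ is pure, I claim that $\hlc$ and $p$ must correspond to $\llc'$ and $p'$ respectively (up to reordering of independent cuts), with the side condition forcing $y \notin \fv{\llc'}$. I then verify that the merged list $c_1 \cdots c_{k-1} \cdot \llc' \cdot c_{k+1} \cdots c_n$ is commutative --- by Barendregt freshness of $\domlc{\llc'}$ relative to the surrounding cuts --- and that for each $w \in \domlc{\llc'}$ the new body $p \msub{x_k}{\lam y {p'}}$ contains exactly one occurrence of $w$, inherited from the unique occurrence in $p'$ carried by the unique occurrence of $x_k$ in $p$; all other occurrence counts are preserved by Barendregt.
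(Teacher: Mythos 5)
Your proof is correct and follows essentially the same route as the paper's: a case analysis on the four root $\nrsub$-rules, using commutativity of $\llc$ to confine the meta-level substitution to the pure body and tracking the single-occurrence invariant, with the $\absrr$ case merging the distributor's inner list into the surrounding one. You are in fact somewhat more explicit than the paper about the contextual closure (steps at inner cuts and inside nested distributor bodies, handled by your size induction), which the paper dispatches with a brief lifting remark at the end of its proof.
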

\begin{proof}
  We first show a more general statement, namely that
  $t = \ctx{\llc_0}{p_0}$ with $\nbocc y {p_0} = 1$
  for every $y \in \domlc{\llc_0}$, 
  and $t \rrule\nrsub t'$ imply
  $t' = \ctx{\llc_1}{p_1}$
  with $\nbocc y {p_1} = 1$ for every $y \in \domlc{\llc_1}$.
  In the following rules $\varrr$, $\apprr$ and  $\distrr$, there is no $\lc$ context inside the explicit substitutions because lists in $\llc$
  only contain pure terms by definition.
  \begin{itemize}
    \item $t = u \es x {z} \rrule\varrr u \msub x {z} = t'$.
      This is straightforward.
    \item $t= \ctx\llc p \es x {q_1q_2} \rrule\apprr \ctx\llc {p \msub x {x_1x_2}} \es {x_1}{q_1} \es {x_2}{q_2}=t'$.
      Freshness of both $x_1$ and $x_2$ implies $\nbocc{x_1}{p \msub x {x_1x_2}}
      = \nbocc{x_2}{p \msub x {x_1x_2}} = \nbocc x p = 1$,
      and $\nbocc {x_2}{q_1} = 0$.
    \item $t=\ctx\llc p \es x {\lam {z} {p'}} \rrule\distrr \ctx\llc p
      \dist x {\lam {z} {w} \es {w} {p'}}= t'$.
      By hypothesis $\nbocc x p = 1$,
      $\nbocc x \llc =0$ and $\lam {z} {p'}$ is pure. Then,
      $\lam {z} {w \es {w} {p'}} \in \llset$ because
      $p'$ is pure and $\nbocc {w} {w} = 1$.

    \item $t = \ctx\llc p \dist x {\lam {z} {\ctx {\llc'} {p'}}}
      \rrule\absrr \ctx {\llc'}{\ctx\llc p \msub x {\lam {z} {p'}}} = t'$.
      By hypothesis $\lam {z} {\ctx{\llc'} {p'}} \in \llset$ thus $\lam
      {z} {p'}$
      and $p \msub x {\lam {z} {p'}}$ are pure. We conclude since $\nbocc x \llc=0$ by hypothesis.
  \end{itemize}
  Now we can lift the property to $\llset$
  by observing that we necessarily have $t = \lam x u \rew\nrsub \lam x {u'} $,
  where $u \rew \nrsub u'$.
  Then we conclude by the previous point.
\end{proof}

\begin{lem}
  \label{l:u-stable}
  If $t \in \purelterms$ and $t \rew{ \factorise} t'$, then $t' \in \purelterms$.
\end{lem}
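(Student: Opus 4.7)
The plan is to proceed by induction on the context $\fc$ such that $t = \ctx\fc{t_0} \rew{\purelr\nrep} \ctx\fc{t_1} = t'$ with $t_0 \rrule{r} t_1$ a root-level step for some rule $r \in \{\db, \apprr, \distrr, \absrr, \varrr\}$, subject to the restriction that $r \neq \db$ if $\fc$ places the redex under an abstraction or inside a distributor. Along the way I will need an auxiliary substitution lemma stating that if $t \in \purelterms$ and $v$ is a value (i.e., $v = \ly.p$ with $p$ pure), then $t \msub x v \in \purelterms$; the proof is a routine induction on $t$ using the fact that pure terms are stable under pure substitution, hence commutative lists $\llc$ and elements of $\llset$ remain in their respective grammars after substituting a pure value. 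The lemma \autoref{l:t-stable} will also be invoked for stability of $\llset$ under $\nrsub$.

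For the base cases, I would analyse each root rule. For $\db$, since $\ctx\lc{\lx.r}\,u \in \purelterms$ and the grammar only produces abstractions as values, necessarily $r$ is pure, so $r \es x u \in \purelterms$ and wrapping with $\lc$ preserves membership. For $\varrr$, variable renaming preserves $\purelterms$ trivially. For $\apprr$, both $u,s \in \purelterms$ and the meta-substitution $t_0 \msub x {yz}$ stays in $\purelterms$ (since we substitute an application of two fresh variables). For $\distrr$, the key point is to verify that the produced $\ly.z \es z u$ belongs to $\llset$: since the premise forces $\ctx\lc{\ly.u} \in \purelterms$, inspection of the grammar shows $u$ is pure, so $\ec \es z u$ is a valid commutative list with $\nbocc z z = 1$.

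The $\absrr$ case is the most delicate and I expect it to be the main obstacle. Here $t = t_0 \dist x {\ly.u}$ with $\ly.u \in \llset$, and the rule produces $\ctx\lc{t_0 \msub x {\ly.p}}$ where $u \rewn\permlr \ctx\lc p$. I would first argue that, because $\ly.u \in \llset$ means $u = \ctx{\llc_0}{p_0}$ for a commutative list $\llc_0$ and pure $p_0$, the list context $\lc$ obtained by the permutation is itself a commutative list (the permutation rules only rearrange cuts without changing their shape, and the linearity condition on the bound variables is preserved). Then $\ly.p$ is a pure value, so by the auxiliary substitution lemma $t_0 \msub x {\ly.p} \in \purelterms$; finally wrapping this term with the commutative list $\lc$ produces a term in $\purelterms$ by repeated application of the production rules $\purelterms \es x \purelterms$ and $\purelterms \dist x \llset$.

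For the inductive cases, most propagate trivially: reductions inside an application or on either side of an ES yield terms still produced by the grammar by the inductive hypothesis. A reduction inside a distributor body $\dist x {\llset}$ must be an $\nrsub$-step (since $\db$ inside distributors is forbidden in $\purelr\nrep$), and \autoref{l:t-stable} ensures the result remains in $\llset$. A reduction inside an abstraction body cannot occur, because values of $\purelterms$ have pure bodies, which contain no cuts and no $\db$-redexes (the latter being forbidden by the restriction anyway). This completes the case analysis.
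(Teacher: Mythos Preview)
Your proposal is correct and follows the same inductive approach the paper indicates (the paper's own proof is the single line ``Straightforward by induction on the reduction relation''). Two small sharpenings: state the auxiliary substitution lemma for arbitrary pure $p$ rather than just values, so that it uniformly covers the $\varrr$, $\apprr$, and $\absrr$ cases; and in the $\absrr$ case, note that $\ly.u \in \llset$ already forces $u = \ctx{\llc_0}{p_0}$ to be in $\permlr$-normal form (all cut arguments in $\llc_0$ are either pure or abstractions, so no permutation rule applies anywhere), hence $\lc = \llc_0$ directly rather than merely ``a rearranged commutative list''.
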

\begin{proof}
    Straightforward by induction on the reduction relation.
\end{proof}

\subsection{Call-by-name}%
\label{sec:whlr}

The \deft{call-by-name} (CBN) strategy $\rew\whlr$
(\autoref{f:whlr}) is defined on the set of terms $\purelterms$ as
the union of the following relations $\rew\whdblr$ and $\rew\whslr$.
The strategy is \emph{weak} as there is no reduction under abstractions.
It is also worth noticing (as a particular case of \autoref{l:u-stable})
that $t \in \purelterms$ and $t \rew\whlr t'$ implies $t' \in \purelterms$.

\begin{figure}[h]
  \begin{mathpar}
    \inferrule*[right=\whlrdbroot]{
    t \rrule\db t'}{
    t \rew\whdblr t'} \and
    \inferrule*[right=\whlrdbapp]{
    t \rew\whdblr t'}{
    tu \rew\whdblr t'u} \and
    \inferrule*[right=\whlrdbsub]{
    t \rew\whdblr t'}{
    t \cut x u \rew\whdblr t' \cut x u}\\
    \inferrule*[right=\whlrsubroot]{
    t \rrule\nrsub t'}{
    t \rew\whslr t'} \and
    \inferrule*[right=\whlrsubapp]{
    t \rew\whslr t'}{
    tu \rew\whslr t'u} \and
    \inferrule*[right=\whlrsubsub]{
    t \rew\whslr t'}{
    u \dist x {\ly.t} \rew\whslr u \dist x {\ly.t'}}
  \end{mathpar}
  \caption{Call-by-Name Strategy}
  \label{f:whlr}
\end{figure}

\begin{exa}
  This example follows a call-by-name evaluation.
    The name of the contextual rule is written in the superscript of the
  arrow symbol, and the redex is underlined.
  \[ \begin{array}{rll}
    \underline{(\lx_1.\id (x_1\id))(\ly.(\id\id) y)}
    &\rew{}^{\textsc{db}}& \underline{(\id (x_1\id)) \es {x_1}
    {\ly.(\id\id) y}}\\
    &\rew{}^{\textsc{s}}& (\id (x_1\id)) \dist {x_1} {\ly.\underline{z \es
    z {(\id\id) y}}}\\
    &\rew{}^{\textsc{subs}}& (\id (x_1\id)) \dist {x_1} {\ly.\underline{(z_1z_2) \es {z_1}
    {\id\id} \es {z_2} y}}\\
    &\rew{}^{\textsc{subs}}& \underline{(\id (x_1\id)) \dist {x_1} {\ly.(z_1y)
    \es {z_1} {\id\id}}}\\
    &\rew{}^{\textsc{s}}& (\underline{\id ((\ly.z_1y) \id)}) \esub {z_1}
    {\id\id}\\
    &\rew{}^{\textsc{subdb}}& x_2 \es {x_2} {(\ly.z_1y) \id} \es {z_1}
    {\id\id}\\
    &\rewp{}& ((\ly.z_1y)\id) \es {z_1} {\id\id}\\
    &\rewp{}& \ly.(\id\id)y
  \end{array} \]
\end{exa}

The strategy $\rew\whlr$ does not impose duplication of all nodes in the
body of an abstraction inside the distributor: only the skeleton of the
abstraction $\ly.(\id\id)y$ is replicated.
But the strategy forbids $\db$-reductions inside explicit cuts, so that there is
no benefit gained by keeping shared terms such as $\id\id$.
Indeed, the main idea behind full laziness is that shared terms are only reduced once.
The CBN strategy, on the contrary, duplicates arguments before reducing them.
The absence of optimization is reflected by the fact that the strategy, although not deterministic,
enjoys the remarkable \emph{diamond} property, guaranteeing in particular that
all reduction sequences starting from $t$ and ending in a normal form have the
same length.

\begin{restatable}[Diamond]{prop}{diamond}
  \label{l:diamond}
  The CBN strategy enjoys the diamond property,
  \ie\ for any terms $t, u, s \in \purelterms$
  such that $t \rew\whlr u$, $t \rew \whlr s$ and $u \neq s$,
  there exists $t'$ such that $u \rew\whlr t'$ and $s \rew\whlr t'$.
\end{restatable}
\begin{proof}
  By separate inductions on the reduction relations
   $\pair{\rew\whdblr}{\rew\whdblr}$, $\pair{\rew\whslr}{\rew\whslr}$ and $\pair{\rew\whdblr}{\rew\whslr}$ \seeappendix{p:diamond}.
\end{proof}

It is worth noticing that  call-by-name in the $\lambda$-calculus
can be simulated by call-by-name in $\rcalc$.
The former can be defined by weak-head reduction, denoted $\rew\whl$, and generated by the following rules:
\begin{mathpar}
  \inferrule{\phantom{}}{(\ly.t)u \rew\whl t \msub x u} \sep\sep
  \inferrule{t \rew\whl t'}{tu \rew\whl t'u}
\end{mathpar}
There is in particular a one-to-one relation between $\beta$-steps and
$\whdblr$-steps.
\begin{lem}[Relating Call-by-Name Strategies]
  \label{l:rel_strat}\mbox{}
  \begin{itemize}
    \item Let $p_0 \in \pterms$. If $p_0 \rew\whl p_1$, then
      $p_0 \rew\whdblr\rewp\whslr p_1$ (thus $p_0 \rewp\whlr
      p_1$).
    \item Let $t_0 \in \purelterms$. If $t_0 \rew{\whdblr} t_1$, then
      $\maprl {t_0} \rew\whl \maprl {t_1}$.
      If $t_0 \rew\whslr t_1$, then $\maprl {t_0} = \maprl {t_1}$.
  \end{itemize}
\end{lem}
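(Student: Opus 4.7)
The plan is to prove the two items separately, with the first being a straightforward simulation using results already established and the second proceeding by induction on the contextual CBN rules.

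For the first item, the weak-head step $p_0 \rew\whl p_1$ decomposes as $p_0 = (\lx.q)\,p\,r_1\ldots r_n \rew\whl q\msub x p\,r_1\ldots r_n = p_1$ (since $p_0$ is pure and $\whl$-reduction occurs along a spine of left-applications with a $\beta$-redex at the head). First I would fire the $\db$-rule at the head position, obtaining $p_0 \rew\whdblr (q \es x p)\,r_1 \ldots r_n$ by iterated application of rule \whlrdbapp\ on top of \whlrdbroot. Then, since $p$ is pure, I would invoke Lemma~\ref{l:full-composition} to produce $q \es x p \rewp\nrsub q \msub x p$, and observe that each of these $\nrsub$-steps happens at the head position in the spine, so the contextual rule \whlrsubapp\ (on top of \whlrsubroot) lifts each one to a $\whslr$-step. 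This closes $p_0 \rew\whdblr \rewp\whslr p_1$.

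For the second item, the $\whslr$-case is immediate: since $\rew\whslr$ is by definition the closure of $\rrule\nrsub$ under the CBN contextual rules, every $\whslr$-step is in particular a $\nrsub$-step, so Lemma~\ref{l:simul-lr-es} gives $\maprl{t_0} = \maprl{t_1}$. The $\whdblr$-case proceeds by induction on the derivation of $t_0 \rew\whdblr t_1$. In the base case (\whlrdbroot) we have $t_0 = \ctx\hlc{\lx.s}\,u \rew\db \ctx\hlc{s\es x u} = t_1$, and unfolding both sides (using the clause $\maprl{t\cut x u} = \maprl t \msub x {\maprl u}$ iteratively) reveals that $\maprl{t_0}$ is exactly the $\beta$-redex $(\lx.\maprl s\,\sigma)\,\maprl u$ where $\sigma$ is the composition of substitutions corresponding to $\hlc$, and $\maprl{t_1}$ is its $\beta$-contractum; this is one $\whl$-step at the head. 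The inductive cases for \whlrdbapp\ and \whlrdbsub\ follow by applying the induction hypothesis and then reusing the context rules of $\whl$: the application case is immediate, and the substitution case uses that $\rew\whl$ is stable under meta-level substitution (the head $\beta$-redex structure is preserved by substituting free variables, since the head is never a substituted variable).

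The main subtlety will be the base case of the second item, where I need to verify carefully that unfolding a subterm of the form $\ctx\hlc{\lx.s}\,u$ really produces a head $\beta$-redex in the $\lambda$-calculus. The distance context $\hlc$ pushes its substitutions outside the abstraction, but since $\alpha$-conversion ensures no capture and $\domlc\hlc \cap \fv u = \emptyset$, the substitutions $\sigma$ commute with the abstraction and with the outer application, so the unfolding genuinely has the shape $(\lx.r)\,\maprl u$ required to fire $\whl$. Modulo these variable-convention bookkeeping arguments, both items reduce cleanly to the simulation lemma already proven (Lemma~\ref{l:simul-lr-es}) and to the full composition lemma (Lemma~\ref{l:full-composition}).
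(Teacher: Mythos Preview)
Your treatment of the second item is correct and essentially matches the paper's proof; your handling of the distance context $\hlc$ in the base case is in fact more explicit than the paper's sketch.

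There is, however, a genuine gap in your argument for the first item. You want to show $q \es x p \rewp\whslr q \msub x p$ and you invoke Lemma~\ref{l:full-composition}, claiming that the $\nrsub$-steps it produces are all root steps that can be wrapped by \whlrsubroot. This is not true for the sequence constructed in that lemma's proof: in the application case $p = p_1 p_2$, after the initial $\apprr$-step one obtains $q' \es y {p_1} \es z {p_2}$, and the proof next applies the induction hypothesis to unfold the \emph{inner} substitution $\es y {p_1}$, i.e.\ it reduces inside the body of the outer $\es z {p_2}$. But $\rew\whslr$ has no contextual rule allowing reduction in the body of an explicit substitution (\whlrsubsub\ only enters distributors), so those steps are not $\whslr$-steps. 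The fix, which the paper carries out, is to redo the induction on $p$ so that the \emph{outermost} substitution is always unfolded first: from $q' \es {z_0} {q_0} \es {z_1} {q_1}$ one first applies the \ih\ on $q_1$ at the root to get $(q' \es {z_0} {q_0}) \msub {z_1}{q_1}$, and only then applies the \ih\ on $q_0$. With this ordering every step is a root $\rrule\nrsub$-step (or, in the abstraction case, a step inside a distributor handled by \whlrsubsub), and your lifting through the spine via \whlrsubapp\ then goes through.
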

\begin{proof} \hfill
  \begin{itemize}
    \item By induction on $\rew\whl$.
      \begin{itemize}
        \item Let $p_0 = (\lam x p) q \rew\beta p\msub x q = p_1$. Then $(\lam x p) q \rew\whdblr p \es x q$ and we need to verify that $p \es x q \rewp{\whslr} p \msub x q$. The proof of $t \es x q \rewp{\whslr} t \msub x q$ for any $t\in \purelterms$ and pure term $q$ is by induction on $q$:
          \begin{itemize}
            \item If $q = y$ then $t \es x y \rew\whslr t \msub x y$. 
            \item If $q = q_0 q_1$ then $t \es x q \rew\whslr t \msub x {z_0z_1} \es {z_0} {q_0} \es {z_1} {q_1}$. By the \ih\ we have
              \[ \begin{array}{c}
                t \msub x {z_0z_1} \es {z_0} {q_0} \es {z_1} {q_1}
                \rewp\whslr  (t \msub x {z_0z_1} \es {z_0} {q_0}) \msub {z_1} {q_1} = 
                t \msub x {z_0 q_1} \es {z_0} {q_0}\\
                \mbox{ and  } \\
                t \msub x {z_0 q_1} \es {z_0} {q_0} \rewp\whslr t \msub x {z_0 q_1} \msub {z_0} {q_0} = t \msub x {q_0 q_1}
              \end{array} \]
              Therefore, $t \es x {q_0 q_1} \rewp\whslr t \msub x {q_0q_1}$.
            \item If $q = \lam y q'$ then $t \es x q \rew\whslr t \dist x {\lam y z \es z {q'}}$.
              By the \ih\ we have that $z \es z {q'} \rewp\whslr z \msub z {q'} = q'$ thus $t \dist x {\lam y z \es z {q'}} \rewp\whslr t \dist x {\lam y q'} \rew\whslr  t \msub x {\lam y q'}$. Therefore, $t \es x {\lam y q'} \rewp\whslr t \msub x {\lam y q'}$.
          \end{itemize} 
        \item Let $p_0 =  pq \rew\whl p'q = p_1$ where $p \rew\whl p'$. By the
          \ih\ we have that $p \rewp\whlr p'$ then, by $\whlrdbapp$ and
          $\whlrsubapp$, $p_0 =  pq \rewp\whlr p'q = p_1$.
      \end{itemize}
    \item By case analysis on $\rew\whlr$. If $t_0 \rew\whslr t_1$ then $\maprl
      {t_0} = \maprl{t_1}$ by \autoref{l:simul-lr-es}. If $t_0\rew\whdblr t_1$ then we prove the property by induction on $\rew\whdblr$.
      \begin{itemize}
        \item Let $t_0 = (\lam x t) u \rew\whdblr t \es x u = t_1$. Then
          $\maprl {t_0} = (\lam x \maprl t) \maprl u \rew\beta \maprl t \msub x {\maprl u} = \maprl{t_1}$. Note that both $\maprl t$ and $\maprl u$ are pure terms. 
        \item Let $t_0 = t u \rew\whdblr t' u = t_1$ where $t \rew\whdblr t'$. Then
          $\maprl t \rew\whl \maprl{t'}$ by the \ih, thus $\maprl {t_0} = \maprl t \maprl u \rew\whl \maprl{t'} \maprl u = \maprl {t_1}$. 
        \item Let $t_0 = t \cut x u \rew\whdblr t' \cut x u = t_1$ where $ t \rew\whdblr t'$.
          Then $\maprl t \rew\whl \maprl{t'}$ by the \ih, thus $\maprl {t_0} = \maprl t \msub x {\maprl u} \rew\whl \maprl{t'} \msub x {\maprl u} = \maprl {t_1}$. Note that the result depends on the closure of $\rew\whl$ by (implicit) substitutions, which has a straightforward proof by induction on (pure) term $\maprl {t}$, using substitution composition.
          \qedhere
      \end{itemize}
  \end{itemize}
\end{proof}

The following grammar $\nmnormg$ is meant to characterize
normal forms with respect to the $\rew\whlr$ strategy:
\[ \begin{array}{lcl}
  \nmnormg &\Coloneqq& \lam x p \mid \nmneutg \\
  \nmneutg &\Coloneqq& x \mid \nmneutg \, t
\end{array} \]
  Notice that all normal forms are pure terms: we unfold all explicit
substitutions with $\nrsub$-steps.
\begin{lem}\label{l:name_nf}
  Let $t \in \purelterms$. Then $t \in \nmnormg$ iff $t$ is in $\whlr$-nf.
\end{lem}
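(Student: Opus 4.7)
The plan is to establish each direction by structural induction, using the shape constraints imposed by the strategy rules in \autoref{f:whlr}.

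For the $(\Leftarrow)$ direction, I induct on the grammar of $\nmnormg$. If $t = \lam x p$, no strategy rule applies: both root rules $\whlrdbroot$ and $\whlrsubroot$ require an application, an explicit substitution, or a distributor at the head, and each propagation rule requires an outer constructor that an abstraction does not have. If $t \in \nmneutg$ I subinduct on its construction: a variable admits no reduction, and for $t_1 t_2$ with $t_1 \in \nmneutg$ the inductive hypothesis gives $t_1$ $\whlr$-normal (blocking $\whlrdbapp$ and $\whlrsubapp$), while the shape of a neutral (a variable applied to some arguments) forbids $t_1 = \ctx\hlc{\lam x s}$ and so blocks the root $\db$-redex.

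For the $(\Rightarrow)$ direction, I induct on $t \in \purelterms$. Variables and values $v = \lam x p$ land directly in $\nmneutg \subseteq \nmnormg$ and $\nmnormg$ respectively, the latter using that values of $\purelterms$ have pure bodies by construction. For an application $t = t_1 t_2$, propagation forces $t_1$ to be $\whlr$-normal, so $t_1 \in \nmnormg$ by the inductive hypothesis; $t_1$ cannot be an abstraction (otherwise $t$ would be a root $\db$-redex), hence $t_1 \in \nmneutg$ and thus $t \in \nmneutg$.

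The main obstacle is excluding the shapes $t = t_1 \es x t_2$ and $t = t_1 \dist x l$, which I handle by exhibiting a $\whlr$-redex in each. For the substitution case, any $t_2 \in \purelterms$ can be written $\ctx\hlc r$ with $\hlc$ a list context and $r$ a variable, value, or application, by iteratively peeling outer cuts; depending on $r$, the corresponding root rule among $\varrr, \distrr, \apprr$ then fires on $t$ and $\whlrsubroot$ makes it a $\whslr$-step. For the distributor case, $l \in \llset$ has the form $\lam y {\ctx\llc p}$ with $\llc$ commutative, and I argue by a subsidiary induction on the structure of $l$: if $\llc = \ec$ then $\absrr$ applies at the root of $t$ with $\hlc = \ec$, making the side condition $y \notin \fv\hlc$ vacuous; otherwise the body of $l$ is itself either a root $\apprr/\distrr/\varrr$-redex (when $\llc$ ends with an explicit substitution whose argument is a pure term) or a distributor to which the subsidiary hypothesis supplies a $\whslr$-step (when $\llc$ ends with a distributor), and in either case $\whlrsubsub$ lifts that step to $t$.
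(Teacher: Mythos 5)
Your proof is correct and follows essentially the same route as the paper: one implication by induction on the grammar $\nmnormg$/$\nmneutg$, the other by induction on $\purelterms$, with the explicit-substitution and distributor cases dispatched by exhibiting an applicable $\whslr$-redex (your subsidiary induction on $l \in \llset$ just spells out what the paper leaves as ``there is always an applicable structural rule''). The only blemish is cosmetic: your $(\Leftarrow)$ and $(\Rightarrow)$ labels are swapped relative to the statement as written, but both implications are proved.
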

\begin{proof}
  The left-to-right implication is straightforward.
  The right-to-left implication proof is by induction on $\purelterms$.
  \begin{itemize}
    \item $t = x$. By definition, $t \in \nmneutg$.
    \item $t = \lam x p$. Then $t\in \nmnormg$ by definition.
    \item $t = t' u$, where $t', u \in \purelterms$.
      By definition of $\rew\whlr$,  $t$ in
      $\whlr$-nf implies $t'$   is also in $\whlr$-nf
      and $t'$ is neither an
      explicit cut nor an abstraction. Thus $t' \in \nmneutg$
      by the \ih\ and we can conclude $t \in \nmneutg$.
    \item $t = t' \esub x u$, where $t', u \in \purelterms$.
      This is not possible because there is always an
      applicable structural rule which
      would contradict $t$ to be in $\whlr$-nf.
    \item $t = t' \dist x {\lam y u}$, where $\lam y u =\lam y {\ctx \llc {p}} \in \llset $.
      Then either we can apply a structural rule
      on $u$,
      or $u$ is pure (\ie\ $\llc=\ec$) and we can apply rule $\rew\absrr$. In both cases we  would have a contradiction with
      $t$ in $\whlr$-nf.
      \qedhere
  \end{itemize}
\end{proof}

\subsection{Call-by-need}%
\label{sec:ndlr}

We now specify a deterministic strategy $\ndlr$ implementing
demand-driven computations and only linearly replicating nodes of \emph{values}
(\ie\ pure
abstractions). Given a value $\lx.p$, only the piece of structure
containing the paths between the binder $\lx$ and all the free
occurrences of $x$ in $p$, named \emph{skeleton}, will be copied. All
the other components of the abstraction will remain shared, thus
avoiding some future duplications of redexes, as explained in the
introduction.
By copying only the smallest possible substructure of the abstraction,
the strategy $\ndlr$ implements an optimization of call-by-need
called \emph{\texttt fully \texttt lazy sharing}~\cite{wadsworth71}.
First, we formally define the key notions we are going to use.

A \deft{free expression}~\cite{peytonjones87,balabonski12b}
of a \emph{pure} term $p$ is a strict subterm $q$ of $p$ such that
every free occurrence of a variable in $q$ is also
a free occurrence of the variable in $p$.  A
\deft{free expression} of $p$ is \deft{maximal} if it is not a subterm
of another free expression of $p$.  From now on, we will consider the
(ordered) list of all maximal free expressions (MFE) of a term.
Thus \eg\ the MFEs of $\ly.p$, where $p = (\id y) \id (\lz.z y w)$,
is given by the list
$\omult{\id;\id;w}$.

An \deft{$n$-ary (pure) context} ($n \geq 0$) is a
(pure) context with $n$ holes $\ec$. A skeleton is
an $n$-ary pure context where the maximal free expressions w.r.t. a
variable set $\theta$ are replaced with holes.  We introduce two
different yet equivalent notions of skeleton, together with a
  corresponding operation of skeleton extraction: we argue that they
entail respectively a big-step and a small-step semantics. More
  precisely, in the big-step semantics the skeleton extraction process
  can be seen as a meta-operator, defined by operations that are
  external to the calculus itself, as in~\cite{ariola97}, while in the
  small-step semantics the process of extraction is defined by an
  explicit reduction relation encoded in the calculus itself. 

\paragraph{A first definition of skeleton}

The first notion of skeleton runs as follows. Given  any set of variables $\theta$,
the \deft{$\theta$-skeleton} of a pure term
is an n-ary pure (\ie\ without explicit cuts) context defined as
$\skel \theta p \eqdef \ec$ if $\theta \cap \fv p = \emptyset$; otherwise:
\[ \begin{array}{c@{\hspace{3em}}c@{\hspace{3em}}c}
  \skel \theta x \eqdef x &
  \skel \theta {\lam x p} \eqdef \lam x {\skel {\theta \cup \{x\}} p}&
  \skel \theta {p_1p_2} \eqdef \skel \theta {p_1} \skel \theta {p_2}
\end{array} \]

Thus \eg\ if  $p = (I y) I (\lam z z y w)$ as above, then   $\skel
{\{y\}} p = (\ec y) \ec (\lam z z y \ec)$.

Function $\skel \theta {\_}$ is (implicitly) intended to give a
context whose holes correspond to the MFE's
that are abstracted out.  Splitting a term into a
skeleton and a multiset of MFEs is at the core of full laziness. This
can naturally be implemented in the node replication model, as
observed in~\cite{gundersen13a}.  Here, we give two different
(alternative) operational semantics to achieve it.  The first one
(\autoref{f:swbg}), written $\skelbs\theta$, uses big-step semantics
and implements the first definition of skeleton introduced above.
\begin{figure}[h]
  \begin{mathpar}
    \inferrule{x \text{ fresh}}{p \skelbs\theta x \es x {p}}
    \quad\text{when $\fv p \cap \theta = \emptyset$;
    otherwise:}\\
    \inferrule{ }{x \skelbs\theta x} \and
    \inferrule{p \skelbs{\theta \cup \{x\}} \ctx\lc{p'}} {\lam x {p}
      \skelbs\theta \ctx\lc{\lam x {p'}}} \and \inferrule{ p
      \skelbs\theta \ctx{\lc_1}{p'} \and q \skelbs\theta
      \ctx{\lc_2}{q'}} {pq \skelbs\theta
      \ctx{\lc_2}{\ctx{\lc_1}{p'q'}}}
  \end{mathpar}
\caption{Relation $\skelbs\theta$: Splitting Skeleton and MFEs in Big-Step Semantics}
\label{f:swbg}
\end{figure}

Each of the rules in \autoref{f:swbg} corresponds to a
different case in the first definition of $\theta$-skeleton.  In the
first rule, since there is no free variable of $p$ in $\theta$, $p$ is
thus an MFE kept shared in an explicit substitution.  The other
three rules correspond to each possible constructor,
where all the explicit cuts
created during the inductive cases are pushed out.
\begin{exa}
  \label{ex:skelbs}%
  Let $y,z \notin\fv{t}$, so that $t$ is the MFE of $\ly.x \es x {\lz.(yt)z}$. Then,
  \begin{mathpar}
    \inferrule{
      \inferrule{
        \inferrule{
          \inferrule{ }{y \skelbs{\{y,z\}} y}
          \and \inferrule{ }{t \skelbs{\{y,z\}} x \esub x t}}{
          yt \skelbs{\{y,z\}} (yx) \esub x t}
          \and \inferrule{ }{z \skelbs{\{y,z\}} z}}{
      (yt)z \skelbs{\{y,z\}} ((yx)z) \esub x t}}{
    \lz.(yt)z \skelbs{\{y\}} (\lz.(yx)z) \esub x t}
  \end{mathpar}
\end{exa}

\begin{lem}[Correctness of $\skelbs\theta$]
  \label{l:skelbs_correct}%
  If $p \in \pterms$,
  then $\exists n \geq 0$ s.t.  $p \skelbs\theta
  \ctx{\skel \theta p}{x_1, \dots, x_n} \es {x_i} {t_i}_{i \leq n}$,
  where $\ctx{\skel \theta p}{t_1, \dots, t_n} = p$,
  and $(x_i)_{1 \leq i \leq n}$ are fresh pairwise distinct variables.
  Moreover, $\fv {t_i} \cap \theta = \emptyset$ for all $1 \leq i \leq n$.
\end{lem}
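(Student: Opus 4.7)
The plan is to prove the statement by induction on the structure of the pure term $p$, with the three cases (variable, abstraction, application) matching the three non-trivial inference rules of $\skelbs\theta$, and with the outer side-condition $\fv p \cap \theta = \emptyset$ giving the fourth ``base'' case that fires the MFE-sharing rule. The case analysis is then essentially forced, and the statement of the lemma is engineered so that each clause follows directly from the corresponding clause in the definition of $\skel \theta \cdot$.

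First I would handle the global MFE case: whenever $\fv p \cap \theta = \emptyset$, by definition $\skel \theta p = \ec$ and by the first rule $p \skelbs\theta x \esub x p$ with $x$ fresh. Taking $n=1$, $x_1 = x$ and $t_1 = p$ gives $\ctx\ec{x} \esub x p = p$ after plugging, and $\fv{t_1} \cap \theta = \emptyset$ by assumption. So in the remainder, I may assume $\fv p \cap \theta \neq \emptyset$, which is exactly the precondition for the second definition of $\skel \theta \cdot$.

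Next I would dispatch the three structural cases. For $p = x$ (necessarily with $x \in \theta$), we use rule $x \skelbs\theta x$ with $n = 0$ and $\skel \theta x = x$. For $p = \lam x q$, the induction hypothesis on $q$ with context $\theta \cup \{x\}$ yields $q \skelbs{\theta \cup \{x\}} \ctx{\skel{\theta\cup\{x\}}q}{x_1,\dots,x_n} \esub{x_i}{t_i}_{i\le n}$ with $\ctx{\skel{\theta \cup \{x\}} q}{t_1,\dots,t_n} = q$ and $\fv{t_i} \cap (\theta\cup\{x\}) = \emptyset$; the abstraction rule immediately packages this into the required form, and the definition $\skel\theta{\lam x q} = \lam x {\skel{\theta\cup\{x\}} q}$ makes the skeletons match. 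For $p = p_1 p_2$, I apply the IH to each side (noting that if $\fv{p_i} \cap \theta = \emptyset$ for one side, we just use the first case of $\skelbs\theta$ on that subterm), concatenate the two lists of fresh variables (renaming if necessary so that they are pairwise distinct and disjoint from the free variables of both sides), and use the application rule; the skeleton equation follows since $\skel\theta{p_1 p_2} = \skel\theta{p_1} \skel\theta{p_2}$.

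I expect the main obstacle to be purely bookkeeping: making sure the fresh variables $x_i$ introduced by the two sub-derivations in the application case remain pairwise distinct and avoid capture when the list contexts $\lc_1, \lc_2$ get concatenated. This is handled up to $\alpha$-conversion, which the paper has already declared systematic. A secondary subtlety is that $\ctx{\skel\theta p}{x_1,\dots,x_n}$ denotes plugging a list of holes in left-to-right order, so I must fix the convention under which the IH lists and the skeleton holes are ordered consistently (leftmost-hole first), and check that the application rule $\ctx{\lc_2}{\ctx{\lc_1}{p_1' p_2'}}$ places the substitutions for $p_1$'s MFEs before those of $p_2$ in a way compatible with the hole ordering of $\skel\theta{p_1}\skel\theta{p_2}$; once the ordering convention is fixed, the plugging equality $\ctx{\skel\theta p}{t_1,\dots,t_n} = p$ and the freshness condition $\fv{t_i} \cap \theta = \emptyset$ follow immediately from the IH.
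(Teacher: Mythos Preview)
Your proposal is correct and follows essentially the same approach as the paper's proof: handle the side-condition $\fv p \cap \theta = \emptyset$ first as a uniform base case, then proceed by structural induction on $p$ for the remaining three constructors, matching each inference rule of $\skelbs\theta$ to the corresponding clause of $\skel\theta\cdot$. Your explicit remark that in the application case one subterm may fall back to the MFE case, and your attention to the ordering and freshness bookkeeping, are exactly the points the paper leaves implicit (it invokes the \ih\ uniformly on $p_1$ and $p_2$ and notes $x \notin \bigcup_i \fv{t_i}$ in the abstraction case), so there is no substantive difference.
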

\begin{proof}
  If $\fv p \cap \theta = \emptyset$, then $p \skelbs\theta x_1
  \es {x_1}  p$
  and $\skel \theta p = \ec$, so that $\ctx{\skel \theta p} p = p$
  trivially holds.
  Otherwise, we reason by induction on $p$:
  \begin{itemize}
    \item If $p = x$, then $\skel \theta x = x$,
      so the property holds for $n=0$ because $x \skelbs\theta
      x$.
    \item If $p = p_1p_2$, then
      $\skel \theta p = \skel \theta {p_1} \skel \theta {p_2}$.
      By the \ih\ we have
      \begin{align*}
        p_1 \skelbs\theta \ctx{\skel\theta{p_1}}{x_1,\dots,x_k} \es {x_i}{t_i}_{i \leq k}
        &\text{ and }
        p_2 \skelbs\theta \ctx{\skel\theta{p_2}}{x_{k+1},\dots,x_n} \es
        {x_i}{t_i}_{k < i \leq n},
        \text{ where}\\
        \ctx{\skel\theta{p_1}}{t_1,\dots,t_k} = p_1
        &\text{ and }
        \ctx{\skel\theta{p_2}}{t_{k+1},\dots,t_n} = p_2.
      \end{align*}
      Hence:
      \begin{align*}
        p_1p_2
        &\skelbs\theta
        (\ctx{\skel\theta{p_1}}{x_1,\dots,x_k}
        \ctx{\skel\theta{p_2}}{x_{k+1},\dots,x_n})
        \es {x_i} {t_i}_{i \leq k} \es {x_i}{t_i}_{k < i \leq n}\\
        &= \ctx{\skel\theta p}{x_1,\dots,x_n} \es {x_i}{t_i}_{i \leq n}
      \end{align*}
    \item If $p = \lx.p'$,then
      $\skel\theta p = \lx.\skel{\theta\cup\{x\}}{p'}$.
      By the \ih\  we have
      \[ p' \skelbs{\theta\cup\{x\}}
      \ctx{\skel{\theta\cup\{x\}}{p'}}{x_1,\dots,x_n} \es {x_i}{t_i}_{i \leq n}. \]
      Moreover, $x \notin \bigcup_{i\leq n} \fv{t_i}$ by definition of $\skelbs{}$ and every $x_i$ is different from $x$.
      Hence:
      \[ \lx.p' \skelbs\theta
        (\lx.\ctx{\skel{\theta\cup\{x\}}{p'}}{x_1,\dots,x_n})
        \esub {x_i}{t_i}_{i \leq n}
      = \ctx{\skel\theta{\lx.p'}}{x_1,\dots,x_n} \es {x_i}{t_i}_{i \leq n}.
      \qedhere \]
  \end{itemize}
\end{proof}

The correcteness lemma states in particular that
$p \skelbs\theta \ctx{\lc}{p'}$ implies $p'$ is pure and $\fv
\lc \cap \theta = \emptyset$.

\paragraph{An alternative definition of skeleton}

An \emph{alternative} definition of  $\theta$-skeleton can be given by
\emph{removing} the maximal free expressions from a term.
Indeed, the \deft{$\theta$-skeleton} $\askel\theta p$ of a pure term
$p$, where $\theta=\{x_1 \ldots x_n\}$,  is the n-ary pure context
$\askel \theta p$ such that $\ctx{\askel\theta p}{q_1,\dots,q_n} =
p$, for $\omult{q_1;\dots;q_n}$
the maximal free expressions of $\lx_1.
\ldots \lx_n.p$~\footnote{The order of the abstractions is irrelevant.}.
It is easy to show that both notions of skeleton are equivalent, \ie\ $\skel \theta p = \askel \theta p$.
Thus, for the same $p$ as before, $\ly.\askel {\{y\}} p = \ly.(\ec y) \ec (\lz.z y \ec)$.

The second strategy to split a term into a skeleton and its MFEs is the small-step strategy
$\rew\skelss$ on the set of terms $\llset$
(\autoref{f:swalt}), which is indeed a subset of
the reduction relation $\rew\nrep$.
It implements the second definition of skeleton we have introduced.
The relation $\rew\skelss$ makes use of four basic rules which are
parameterized by the variable
$y$ upon which the skeleton is built, written $\rrule{}^y$.
There are also two contextual (inductive) rules.
\begin{figure}[h]
  \begin{mathpar}
    \inferrule{ }{
    t \es x y \rrule\varrr^y t \msub x y} \and
    \inferrule{
    y \in \fv{p_1p_2}}{
    t \es x {p_1p_2} \rrule\apprr^y t \msub x {x_1 x_2} \es {x_1}{p_1} \es
  {x_2}{p_2}} \and
  \inferrule{
  y \in \fv{\lz.p}}{
  t \es x {\lz.p} \rrule\distrr^y
t \dist x {\lz.w \es w p}} \and
\inferrule{
y \in \fv {\lz.\ctx\llc p} \and z \notin \fv \llc}{
t \dist x {\lz.\ctx\llc p}
\rrule\absrr^y \ctx\llc{t \msub x {\lz.p}}} \and
\inferrule*[right=\skelsstop]{
t \rrule{ }^y t'  \and y \notin \fv \llc}{
\ly.\ctx\llc t \rew\skelss \ly.\ctx\llc{t'}} \and
\inferrule*[right=\skelssdeep]{
t \rew\skelss t' \and y \in \fv t   \and y \notin \fv\llc}{
  \ly.\ctx\llc{u \dist x t}
\rew\skelss \ly.\ctx\llc{u \dist x {t'}}}
\end{mathpar}
\caption{Relation $\rew\skelss$: Splitting Skeleton and MFEs in Small-Step Semantics}
\label{f:swalt}
\end{figure}
\begin{exa}
  \label{ex:skelss}%
  Let $\ly.x \es x {\lz.(yt)z}$ be as in \autoref{ex:skelbs}.
  \begin{align*}
    \ly.\underline{x \es x {\lz.(yt)z}}
    & \rew\distrr^y \ly.x \dist x {\lz.\underline{w \es w {(yt) z}}}
    \rew\apprr^z \ly.x \dist x {\lz.\underline{(w_1 w_2) \es {w_1}{yt} \es {w_2} z}}\\
    & \rew\varrr^z \ly.\underline{x \dist x {\lz. {(w_1 z) \verdear{\es {w_1}{yt}}}}}
    \rew\absrr^y \ly.\underline{(\lz.{w_1z}) \es {w_1}{yt}} \\
    & \rew\apprr^y \ly.\underline{(\lz.{(x_1x_2)z}) \es {x_1} y} \es {x_2} t
    \rew\varrr^y \ly.(\lz.{(yx_2)z}) \es {x_2} t
  \end{align*}
\end{exa}
Notice that the focused variable changes from $y$ to $z$, then back to $y$.
This is because $\rew\skelss$ constructs the innermost skeletons first.
The small-step approach allows to parametrize the reduction
relation by only one variable at a time, instead of a set.

\begin{lem}
  \label{l:llset_sw}
  If $t \in \llset$ and $t \rew\skelss t'$, then $t' \in \llset$.
\end{lem}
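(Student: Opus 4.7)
My plan is to show that every $\skelss$-step is also an $\nrsub$-step, and then to conclude immediately via \autoref{l:t-stable}.

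First, inspecting the four base rules $\rrule\varrr^y$, $\rrule\apprr^y$, $\rrule\distrr^y$, $\rrule\absrr^y$ of \autoref{f:swalt}, each one is a particular instance of the corresponding substitution rule of $\rcalc$ as defined on page~\pageref{d:nrep}. The occurrence side conditions $y \in \fv{\cdot}$ merely restrict when the rule fires and do not alter the shape of the rewrite; the $\rrule\varrr^y$, $\rrule\apprr^y$ and $\rrule\distrr^y$ rules match the general schemes with the trivial (empty) list context $\hlc = \ec$. For $\rrule\absrr^y$, the body is already displayed in the canonical form $\ctx\llc p$ with $p$ pure and $z \notin \fv\llc$, so this is exactly the original $\absrr$-rule instantiated with $u = \ctx\llc p$, $\hlc = \llc$, and the zero-length permutation sequence $\ctx\llc p \rewn\permlr \ctx\llc p$.

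Next, I would handle the two contextual rules $\skelsstop$ and $\skelssdeep$ by induction on the derivation of $t \rew\skelss t'$. For the base case $\skelsstop$, the step $s \rrule{}^y s'$ is applied under the term context $\ly.\ctx\llc\ec$, which belongs to the grammar of term contexts $\fc$; hence $\ly.\ctx\llc s \rew\nrsub \ly.\ctx\llc{s'}$. For $\skelssdeep$, the inductive hypothesis yields $t \rew\nrsub t'$. Since by inspection of the $\skelss$-rules the premise $t$ necessarily starts with an outer abstraction, one can express the inner $\nrsub$-step as a reduction under a term context of the shape $\lz.\fc'$, and then extend it to $\ly.\ctx\llc{u \dist x {\lz.\fc'}}$, which is itself a valid term context. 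Hence $\ly.\ctx\llc{u \dist x t} \rew\nrsub \ly.\ctx\llc{u \dist x {t'}}$.

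Combining these two observations gives $\rew\skelss \subseteq \rew\nrsub$. Assuming $t \in \llset$ and $t \rew\skelss t'$, we therefore have $t \rew\nrsub t'$, and \autoref{l:t-stable} yields $t' \in \llset$, as required. I do not foresee any real obstacle: the argument is a routine verification that $\skelss$ is a context-controlled fragment of $\nrsub$. The only subtle point is making sure that the context in $\skelssdeep$ can be recast as a legal term context of $\fc$, which holds because the term inside the distributor is always an abstraction by the very shape of the $\skelss$-rules.
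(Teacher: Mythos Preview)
Your proposal is correct, and it is genuinely different from the paper's route. The paper proves \autoref{l:llset_sw} directly: it re-runs, for each of the four base rules $\rrule{}^y$, essentially the same case analysis already carried out in the proof of \autoref{l:t-stable} (showing that a term of shape $\ctx{\llc_0}{p_0}$ with the linearity constraint stays of that shape), and then closes the two contextual rules $\skelsstop$ and $\skelssdeep$ by induction. You instead factor through \autoref{l:t-stable} via the inclusion $\rew\skelss \subseteq \rew\nrsub$, which the paper itself invokes a few lines later in the proof of \autoref{l:sw-conf-sn}. Your argument is more modular and avoids duplicating the case analysis; the paper's direct proof is self-contained but redundant given that \autoref{l:t-stable} is already available. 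The only delicate point you correctly identify is the $\skelssdeep$ case: one must check that the context $\ly.\ctx\llc{u \dist x {\lz.\ec}}$ is a legitimate $\fc$-context, which holds precisely because any $t$ with $t \rew\skelss t'$ is an abstraction $\lz.s$, and the grammar of $\fc$ admits the frame $t \dist x {\ly.\fc}$.
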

\begin{proof}
  For the root $\rrule{}^y$ rules, we first show that
  if $t = \ctx{\llc_0}{p_0}$ with $\nbocc {z} {p_0} =
  1$ for all $z \in \domlc{\llc_0}$,  and $t \rrule{}^y t'$,
  then $t' = \ctx{\llc_1}{p_1}$
  with $\nbocc {z} {p_1} = 1$ for all $z \in \dom{\llc_1}$.
  \begin{itemize}
    \item If $t \rrule\varrr^y t'$, this is straightforward.
    \item If $t=\ctx\llc p \es x {q_1q_2} \rrule\apprr^y
      \ctx\llc p \msub x {x_1x_2} \es {x_1}{q_1} \es {x_2}{q_2}=t'$,
      then, since $x \notin \fv\llc$, we have $\llc \msub x {x_1x_2} = \llc$.
      Moreover, freshness of $x_1,x_2$ 
      implies $\nbocc{x_1}{\ctx\llc {p'}} = \nbocc{x_2}{\ctx\llc {p'}} = \nbocc x {\ctx\llc
      p} = 1$, where $p' = p \msub x {x_1x_2}$,
      and $\nbocc{x_2}{q_1} = 0$.
    \item If $t=u \es x {\lx'.p} \rrule\distrr^y
    u \dist x {\lx'.w \es {w} p}= t'$, this is true
      by hypothesis,
      where in particular $\nbocc x u = 1$, and
      $\lx'.w \es {w} p \in \llset$ because $p$
      is pure and $\nbocc {w} {w} = 1$.
    \item If $t =\ctx{\llc_1}{p_1} \dist x
      {\lz.\ctx{\llc_2} {p_2}} \rrule\absrr^y
      \ctx{\llc_2}{\ctx{\llc_1}{p_1}\msub x {\lz.p_2}} =t'$.
      By hypothesis $\nbocc x {p_1} = 1$ and $\nbocc x {\llc_1} = 0$,
      so that $t' = \ctx{\llc_2}{\ctx{\llc_1}{p_1 \msub x
      {\lz.p_2}}} = \ctx {\llc'_1} {p'}$,
      since for all $z_1 \in \domlc{\llc_1}$ and all $z_2 \in
        \domlc{\llc_2}$, $\nbocc{z_1}{p_1} = \nbocc{z_2}{p_2} =
      1$ and, by $\alpha$-conversion,  $\nbocc {z_2} {p_1}
      = \nbocc {z_1} {p_2} = 0$
      so that $\nbocc {z'} {p_1 \msub x {\lz.p_2}} = 1$
      for any $z' \in \domlc{\llc'}$.
\end{itemize}
Then, for the contextual rules, we show by induction on $t \rew\nrsub t'$:
if $t \in \llset$ and $t \rew\nrsub t'$, then $t' \in \llset$.
\begin{itemize}
  \item In the case of $\skelsstop$,
    we have $t = \ly.\ctx\llc{t_0} \rew\nrsub \ly.\ctx\llc{t_1}$.
    By the hypothesis that $t \in \llset$ follows $t_0 =
    \ctx{\llc_0}{p_0}$.
    By the previous case analysis, $t_1 =
    \ctx{\llc_1}{p_1}$.
    Therefore $t' \in \llset$.
  \item In the case of $\skelssdeep$,
    we have $t = \ly.\ctx\llc{u \dist x {t_0}} \rew\nrsub
    \ly.\ctx\llc{u \dist x {t_1}}$.
    By the hypothesis that $t \in \llset$ follows $t_0 \in \llset$.
    By induction hypothesis, $t_1 \in \llset$.
    Therefore $t' \in \llset$.
    \qedhere
\end{itemize}
\end{proof}

\begin{lem}
  \label{l:sw-conf-sn}
  The reduction relation $\rew\skelss$ is confluent and terminating.
\end{lem}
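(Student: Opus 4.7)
My plan is first to dispatch termination by embedding $\rew\skelss$ into an already-terminating relation, and then to obtain confluence via Newman's lemma once termination is in place.

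For termination, the key observation is that $\rew\skelss$ is a sub-relation of $\rew\nrsub$. Indeed, each of the four root rules $\varrr^y, \apprr^y, \distrr^y, \absrr^y$ is obtained from the homonymous rule of $\rew\nrsub$ by imposing an extra side-condition involving a designated free variable $y$; in particular, any $\rrule{}^y$ redex is also a $\nrsub$ redex. The contextual rules $\skelsstop$ and $\skelssdeep$ perform the reduction inside the respective term contexts $\ly.\ctx\llc\ec$ and $\ly.\ctx\llc{u \dist x \ec}$, both of which are instances of the general term-context grammar $\fc$, so that the closures are instances of the general context closure used to define $\rew\nrsub$. Consequently $\rew\skelss\,\subseteq\,\rew\nrsub$, and termination follows from Corollary~\ref{l:s_term}.

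For confluence, given termination, Newman's lemma reduces the task to local confluence. Suppose $t \rew\skelss t_1$ and $t \rew\skelss t_2$ with $t_1 \neq t_2$. Because the contextual rules force each redex to sit on the spine of $t$ (a chain of nested abstractions, list-context cuts, and distributors), I would classify the situation by the relative position of the two redex occurrences: (a) the two redexes are in disjoint subterms on the spine, in which case they commute verbatim; (b) one is strictly nested inside the other, with the inner step fired deep inside the distributor argument of the outer one via $\skelssdeep$, in which case I perform the outer step first and verify that the inner redex is preserved up to $\alpha$-renaming (since the substitution performed by the outer step acts on a variable $x$ fresh from the inner position) and conversely; (c) the two redexes share the same root position, which is impossible because the four root patterns $\es x y$, $\es x {p_1p_2}$, $\es x {\lz.p}$ and $\dist x {\lz.\ctx\llc p}$ are mutually exclusive. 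A cleaner alternative would be to prove uniqueness of $\skelss$-normal forms by relating them to the canonical skeleton-plus-MFEs form produced by $\skelbs\theta$ (Lemma~\ref{l:skelbs_correct}); combined with termination this would also yield confluence.

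The hardest case is (b) when the outer rule is $\absrr^y$, because this rule internally moves a commutative list context outward before performing the substitution, and one must check that the inner $\skelss$-step in the distributor body can still be applied afterwards (possibly reshaped by the unfolding of this list) and that the resulting terms coincide. The side conditions $y \in \fv t$ and $y \notin \fv \llc$ appearing in the contextual rules, together with the restriction to terms in $\llset$ (stable under $\rew\skelss$ by Lemma~\ref{l:llset_sw}), are exactly what is needed to rule out variable capture and to guarantee that the relative order of the two steps does not matter.
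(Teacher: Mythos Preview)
Your termination argument is the paper's: $\rew\skelss \subseteq \rew\nrsub$, hence terminating by Corollary~\ref{l:s_term}.

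For confluence the paper takes a shorter route than Newman: it shows that $\rew\skelss$ is \emph{deterministic}, so there are no diverging pairs at all. In both $\skelsstop$ and $\skelssdeep$ the side condition $y\notin\fv\llc$, together with the requirement that $y$ occur in the argument of the focussed cut (imposed by every root rule $\rrule{}^y$ and by the premise $y\in\fv t$ of $\skelssdeep$), forces the focussed cut to be the outermost one whose argument contains $y$; its shape then selects exactly one root rule. The only remaining candidate for non-determinism is the choice between $\absrr^y$ (via $\skelsstop$) and $\skelssdeep$ on a distributor $\dist x{\lz.\ctx{\llc'}{q}}$, and this is excluded because $\absrr^y$ requires $z\notin\fv{\llc'}$ while any recursive $\rew\skelss$ step on $\lz.\ctx{\llc'}{q}$ needs some cut of $\llc'$ to contain $z$.

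That excluded overlap is precisely your ``hardest case'' (b). You propose to close it by a commutation argument, but the case is in fact vacuous, and the side conditions you invoke are what make it vacuous rather than what make a diagram close. Had the case been real, your sketch would face a genuine obstacle: after $\absrr^y$ the distributor is gone and its body has been substituted and reshaped, so ``the inner redex is preserved'' is far from automatic. Your alternative (uniqueness of normal forms via the big-step relation of Lemma~\ref{l:skelbs_correct}) is also not a shortcut: invoking Corollary~\ref{l:skelss_correct} would be circular since the function $\skelbs{}_\skelss$ is only well-defined once confluence is established, and proving the characterisation from scratch is essentially the confluence argument itself.
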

\begin{proof}
  To show termination it is sufficient to notice that
  $t \rew\skelss t'$ implies $t \rew \nrsub t'$.
  Since $\rew\nrsub$ is terminating
  (\autoref{l:s_term}) then we conclude termination of $\rew\skelss$.
  Next, we show that $\rew\skelss$ is confluent
  by observing that it is deterministic. Indeed,
  \begin{itemize}
    \item The base rules $\rrule{}^y$ only reduce the outermost cut and they are all distinct:
      there is one rule for an outermost distributor,
      and three rules for outermost explicit substitutions,
      one for each possible form (variable, application, abstraction).
    \item Because of the condition  $y \notin \fv \llc$ in rules
      $\skelsstop$
      and $\skelssdeep$
      the base rules are always applied from right to left
      inside an abstraction.
    \item Moreover, rule $\skelssdeep$ does not overlap with
      any other rule, in particular with $\rrule\absrr^y$.
      Indeed, for a term $u \dist x {\lam
      z {\ctx\llc p}}$, there are only two possibilites.
      Either $z$ is a free variable of $\llc$, and we
      cannot apply $\rrule\absrr^y$, or $z$ is not a free variable of $\llc$, and we can apply $\rrule\absrr^y$.
      In the latter, there is in particular no cut of $\llc$
      for  which $z$ is free.
      Therefore, we cannot apply any base-rule recursively inside the distributor.
      So, we cannot apply rule $\skelssdeep$.
  \end{itemize}
  Since rule application is deterministic, then there is no possible diverging diagram, and thus confluence is trivial.
\end{proof}

Thus, from now on, we denote by $\skelbs{}_\skelss$
the function relating a term of $\llset$ to its unique $\skelss$-nf.
For instance, from \autoref{ex:skelss} we deduce
$\ly.x \esub x {\lz.(yt)z} \skelbs{}_\skelss
\ly.(\lz.{(yx_2)z}) \es {x_2} t$.
\begin{lem}
  \label{l:st-extraction}%
  If $p$ is a pure term and $\llc$ a (commutative) list context
  where $y \notin \fv\llc$, then there exists $n$ and an n-ary pure
  context $c$ such that
  \[ \ly.\ctx\llc{t \es z p}
    \rewn\skelss \ly.\ctx\llc{
  t \msub z {\ctx c {x_1,\dots,x_n}} \es {x_i} {q_i}_{1 \leq i \leq n}} \]
  where the variables $x_1, \dots, x_n$ are fresh pairwise distinct and $\omult{q_1;\dots;q_n}$ are the MFE of $\ly.p$ such that $\ctx c
  {q_1,\dots,q_n} = p$.
\end{lem}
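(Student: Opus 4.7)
The plan is to argue by induction on the structure of the pure term $p$, at each step letting the small-step relation $\rew\skelss$ mimic the corresponding clause of the big-step semantics $\skelbs{\{y\}}$ from \autoref{f:swbg}; the target decomposition then agrees with the one guaranteed by \autoref{l:skelbs_correct}. First I split on whether $y\in\fv p$. If $y\notin\fv p$, then $p$ itself is the unique MFE of $\ly.p$, so I take $n=1$, $c=\ec$, $q_1=p$, and $x_1=z$ up to $\alpha$-conversion, with zero reduction steps. Otherwise I case-analyze the shape of $p$.

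For $p=y$, a single $\varrr^y$-step closed under $\skelsstop$ produces $t\msub z y$, with $n=0$ and $c=y$; the side-conditions $y\in\fv{t\es z y}$ and $y\notin\fv\llc$ hold by assumption. For $p=p_1p_2$, a first $\apprr^y$-step produces $t\msub z{z_1z_2}\es{z_1}{p_1}\es{z_2}{p_2}$ with fresh $z_1,z_2$; the \ih{} is then invoked twice, once for each nested substitution, absorbing the other into an enlarged commutative list context (commutativity is preserved by freshness of $z_1,z_2$ and because MFEs extracted in the first pass are $y$-free). The two partial skeletons compose into $c=c_1c_2$ and the two MFE lists concatenate into the MFEs of $\ly.p_1p_2$, matching the corresponding clause of $\skelbs{\{y\}}$.

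The main obstacle is $p=\lz'.p'$ with $y\in\fv{p'}$. A $\distrr^y$-step under $\skelsstop$ produces $t\dist z{\lz'.w\es w{p'}}$. The rule $\skelssdeep$ then permits recursive $\rew\skelss$ reductions inside the distributor, to which I apply the \ih{} on $p'$ with the inner binder $z'$; this extracts the MFEs of $\lz'.p'$, yielding a body of the form $\lz'.\ctx{c_0}{x_1,\dots,x_k}\es{x_i}{r_i}_{1\le i\le k}$. A subsequent $\absrr^y$-step (whose side-condition $z'\notin\fv{\es{x_i}{r_i}_{1\le i\le k}}$ holds because the $r_i$ are $z'$-free by the \ih{}) pushes the list out of the distributor, giving $t\msub z{\lz'.\ctx{c_0}{x_1,\dots,x_k}}\es{x_i}{r_i}_{1\le i\le k}$. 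Since the MFEs of $\ly.\lz'.p'$ must avoid both $z'$ and $y$ whereas the $r_i$ are only $z'$-free, a second pass is required: for every $i$ with $y\in\fv{r_i}$, the \ih{} is reapplied to $r_i$ (a strict subterm of $p$, hence smaller) with binder $y$ to extract its $y$-free MFEs; the remaining $r_i$ with $y\notin\fv{r_i}$ are already MFEs of $\ly.\lz'.p'$. Concatenating all partial skeletons into one $n$-ary pure context $c$ and all MFE lists into $\omult{q_1;\ldots;q_n}$ yields the final decomposition with $\ctx c{q_1,\ldots,q_n}=p$.

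Throughout, the proof relies on freshness of the variables $x_i$ introduced by $\apprr^y$ and $\distrr^y$; on commutativity of the intermediate list contexts (each fresh $x_i$ occurs exactly once in the neighbouring skeleton by the mechanics of $\apprr^y$); on the invariant $y\notin\fv\llc$ inherited from the lemma hypothesis; and on the $y\in\fv{t_0}$ precondition of $\skelsstop$ and $\skelssdeep$, which is preserved at each step because the reductions substitute within subterms containing $y$ without ever erasing it. The abstraction case is the only genuinely delicate one, because the mismatch between the small-step's single-binder parametrization and the set-based notion of MFE for $\ly.\lz'.p'$ forces the two-pass structure described above.
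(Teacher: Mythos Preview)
Your proposal is correct and follows essentially the same approach as the paper's proof: induction on $p$, with the abstraction case handled by a $\distrr^y$-step, an inner application of the \ih\ with respect to the bound variable $z'$ lifted through $\skelssdeep$, an $\absrr^y$-step, and then a second pass applying the \ih\ (now with respect to $y$) to each of the extracted subterms. The paper processes the second-pass substitutions from the outermost inward (on $q_n,\dots,q_1$) and, in the application case, applies the \ih\ to $p_2$ before $p_1$; these orderings are forced by the determinism of $\rew\skelss$ and by the side-condition $y\notin\fv\llc$ in $\skelsstop$, which you allude to but do not make explicit.
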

\begin{proof}
  If $y \notin \fv p$, then $p$ is the MFE
  of $\ly.p$ and the property is satisfied by the empty reduction,
  with $n=1$, $c = \ec$, and $q_1 = p$.
  Otherwise, we reason by induction on $p$.
  \begin{itemize}
    \item If $p = y$, then $\ly.p$ has no MFE and
      $\ly.\ctx\llc{t \es z y} \rew\varrr^y \ly.\ctx\llc{t \msub z y}$.
      Then the property holds for $n=0$ and the 0-ary context $y$.
    \item If $p=p_1p_2$, then by the \ih\ on $p_2$ and on $p_1$ we have:
      \begin{align*}
        \ly.\ctx\llc {t \es z {p_1p_2}}
        & \rew\apprr^y \ly.\ctx\llc{t \msub z {z_1z_2} \es {z_1}{p_1} \es {z_2}{p_2}} \\
        &\rewn\skelss \ly.\ctx\llc{t \msub z {z_1 \ctx{c_2}{x_{k+1},\dots,x_n}} \es {z_1}{p_1} \es {x_i}{q_i}_{k < i \leq n}} \\
        &\rewn\skelss \ly.\ctx\llc{t \msub z {\ctx{c_1}{x_1,\dots,x_k} \ctx{c_2}{x_{k+1},\dots,x_n}} \es{x_i}{q_i}_{1 \leq i \leq k}
        \es {x_i}{q_i}_{k < i \leq n}} \\
        &= \ly.\ctx\llc{t \msub z {\ctx c {x_1,\dots,x_n}}
        \es{x_i}{q_i}_{1 \leq i \leq n}}
      \end{align*}
      where $ \ctx{c}{x_1,\dots,x_n} = \ctx{c_1}{ x_1,\dots,x_k}
      \ctx{c_2}{x_{k+1},\dots,x_n}$, and the variables $x_1,
      \ldots, x_n$ are chosen to be pairwise distinct.  To apply the
      \ih\ on $p_1$, we take $\llc$ to be $\ctx\llc{\ec
      \es{x_i}{q_i}_{k < i \leq n}}$, which verifies the
      hypothesis of the statement since by definition of the
      MFEs, $y \notin \cup_{k < i \leq n}
      \fv{q_i}$.  We can conclude since 
      the maximal free
      expressions of $\ly.p_1p_2$ can be computed by
      considering the MFEs of $\ly.p_1$ and $\ly.p_2$ respectively,
      \ie\  $\omult{q_1;\dots;q_n}$.
    \item If $p = \lx.p'$, then by the \ih\ on $p'$ we have:
      $\lx.z' \es {z'}{p'} \rewn\skelss
      \lx.\ctx{c'}{x_1,\dots,x_n} \es {x_i}{q_i}_{1 \leq i \leq n}$,
      where the terms $\omult{q_1;\dots;q_n}$ are the MFEs of $\lx.p'$, so in particular $x \notin \cup_{1 \leq i \leq n} \fv{q_i}$.
      We can then apply
      the \ih\ on $q_n, \ldots, q_1$, thus for $t_0 = \ly.\ctx\llc{t \es z {\lx.p'}}$ we have:
      \begin{align*}
      t_0
        & \rew\distrr^y \ly.\ctx\llc{t \dist z {\lx.z' \es {z'}{p'}}} \\
        &\rewn\skelss \ly.\ctx\llc{t \dist z {\lx.\ctx{c'}{x_1,\dots,x_n} \es{x_i}{q_i}_{1 \leq i \leq n}}} \\
        & \rew\absrr^y \ly.\ctx\llc{t \msub z {\lx.\ctx{c'}{x_1,\dots,x_n}}
        \es {x_i}{q_i}_{1 \leq i \leq n}} \\
        &\rewn\skelss \ly.\ctx\llc{t \msub z {\lx.\ctx{c'}{x_1,\dots,x_{n-1},\ctx{c_n}{x_n^1,\dots,x_n^{m_n}}}}
        \es {x_i}{q_i}_{1 \leq i < n} \es{x_n^j}{q_n^j}_{1 \leq j \leq m_n}} \\
        &\rewn\skelss \ly.\ctx\llc{t \msub z {\lx.\ctx{c'}{
              \ctx{c_1}{x_1^1,\dots,x_1^{m_1}}, \dots,
          \ctx{c_n}{x_n^1,\dots,x_n^{m_n}}}}
        \es{x_i^j}{q_n^j}_{1 \leq j \leq m_i, 1 \leq i \leq n}} \\
        &= \ly.\ctx\llc{t \msub z {\ctx c {
          x_1^1,\dots,x_n^{m_n}}}
        \es{x_i^j}{q_n^j}_{1 \leq j \leq m_i, 1 \leq i \leq n}}
      \end{align*}
      where $\ctx c {
      x_1^1,\dots,x_n^{m_n}} = \lx.\ctx{c'}{
      \ctx{c_1}{x_1^1,\dots,x_1^{m_1}}, \dots,
    \ctx{c_n}{x_n^1,\dots,x_n^{m_n}}}$
    and the variables $x_1^1$ to $x_n^{m_n}$ are taken pairwise distinct.
    To apply the \ih\ on $q_k$ ($1 \leq k \leq n$),
    we take the linear context to be $\ctx\llc{\ec \es{x_i^j}{q_i^j}_{1 \leq j \leq m_i, k < i \leq
    n}}$, which verifies the hypothesis of the statement since by definition of
    the MFEs,
    $y \notin \cup_{1 \leq j \leq m_i, k < i \leq n} \fv{q_i^j}$.
    By the \ih\  $\omult{q_i^1;\dots;q_i^{m_i}}$ are the MFEs of $\ly.q_i$ for each $i$.
    Therefore, since $\omult{q_1;\dots;q_n}$ are the MFEs
    of $\lx.p'$, the terms $\omult{q_1^1;\dots;q_n^{m_n}}$ are also the MFEs of $\ly.\lx.p'$.
    \qedhere
\end{itemize}
\end{proof}

\begin{cor}[Correctness of $\rew\skelss$]
  \label{l:skelss_correct}%
  Let $p \in \pterms$ and $\omult{q_1;\dots;q_n}$ be the MFEs of $\ly.p$.
  Then $\ly.z \es z p
  \skelbs{}_\skelss  \ly.\ctx{\askel{\{y\}} p}{x_1,\dots,x_n} \es {x_i}{q_i}_{i \leq n}$
  where the variables $x_1, \dots, x_n$ are fresh and pairwise distinct.
\end{cor}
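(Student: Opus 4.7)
The plan is to apply \autoref{l:st-extraction} in its simplest instance and then to identify the resulting context with the $\{y\}$-skeleton. I would instantiate that lemma with the empty list context $\llc = \ec$ (so the side condition $y \notin \fv\llc$ is vacuous) and with $t = z$; this yields directly a reduction
\[ \ly.z \es z p \rewn\skelss \ly.\ctx c{x_1,\dots,x_n} \es{x_i}{q_i}_{1 \leq i \leq n}, \]
where the $x_i$ are fresh and pairwise distinct, $\omult{q_1;\dots;q_n}$ are the MFE of $\ly.p$, and $c$ is an $n$-ary pure context satisfying $\ctx c{q_1,\dots,q_n} = p$. I would then invoke the alternative definition of $\askel{\{y\}} p$ (together with the noted equivalence $\skel{\{y\}} p = \askel{\{y\}} p$) to identify $c$ with $\askel{\{y\}} p$, so that the right-hand side matches the shape prescribed by the corollary.

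The second step is to justify the notation $\skelbs{}_\skelss$ by checking that the right-hand side is in $\skelss$-normal form. By \autoref{l:sw-conf-sn}, $\rew\skelss$ is confluent and terminating, so it suffices to verify that no $\skelss$-step applies to it. By definition of MFE one has $y \notin \fv{q_i}$ for every $i$, and each $q_i$ is a pure subterm of $p$; consequently, none of the root rules $\rrule\varrr^y$, $\rrule\apprr^y$, $\rrule\distrr^y$ can fire on any $\es{x_i}{q_i}$, since each demands $y$ to occur in the substituted body (or the body to be $y$ itself); and $\rrule\absrr^y$ is excluded because the target contains no distributor. Hence neither contextual rule $\skelsstop$ nor $\skelssdeep$ can apply, and the displayed term is the unique $\skelss$-nf of $\ly.z\es z p$.

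The main delicate point I anticipate is the identification $c = \askel{\{y\}} p$: it rests on the uniqueness of an $n$-ary pure context whose plugging with the MFE of $\ly.p$ reproduces $p$, which is implicit in the alternative definition of skeleton. Everything else is a routine instantiation of \autoref{l:st-extraction} combined with a case-by-case inspection of the four base rules of \autoref{f:swalt}.
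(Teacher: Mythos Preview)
Your proposal is correct and follows essentially the same approach as the paper: apply \autoref{l:st-extraction} with $\llc=\ec$ and $t=z$, identify the resulting context $c$ with $\askel{\{y\}} p$ via the alternative definition of skeleton, and argue normality from $y\notin\fv{q_i}$. The paper's proof is terser on the normal-form check (it simply notes that no base $\rrule{}^y$-steps apply because $y$ is not free in the $q_i$), while you spell out the case analysis over the four base rules and invoke \autoref{l:sw-conf-sn} explicitly, but the substance is the same.
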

\begin{proof}
  By \autoref{l:st-extraction}, there is an n-ary pure context $c$ such that $\lam
  y {z \es z p} \rewn\whslr t = \lam y {\ctx c {x_1,\dots,x_n}
  \es{x_i}{q_i}_{1 \leq i \leq n}}$,
  where $\omult{q_1;\dots;q_n}$ are the MFEs of $\lam y p$.
  Thus, by the alternative definition of skeleton, $c$ is  $\askel{\{y\}} p$.
  Moreover, $t$ is the $\whslr$-nf
  of $\lam y {z \es z p}$ because no more base
  $\rrule{}^y$-reduction steps can be applied
  to  the list of explicit substitutions since $y$ is not free
  in $q_1, \ldots, q_n$ by definition of MFE.
\end{proof}

From the fact that the two definitions of skeleton are
equivalent, and from both proofs of correctness
(\autoref{l:skelbs_correct} and \autoref{l:skelss_correct}), we
infer the equivalence between the small-step and the big-step splitting
semantics (\autoref{f:swalt} and \autoref{f:swbg} resp.).
Since the small-step semantics is contained in $\rcalc$,
we use it to build our call-by-need strategy using node replication.

Another interesting question concerns the splitting semantics for terms with explicit cuts.
It is not always clear what the maximal free expressions are,
as this notion depends on the position of the explicit cuts in the
term.
For instance, take the term
$t = \ly.z_1 \esub w {xy} z_2$.
What should be the MFEs of $t$? It could be $\omult{z_1;x;z_2}$, or
$\omult{z_1z_2;x}$, or even $\omult{(z_1z_2) \esub w x}$.
Similarly for the skeleton, should it be respectively (1)~$\ly.\ec \esub w {x\ec} \ec$,
(2)~$\ly.\ec\ec$ or~(3) $\ly.\ec$?
Solution~(1) proposes to keep explicit substitutions in the skeleton.
This is not coherent with the semantics of $\rcalc$ and $\acalc$,
which only substitute pure terms.
Solution~(2) consists in unfolding the explicit cuts, so that the
skeleton is pure.
This can easily be obtained by adding the following rule to the definition.
\[ \skel \theta {t \esub x u} \eqdef
\begin{cases}
  \skel {\theta \cup \{x\}} t \msub x {\skel \theta u}, &\text{ if } \theta
  \cup \fv u \neq \emptyset\\
  \skel \theta t, &\text{ otherwise}
\end{cases} \]
Indeed, this is the definition of skeleton adopted for the atomic $\lambda$-calculus
in \cite{gundersen13a},
where the authors prove that the skeleton of a term with explicit substitutions
(but without explicit distributors) can be split from the MFEs.
Unfortunately, they do not provide a splitting rewriting relation.

In cases involving explicit cuts binding no variable,
like $\esub w {xy}$ in the term $t$ above,
this definition is a cause of inefficiency: we would prefer solution~(3),
which avoids duplication of the application node.
More generally, many nodes can be duplicated inside a term to reach a bound
variable that will finally be erased.
For instance, in the term $\ly.x_1 \esub w y x_2 x_3 \dots x_n$,
$n-1$ applications nodes will need to be duplicated, and the skeleton would be
considered $\ly.\ec \ec \ec \dots \ec$ ($n$ times) following~(2), and simply
$\ly.\ec$ following~(3).
As another example, the skeleton of $\ly.(\lz.z \esub w y)x$ would be considered
$\ly.(\lz.z)\ec$ following~(2) and $\ly.\ec$ following~(3).
Unfortunately, this definition is hard to specify
inductively (and therefore in
a big-steps semantics) without modifying the term first by permuting the cuts.
Interestingly though, giving a small-steps semantics for it simply amounts to
allowing $\rew\skelss$-reduction deep inside the distributors.

\paragraph{The Call-by-Need Strategy}
The \deft{call-by-need strategy} $\rew\ndlr$ (\autoref{f:ndlr})
is defined on the set of terms $\purelterms$, by using
closure under the \emph{need contexts}, given by the grammar
\[ \ndc \Coloneqq \ec \mid \ndc t \mid \ndc \cut x t
\mid \ctxnc\ndc x \esub x \ndc\]
where $\ctxnc\ndc \_$ denotes capture-free application of contexts
(\autoref{sec:syntax}).
Like call-by-name (\autoref{sec:whlr}),
the call-by-need strategy is \emph{weak}, because no \emph{meaningful} reduction steps are performed under abstractions.

\begin{figure}[h]
  \[ \begin{array}{llll}
    \ctx\lc{\lx.p} u & \rrule\ndlrdb & \ctx\lc{p \es x u}\\
    \ctxnc\ndc x \es x {\ctx\lc{\ly.p}} & \rrule\ndlrdist &
    \ctx\lc{\ctx\llc{\ctxnc\ndc x \dist x {\ly.p'}}} &
    \,\text{ if } \ly.{z \es z p} \skelbs{}_\skelss \ly.\ctx\llc{p'}\\
    \ctxnc\ndc x \dist x v & \rrule\ndlrabs  & \ctxnc\ndc v \dist x v
  \end{array} \]
  \caption{Call-by-Need Strategy}
  \label{f:ndlr}
\end{figure}

Rule $\db$ is the same one used to define $\whlr$.
Rule $\ndlrdist$ (named after splitting) only uses node replication operations to
compute the skeleton of the abstraction, while rule $\ndlrabs$  implements
one-shot \emph{linear} substitution.
There is no rule to substitute a variable, as it is usually done in call-by-need for closed terms~\cite{ariola97}.

Linear substitution (replacing one free occurrence of a
  variable at a time) as implemented by rule~$\ndlrabs$
above is not captured by the
calculus $\rcalc$.  This shows a limitation of $\rcalc$ and $\acalc$,
both using full substitution to implement fully lazy sharing.  Yet,
the demand-driven philosophy of call-by-need is generally understood
as replacing only some desired instance of one
variable~\cite{ariola97}.  This corresponds in particular to the
behavior of abstract machines, which make explicit some of the
implementation features.  Nonetheless, remark that the substitution
used in the small-steps semantics $\rew\skelss$ is linear, thanks to
the restriction on terms.  Therefore, designing $\ndlr$ as a strategy
of a linear calculus with node replication is straightforward.

Notice that as a particular case of \autoref{l:u-stable},
$t \in \purelterms$ and $t \rew\ndlr t'$ implies $t' \in \purelterms$.
Another interesting property is that $t \rew \ndlrabs t'$ implies $\lv z t \geq \lv z {t'}$.  Moreover, $\rew\ndlr$ is deterministic.

\begin{lem}[Determinism]\label{l:need_determinism}
  The strategy $\rew \ndlr$ is deterministic.
\end{lem}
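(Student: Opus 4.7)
The plan is to derive determinism from a unique-decomposition property: for every $t \in \purelterms$, there is at most one pair $(\ndc, r)$ with $t = \ctx\ndc r$ and $r$ a base redex of one of the three rules $\ndlrdb$, $\ndlrdist$, $\ndlrabs$. First I would observe that the three base rules act on disjoint top-level shapes (an application, an explicit substitution with a value-headed substituent, and an explicit distributor), so at a fixed position at most one of them fires. Moreover the internal decomposition of each redex is itself uniquely determined: for $\ndlrdb$ the list context $\lc$ is the maximal prefix of explicit cuts in the function; for $\ndlrdist$ the residuals $\llc$ and $p'$ are fixed because $\skelbs{}_\skelss$ is a partial function by \autoref{l:sw-conf-sn}; for $\ndlrabs$ uniqueness is immediate.

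The only remaining source of non-determinism is thus the position of the redex. I would prove position uniqueness by structural induction on $t$, strengthening the statement to the following trichotomy: every $t \in \purelterms$ is either (i) a value $\lx.p$, or (ii) uniquely of the form $\ctx\ndc r$ for some base redex $r$, or (iii) uniquely of the form $\ctxnc\ndc x$ for some free variable $x$ (a \emph{demand} shape). The inductive cases follow the four clauses of the need-context grammar: applications navigate into the function, cuts $\ndc \cut y s$ navigate into the body, and an explicit substitution jumps into its substituent exactly when the body is demand-shaped $\ctxnc{\ndc_1} x$ for the bound variable $x$. A case analysis on the head constructor of $t$, combined with the induction hypothesis applied to proper subterms, yields both existence and uniqueness for each clause of the trichotomy.

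The main obstacle is the fourth clause of the need-context grammar. For $t = t_1 \es x t_2$ one has a priori two navigation moves, into $t_1$ via $\ndc' \es x t_2$ and into $t_2$ via $\ctxnc{\ndc_1} x \es x {\ndc_2}$, and one must show they cannot both produce a redex. The trichotomy resolves this: by the induction hypothesis on $t_1$, either $t_1$ is a value (in which case no need context enters it nontrivially since need contexts never cross abstractions), or $t_1$ uniquely decomposes as $\ctx{\ndc'} r$ (so only the first move applies), or $t_1$ uniquely decomposes as $\ctxnc{\ndc_1} y$, in which case the substituent branch applies only when $y = x$, excluding the first move. A similar sanity check rules out spurious competition between top-level redexes and inner navigation: one verifies that any $t_1 = \ctx\lc{\lx.p}$ heading an application or a substitution is itself in $\ndlr$-normal form, so that the top-level choice $\ndc = \ec$ does not compete with a navigation $\ndc = \ndc' t_2$ or $\ndc = \ctxnc{\ndc_1} x \es x {\ndc_2}$.
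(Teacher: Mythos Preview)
Your approach is sound and considerably more detailed than the paper's, but the trichotomy as stated has a concrete gap: case~(i) must be ``answer'' $\ctx\lc{\lx.p}$, not ``value'' $\lx.p$. With only values in case~(i), the induction breaks on $t = (\lx.p)\es y u$: the body $t_1 = \lx.p$ is a value, so by your inductive hypothesis it falls in case~(i); but then $t$ itself is neither a value, nor a $\ndlr$-redex at root (no $\ctxnc\ndc y$ shape for $t_1$), nor reachable by any need-context navigation (values admit no nontrivial need context), nor of demand shape $\ctxnc\ndc z$. So $t$ escapes all three cases. You visibly sensed this, since you later mention $\ctx\lc{\lx.p}$ heading an application or substitution, but you never folded it back into the trichotomy. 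Once case~(i) is widened to answers, the induction closes cleanly: an answer under a cut is again an answer, and an answer in function position yields a root $\ndlrdb$-redex.

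For comparison, the paper's own proof is a two-line remark: the three root rules have pairwise disjoint left-hand-side head shapes (application, explicit substitution, distributor), and $\skelbs{}_\skelss$ is a function because $\rew\skelss$ is confluent and terminating (\autoref{l:sw-conf-sn}); uniqueness of the need-context decomposition is left implicit as a standard property of call-by-need evaluation contexts. Your unique-decomposition argument makes that implicit step explicit, which is a genuine addition---in particular it also covers the point that in rule $\ndlrabs$ the context $\ndc$ pointing to the needed occurrence of $x$ is itself unique, something the paper does not spell out.
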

\begin{proof}
  The left hand sides of the rules $\db$, $\distrr$ and $\ndlrabs$ are
  disjoint. On the other hand, the reduction relation $\rew\skelss$
  is confluent and terminating by \autoref{l:sw-conf-sn} so that
  $\skelbs{}_\skelss$ defines a function, thus the relation $\rew
  \ndlr$ is deterministic.
\end{proof}

\begin{exa}\label{ex:ndlr}
  Let $t_0= (\lx. (\id (\id x))) (\ly.y\id)$. Needed variable occurrences are highlighted in $\anaranjear{\mbox{orange}}$.
  \[ \begin{array}{lllllllll}
    t_0 & \rew{\db}  (\underline{\id (\id x)}) \es{x}{\ly.y\id}  \rew{\db}
    \anaranjear{x_1} \es{x_1}{\underline{\id x}} \es{x}{\ly.y\id}
    \\
        &\rew{\db}
        \underline{x_1 \es{x_1}{x_2 \es{x_2}{\anaranjear{x}}} \es{x}{\lam
        y { y\id}}} 
        \rew{\ndlrdist}  \underline{x_1 \es{x_1}{x_2 \es{x_2}{\anaranjear{x}}} \dist{x}{\ly. { yz_1}}}
        \es{z_1}{\id} \\
                        & \rew{\ndlrabs}
                        \underline{x_1 \es{x_1}{\anaranjear{x_2} \es{x_2}{\ly. { yz_1}}}} \dist{x}{\ly. { yz_1}}
                        \es{z_1}{\id} \\
                        & \rew{\ndlrdist} x_1 \es{x_1}{\underline{\anaranjear{x_2} \dist{x_2}{\ly. { yz_2}}} \es{z_2}{z_1}} \dist{x}{\ly. { yz_1}}
                        \es{z_1}{\id} \\
                        & \rew{\ndlrabs} \underline{\anaranjear{x_1} \es{x_1}{(\ly. { yz_2}) \verdear{\dist{x_2}{\ly. { yz_2}} \es{z_2}{z_1}}}} \dist{x}{\ly. { yz_1}}
                        \es{z_1}{\id}   \\
                        & \rew{\ndlrdist} \underline{\anaranjear{x_1} \dist{x_1}{\ly. { yz_3} }}\es{z_3}{z_2}\dist{x_2}{\ly. { yz_2}}\es{z_2}{z_1} \dist{x}{\ly. { yz_1}}
                        \es{z_1}{\id} \\
                        & \rew{\ndlrabs} (\ly. { yz_3})  \dist{x_1}{\ly. { yz_3} }\es{z_3}{z_2}\dist{x_2}{\ly. { yz_2}}\es{z_2}{z_1} \dist{x}{\ly. { yz_1}} \es{z_1}{\id} 
      \end{array} \]
    \end{exa}

In order to characterize $\ndlr$-nfs, we use
the notion of \deft{needed free variables}, defined as:
\[ \begin{array}{l@{\hspace{1cm}}l}
  \ndv x := \{x\} &\quad
  \ndv{t \es y u} :=
  \begin{cases}
    (\ndv t \setminus \{y\}) \cup \ndv u & \text{if } y \in \ndv t \\
    \ndv t & \text{if } y \notin \ndv t
  \end{cases} \\
  \ndv{tu} := \ndv t &\quad \ndv{t \dist x u} := \ndv t \\
  \ndv{\lam x t} := \emptyset
\end{array} \]

Thus \eg\ $\ndv {x \dist y {\id} {\id}} = \{ x \}$
and $\ndv {(x y_1) \es x {z y_2}} = \{z\}$.
In particular, $x \in \ndv{t}$ implies $x \in \fv{t}$.

\begin{restatable}{lem}{ndvprop}
  \label{l:ndv_prop}%
  Let $t \in \purelterms$. Then $x \in \ndv t$ iff there exists a context $\ndc$
  such that $t = \ctxnc\ndc x$.
\end{restatable}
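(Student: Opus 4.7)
My plan is to prove the two directions separately, each by a straightforward induction that follows the structure of the respective object: induction on $t$ for the left-to-right direction, and induction on $\ndc$ (or equivalently on the derivation of the grammar) for the right-to-left direction. Throughout, I freely use $\alpha$-conversion to assume the needed variable $x$ differs from every bound variable appearing in $t$ or $\ndc$.

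For the $(\Rightarrow)$ direction, I proceed by induction on $t \in \purelterms$. The cases $t = y$, $t = t_1 t_2$, and $t = t_1 \dist y u$ are immediate from the corresponding clauses of $\ndv{\cdot}$ and the first three clauses of the grammar for $\ndc$; I simply apply the induction hypothesis to the relevant subterm and extend the resulting context. The case $t = \lam y s$ is vacuous since $\ndv{t} = \emptyset$. The interesting case is $t = t_1 \es y u$, which splits on whether $y \in \ndv{t_1}$. If $y \notin \ndv{t_1}$, then $x \in \ndv{t_1}$ and the IH yields some $\ndc'$ with $t_1 = \ctxnc{\ndc'}{x}$, so $\ndc = \ndc' \esub y u$ works. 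If $y \in \ndv{t_1}$ then $x$ belongs to either $\ndv{t_1} \setminus \{y\}$ or to $\ndv{u}$; in the first subcase we again extend with $\esub y u$, while in the second subcase we apply the IH twice, once on $t_1$ with the variable $y$ (getting $t_1 = \ctxnc{\ndc_1}{y}$) and once on $u$ with $x$ (getting $u = \ctxnc{\ndc_2}{x}$), yielding $\ndc = \ctxnc{\ndc_1}{y} \esub y {\ndc_2}$, which is exactly the fourth production of the grammar.

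For the $(\Leftarrow)$ direction, I proceed by induction on $\ndc$. The base case $\ndc = \ec$ gives $t = x$ and $\ndv{x} = \{x\}$. For $\ndc = \ndc' \, s$ and $\ndc = \ndc' \dist y s$, I use the IH on $\ndc'$ and read off the result from the clauses of $\ndv{\cdot}$. For $\ndc = \ndc' \esub y s$, by the IH $x \in \ndv{\ctxnc{\ndc'}{x}}$; irrespective of whether $y$ belongs to $\ndv{\ctxnc{\ndc'}{x}}$, the element $x$ (which by $\alpha$-conversion is not $y$) survives in the defining union. For the fourth production $\ndc = \ctxnc{\ndc_1}{y} \esub y {\ndc_2}$, applying the IH to $\ndc_1$ (with variable $y$) gives $y \in \ndv{\ctxnc{\ndc_1}{y}}$, so $\ndv{t} = (\ndv{\ctxnc{\ndc_1}{y}} \setminus \{y\}) \cup \ndv{\ctxnc{\ndc_2}{x}}$; applying the IH to $\ndc_2$ (with $x$) puts $x$ in the right summand, hence in $\ndv{t}$.

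I do not expect any real obstacle; the only point that needs care is the fourth grammar production, where the two components of the context play distinct roles (the left one reaches the bound variable $y$, the right one reaches $x$) and thus call for a double application of the induction hypothesis. The remaining work is mechanical case analysis on the definition of $\ndv{\cdot}$ and the grammar of $\ndc$.
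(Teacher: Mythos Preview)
Your proposal is correct and follows essentially the same approach as the paper's own proof: induction on $t$ for the left-to-right direction and induction on $\ndc$ for the right-to-left direction, with the key case in each being the fourth production $\ctxnc{\ndc_1}{y}\esub{y}{\ndc_2}$, handled by a double application of the induction hypothesis. Your treatment is in fact slightly more explicit than the paper's in the $(\Leftarrow)$ case for that production, since you spell out that the IH on $\ndc_1$ is needed to obtain $y \in \ndv{\ctxnc{\ndc_1}{y}}$ before the union clause of $\ndv{\cdot}$ applies.
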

\begin{proof}
  By induction on $t$ for the left-to-right implication and by induction on $\ndc$
  for the other one
  \seeappendix{p:ndv_prop}.
\end{proof}

Terms of $\purelterms$ in $\ndlr$-nf can be characterized by the grammar $\normg$,
defined upon the grammar of neutral terms $\neutg$.
Notice that $\whlr$-nfs are also $\ndlr$-nfs.
\begin{align*}
  \normg &\Coloneqq \ctx\lc{\lam x t} \mid \neutg \\
  \neutg, \neutg_0 &\Coloneqq x \mid \neutg \, t \mid \neutg \cut x u \,\, x \notin
  \ndv\neutg \mid \neutg \esub x {\neutg_0} \,\, x \in \ndv \neutg
\end{align*}

\begin{restatable}{lem}{neednf}
  \label{l:need_nf}%
  Let $t \in \purelterms$. Then $t \in \normg$ iff $t$ is in $\ndlr$-nf.
\end{restatable}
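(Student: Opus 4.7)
The plan is to prove the two implications by separate structural inductions.

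For the forward direction, $t \in \normg$ implies $t$ is in $\ndlr$-nf, I would induct on the derivation of $t \in \normg$, unfolding the mutual grammar of $\normg$ and $\neutg$. For each clause I check that none of $\db$, $\ndlrdist$, $\ndlrabs$ applies at the root, and that the shape of need contexts $\ndc$ forbids any reduction inside a need subposition. A useful auxiliary observation is that every $\neutg$-term has nonempty $\ndv$ (in fact the singleton containing its head variable), whereas every $\ctx\lc{\lam x t}$-term has $\ndv = \emptyset$; this separation guarantees that $\neutg$-terms are never of the shape $\ctx\lc{\lam y p}$, so $\db$ cannot fire on $\neutg\,t$ and $\ndlrdist$ cannot fire on the ES-cases of $\neutg$. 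The side conditions $x \notin \ndv\neutg$ and $x \in \ndv\neutg$ in the two cut-clauses of $\neutg$ then align exactly with the activation conditions of the need-context productions $\ndc \cut x t$ and $\ctxnc\ndc x \esub x \ndc$, which makes the case check routine.

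For the backward direction I would induct on $t \in \purelterms$. The base cases are immediate: $x \in \neutg \subseteq \normg$, and $\lam x p = \ctx\ec{\lam x p} \in \normg$. For $t = t'\,u$ in $\ndlr$-nf, the need context $\ndc\,t$ lifts reductions, so $t'$ is in nf and belongs to $\normg$ by IH; if $t' = \ctx\lc{\lam y p}$ rule $\db$ would fire, hence $t' \in \neutg$ and so $t \in \neutg$. For $t = t' \cut x u$ in nf, $t'$ is again in nf via $\ndc \cut x t$ and belongs to $\normg$ by IH. If $t' = \ctx\lc{\lam y p}$, the new cut is absorbed into the list context, yielding $\ctx{\lc\,\cut x u}{\lam y p} \in \normg$. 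Otherwise $t' \in \neutg$: if $x \notin \ndv{t'}$ the $\neutg \cut x u$ clause concludes; if $x \in \ndv{t'}$, \autoref{l:ndv_prop} rewrites $t' = \ctxnc\ndc x$, so $u$ lies in a need position, is itself in nf, and belongs to $\normg$ by IH. In the ES subcase, $u$ cannot be $\ctx\lc{\lam y p}$ (else $\ndlrdist$ fires), so $u \in \neutg_0$ and the $\neutg \esub x {\neutg_0}$ clause applies.

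The main obstacle lies in the distributor subcase $\cut x u = \dist x u$ with $x \in \ndv{t'}$. Grammatically, $\normg$ has no clause for such distributors, so they must be ruled out on normality grounds: when the body is a pure-abstraction value $v$, rule $\ndlrabs$ fires and contradicts normality; but in full generality $\llset$ permits bodies of the shape $\ly.\ctx\llc q$ with nontrivial $\llc$, for which $\ndlrabs$ does not syntactically match. Closing this case requires the invariant that distributor bodies in normal forms reached from the relevant fragment of $\purelterms$ are pure abstractions. This invariant is preserved by $\rew\ndlr$ and follows from the shape of $\ndlrdist$'s output via \autoref{l:skelss_correct}, since the skeleton-extraction process delivers exactly a pure body $p'$ inside the created distributor. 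Once this observation is granted, the inductive analysis exhausts all forms of $t \in \purelterms$ and matches each $\ndlr$-normal shape to a production of $\normg$.
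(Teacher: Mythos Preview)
Your overall approach—structural induction on the grammar for the forward direction and on the term for the backward direction—matches the paper's proof, and your case analysis is organized the same way. You are also right that the distributor case $t' \dist x u$ with $x \in \ndv{t'}$ is the crux: the paper's one-line argument (``$x \notin \ndv{t'}$ because $\rew\ndlrabs$ does not apply'') silently assumes the distributor body is a value, which you correctly flag as not guaranteed by $\llset$.

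However, your proposed fix does not close the gap for the lemma \emph{as stated}. The invariant you invoke—that distributor bodies are pure abstractions—holds only for terms reached from pure terms by $\ndlr$-reduction, not for arbitrary $t \in \purelterms$. Concretely, take $t = x \dist x {\lambda y.\,(zy)\es z w}$. This term is in $\purelterms$ (the body is in $\llset$ since $\nbocc z {zy} = 1$), and it is in $\ndlr$-normal form: the body $\lambda y.\,(zy)\es z w$ is not a value, so $\ndlrabs$ does not fire, and no production of $\ndc$ reaches inside a distributor. Yet $t \notin \normg$: it is not an answer, and the only $\neutg$-clause covering a distributor is $\neutg \cut x u$ with $x \notin \ndv\neutg$, which fails here since $\ndv x = \{x\}$. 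So your final sentence—``the inductive analysis exhausts all forms of $t \in \purelterms$''—is not justified by the invariant you describe. Either the hypothesis must be restricted to the reachable fragment where your invariant holds, or $\normg$ needs an extra clause; the argument as written does not establish the biconditional over all of $\purelterms$. The paper's own proof shares this gap, but glosses over it rather than resolving it.
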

\begin{proof}
  By induction on the grammars
  \seeappendix{p:need_nf}.
\end{proof}

\section{A Type System for \texorpdfstring{$\rcalc$}{λR}}
\label{sec:types}

This section introduces a quantitative type system $\sysWR$
for $\rcalc$. Non-idempotent
intersection~\cite{gardner94} has one main advantage over the
idempotent model \cite{BDS13}: it gives \emph{quantitative}
information about the length of reduction sequences to normal
forms~\cite{Carvalho07}. Indeed, not only typability and normalization
can be proved to be equivalent, but a measure based on type
derivations provides an \emph{upper bound} to normalizing reduction
sequences. This was extensively investigated in different
logical/computational
frameworks~\cite{accattoli19,BucciarelliKRV20,CarraroG14,ehrhard2012,kesner16,kesner20a}. However,
no quantitative result based on types exists in the literature for the node replication model,
including the attempts done for deep inference~\cite{guerrieri21}.
The typing rules of our system are in
themselves not surprising (see for example~\cite{kesner14} where a similar system is used for a $\lambda$-calculus with explicit substitutions interpreting the logical cut rule), but they provide
a handy quantitative characterization of fully lazy
normalization (\autoref{sec:observational}).

Types are built on the following grammar of types and
(finite) multi-types, where $\alpha$ ranges over a set of base types,
$\ans$ is a special type constant used to type
terms reducing to normal abstractions and $I$ ranges over
  finite sets.
\[ \begin{array}{llll}
  \gramTitle{Types} &   \sigma  & := &  \ans \mid \alpha \mid \MM \to \sigma\\
  \gramTitle{Multi-Types} &    \MM   & := &  \multii{\sigma_i}
\end{array} \]
We write $\msetsz\MM$ to denote  the \deft{size of a multi-type} $\MM$.
\deft{Typing environments}, written $\Gamma$, $\Delta$, $\Sigma$ are
functions from variables to multi-types, assigning the empty
multiset to all but a finite set of variables. The domain of
$\Gamma$ is given
by $\dom{\Gamma} :=\{ x \mid \Gamma(x) \neq \emult \}$.
The \deft{union of environments}, written $\Gamma \inter \Delta$, is defined
by $(\Gamma\inter\Delta)(x) := \Gamma(x) \sqcup \Delta(x)$, where
$\sqcup$ denotes multiset union. For instance,
$(x: \mult{\sigma},y: \mult{\tau}) \inter (x: \mult{\sigma},z: \mult{\tau}) =
(x:\mult{\sigma,\sigma}, y: \mult{\tau},z: \mult{\tau})$. This
notion is extended to several environments as expected, so that
$\inter_{\iI}\Gamma_i$ denotes a finite union of environments,
and the empty environment
when $I = \emptyset$. We write $\Gamma; \Delta$ for $\Gamma\inter\Delta$ when
$\dom{\Gamma} \cap \dom{\Delta} = \emptyset$.
\deft{Type judgments} have the form $\seq \Gamma t \sigma$, where
$\Gamma$ is a typing environment, $t$ is a term and $\sigma$ is a type.
\begin{figure}[!htbp]
  \begin{mathpar}
    \inferrule*[right=\ruleAxR]{ }{
    \seq {x:\mult\sigma} x \sigma} \and
    \inferrule*[right=\ruleAbsR]{
    \seq{\Gamma; x:\MM} t \sigma}{
  \seq \Gamma {\lx.t} {\MM \to \sigma}} \and
  \inferrule*[right=\ruleAppR]{
    \seq \Gamma t {\MM \to \sigma} \and
  \seq \Delta u \MM}{
\seq {\Gamma \inter \Delta} {tu} \sigma} \and
\inferrule*[right=\ruleAnsR]{ }{
\seq{\emptyset}{\lx.t}\ans} \and
\inferrule*[right=\ruleCutR]{
  \seq{\Gamma; x:\MM} t \sigma \and
\seq \Delta u \MM}{
\seq{\Gamma \inter \Delta}{t \cut x u} \sigma} \and\!
\inferrule*[right=\many]{
(\seq{\Gamma_i} t {\sigma_i})_{\iI}}{
\seq{\interii \Gamma_i} t {\multii{\sigma_i}}}
\end{mathpar}
  \caption{Typing System $\sysWR$}
  \label{f:typing-rules}
\end{figure}
A \deft{(typing) derivation} is a tree obtained by applying the
(inductive) typing rules of system $\sysWR$
(\autoref{f:typing-rules}), introduced
in~\cite{kesner14}.   The notation $\dem \Phi \Gamma t \sigma$
means there is a derivation named $\Phi$ of the judgment
$\seq \Gamma t \sigma$ in system $\sysWR$. A term $t$ is
typable in system $\sysWR$, or $\sysWR$-typable,
iff there is an environment $\Gamma$ and a type $\sigma$
such that $\dem \Phi \Gamma t \sigma$.
The \deft{size of a type derivation} $\dersz\Phi$
is defined as the number of its rules $\ruleAbsR$, $\ruleAppR$ and $\ruleAnsR$.
The typing system is \emph{relevant}:

\begin{lem}[Relevance]
  \label{l:relevance}
  If $\dem \Phi {\Gamma} t \sigma$, then
  $\dom{\Gam} \subseteq \fv{t}$.
\end{lem}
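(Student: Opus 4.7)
The proof will proceed by structural induction on the typing derivation $\Phi$. Since the free-variable definition and the environment-union operation both distribute naturally over the term constructors appearing in the rules, the cases essentially reduce to bookkeeping with multiset union versus set union.

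The plan is to handle each of the six typing rules in turn. The base cases are immediate: for \ruleAxR, we have $\dom{x:\mult\sigma} = \{x\} = \fv{x}$; for \ruleAnsR, $\dom\emptyset = \emptyset \subseteq \fv{\lx.t}$. For \ruleAppR, the induction hypothesis gives $\dom\Gamma \subseteq \fv{t}$ and $\dom\Delta \subseteq \fv{u}$, and since $\dom{\Gamma\inter\Delta} = \dom\Gamma \cup \dom\Delta$ and $\fv{tu} = \fv{t}\cup\fv{u}$, we are done. For \many, the convention $\inter_{i \in \emptyset} \Gamma_i = \emptyset$ handles $I=\emptyset$ trivially, and otherwise $\dom{\interii{\Gamma_i}} = \bigcup_{\iI} \dom{\Gamma_i} \subseteq \fv t$ by the induction hypothesis applied to each premise.

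The two remaining cases involve binders. For \ruleAbsR, the notation $\Gamma; x : \MM$ enforces $x \notin \dom\Gamma$; the induction hypothesis yields $\dom\Gamma \cup (\dom{x:\MM}) \subseteq \fv{t}$, hence $\dom\Gamma \subseteq \fv{t} \setminus \{x\} = \fv{\lx.t}$. For \ruleCutR, similarly $x \notin \dom\Gamma$; the induction hypothesis on the two premises gives $\dom\Gamma \subseteq \fv t \setminus \{x\}$ and $\dom\Delta \subseteq \fv u$, so $\dom{\Gamma\inter\Delta} \subseteq (\fv t \setminus \{x\}) \cup \fv u = \fv{t\cut x u}$.

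I do not anticipate any real obstacle: the only mild subtlety is being careful about the interaction between multiset environment union (which accumulates multiplicities) and the underlying \emph{set} domain used in the statement, but since $\dom{\Gamma \inter \Delta}$ is precisely $\dom\Gamma \cup \dom\Delta$, the argument goes through uniformly. Note also that the lemma does \emph{not} require the converse inclusion $\fv{t} \subseteq \dom\Gamma$ — this would generally fail in the presence of the $\ruleAnsR$ rule and of cut premises whose second subderivation may discharge variables with an empty multi-type via \many with $I=\emptyset$.
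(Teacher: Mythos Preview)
Your proof is correct and follows exactly the approach the paper takes: a straightforward induction on the typing derivation. The paper simply states ``Straightforward by induction on the typing derivation'' without spelling out the cases, so your case analysis is a faithful expansion of that.
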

\begin{proof}
  Straightforward by induction on the typing derivation.
\end{proof}

\begin{exa}
  \label{ex:derivation}%
  The following tree is a type derivation (called $\Phi_u$) in system~$\sysWR$
  for the term $u = x \esub x {yz}$.
  \[\inferrule*[Right=\ruleCutR]{\inferrule*[Right=\ruleAxR]{
      }{\seq{x: \mult\tau} x \tau}\\
      \inferrule*[Right=\ruleAppR]{ \inferrule*[Right=\ruleAxR]{
        }{\seq{y : \mult{\mult\tau \to \tau}} y {\mult\tau \to \tau}} \\
        \inferrule*[Right=\many]{\inferrule*[Right=\ruleAxR]{
        }{\seq{z: \mult\tau} z \tau}}{\seq{z : \mult\tau} z
      {\mult\tau}}}{\inferrule*[Right=\many]{\seq{y:\mult{\mult\tau \to \tau},
    z:\mult\tau}{yz} \tau}{\seq{y:\mult{\mult\tau \to \tau},
  z:\mult\tau}{yz}{\mult\tau}}}}{\seq{y: \mult{\mult\tau \to \tau}, z:
  \mult\tau}{x \es x {yz}} \tau}\]
\end{exa}

Type derivations can be measured by triples.
We use a + operation on triples as pointwise addition: 
$(n_1,n_2,n_3) + (m_1,m_2,m_3) = (n_1+m_1,n_2+m_2,n_3+m_3)$.
These triples are computed
by a \deft{weighted derivation level} function
defined on typing derivations as $\dermeas \Phi:= \dermeasaux \Phi 1$,
where $\dermeasaux - -$ is inductively defined below.
In the cases $\ruleAbsR$, $\ruleAppR$ and $\ruleCutR$, we let $\Phi_t$ (resp. $\Phi_u$)
be the subderivation of the type of $t$ (resp. $\Phi_u$)
and in $\many$ we let $\Phi_t^i$ be the $i$-th derivation of the type of $t$ for each $\iI$.
\begin{itemize}
  \item For \ruleAxR, $\dermeasaux {\Phi_x} m = (0, 0, 1)$,
  \item For \ruleAbsR,  $\dermeasaux {\Phi_{\lx.t}} m = \dermeasaux {\Phi_t} m + (1, m, 0)$.
  \item For \ruleAnsR,  $\dermeasaux {\Phi_{\lx.t}} m = (1, m, 0)$.
  \item For \ruleAppR,
    $\dermeasaux {\Phi_{tu}} m = \dermeasaux {\Phi_t} m + \dermeasaux {\Phi_u} m + (1, m, 0)$.
  \item For \ruleCutR,
    $\dermeasaux {\Phi_{t \cut x u}} m = \dermeasaux {\Phi_t} m + \dermeasaux
    {\Phi_u} {m + \lv x t + \isES{\cut x u}}$.
  \item For \many, $\dermeasaux {\Phi_{t}} m = \sum_{\iI} \dermeasaux {\Phi^i_t} m$.
\end{itemize}

Intuitively, the first (resp. third) component of the
$3$-tuple $\dermeasaux \Phi m$ counts the number of application/abstraction
(resp. axiom) rules in the typing derivation and do not depend on $m$.  The second one takes into account the
number of application/abstraction rules as well, but \emph{weighted}
by the level of the constructor.
The $3$-tuples are ordered lexicographically.

\begin{exa}
  Take the derivation $\Phi_u$ from \autoref{ex:derivation}.
  Its measure is $\dermeas {\Phi_u} = (1,2,3)$.  Moreover, for $x \es x {yz}
  \rew\apprr (x_1x_2) \es {x_1} y \es {x_2} z$ we have $\dem {\Phi_{u'}}
  {y:\mult{\sigma}, z:\mult{\tau}} {(x_1x_2) \es {x_1} y \es {x_2} z} \tau$
  and $\dermeas {\Phi_{u'}} = (1,1,4)$.
\end{exa}

\begin{restatable}{lem}{moreweight}
  \label{l:more_weight}%
  For all derivation $\Phi$ and all $m, n \in \mathbb N$ with $m > n$,
  $\dermeasaux \Phi m = \dermeasaux \Phi n + (0, (m-n) * \dersz \Phi, 0)$.
\end{restatable}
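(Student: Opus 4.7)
The plan is to proceed by straightforward induction on the structure of the typing derivation $\Phi$, observing that the parameter $m$ flows through the definition of $\dermeasaux{-}{-}$ in a very controlled way: it contributes an additive term $m$ (to the second component) exactly once per application of each of the three rules counted by $\dersz{-}$, namely \ruleAbsR, \ruleAppR, and \ruleAnsR.

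The base case is \ruleAxR: here $\dermeasaux \Phi m = (0,0,1)$ does not depend on $m$, and $\dersz \Phi = 0$, so both sides of the equation coincide. For \ruleAnsR, $\dermeasaux \Phi m = (1,m,0) = (1,n,0) + (0,m-n,0) = \dermeasaux \Phi n + (0,(m-n)\cdot 1,0)$, and $\dersz\Phi = 1$, closing the case. The cases \ruleAbsR, \ruleAppR, and \many{} are routine: by the inductive hypothesis applied to the premises, the difference propagated to the second component is $(m-n)\cdot \dersz{\Phi_t}$ (and, in the case of \ruleAppR, also $(m-n)\cdot\dersz{\Phi_u}$), and the local rule contributes $m - n$ extra in the second component (or $0$ for \many), matching $\dersz\Phi = \dersz{\Phi_t} (+ \dersz{\Phi_u}) + 1$ (or $\sum_\iI \dersz{\Phi^i_t}$).

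The only delicate case is \ruleCutR, where the subderivation $\Phi_u$ is measured with a shifted parameter $m' = m + \lv x t + \isES{\cut x u}$. Setting $k = \lv x t + \isES{\cut x u}$ and applying the inductive hypothesis to $\Phi_u$ at the pair $(m+k, n+k)$, we observe that the shift cancels: $(m+k) - (n+k) = m-n$, so the contribution is $(0, (m-n)\cdot\dersz{\Phi_u}, 0)$. Summing with the contribution $(0,(m-n)\cdot\dersz{\Phi_t},0)$ from $\Phi_t$ and noting that \ruleCutR{} does not increase $\dersz{-}$, we obtain $(0, (m-n)\cdot(\dersz{\Phi_t} + \dersz{\Phi_u}), 0) = (0,(m-n)\cdot\dersz\Phi, 0)$, as required.

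I expect no substantial obstacle; the main point to be careful about is precisely the shift-by-$k$ in the \ruleCutR{} case, which might initially suggest that $m$ and $n$ cannot be separated cleanly. The cancellation of $k$ resolves this immediately, and the rest is a bookkeeping exercise in triple arithmetic.
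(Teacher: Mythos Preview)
Your proposal is correct and follows essentially the same approach as the paper: a straightforward induction on $\Phi$, with the only non-trivial observation being that in the \ruleCutR{} case the shift by $k = \lv x t + \isES{\cut x u}$ cancels in the difference $(m+k)-(n+k)=m-n$. The paper's proof is identical in structure (it omits the \many{} case, which you rightly include and which is immediate from the inductive hypothesis and the additivity of $\dersz{-}$).
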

\begin{proof}
  By induction on $\Phi$ \seeappendix{p:more_weight}.
\end{proof}

\begin{lem}[Split]
  \label{l:split}%
  Let $\dem \Phi \Delta u \MM$ such that $ \MM = \multunion_\iI \MM_i$ for $I
  \neq \emptyset$.
  Then there are derivations
  $\dem {\Phi_i} {\Delta_i} u {\MM_i}$ such that $\Delta = \inter_{\iI} \Delta_i$
  and $\dermeasaux \Phi m  = \sum_{\iI} \dermeasaux {\Phi^i} m$.
\end{lem}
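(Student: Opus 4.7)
The statement is essentially a structural observation about rule \many, which is the only rule producing a judgment with a multi-type on the right. My plan is to proceed by inspection of the shape of $\Phi$ rather than by induction on $u$.

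First, since $\MM$ is a multi-type, the derivation $\Phi$ must end in an instance of \many: there exist an index set $J$, a family $(\sigma_j)_{j \in J}$ with $\MM = \multii{\sigma_j}_{\jJ}$, a family of environments $(\Gamma_j)_{\jJ}$ with $\Delta = \interii[j \in J] \Gamma_j$, and subderivations $\dem {\Psi^j} {\Gamma_j} u {\sigma_j}$. The hypothesis $\MM = \multunion_\iI \MM_i$ then induces a partition of $J$ into subsets $(J_i)_{\iI}$ such that $\MM_i = \multii{\sigma_j}_{\jJi}$ for each $\iI$.

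Next, for each $\iI$ I define $\Phi_i$ to be the \many rule grouping the subderivations $(\Psi^j)_{\jJi}$, concluding $\dem {\Phi_i} {\Delta_i} u {\MM_i}$ with $\Delta_i \eqdef \interii[j \in J_i] \Gamma_j$. Notice this makes sense even if some $J_i$ is empty, giving the empty environment and the empty multi-type. Then the required identity $\Delta = +_\iI \Delta_i$ follows immediately from associativity and commutativity of multiset union applied to the nested partition $\Delta = \interii[j \in J] \Gamma_j = \interii \bigl(\interii[j \in J_i] \Gamma_j\bigr)$.

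Finally, the measure identity is by unfolding the definition of $\dermeasaux{\cdot}{m}$ on \many-nodes, which is purely additive and does not depend on $m$ for this rule:
\[
\dermeasaux \Phi m \;=\; \sum_{\jJ} \dermeasaux {\Psi^j} m \;=\; \sum_\iI \sum_{\jJi} \dermeasaux {\Psi^j} m \;=\; \sum_\iI \dermeasaux {\Phi_i} m.
\]
I don't anticipate a real obstacle here: the only subtle point is allowing empty $\MM_i$'s (handled by a \many rule over an empty index set), and otherwise the proof is a bookkeeping exercise on the top-level \many application of $\Phi$.
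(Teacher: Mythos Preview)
Your proposal is correct and is precisely the argument the paper has in mind: the paper's own proof is just ``Straightforward,'' and what you have written is the natural unpacking of that---inspecting the top-level \many\ rule, partitioning its premises according to the decomposition $\MM = \multunion_\iI \MM_i$, and using the additivity of both environment union and the measure over \many-premises.
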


\begin{proof}
  Straightforward.
\end{proof}

\section{Observational Equivalence}%
\label{sec:observational}

The type system $\sysWR$ characterizes normalization of both
$\whlr$ and $\ndlr$ strategies as follows: every typable
term normalizes and every normalisable term is typable.  In this
sense, system $\sysWR$ can be seen as a (quantitative)
\emph{model}~\cite{bucciarelli01}
of our call-by-name and call-by-need strategies. We
prove these results by studying the appropriate lemmas, notably
weighted subject reduction and weighted subject expansion.  We then
deduce observational equivalence between the $\whlr$ and the
$\ndlr$ strategies from the fact that their associated normalization
properties are both fully characterized by the same typing system.

\paragraph{Soundness}%
\label{sec:soundness-rcalc}

Soundness of system $\sysWR$ w.r.t. both $\rew\whlr$ and
$\rew\ndlr$ is investigated in this section. More precisely, we show
that typable terms are normalizing for both strategies.  In contrast
to reducibility techniques needed to show this kind of result for
simple types~\cite{gundersen13b}, soundness is achieved here by relatively simple
combinatorial arguments based again on decreasing measures.  We start
by studying the interaction between system $\sysWR$ and linear as well as
full substitution.

\begin{restatable}[Partial Substitution]{lem}{partialsub}
  \label{l:partial-substitution}%
  Let $\dem \Phi {\Gamma ; x:\MM}{\ctxnc\fc x} \sigma$
  and $\sqsubseteq$ denote multiset inclusion.  Then,
  there exists $\MN \sqsubseteq \MM$ such that for every $\dem {\Phi_u}
  \Delta u \MN$ we have $\dem \Psi {\Gamma \inter \Delta; x:\MM
  \setminus \MN}{\ctxnc\fc u} \sigma$ and, for every $m \in \mathbb N$,
  $\dermeasaux \Psi m =
  \dermeasaux \Phi m + \dermeasaux {\Phi_u} {m + \lv \ec \fc} - (0,0,\msetsz \MN)$.
\end{restatable}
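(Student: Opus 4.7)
The plan is to proceed by induction on the structure of the context $\fc$, and for each inductive step to follow the shape of the typing rule at the root of $\Phi$. The multi-type $\MN$ is extracted from the particular way the derivation $\Phi$ types the hole occurrence of $x$. For the base case $\fc = \ec$, the derivation $\Phi$ is either a single $\ruleAxR$ rule (with $\MM = \mult\sigma$, so I take $\MN \eqdef \mult\sigma$ and set $\Psi \eqdef \Phi_u$), or, in degenerate cases where the hole lies in an abstraction body typed via $\ruleAnsR$ (in which case $\MM = \emult$), I take $\MN \eqdef \emult$ and reuse $\Phi$ unchanged. In the axiom case the measure identity reads $\dermeasaux{\Phi_u}{m} = (0,0,1) + \dermeasaux{\Phi_u}{m} - (0,0,1)$, which clearly holds.

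For the inductive constructor cases $\fc = \lx.\fc_0$, $\fc = \fc_0\, t$, $\fc = t\, \fc_0$, and $\fc = \fc_0 \cut y u'$, the derivation $\Phi$ ends with the matching typing rule having a subderivation $\Phi_0$ of type $\sigma$ (or $\MM \to \sigma$) for $\ctxnc{\fc_0}x$. I apply the IH to $\Phi_0$ to obtain $\MN$ and $\Psi_0$, and reconstruct $\Psi$ with the same rule. The arithmetic identity lifts directly: in each of these cases $\lv\ec\fc = \lv\ec{\fc_0}$ (for the cut case $\fc_0 \cut y u'$ this uses that the hole's fresh variable $z \notin \fv{u'}$, so $\lv z{\ctx{\fc_0}z \cut y u'} = \lv z{\ctx{\fc_0}z}$ by definition of level), and the extra $(1,m,0)$ contributed by the reconstructed abstraction or application appears identically on both sides of the measure equation.

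The delicate case is $\fc = t \cut y \fc_0$, where the hole sits in the substitution position of a cut. Here $\Phi$ ends with $\ruleCutR$ combining $\Phi_t : \Gamma' ; y:\mathcal{P} \vdash t : \sigma$ and a multi-typed derivation of $\ctxnc{\fc_0} x : \mathcal{P}$, which decomposes via $\many$ into individual derivations $\Phi_0^i : \Delta_i' \vdash \ctxnc{\fc_0}x : \sigma_i$ with $\mathcal{P} = \multii{\sigma_i}$. Applying the IH to each $\Phi_0^i$ yields $\MN_i \sqsubseteq \Delta_i'(x)$; I set $\MN \eqdef \sqcup_{\iI} \MN_i$, which is a sub-multiset of $\MM$. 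Given a derivation $\Phi_u : \Delta \vdash u : \MN$, Lemma~\ref{l:split} splits it into $\Phi_u^i : \Delta_i \vdash u : \MN_i$ with $\Delta = \interii \Delta_i$ and additive measure. I invoke the IH on each pair $(\Phi_0^i, \Phi_u^i)$ to obtain $\Psi_0^i$ typing $\ctxnc{\fc_0}u$ at $\sigma_i$, then reassemble via $\many$ and $\ruleCutR$ with $\Phi_t$ to construct $\Psi$.

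The main obstacle is matching the measure identity in this last case. The level of the hole is $\lv\ec{t \cut y \fc_0} = \lv y t + \lv\ec{\fc_0} + \isES{\cut y u'}$ (the $\max$ collapses because the hole's fresh variable is not free in $t$), which is precisely the shift applied to the substitution branch by the $\dermeasaux{\cdot}{\cdot}$ rule for cuts. Consequently, each IH instance provides the summand $\dermeasaux{\Phi_u^i}{m + \lv y t + \isES{\cut y u'} + \lv\ec{\fc_0}} = \dermeasaux{\Phi_u^i}{m + \lv\ec\fc}$, and summing over $\iI$ together with Lemma~\ref{l:split} reassembles $\dermeasaux{\Phi_u}{m+\lv\ec\fc}$; the corrections $(0,0,\msetsz{\MN_i})$ likewise sum to $(0,0,\msetsz\MN)$. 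The term-constructor cases are then routine consequences of this bookkeeping.
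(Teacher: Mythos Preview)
Your approach (induction on $\fc$, following the shape of the last typing rule) is essentially the paper's proof, which inducts on $\Phi$; the two are equivalent here. However, there is one genuine gap.

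You list $\fc = t\,\fc_0$ among the ``simple'' cases with \emph{a single} subderivation $\Phi_0$ for $\ctxnc{\fc_0}x$. This is incorrect: in $\ruleAppR$ the argument position is typed with a \emph{multi-type} via $\many$, so $\Phi$ contains a family $(\Phi_0^i : \Gamma_2^i; x{:}\MM_2^i \vdash \ctxnc{\fc_0}x : \tau_i)_{\iI}$, not one derivation. You must therefore apply the IH to each $\Phi_0^i$, collect the resulting $\MN_i$, set $\MN \eqdef \sqcup_{\iI}\MN_i$, split $\Phi_u$ via Lemma~\ref{l:split} into pieces $\Phi_u^i$ of type $\MN_i$, and reassemble via $\many$ and $\ruleAppR$---exactly the machinery you (correctly) deploy for the ``delicate'' cut case $\fc = t \cut y \fc_0$. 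The case $I=\emptyset$ also needs its own treatment (take $\MN=\emult$). As written, your proposal would fail whenever $|I|\neq 1$.

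A smaller point: the $\ruleAnsR$ possibility does not belong to the base case $\fc=\ec$ (there the term is $x$, which can only be typed by $\ruleAxR$); it arises in the inductive case $\fc=\lambda y.\fc_0$, where the last rule of $\Phi$ may be either $\ruleAbsR$ or $\ruleAnsR$. Once you relocate that case and treat $\fc=t\,\fc_0$ with the same $\many$/split argument as the cut case, the proof matches the paper's.
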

\begin{proof}
  By induction on $\Phi$ \seeappendix{p:partial-substitution}.
\end{proof}

\begin{cor}[Substitution]
  \label{l:qsub}%
  If $\dem {\Phi_t} {\Gamma ; x: \MM} t \sigma$ and
  $\dem {\Phi_u} \Delta u \MM$, then
  $\dem \Phi {\Gamma \inter \Delta}{t \msub x u} \sigma$, and
  for all $m \in \mathbb N$ we have $\dermeasaux \Phi m \leq \dermeasaux {\Phi_t} m + \dermeasaux {\Phi_u} {m + \lv x t}$.
  Moreover, $\msetsz\MM > 0$ iff the inequality is strict.
\end{cor}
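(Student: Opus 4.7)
My plan is to derive this corollary from \autoref{l:partial-substitution} by induction on $n \eqdef \nbocc x t$, iteratively peeling off one free occurrence of $x$ at each step. At each step I will split the derivation $\Phi_u$ via \autoref{l:split} so as to hand exactly the right multi-type to the current partial substitution.

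For the base case $n = 0$, relevance (\autoref{l:relevance}) applied to $\Phi_t$ forces $\MM = \emult$, so $\Phi_u$ must be an empty $\many$-instance with $\Delta = \emptyset$ and $\dermeasaux{\Phi_u} m = (0,0,0)$; since $t \msub x u = t$, taking $\Phi \eqdef \Phi_t$ gives exact equality, consistent with the ``iff'' since $\msetsz \MM = 0$. For the inductive step $n \geq 1$, use $\alpha$-conversion to assume $x \notin \fv u$, and write $t = \ctxnc \fc x$ with the hole at a chosen free occurrence. Partial Substitution yields $\MN \sqsubseteq \MM$; split $\Phi_u$ into $\Phi_u^1$ of type $\MN$ and $\Phi_u^2$ of type $\MM \setminus \MN$, then apply Partial Substitution with $\Phi_u^1$ to obtain $\Psi$ deriving $\ctxnc \fc u : \sigma$ in environment $\Gamma \inter \Delta_1; x{:}\MM \setminus \MN$. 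Since $\nbocc x {\ctxnc \fc u} = n - 1$, the induction hypothesis applied to $\Psi$ and $\Phi_u^2$ produces the desired $\Phi$, strict iff $\msetsz{\MM \setminus \MN} > 0$.

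The measure accounting combines the equation given by Partial Substitution with the inequality given by the induction hypothesis, using \autoref{l:more_weight} to shift the weighted terms $\dermeasaux{\Phi_u^1}{m + \lv \ec \fc}$ and $\dermeasaux{\Phi_u^2}{m + \lv x {\ctxnc \fc u}}$ up to weight $m + \lv x t$; this step relies on the two bounds $\lv \ec \fc \leq \lv x t$ (from the remark after the level definition) and $\lv x {\ctxnc \fc u} \leq \lv x t$. After shifting, additivity from \autoref{l:split} gives
\[ \dermeasaux \Phi m \leq \dermeasaux {\Phi_t} m + \dermeasaux{\Phi_u}{m + \lv x t} - (0,0,\msetsz \MM), \]
and dropping the final term yields the stated bound. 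Strictness follows from $\msetsz \MM = \msetsz \MN + \msetsz{\MM \setminus \MN}$: $\msetsz \MM > 0$ forces either $\msetsz \MN > 0$ (a strict drop in the third coordinate after dropping the subtracted term) or $\msetsz{\MM \setminus \MN} > 0$ (strict induction hypothesis). The main obstacle I expect is the technical verification of $\lv x {\ctxnc \fc u} \leq \lv x t$ for non-pure $u$, which is not explicitly recorded (the stated remark only covers pure arguments) and requires a separate induction on $\fc$ using the level definition together with $x \notin \fv u$.
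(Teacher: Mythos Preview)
Your proposal is correct and follows essentially the same approach as the paper's proof: induction on $\nbocc x t$, with the inductive step applying \autoref{l:partial-substitution} to one chosen occurrence, splitting $\Phi_u$ via \autoref{l:split}, invoking the induction hypothesis on the remainder, and combining the measure bounds using \autoref{l:more_weight} together with the two level inequalities $\lv \ec \fc \leq \lv x t$ and $\lv x {\ctxnc\fc u} \leq \lv x t$. Your caution about the second inequality is well-placed: the paper simply asserts it without comment, while the recorded remark after the level definition only covers pure arguments; the needed fact does hold (it follows by induction on $\fc$ once one uses that capture-avoidance forces every variable bound on the path from the root to the hole to be absent from $\fv u$, so their levels are unaffected by replacing $x$ with $u$), but you are right that it requires a short separate argument.
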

\begin{proof}
  The proof is by induction on $\nbocc x t$.
  If $\nbocc x t = 0$, then by the relevance Lemma~\ref{l:relevance} $\MM = \emult$,
  so that $\Phi_u$ necessarily comes from a $\many$ rule without any premise and
  thus $\Phi = \Phi_t$.
  We have $\dermeasaux \Phi m = \dermeasaux{\Phi_t} m + \dermeasaux {\Phi_u} {m  + \lv x t}$ because
  $\dermeasaux {\Phi_u} {m  + \lv x t} = (0,0,0)$.
  Otherwise, $\nbocc x t > 0$ and we can write $t$ as $\ctxnc\fc x$.
  By the partial substitution \autoref{l:partial-substitution},
  there exists $\MN \sqsubseteq \MM$ such that for all $\dem {\Phi_u^0}
  {\Delta_0} u \MN$,
  there is $\dem {\Phi'} {\Gamma \inter \Delta_0; x:\MM \setminus \MN}{\ctxnc\fc
  u} \sigma$.
  By the split \autoref{l:split}, there are derivations
  $\dem {\Phi_u^1} {\Delta_1} u \MN$ and
  $\dem {\Phi_u^2} {\Delta_2} u {\MM \setminus \MN}$,
  where $\Delta = \Delta_1 \inter \Delta_2$
  so that we can apply the partial substitution Lemma to
  $\Phi_t$ and $\Phi^1_u$, and we obtain
  $\dem {\Phi'} {\Gamma \inter \Delta_1; x:\MM \setminus \MN}{\ctxnc\fc u} \sigma$.
  Since $\lv \ec \fc \leq \lv x t$,  we have
  $\dermeasaux{\Phi'} m
  = \dermeasaux{\Phi_t} m + \dermeasaux{\Phi_u^1}{m + \lv \ec \fc} -
  (0,0,\msetsz\MN)
  \leq_{L.\ref{l:more_weight}} \dermeasaux{\Phi_t} m + \dermeasaux{\Phi_u^1}{m + \lv x t}
  - (0,0,\msetsz\MN) \leq \dermeasaux{\Phi_t} m + \dermeasaux{\Phi_u^1}{m + \lv x t}$.
  Because $x \notin \fv u$, $\nbocc x {\ctxnc\fc u} = \nbocc x t - 1$.
  We conclude by applying the \ih\  on $\Phi'$ and $\Phi_u^2$.
  We get $\dem \Phi {\Gamma \inter \Delta_1 \inter \Delta_2}{\ctxnc\fc u \msub x u}\sigma
  = \seq{\Gamma \inter \Delta}{t \msub x u} \sigma$.
  For the measure, we use $\lv x {\ctxnc\fc u} \leq \lv x t$ to  get
  $\dermeasaux \Phi m
  \leq \dermeasaux{\Phi'} m + \dermeasaux{\Phi_u^2}{m + \lv x {\ctxnc\fc u}}
  \leq \dermeasaux{\Phi_t} m + \dermeasaux{\Phi_u^1}{m + \lv x t} + \dermeasaux{\Phi_u^2}{m + \lv x t}
  =_{L.~\ref{l:split}} \dermeasaux{\Phi_t} m + \dermeasaux{\Phi_u}{m + \lv x t}$.
  If $\MM \neq \emult$, then either $\MN$ or $\MM \setminus \MN$ is non-empty,
  so at least one of the two previous inequalities is strict.
\end{proof}

The key idea to show soundness is that the measure $\dermeas \cdot$
decreases w.r.t. the reduction relations $\rew\whlr$ and $\rew\ndlr$:

\begin{restatable}[Weighted Subject Reduction for $\rew\permlr$]{lem}{srpermlr}
  \label{l:sr_permlr}%
  Let $\dem {\Phi_{t_0}} \Gamma {t_0} \sigma$ and $t_0 \rew\permlr t_1$.
  Then there exists $\dem {\Phi_{t_1}} \Gamma {t_1} \sigma$ such that
  $\dermeasaux {\Phi_{t_0}} m = \dermeasaux {\Phi_{t_1}} m$ for every $m
  \in \mathbb N$.
\end{restatable}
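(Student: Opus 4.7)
The argument is by induction on the derivation of $t_0 \rew\permlr t_1$, splitting into the four root cases (one per permutation rule) and the routine inductive cases for contextual closure. For each root case I will invert $\Phi_{t_0}$ to expose its immediate subderivations and reassemble them into a $\Phi_{t_1}$ with the same environment $\Gamma$ and type $\sigma$; the side conditions of the permutations, together with \autoref{l:relevance}, ensure that the environments really match. Rule 4 is the slightly heavier case because the inner cut is applied to a subterm typed by a multi-type $\MM = \multii{\mu_i}$, so the reassembly has to thread through the $\many$ rule, rebuilding the $i$-indexed family of cut derivations. Once reassembled, the third component (axiom count) and first component (unweighted number of $\ruleAbsR$/$\ruleAnsR$/$\ruleAppR$ rules) of $\dermeasaux{\cdot}{m}$ are preserved automatically, since the same axiom leaves and abs/app/ans rules are reused; the weighted second component reduces in each case to a short identity on syntactic levels, namely $\lv x {\lam y t} = \lv x t$ for rule 1, $\lv x {ts} = \lv x t$ when $x \notin \fv s$ for rules 2--3, and $\lv y {t \cut x u} = \lv x t + \lv y u + \isES{\cut x u}$ for rule 4 (using $y \notin \fv t$, which forces $\lv y t = 0$ in the definition of level).

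For the inductive cases, I will apply the IH and assemble the derivation in the surrounding context. Reductions under an abstraction, on either side of an application, or inside the substitutee $u$ of a cut $\cut x u$ are routine, since the level appearing in the measure of the enclosing cut is unchanged. The one delicate context is $\fc' \cut x u$ with the reduction strictly inside $\fc'$: the measure contains the summand $\dermeasaux{\Phi_u}{m + \lv x {\ctx{\fc'}{r_0}} + \isES{\cut x u}}$, and by \autoref{l:stability-levels} the level $\lv x {\ctx{\fc'}{\cdot}}$ is only weakly preserved and may in fact strictly drop, exactly when the permutation in $r_0 \rew\permlr r_1$ moves a void cut outward. In that drop configuration, the void inner cut forces by relevance the multi-type on its bound variable to be empty, which propagates through the $\many$ structure to force $x$'s multi-type in the enclosing cut to also be $[\,]$; consequently $\Phi_u$ is produced by the $\many$ rule with no premises and has $\dersz{\Phi_u} = 0$, so by \autoref{l:more_weight} $\dermeasaux{\Phi_u}{k} = (0,0,0)$ for every $k$ and the level change contributes nothing to the measure.

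The hard part will be exactly this final relevance-propagation argument: formally verifying that a permutation-induced strict level drop at an enclosing cut forces the corresponding $\Phi_u$ to be trivial requires carefully following the void cut that triggered the drop through the layered $\many$/$\ruleCutR$ structure of $\Phi_{\ctx{\fc'}{r_0}}$ and invoking \autoref{l:relevance} at each layer. Everything else amounts to routine multiset manipulations in the $\many$ rule and direct unfolding of the definitions of $\dermeasaux{\cdot}{m}$ and of $\operatorname{lv}$.
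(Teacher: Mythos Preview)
Your identification of the inductive case $\fc = \fc' \cut x u$ as the delicate one is exactly right, but the relevance-propagation argument you sketch does not go through. Take $o = v_1 \es{v_1}{(v_2\, x) \es{v_3} x}$ with $v_1,v_2,v_3,x$ pairwise distinct. The root permutation moves the void inner cut $\es{v_3} x$ outward (void since $v_3 \notin \fv{v_2\, x}$), and one computes $\lv x o = 2 > 1 = \lv x {o'}$. Yet the multi-type assigned to $x$ in any typing of $o$ is the argument multi-type demanded by $v_2$, which is certainly not forced to be empty: with $v_2 : \mult{\mult{\ans} \to \alpha}$ one gets $x : \mult{\ans}$. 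So in the enclosing term $t_0 = o \es x \id$, the subderivation $\Phi_\id$ consists of a single $\ruleAnsR$ under $\many$ with $\dersz{\Phi_\id} = 1$, and the two measures differ: $\dermeasaux{\Phi_{t_0}}{1} = (2,6,3)$ while $\dermeasaux{\Phi_{t_1}}{1} = (2,5,3)$, for the essentially unique derivations with environment $v_2 : \mult{\mult{\ans} \to \alpha}$ and type $\alpha$. Your claim that voidness of the inner cut ``propagates through the $\many$ structure to force $x$'s multi-type in the enclosing cut to also be $[\,]$'' simply fails here, because $x$ also occurs outside the void cut's content.

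This counterexample in fact shows that the lemma as stated, with strict equality, is false; the paper's own inductive step writes the same level symbol on both sides and glosses over precisely this point. What does hold, and what suffices for every downstream use (Lemmas~\ref{l:sr_nrsub}, \ref{l:sr_whdblr}, \ref{l:sr_flneed}), is the weak inequality $\dermeasaux{\Phi_{t_0}}{m} \geq \dermeasaux{\Phi_{t_1}}{m}$: in the problematic inductive case this follows immediately from the IH together with \autoref{l:stability-levels} (levels weakly decrease under $\rew\permlr$) and \autoref{l:more_weight}. The right fix is therefore to weaken the conclusion to $\geq$, not to pursue the relevance-propagation route.
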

\begin{proof}
  Let $t_0 = \ctx\fc{t_0'}$ and $t_1 = \ctx\fc{t_1'}$,
  where $t_0' \rew\permlr t_1'$ is a root step.
  We reason by induction on $\fc$.
  We only detail one base case (where $\fc = \ec$) and one inductive case
  \seeappendix{p:sr_permlr}.
  \begin{itemize}
    \item $t_0' = t \cut x {u \cut y s} \rrule\permlr t \cut x u \cut y s =t_1'$,
      where $y \notin \fv{t}$. Let $\Phi_i$ be
      \begin{mathpar}
        \inferrule*[right=\ruleCutR]{
                \dem {\Phi_u^i} {\Delta_u^i;y:\MN_i} u {\rho_i}
                \and \inferrule*[Right=\many]{
                  \left(\dem{\Phi_s^{i,j}} {\Delta_s^{i,j}} s {\delta_j}
                \right)_{\jJi}}{
            \seq{\Delta_s^i}{s}{\MN_i}}}{
          \seq{\Delta_u^i \inter \Delta_s^i}{u \cut y s}{\rho_i}}
       \end{mathpar}
      then the typing derivation $\Phi$ is of the form
      \begin{mathpar}
        \inferrule*[right=\ruleCutR]{
          \dem {\Phi_t} {\Gamma'; x:\MM} t \sigma
          \and \inferrule*[Right=\many]{
            \left( \dem {\Phi_i}  {\Delta_u^i \inter \Delta_s^i} {u \cut y
                s} {\rho_i} \right)_{\iI}
      }{\seq{\Delta_u \inter \Delta_s}{u \cut y s}{\MM}}
    }{\seq{\Gamma' \inter\Delta_u \inter\Delta_s}{t \cut x {u \cut y s}}{\sigma}}
  \end{mathpar}
  where $\MM = \mult{\rho_i}_{\iI}$, $\MN_i= \mult{\delta_j}_{\jJi}$,
  $\Delta_u = \inter_{\iI} \Delta_u^i$,  $\Delta_s^i =  \inter_{\jJi} \Delta_s^{i,j}$, and
  $\Delta_s = \inter_{\iI} \Delta_s^i$.
  Let $\Phi_s$ be
 \[\inferrule*[Right=\many]{
    \left(
      \dem {\Phi_s^{i,j}} {\Delta_s^{i,j}} s {\delta_j}
  \right)_{\jJi, \iI}}{
  \seq{\Delta_s}{s}{\MN}}\]
  We then construct the following derivation $\Psi$.
  \begin{mathpar}
    \inferrule*[right=\ruleCutR]{
      \inferrule*[right=\ruleCutR]{
        \dem {\Phi_t} {\Gamma'; x:\MM} t \sigma
        \and \inferrule*[Right=\many]{
          \left(
            \dem {\Phi_u^i} {\Delta_u^i;y:\MN_i} u {\rho_i}
        \right)_{\iI}}{
    \seq{\Delta_u;y:\MN}{u}{\MM}}}{
  \seq{\Gamma' \inter \Delta_u; y:\MN}{t \cut x u}{\sigma}}
  \and \dem {\Phi_s} {\Delta_s}{s}{\MN}}{
\seq{\Gamma' \inter \Delta_u \inter \Delta_s}{t \cut x u \cut y s}{\sigma}}
\end{mathpar}

where $\MN = \multunion_{\iI} \MN_i$, so that
$\MN = \mult{\delta_j}_{\jJi, \iI}$. Moreover, because $y \notin \fv t$, we have that
$\lv y {t \cut x u} = \lv x t + \lv y u + \isES{\cut x u}$ if $y \in
\fv u$, and 
$\lv y {t \cut x u} = 0$ otherwise.
Now, we show that $\dermeasaux{\Phi_s^{i,j}}{m+\lv x t + \isES{\cut x u} + \lv y
u+ \isES{\cut y s}} =
\dermeasaux{\Phi_s^{i,j}}{m + \lv y {t \cut x u} + \isES{\cut y s}}$.
If $y \in \fv u$, this is immediate.
Otherwise, by the Relevance \autoref{l:relevance} we
have $J_i = \emult$ for any $i$ thus $s$ is necessarily typed with rule $\many$ and no premise, 
so that both measures are equal to (0,0,0).
Then, \begin{align*}
  \dermeasaux \Phi m
        &= \dermeasaux{\Phi_t} m
        + \sum_{\iI} \dermeasaux{\Phi_u^i}{m+\lv
        x t+\isES{\cut x u}}\\
        &\qquad + \sum_{\iI} \sum_{\jJi} \dermeasaux{\Phi_s^{i,j}}{m+\lv x t + \isES{\cut x
        u} + \lv y u + \isES{\cut y s}} \\
        & = \dermeasaux{\Phi_t} m + \sum_{\iI}
        \dermeasaux{\Phi_u^i}{m+\lv x t+\isES{\cut x u}}\\
        &\qquad + \sum_{\iI} \sum_{\jJi} \dermeasaux{\Phi_s^{i,j}}{m+\lv y {t \cut x
        u}+\isES{\cut y s}} \\
        &= \dermeasaux \Psi m
\end{align*}
\item If $\fc = u \cut x {\fc'}$, then
  we have $\dem {\Phi_u} {\Delta;x:\MM} u \sigma$
  and $\dem {\Phi'} {\Gamma'}{\ctx{\fc'}o} \MM$.
  By the \ih\  there is
  $\dem {\Psi'} {\Gamma'}{\ctx{\fc'}{o'}}{\MM}$,
  so $\dem \Psi {\Gamma' \inter\Delta}{u \cut x {\ctx{\fc'}{o'}}} \sigma$.
  Moreover,
  \begin{align*}
    \dermeasaux \Phi m
    &= \dermeasaux{\Phi_u} m
    + \dermeasaux{\Phi'}{m + \lv x u + \isES{\cut x u}}\\
    &=_{\ih} \dermeasaux{\Phi_u} m
    + \dermeasaux{\Psi'}{m + \lv x u + \isES{\cut x u}}\\
    &= \dermeasaux \Psi m
    \qedhere
  \end{align*}
\end{itemize}
\end{proof}

\begin{lem}[Weighted Subject Reduction for $\rew\nrsub$]
  \label{l:sr_nrsub}%
  Let $\dem {\Phi_{t_0}} \Gamma {t_0} \sigma$.
  If $t_0 \rew\nrsub t_1$,
  then there exists $\dem {\Phi_{t_1}} \Gamma {t_1} \sigma$
  such that $\dermeasaux {\Phi_{t_0}} m \geq  \dermeasaux {\Phi_{t_1}} m$
  for every $m \in \mathbb N$.
\end{lem}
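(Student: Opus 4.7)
The plan is to proceed by induction on the context $\fc$ in the decomposition $t_0 = \ctx\fc{r_0} \rew\nrsub \ctx\fc{r_1} = t_1$, where $r_0 \rrule\nrsub r_1$ is a root step, mirroring the structure used in \autoref{l:ms-stable-by-s}. The non-root inductive cases on $\fc$ are routine: since the measure $\dermeasaux{\cdot}{\cdot}$ is additive in the cut/abstraction/application structure, and since the levels of surrounding binders are non-increasing along reduction by \autoref{l:stability-levels}, applying the induction hypothesis to the subderivation typing the subterm which is reduced yields the new derivation together with the required measure inequality.

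For the base case $\fc = \ec$, I would first reuse the decomposition introduced on page~\pageref{d:nrep}: any $\nrsub$-step at the root decomposes as $r_0 \rewn\permlr \ctx\lc{r_0'} \rew{\purelr\nrsub} \ctx\lc{r_1'} = r_1$, where $\rew{\purelr\nrsub}$ is the distance-free variant of $\rew\nrsub$. By the already established Weighted Subject Reduction for $\rew\permlr$ (\autoref{l:sr_permlr}), the permutation prefix preserves the measure exactly. So it remains to treat the four distance-free root rules $\purelr\varrr$, $\purelr\apprr$, $\purelr\distrr$, $\purelr\absrr$ individually.

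In each of these rules the top of the derivation is a $\ruleCutR$ with a subderivation $\Phi_t$ typing the body with $\Gamma;x{:}\MM$ and a $\many$-subderivation typing the content of the cut with the multi-type $\MM$. For $\purelr\varrr$ the content is a variable, so the $\many$-premises are pure axioms and \autoref{l:qsub} yields the substituted derivation of $t \msub x y$ with measure $\dermeasaux{\Phi_t}{m} + \dermeasaux{\Phi_y}{m + \lv x t}$, which equals $\dermeasaux{\Phi_t}{m} + (0,0,\msetsz\MM)$ and is bounded above by the original $\dermeasaux{\Phi_t}{m} + (0,0,\msetsz\MM)$ coming from $\dermeasaux{\Phi_y}{m + \lv x t + 1}$ (the weight is the same because $\Phi_y$ has no abstraction/application rule). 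For $\purelr\apprr$ and $\purelr\distrr$ the original application/abstraction nodes sit at level $\lv x t + 1$ and the newly created cuts in $t \msub x {yz} \es y u \es z s$ (resp. $t \dist x {\ly.z \es z u}$) place the moved subderivations at the same level, so by assembling \autoref{l:qsub} with the additive structure of the cut rule, and tracking levels via \autoref{l:level-substitution}, the two weighted sums coincide. For $\purelr\absrr$, \autoref{l:qsub} again produces a new derivation whose measure is bounded by that of the input since the distributor level matches the level of $x$ in $t$.

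The main obstacle will be the careful accounting that combines the bound $\dermeasaux{\Psi}{m} \leq \dermeasaux{\Phi_t}{m} + \dermeasaux{\Phi_u}{m + \lv x t}$ from \autoref{l:qsub} with the extra contributions from the newly introduced cuts surrounding the substituted term. One must verify that the level values $\lv y {t \msub x {yz}}$, $\lv z {t \msub x {yz} \es y u}$, $\lv x {t \dist x {\ly.z \es z u}}$ that appear on the reduced side agree with $\lv x t$, so that the weighted contributions of the moved subderivations never rise; this uses \autoref{l:level-substitution} and is the measure-theoretic analogue of the level-tracking already carried out in the termination proof. Once this bookkeeping is done, each of the four cases concludes by comparing triples componentwise under the lexicographic order.
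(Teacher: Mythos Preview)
Your proposal is correct and follows essentially the same route as the paper: decompose each $\nrsub$-step into $\permlr$-steps (handled by \autoref{l:sr_permlr}) followed by a distance-free root step, treat $\purelr\varrr$ and $\purelr\absrr$ directly via the substitution \autoref{l:qsub}, and handle $\purelr\apprr$, $\purelr\distrr$ by reconstructing the derivation with new cuts and comparing levels through \autoref{l:level-substitution}. One small inaccuracy: for $\purelr\apprr$ and $\purelr\distrr$ the weighted sums do not literally coincide (the application/abstraction node drops from weight $m+\lv x t+1$ to $m+\lv x t$, and a fresh axiom appears), but you only need $\geq$, so this does not affect the argument.
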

\begin{proof}
  As remarked in \autoref{sec:operational-semantics}, $t_0 \rew \nrsub t_1$ implies
  $t_0  \rewn\permlr t' \rew{\purelr\nrsub} t_1$. By \autoref{l:sr_permlr},
  weighted subject reduction holds for
  $t_0  \rewn\permlr t'$, so  it is sufficient to show the statement for
  the relation $\rew{\purelr\nrsub}$. We reason by induction
  on this relation.
  We show the base cases for $\rrule{\purelr{\apprr}}$ and $\rrule{\purelr{\distrr}}$,
  the cases $\rrule{\purelr{\absrr}}$ and
  $\rrule{\purelr{\varrr}}$ are simply by the substitution
  \autoref{l:qsub}, and the inductive cases are straightforward by the \ih
  \begin{itemize}
    \item $t_0 = {t \es x {us}} \rew{\apprr} t \msub x {yz} \es y u \es
      z s = t_1$, where $y$ and $z$ are fresh variables. Let
      $\Phi_i$ be of the form
      \[\inferrule*[right=\ruleAppR]{\dem {\Phi^i_u} {\Gam^i_u}{u}{\MN_i  \ft  \rho_i}
        \and \dem {\Phi^i_s} {\Gam^i_s}{s}{\MN_i}}
      {\seq{\Gam^i_{u} \inter \Gam^i_{s}}{us }{ \rho_i}}\]
      then the typing derivation $\Phi_{t_0}$ is of the form
      \[ \inferrule*[right=\ruleCutR]{\dem {\Phi_t} {\Gam'; x:\MM}{t}{ \sig}
          \and \inferrule*[right=\many]{\left( \dem {\Phi_i}
          {\Gam^i_{u} \inter \Gam^i_{s}}{us }{ \rho_i} \right )_{\iI} }
        {\seq{\Gam_u \inter \Gam_s}{us }{\MM}}}
      {\seq{\Gam'  \inter  \Gam_u \inter \Gam_s}{ t {\es x {us}}}{ \sig} } \]
      where $\MM = \mult{\rho_i}_{\iI}$,
      $\Gam_u  = \inter_{\iI} \Gam^i_u$ and $\Gam_s= \inter_{\iI} \Gam^i_s$.
      We have
      \begin{align*}
        \dermeasaux{\Phi_{t_0}} m
        &= \dermeasaux {\Phi_t} m\\
        &+ \sum_{\iI} \left(\dermeasaux {\Phi_u^i}{m + \lv x t + 1}
        + \dermeasaux {\Phi_s^i}{m + \lv x t +1} + (1, m + \lv x t + 1, 0)\right)
      \end{align*}

      Let us consider $\dem{\Phi'} {\Gam'; y:\MN_u; z:\MN_s}{ t \msub {x} {yz}}{ \sig}$,
      obtained by \autoref{l:qsub} from $\Phi_t$ and
      $\dem {\Phi_{yz}} {y:\MN_u; z:\MN_s}{yz}{ \mult{\rho_i}_{\iI}}$,
      where $\MN_u = \mult{\MN_i  \ft  \rho_i}_{\iI}$ and $\MN_s =
      \multunion_{\iI} \MN_i$. Let
      \[\Phi_u =   \inferrule*[right=\ruleCutR,vcenter]{(\dem{\Phi^i_u} {\Gam^i_u}{u}{\MN_i  \ft  \rho_i}
      )_{\iI}}{\seq{\Gam_u}{u}{\MN_u}} \quad \Phi_s = \inferrule*[right=\ruleCutR,vcenter]{(\dem{\Phi^i_s}{\Gam^i_s}{s}{\MN_i})_{\iI}}{\seq{\Gam_s}{s}{\MN_s}} \]
      We construct the following derivation $\Phi_{t_1}$ with two applications
      of rule $\ruleCutR$.
      \[
        \inferrule*{\inferrule*{
            \dem {\Phi'} {\Gam'; z:\MN_s; y:\MN_u}{ t \msub {x} {yz}}{ \sig} \\
          \dem {\Phi_u} {\Gam_u}{u}{\MN_u}}
          { \seq{(\Gam'; z:\MN_s) \inter \Gam_u} {t \msub {x} {yz} \es y
          u}{\sig}} \and \dem {\Phi_s} {\Gam_s}{s}{\MN_s}
        }
      {\seq{\Gam'\inter\Gam_u\inter\Gam_s}{t \msub {x} {yz} \es y u \es z s}{\sig}} \]

    We consider two cases to conclude:
      \begin{itemize}
        \item If $\MM = \emult$, then
          \begin{align*}
            \dermeasaux {\Phi_{t_1}} m
             &= \dermeasaux {\Phi'} m \\
             & + \sum_{\iI} \dermeasaux{\Phi_u^i}{m + \lv y {t \msub
             x {yz}} + 1} + \sum_{\iI} \dermeasaux{\Phi_s^i}{m + \lv z {t \msub x
         {yz} \es y u} + 1} \\
             &= \dermeasaux {\Phi'} m
             =_{L.~\ref{l:qsub}} \dermeasaux {\Phi_t} m
             = \dermeasaux {\Phi_{t_0}} m
          \end{align*}
        \item If $\MM \neq \emult$, then
          \begin{align*}
            \dermeasaux {\Phi_{t_1}} m
            &= \dermeasaux {\Phi'} m \\
            &+ \sum_{\iI} \dermeasaux{\Phi_u^i}{m + \lv y {t \msub
            x {yz}} + 1} + \sum_{\iI} \dermeasaux{\Phi_s^i}{m + \lv z {t \msub x
        {yz} \es y u} + 1} \\
            &= \dermeasaux {\Phi'} m \\
            &+ \sum_{\iI} \dermeasaux{\Phi_u^i}{m + \lv y {t \msub x {yz}} + 1} + \sum_{\iI} \dermeasaux{\Phi_s^i}{m + \lv z {t \msub x {yz}} + 1} \\
            &=_{L.~\ref{l:level-substitution}:\ref{l:level-substitution-non-void}} \dermeasaux{\Phi'} m + \sum_{\iI} \left(\dermeasaux{\Phi_u^i}{m+\lv x t + 1} + \dermeasaux{\Phi_s^i}{m + \lv x t + 1}\right) \\
            & \leq _{L.~\ref{l:qsub}}
            \dermeasaux{\Phi_t}m  + \dermeasaux{\Phi_{yz}}{m + \lv
            x t} \\ 
            &+ \sum_{\iI} \left(\dermeasaux{\Phi_u^i}{m+\lv x t + 1} +
            \dermeasaux{\Phi_s^i}{m + \lv x t + 1}\right) \\
            & = \dermeasaux{\Phi_t} m + (1, m + \lv x t, 2) \\ 
            &+ \sum_{\iI}
            \left(\dermeasaux{\Phi_u^i}{m+\lv x t + 1} + \dermeasaux{\Phi_s^i}{m + \lv x
            t + 1} \right) \\
            & < \dermeasaux{\Phi_t} m \\
            &+ \sum_{\iI} \left(\dermeasaux{\Phi_u^i}{m+\lv x t + 1} + \dermeasaux{\Phi_s^i}{m + \lv x t + 1} + (1, m + \lv x t, 2)\right) \\
            &< \dermeasaux{\Phi_{t_0}} m.
          \end{align*}
      \end{itemize}

    \item $t_0 =  {t \es x {\lam y u}} \rew{\distrr} t \dist x {\lam y
      z \es z u} = t_1$.
      The typing derivation is of the form
      \[ \Phi_{t_0} = \inferrule*[right=\ruleCutR,vcenter]{\dem {\Phi_t} {\Gam'; x:\MM} t \sigma \and
        \dem {\Phi_{\lam y u}} {\Gam_{\lam y u}}{\lam y u} \MM}
      {\seq{\Gam'  \inter  \Gam_{\lam y u}}{t {\es x {\lam y u}}} \sigma} \]
      where  \[
        \Phi_{\lam y u} = \inferrule*[right=\many,vcenter]{
          \left( \inferrule*[right=\ruleAbsR]{ \dem {\Phi^i} {\Gam^i_{\lam y u}; y:\MN_i }{u}{\rho_i}}
          {\seq{\Gam^i_{\lam y u}}{\lam y u }{ \MN_i \ft \rho_i}} \right)_{\iI}
          \left( \inferrule*[right=\ruleAnsR]{ }
          {\seq{}{\lam y u}{ \ans}} \right)^k
        }{ \seq{\Gam_{\lam y u}}{\lam y u }{ \MM}}
      \]
      and $\MM = \mult{\MN_i \ft \rho_i}_{\iI} \multunion \mult{\underbrace{\ans,\dots,\ans}_{k}}$,
      with $\Gam_{\lam y u} = \inter_\iI \Gam^i_{\lam y u}$.
      Moreover,
      \begin{align*}
        \dermeasaux{\Phi_{t_0}} m
        & = \dermeasaux {\Phi_t} m
        + k * (1, m + \lv x t + 1, 0)\\
        & + \sum_{\iI} \left(\dermeasaux {\Phi^i}{m + \lv x t + 1} + (1,
        m + \lv x t + 1, 0)\right)
      \end{align*}
      We construct the following derivation
      \[ \Phi_{t_1} = \inferrule*[right=\ruleCutR,vcenter]{\dem {\Phi_t} {\Gam'; x:\MM}{t}{ \sig} \and
        \dem {\Phi_\lambda} {\Gam_{\lam y u}}{\lam y z \es z u  }{ \MM }}
      {\seq{\Gam' \inter   \Gam_{\lam y u}}{t \dist x {\lam y {z \es z u}}}{ \sig} } \]
      where \[
        \Phi_\lambda =
        \inferrule*[right=\many,vcenter]{
          \left( \inferrule*[right=\ruleAbsR,vcenter]{
            \dem {\Phi_\lambda^i} {\Gam^i_{\lam y u}; y:\MN_i}{z \es z u}{ \rho_i}}{
          \seq{\Gam^i_{\lam y u}}{\lam y z \es z u }{ \MN_i \ft \rho_i}}
        \right)_{\iI}
        \left( \inferrule*[right=\ruleAnsR]{ }
      {\seq{}{\lam y {z \es z u}}{ \ans}} \right)^k }{
  \seq{\Gam_{\lam y u}}{\lam y z \es z u  }{ \MM}} \]
  with $\Phi_\lambda^i$ of the form
  \[ \inferrule*[right=\ruleCutR]{
      \inferrule*[right=\ruleAxR]{  }{\seq{z:\mult{\rho_i}}{z}{\rho_i}}
      \and
    \dem{\Phi^i} {\Gam^i_{\lam y u}; y:\MN_i }{u}{\rho_i}}{
\seq{\Gam^i_{\lam y u}; y:\MN_i}{z \es z u}{ \rho_i}} \]
We have
\begin{align*}
  \dermeasaux{\Phi_{t_1}} m
  &= \dermeasaux{\Phi_t} m
  + k * (1, m + \lv x t, 0) \\
  & + \sum_{\iI} \left(\dermeasaux{\Phi^i}{m + \lv x t + 1} + (0,0,1) +
  (1,m+\lv x t,0)\right)\\
  &\leq \dermeasaux{\Phi_{t_0}} m
  \qedhere
\end{align*}
\end{itemize}
\end{proof}

\begin{lem}[Weighted Subject Reduction for $\rew\whdblr$]
  \label{l:sr_whdblr}%
  Let $\dem {\Phi_{t_0}} \Gamma {t_0} \sigma$.
  If $t_0 \rew\whdblr t_1$,
  then there exists $\dem {\Phi_{t_1}} \Gamma {t_1} \sigma$
  such that $\dermeasaux {\Phi_{t_0}} m > \dermeasaux {\Phi_{t_1}} m$ for
  every $m \in \mathbb N$.
\end{lem}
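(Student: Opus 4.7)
The plan is to proceed by induction on the derivation of $t_0 \rew\whdblr t_1$, with three cases corresponding to the rules $\whlrdbroot$, $\whlrdbapp$, $\whlrdbsub$.

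For the base case $t_0 = \ctx\lc{\lx.t}\, u \rrule\db \ctx\lc{t \es x u} = t_1$, I would adapt the decomposition strategy already used in Lemma~\ref{l:sr_nrsub}: after $\alpha$-renaming the domain of $\lc$ to be fresh w.r.t.~$u$, iterated application of the permutation $t \cut y v s \rrule\permlr (ts) \cut y v$ (valid since $y \notin \fv u$) gives $t_0 \rewn\permlr \ctx\lc{(\lx.t)\, u} \rew{\purelr\db} t_1$. By Lemma~\ref{l:sr_permlr} the $\permlr$-segment preserves the measure exactly, so it suffices to analyze the pure root step $(\lx.t)\, u \rew{\purelr\db} t \es x u$. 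The typing derivation $\Phi_0$ must have the shape $\ruleAppR(\ruleAbsR(\Phi_t), \Phi_u)$, since the left premise of $\ruleAppR$ carries an arrow type, which on an abstraction comes only from $\ruleAbsR$. A direct computation gives $\dermeasaux{\Phi_0} m = \dermeasaux{\Phi_t} m + \dermeasaux{\Phi_u} m + (2, 2m, 0)$, while the candidate $\Phi_1 = \ruleCutR(\Phi_t, \Phi_u)$ for $t \es x u$ satisfies $\dermeasaux{\Phi_1} m = \dermeasaux{\Phi_t} m + \dermeasaux{\Phi_u}{m + \lv x t + 1}$. Lemma~\ref{l:more_weight} collapses the difference to $(2, 2m - (\lv x t + 1)\cdot\dersz{\Phi_u}, 0)$, whose first coordinate is $2 > 0$, so $\dermeasaux{\Phi_0} m > \dermeasaux{\Phi_1} m$ in lex order. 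Surrounding this inner step with the $\ruleCutR$ rules representing $\lc$ adds identical contributions to $\Phi_0$ and $\Phi_1$, since the $\alpha$-renaming ensures that the levels of the $\lc$-variables in the inner term coincide before and after the step.

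For the inductive cases I apply the IH to the reducing subderivation and reassemble with $\ruleAppR$ (for $\whlrdbapp$) or $\ruleCutR$ (for $\whlrdbsub$). The delicate point is $\whlrdbsub$: since $\db$-reduction can \emph{increase} levels of outer variables (Lemma~\ref{l:stability-levels} applies only to $\nrsub$), the weight $m + \lv x t + \isES{\cut x u}$ attached to $\Phi_u$ in $\ruleCutR$ may grow after the step, potentially threatening the comparison. The remedy is to strengthen the IH to state that the strict decrease lives in the \emph{first} coordinate of the lex-triple, which, being the count of $\ruleAppR$, $\ruleAbsR$, $\ruleAnsR$ rules in the derivation, is independent of $m$ and of levels. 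This strengthening is immediate from the explicit drop of $2$ in the first coordinate exhibited above, and both inductive rules $\ruleAppR$ and $\ruleCutR$ contribute identically to that coordinate on $\Phi_{t_0}$ and $\Phi_{t_1}$. Any discrepancy introduced by a change of level affects only the second coordinate via Lemma~\ref{l:more_weight}, and is lex-dominated by the first-coordinate gap inherited from the IH.

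The main obstacle is precisely this non-monotonicity of levels under $\db$ in the $\whlrdbsub$ case. Without the strengthened first-coordinate invariant, a potential growth of $\lv x t$ could conceivably offset the second-coordinate decrease provided by the IH. The observation that each pure $\db$-step deletes exactly one $\ruleAppR$ and one $\ruleAbsR$ rule from the derivation gives the robust first-coordinate drop of $2$ that propagates transparently through the induction.
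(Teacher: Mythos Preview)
Your proposal is correct and follows essentially the same route as the paper: an induction on $\rew\whdblr$, with the base case reduced via Lemma~\ref{l:sr_permlr} to the pure root step $(\lx.t)\,u \rrule{\purelr\db} t\es x u$, and the crucial strengthening that the strict decrease lives in the first coordinate (independent of $m$ and of levels), which is exactly what the paper invokes to absorb the possible growth of $\lv x t$ in the $\whlrdbsub$ case using Lemma~\ref{l:more_weight}. Two small remarks: the $\alpha$-renaming of $\domlc\lc$ away from $u$ is already guaranteed by the no-capture side condition of the $\db$ rule, and your claim that wrapping with the $\ruleCutR$ rules of $\lc$ adds identical contributions is justified precisely because $y\notin\fv u$ for each $y\in\domlc\lc$ forces $\lv y{(\lx.t)\,u}=\lv y t=\lv y{t\es x u}$, which propagates through the list.
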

\begin{proof}
  We prove that $\dermeasaux {\Phi_{t_0}} m > \dermeasaux {\Phi_{t_1}} m$ by
  showing in particular that it is the first component of the
  $3$-tuple that strictly decreases.

  We reason by induction on the reduction relation $\rew\whdblr$.
  \begin{itemize}
    \item If $t_0 = \ctx \lc {\lam x t} u \rew{\db} \ctx \lc {t \es x u} = t_1$,
      then we reason by induction on $\lc$.
      The inductive step follows from \autoref{l:sr_permlr}, so we only show the base case $\lc = \ec$.
      The typing derivation $\Phi_{t_0}$ is of the form
      \[\inferrule*[right=\ruleAppR]{
          \inferrule*[right=\ruleAbsR]{
          \dem {\Phi_t} {\Gamma'; x:\MM}{t}{ \sigma}}{
        \seq{\Gamma'}{\lam x t}{ \MM \ft \sigma}}
      \and \dem {\Phi_u} {\Gamma_u}{u }{ \MM}}{
  \seq{\Gamma' \inter  \Gamma_u}{ (\lam x t)u}{ \sigma}} \]
  and $\dermeasaux{\Phi_{t_0}} m = \dermeasaux {\Phi_t} m + \dermeasaux {\Phi_u} m + (2, 2*m, 0)$.

  We construct the following derivation.
  \[ \Phi_{t_1} =
    \inferrule*[right=\ruleCutR,vcenter]
    {\dem {\Phi_t} {\Gamma'; x:\MM} t \sigma \and
    \dem {\Phi_u} {\Gamma_u} u \MM}
  {\seq{\Gamma' \inter \Gamma_u}{t \es x u} \sigma} \]
  We have
  \begin{align*}
    \dermeasaux{\Phi_{t_1}} m
    &= \dermeasaux {\Phi_t} m + \dermeasaux {\Phi_u}{m + \lv x t + 1}\\
    &=_{L.~\ref{l:more_weight}} \dermeasaux {\Phi_t} m + \dermeasaux {\Phi_u} m
    + (0, (\lv x t+1) * \dersz {\Phi_u}, 0) \\
    & < \dermeasaux{\Phi_{t_0}} m
  \end{align*}
  Notice that it is the first component of the first
  $3$-tuple that strictly
  decreases by  $2$.

  \item If $t_0 = tu \rew{\whdblr} t'u = t_1$, where $t \rew{\whdblr} t'$, then
    the property trivially holds by the \ih
  \item If $t_0 = t \es x u \rew{\whdblr} t' \es x u = t_1$, where $t \rew{\whdblr} t'$, then $\Gamma = \cmin{\Gamma'}{x} \inter   \Del$ and
    $\dem {\Phi_t} {\Gamma';x:\MM}{t}{\sigma}$
    and $\dem {\Phi_u} {\Delta}{u }{\MM}$.
    Moreover, $\dermeasaux{\Phi_{t_0}} m
    = \dermeasaux {\Phi_t} m +  \dermeasaux{\Phi_u}{m + \lv x t + 1}$.
    By the \ih\ we have
    $\dem {\Phi_{t'}} {\Gamma';x:\MM}{t'}{\sigma}$
    and $ \dermeasaux {\Phi_t} m >_{\ih} \dermeasaux{\Phi_{t'}} m$, where
    in particular it is the first component of the first $3$-tuple
    that strictly decreases. Derivation $\Phi_{t_1}$ is then
    obtained by rule $\ruleCutR$ from $\Phi_{t'} $ and $\Phi_u$.
    We can conclude since:
    \begin{align*}
      \dermeasaux{\Phi_{t_1}} m
      &=  \dermeasaux {\Phi_{t'}} m + \dermeasaux {\Phi_u}{m + \lv x {t'} + 1} \\
      & =_{L.~\ref{l:more_weight}} \dermeasaux {\Phi_{t'}} m +  \dermeasaux
      {\Phi_u} m + (0, (\lv x {t'}+1) * \dersz {\Phi_u}, 0) \\
      & <_{\ih} \dermeasaux {\Phi_{t}} m +  \dermeasaux {\Phi_u}{m}  + (0, (\lv
      x {t}+1) * \dersz {\Phi_u}, 0) \\
      & =_{L.~\ref{l:more_weight}} \dermeasaux {\Phi_{t}} m +  \dermeasaux {\Phi_u}{m + \lv x {t} + 1} \\
      & = \dermeasaux{\Phi_{t_0}} m
    \end{align*}
    Note that even when $\lv x {t'} > \lv x {t}$, the
    inequation $ \dermeasaux{\Phi_{t_1}} m < \dermeasaux{\Phi_{t_0}} m$ is determined by
    the strict relation between the first components of the $3$-tuples,
    that is, the unweighted number of abstraction and application rules.
    \qedhere
\end{itemize}
\end{proof}

\begin{lem}
  \label{l:typed-in-ndc}%
  Let $\dem \Phi \Gamma {\ctxnc\ndc x} \tau$.
  Then there exists $\Gamma'$, $I \neq \emptyset$ and $\mult{\sigma_i}_{\iI}$
  such that $\Gamma = \Gamma' \inter x:\mult{\sigma_i}_{\iI}$
  and for any variable $z$ there is a proof
  $\dem {\Phi'} {\Gamma'\inter z:
  \mult{\sigma_i}_{\iI}}{\ctxnc\ndc z} \tau$. In
  particular, if $z$ is fresh, then $\Gamma' \inter z:
  \mult{\sigma_i}_{\iI} = \Gamma'; z:
  \mult{\sigma_i}_{\iI}$. 
\end{lem}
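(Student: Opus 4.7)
The plan is to induct on the structure of the need context $\ndc$, following the four productions of its grammar. Each case will match a specific typing rule at the root of $\Phi$, so the argument is a case analysis on $\ndc$ combined with peeling off the corresponding last rule of $\Phi$ and invoking the \ih\ on the subderivation that contains the hole.

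In the base case $\ndc = \ec$, the derivation is just $\ruleAxR$, so $\Gamma = x:\mult\tau$ and I take $\Gamma' = \emptyset$ and $\mult{\sigma_i}_{\iI} = \mult\tau$; any other $z$ works via the same axiom. For $\ndc = \ndc_1 t$ the derivation ends with $\ruleAppR$, and for $\ndc = \ndc_1 \cut{x'} t$ it ends with $\ruleCutR$, with the hole lying in $\ndc_1$ in both cases; I apply the \ih\ to the subderivation typing $\ctxnc{\ndc_1}{x}$, reassemble the root rule using the derivation it produces for $\ctxnc{\ndc_1}{z}$, and use $\alpha$-conversion in the cut subcase to keep $x' \neq x$ so that the binder does not interfere with the multi-type of $x$.

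The main case is $\ndc = \ctxnc{\ndc_1}{x'} \esub{x'}{\ndc_2}$, whose derivation ends with $\ruleCutR$ with premises $\dem{\Phi_1}{\Sigma_1; x':\MM}{\ctxnc{\ndc_1}{x'}}{\tau}$ and $\dem{\Phi_2}{\Sigma_2}{\ctxnc{\ndc_2}{x}}{\MM}$, where $\Phi_2$ is a $\many$ rule with one subderivation $\Phi_2^k: \Sigma_2^k \vdash \ctxnc{\ndc_2}{x}: \tau_k$ per element of $\MM = \mult{\tau_k}_{\kK}$. The hole lies in $\ndc_2$, so I apply the \ih\ to each $\Phi_2^k$, obtaining $\Sigma_2^k = \Theta^k \inter x:\mult{\sigma_i^k}_{i \in I^k}$ with $I^k \neq \emptyset$ and, for any $z$, a derivation of $\ctxnc{\ndc_2}{z}:\tau_k$ in environment $\Theta^k \inter z:\mult{\sigma_i^k}_{i \in I^k}$. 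Recombining these via $\many$ and $\ruleCutR$ then yields a typing of $\ctxnc\ndc z$ in environment $\Sigma_1 \inter (\inter_{\kK} \Theta^k) \inter z:\multunion_{\kK} \mult{\sigma_i^k}_{i \in I^k}$, from which the required $\Gamma'$ and $\mult{\sigma_i}_{\iI}$ are read off directly.

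The principal obstacle in this last case is that the \ih\ only applies to the subderivations $\Phi_2^k$ once I know $\MM \neq \emult$ (otherwise $\Phi_2$ has no premises at all). This non-emptiness is forced by also invoking the \ih\ on $\Phi_1$ with hole variable $x'$, which yields a non-empty sub-multiset of $\MM$ and hence a non-empty $K$; consequently $\mult{\sigma_i}_{\iI}$ itself is non-empty, since each $I^k$ is.
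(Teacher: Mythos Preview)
Your proof is correct and follows essentially the same approach as the paper's: induction on $\ndc$, with the key observation in the case $\ndc = \ctxnc{\ndc_1}{x'}\esub{x'}{\ndc_2}$ that applying the \ih\ to the outer premise $\Phi_1$ (with hole variable $x'$) forces the multi-type $\MM$ of $x'$ to be non-empty, so the $\many$ rule over $\ndc_2$ has at least one premise and the \ih\ applies there as well. Your handling of the bound variable via $\alpha$-conversion (or equivalently the capture-freeness of $\ctxnc\ndc x$) is the same bookkeeping the paper relies on implicitly.
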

\begin{proof}
  By induction on $\ndc$.
  \begin{itemize}
    \item If $\ndc = \ec$, this is straightforward by taking $\Gamma' = \emptyset$ and $\mult{\sigma_i}_{\iI} = \mult\tau$.
    \item If $\ndc = \ndc' t$ or $\ndc = \ndc' \cut x t$,
      then there is a derivation
      $\dem {\Phi'} {\Gamma_1}{\ctxnc{\ndc'} x}{\tau'}$,
      such that $\Gamma = \Gamma_1 \inter \Gamma_2$ and $\tau' = \MM \ft \tau$
      or $\tau = \tau'$, respectively.
      By the \ih\  $\Gamma_1 = \Gamma_1' \inter x:\mult{\sigma_i}_{\iI}$, so that
      $\Gamma' = \Gamma_1' \inter \Gamma_2$.
    \item If $\ndc = \ctxnc{\ndc_1} y \es y {\ndc_2}$,
      the derivation is as follows.
      \[ \inferrule*[right=\ruleCutR]{
        \seq{\Gamma_1 ; y:\mult{\rho_j}_{\jJ}}{\ctxnc{\ndc_1}y}{\tau}
        \\ \inferrule*[right=\many]{
          \left(\seq{\Gamma_j}{\ctxnc{\ndc_2} x}{\rho_j}\right)_{\jJ}}{
          \seq{\inter_\jJ \Gamma_j}{\ctxnc{\ndc_2} x}{ \mult{\rho_j}_{\jJ}}}}{
        \seq \Gamma {\ctxnc{\ndc_1} y \es y {\ctxnc{\ndc_2} x}}{\tau}} \]
      Where $\Gamma = \Gamma_1 \inter \Gamma_2$ and $\Gamma_2 = \inter_{\jJ}
      \Gamma_j$.
      By the \ih\ on $\ndc_1$, $\Gam_1;y:
        \mult{\rho_j}_{\jJ} = \Gam' \inter y:\mult{\rho_j}_{\jJ'}$ for some
        $\emptyset \neq J' \subseteq J$. Thus $J \neq \emptyset$.
      By the \ih\ on $\ndc_2$, for every $\jJ$ we have
      $\Gamma_j = \Gamma_j' \inter x:\mult{\sigma_i}_{i \in I_j}$,
      where $I_j \neq \emptyset$ and a proof $\dem {\Phi_j} {\Gamma'_j\inter z:
      \mult{\sigma_i}_{i \in I_j}}{\ctxnc{\ndc_2} z}{\rho_j}$ for a
      variable $z$.
      We then take $I = \cup_{\jJ} I_j$ and $\Gamma' = \Gamma_1 \inter_{\jJ} \Gamma_j'$.
      \qedhere
  \end{itemize} 
\end{proof}

\begin{lem}[Weighted Subject Reduction for $\ndlr$]
  \label{l:sr_flneed}%
  Let $\dem {\Phi_{t_0}} \Gamma {t_0} \sigma$.
  If $t_0 \rew\ndlr t_1$, then there exists
  $\dem {\Phi_{t_1}} \Gamma {t_1} \sigma$
  such that $\dermeasaux {\Phi_{t_0}} m > \dermeasaux {\Phi_{t_1}} m$ for
  every $m \in \mathbb N$.
\end{lem}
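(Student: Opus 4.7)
The plan is to proceed by induction on the reduction $t_0 \rew\ndlr t_1$, handling each root rule in turn and then the closure under the need contexts $\ndc$. The root case $\ndlrdb$ is an ordinary $\db$-step under a list context $\lc$, so it is covered verbatim by the first subcase of \autoref{l:sr_whdblr} and yields a strict decrease in the first component of the 3-tuple measure.

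For $\ndlrabs$, with $t_0 = \ctxnc\ndc x \dist x v \rrule\ndlrabs \ctxnc\ndc v \dist x v = t_1$, I would first apply \autoref{l:typed-in-ndc} to the typing of $\ctxnc\ndc x$ to obtain a multi-type for $x$ of the form $\mult{\sigma_i}_\iI$ with $I \neq \emptyset$; in particular, the multi-type $\MM$ typing $v$ in the distributor contains the singleton $\mult\tau$ used at the hole of $\ndc$. Splitting $\Phi_v$ via \autoref{l:split} into $\Phi_v^\tau$ and $\Phi_v^{\mathit{rest}}$, I apply \autoref{l:partial-substitution} to produce a derivation $\Psi$ of $\ctxnc\ndc v$ whose measure equals $\dermeasaux{\Phi_0}{m} + \dermeasaux{\Phi_v^\tau}{m + \lv \ec \ndc} - (0,0,1)$. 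Combining $\Psi$ with $\Phi_v^{\mathit{rest}}$ by $\ruleCutR$ gives $\Phi_{t_1}$. Using the level inequalities $\lv \ec \ndc \leq \lv x {\ctxnc\ndc x}$ and $\lv x {\ctxnc\ndc v} \leq \lv x {\ctxnc\ndc x}$ (the latter because $v$ is pure and $x \notin \fv v$), together with \autoref{l:more_weight}, the weight-changing terms produce only non-negative lexicographic contributions, leaving a net strict $(0,0,1)$ decrease in the third coordinate.

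For $\ndlrdist$, I decompose the macro-step into four micro-phases: (i) $\rewn\permlr$ to push $\lc$ past the substitution; (ii) a single $\distrr$-step; (iii) a $\rewn\skelss$-reduction inside the distributor body, which is a sequence of $\nrsub$-steps; (iv) $\rewn\permlr$ to pull $\llc$ out of the enclosing abstraction (licit because, by definition of MFE, no cut in $\llc$ is bound by $y$) and then past the distributor. Phases (i) and (iv) preserve the measure by \autoref{l:sr_permlr}; phase (iii) does not increase it by \autoref{l:sr_nrsub}; and phase (ii) strictly decreases it, using the $\distrr$ case of \autoref{l:sr_nrsub}, whose computed measure gap is $(0,\,k+|I|,\,-|I|)$. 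This gap is lexicographically positive precisely when $k + |I| \geq 1$, that is, when the multi-type $\MM$ typing $\ly.p$ is non-empty. This is exactly what \autoref{l:typed-in-ndc} grants us, since $\MM$ coincides with the multi-type assigned to $x$ in the typing of $\ctxnc\ndc x$.

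For the inductive cases corresponding to $\ndc\, t$, $\ndc \cut x t$ and $\ctxnc{\ndc'} x \esub x \ndc''$, the claim follows from the IH applied to the affected subderivation; in the last case, \autoref{l:typed-in-ndc} again ensures that the $\many$ rule typing the body has at least one premise, so at least one instance of the IH provides a strict decrease that propagates to the full measure via pointwise summation. The main obstacle I anticipate is phase (ii) of the $\ndlrdist$ decomposition: invoking \autoref{l:typed-in-ndc} at exactly the right place to guarantee that the $\distrr$-step is strict (rather than merely non-increasing), so that the lex-strict decrease is not swallowed by the measure-neutral phases (i), (iii) and (iv).
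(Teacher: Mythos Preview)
Your base cases are essentially the paper's argument, repackaged: the $\ndlrdb$ case is identical, your $\ndlrabs$ case is the same partial-substitution argument, and your four-phase decomposition of $\ndlrdist$ is morally what the paper does (it builds the distributor derivation directly rather than naming a $\distrr$-step, but the strict decrease comes from the same place, and both need \autoref{l:typed-in-ndc} to guarantee $\MM\neq\emult$).

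There is, however, a real gap in your treatment of the inductive case $\ndc = \ndc'\cut x u$. You write that it ``follows from the IH applied to the affected subderivation'', but the measure of the full term is
\[
\dermeasaux{\Phi_{t_0}} m \;=\; \dermeasaux{\Phi_{\ctx{\ndc'}{t'_0}}} m \;+\; \dermeasaux{\Phi_u}{\,m+\lv x {\ctx{\ndc'}{t'_0}}+\isES{\cut x u}\,},
\]
and the weight at which $\Phi_u$ is measured depends on $\lv x {\ctx{\ndc'}{t'_0}}$, which can \emph{increase} under reduction --- specifically under $\db$ (e.g.\ $\lv z {(\lx.t)u}=\max(\lv z t,\lv z u)$ versus $\lv z {t\es x u}=\max(\lv z t,\lv x t+\lv z u+1)$). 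By \autoref{l:more_weight} such an increase adds $(0,\Delta\cdot\dersz{\Phi_u},0)$ to the second component, which can swamp a lex-strict decrease from the IH if that decrease lives only in the second or third component.

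The paper handles this by \emph{strengthening the induction hypothesis}: it shows that when the underlying root step is $\db$, the strict decrease is in the \emph{first} component, so the second-component growth from the level change is harmless; and for $\ndlrdist$ and $\ndlrabs$ it shows separately that levels do not increase (via \autoref{l:stability-levels} for $\ndlrdist$, and directly for $\ndlrabs$). You already computed which component drops in each base case, but you did not carry that information through the induction, nor did you argue the level bounds for the non-$\db$ cases. Without this case split the inductive step for $\ndc'\cut x u$ does not go through.
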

\begin{proof}
  We prove that $\dermeasaux {\Phi_{t_0}} m > \dermeasaux {\Phi_{t_1}} m$ by
  showing in particular that the first component of the
  first $3$-tuple strictly decreases when the reduction
  is $\db$. We reason by induction on the reduction relation, \ie\ by induction
  on the context $\ndc$ where the root reduction takes place.
  We first detail the base case when $\ndc=\ec$.
  \begin{itemize}
    \item $t_0 = \ctx\lc{\lam x t} u \rew{\db} \ctx\lc{t \es x u} = t_1$.
      This case is the same as for $\whlr$.

    \item $t_0 = \ctxnc\ndc x \es x {\lam y p} \rew\ndlrdist
      \ctx\llc{\ctxnc\ndc x \dist x {\lam y {p'}}} = t_1$,
      where $\lam y {z \es z p} \skelbs{}_\skelss \lam y {\ctx\llc{p'}}$.
      The typing derivation $\Phi_{t_0}$ is of the
      form
      \[ \inferrule*[right=\ruleCutR]{
          \dem {\Phi} {\Gamma';x:\MN}{\ctxnc\ndc x}{ \sigma}
          \and \inferrule*[Right=\many]{
          (\dem{\Phi_{\lam y p}^i} {\Delta_i}{\lam y p}{ \sigma_i})_{\iI}}{
      \seq \Delta {\lam y p}{ \MN}}}{
  \seq {\Gamma' \inter  \Delta}{\ctxnc\ndc x \es x {\lam y p}}{\sigma}} \]
  where $\MN = \mult{\sig_i}_{\iI}$, $\Del = \inter_{\iI} \Del_i$ and $\Gam =
  \Gam' \inter\Delta$. Moreover, $I \neq \emptyset$ by \autoref{l:typed-in-ndc}. For each $\sigma_i$ we build the following derivations $\Phi_{p_0}^i$:
  \begin{itemize}
    \item if $\sigma_i = \MM_i \ft \tau_i$ then
      $\Phi_{p_0}^i$ is of the form
      \[
          \inferrule*[Right=\ruleCutR]{
            \inferrule*[right=\ruleAxR]{ }{\seq{z:\mult{\tau_i}}{z}{\tau_i}}
          \\ \dem {\Phi_p^i} {\Delta_i;y:\MM_i}{p}{\tau_i}}{
      \inferrule*[right=\ruleAbsR]{\seq{\Delta_i;y:\MM_i}{z \es z p}{\tau_i}}{
  \seq{\Delta_i}{\lam y {z \es z p}}{\MM_i \ft \tau_i}}} \]
  where $\Phi_p^i $ is obtained from $\Phi_{\lam y p}^i$ by reversing the $\ruleAbsR$ rule.
\item if $\sigma_i = \ans$ then
  $\Phi_{p_0}^i =
  \inferrule*[right=\ruleAnsR,vcenter]{ }{\seq{}{\lam y {z \es z p}}{\ans}}$.
\end{itemize}
By hypothesis, $\lam y {z \es z p} \rewn\skelss \lam y {\ctx\llc{p'}}$.
Since $\rew\skelss$ is included in $\rew\nrsub$, then
we know
by \autoref{l:sr_nrsub} that there are derivations
$\dem {\Phi_{p_1}^i} {\Delta_i}{\lam y {\ctx\llc{p'}}}{\sig_i}$
such that $\dermeasaux {\Phi^i_{p_0}} m \geq \dermeasaux {\Phi^i_{p_1}} m$.
Thus, we can build the following derivation.
\[ \Phi'_{t_1} =
  \inferrule*[right=\ruleCutR]{
    \dem \Phi {\Gamma';x:\MN}{\ctxnc\ndc x} \sigma
    \and \inferrule*[Right=\many]{
    (\dem {\Phi_{p_1}^i} {\Delta_i}{\lam y {\ctx\llc{p'}}}{ \sigma_i})_{\iI}}{
\seq \Delta {\lam y {\ctx\llc{p'}}} \MN}}{
\seq {\Gamma' \inter \Delta}{\ctxnc\ndc x \dist x {\lam y {\ctx\llc{p'}}}}{\sigma}} \]
Let $n = \lv x {\ctxnc\ndc x}$.
We begin showing that
$\dermeasaux{\Phi_{\lam y p}^i}{m + n + 1} > \dermeasaux{\Phi_{p_0}^i}{m + n}$ for every $i \in I$.
There are two cases.
\begin{enumerate}
  \item If $\sigma_i = \ans$, then
    $\dermeasaux{\Phi_{\lam y p}^i}{m + n + 1} = (1,m+n+1,0)$,
    while $\dermeasaux{\Phi_{p_0}^i}{m + n} = (1,m+n,0)$.
  \item If $\sigma_i = \MM_i \ft \tau_i$, then
    $\dermeasaux{\Phi_{\lam y p}^i}{m + n + 1} = (1,m+n+1,0) + \dermeasaux{\Phi_p^i}{m+n+1}$ and
    \begin{align*}
      \dermeasaux{\Phi_{p_0}^i}{m+n}
            &= (1,m+n,0) + (0,0,1) + \dermeasaux{\Phi_p^i}{m+n+\lv z z+1} \\
            &= (1,m+n,1) + \dermeasaux{\Phi_p^i}{m+n+1}.
    \end{align*}
    So that
    $\dermeasaux{\Phi_{\lam y p}^i}{m + n + 1} > \dermeasaux{\Phi_{p_0}^i}{m + n}$ since $(1,m+n+1,0) > (1,m+n,1)$.
\end{enumerate}
Finally, we have:
\begin{align*}
  \dermeasaux{\Phi'_{t_1}} m &= \dermeasaux \Phi m + \sum_{\iI} \dermeasaux{\Phi_{p_1}^i}{m + n} \\
                             &\leq_{L.~\ref{l:sr_nrsub}} \dermeasaux \Phi m + \sum_{\iI} \dermeasaux{\Phi_{p_0}^i}{m + n} \\
                             &< \dermeasaux \Phi m + \sum_{\iI} \dermeasaux{\Phi_{\lam y p}^i}{m + n + 1} \\
                             &= \dermeasaux{\Phi_{t_0}} m
\end{align*}

By \autoref{l:sr_permlr}, we can finally construct
$\dem {\Phi_{t_1}} {\Gamma' \inter \Delta}{\ctx\llc{\ctxnc\ndc x \dist
x {\lam y {p'}}}}{\sigma}$, where $\dermeasaux {\Phi_{t_1}} m = \dermeasaux
{\Phi'_{t_1}} m$.

\item $t_0 = \ctxnc\ndc x \dist x v \rew\ndlrabs \ctxnc\ndc v \dist x v = t_1$.
  The typing derivation $\Phi_{t_0}$ is of the form
  \[
    \inferrule*[right=\ruleCutR]{
      \dem {\Phi} {\Gamma'; x: \MM}{\ctxnc\ndc x}{\sigma}
      \\ \inferrule*[Right=\many]{
      (\dem {\Phi^i_v} {\Delta_i} {v}{\tau_i})_{\iI}}{
  \dem {\Phi_v} {\Delta}{v}{\MM}}}{
\seq{\Gamma' \inter \Delta }{\ctxnc\ndc x \dist x v}{\sigma}} \]
where $\MM = \mult{\tau_i}_{\iI}$ and $\Delta = \inter_{\iI} \Delta_i$.
By \autoref{l:typed-in-ndc} we know that there is a non-empty $\MN \sqsubseteq \MM$ which types the variable $x$
in the hole of the context $\ndc$.
We can then write $\MM$ as $\MN  \sqcup \MN'$.
By \autoref{l:split} there are two derivations $\dem {\Phi_{v_1}}
{\Del_1} {v}{\MN}$ and $\dem {\Phi_{v_2}} {\Del_2} {v}{\MN'}$ such that $\Del = \Del_1 \inter \Del_2$
and $\dermeasaux {\Phi_v} m =
\dermeasaux {\Phi_{v_1}} m + \dermeasaux {\Phi_{v_2}} m $.  Using
\autoref{l:partial-substitution}, we can construct:
\[ \Phi_{t_1} =
  \inferrule*[right=\ruleCutR,vcenter]{
    \dem {\Psi} {\Gamma' \inter \Delta_1;  x:\MN'} {\ctxnc\ndc v}{\sigma} \\
  \dem {\Phi_{v_2}} {\Del_2}{v}{\MN'}} {\seq{\Gamma' \inter \Delta;
x:\MN'}{\ctxnc\ndc v \dist x v}{\sigma}} \]
We clearly have  $\lv \ec \ndc \leq \lv x {\ctxnc\ndc x}$
and, because $x \notin
\fv{v}$,
we also have $\lv x {\ctxnc\ndc v} \leq \lv x
{\ctxnc\ndc x}$. Then, 
\begin{align*}
  \dermeasaux{\Phi_{t_1}} m
        &= \dermeasaux \Psi m + \dermeasaux{\Phi_{v_2}}{m + \lv x {\ctxnc\ndc {v}}} \\
        &=_{L.~\ref{l:partial-substitution}} \dermeasaux \Phi m +
        \dermeasaux{\Phi_{v_1}}{m + \lv \ec \ndc} - (0,0,\msetsz \MN)
        + \dermeasaux{\Phi_{v_2}}{m + \lv x {\ctxnc\ndc
        {v}}} \\
        &\leq \dermeasaux \Phi m + \dermeasaux{\Phi_{v_1}}{m + \lv x {\ctxnc\ndc x}}
        - (0,0,\msetsz \MN) + \dermeasaux{\Phi_{v_2}}{m + \lv x {\ctxnc\ndc x}} \\
        & <  \dermeasaux \Phi m + \dermeasaux{\Phi_{v_1}}{m + \lv x {\ctxnc\ndc x}}
        + \dermeasaux{\Phi_{v_2}}{m + \lv x {\ctxnc\ndc x}} \\
        &= \dermeasaux{\Phi_{t_0}} m
\end{align*}
\end{itemize}

Now, we analyse all the inductive cases of the form
$t_0 = \ctx{\ndc}{t'_0} \rew\ndlr\ctx{\ndc}{t'_1} = t_1$,
where $t'_0 \rew \ndlr t'_1$.
\begin{enumerate}
  \item If $\ndc = \ndc' u$, then we have
    $\dem {\Phi_{t_0'}} {\Gamma'}{\ctx{\ndc'}{t_0'}}{\MN \ft \sigma}$
    and $\dem {\Phi_u} \Delta u \MN$.
    By the \ih\ there is $\dem {\Phi_{t_1'}} {\Gamma'}{\ctx{\ndc'}{t_1'}}{\MN \ft \sigma}$,
    so $\dem {\Phi_{t_1}} {\Gamma'\inter\Delta}{\ctx{\ndc'}{t_1'}u}{\sigma}$.
    Moreover, $\dermeasaux {\Phi_{t_0}} m
    = \dermeasaux{\Phi_{t_0'}} m + \dermeasaux{\Phi_u} m + (1,m,0)
    >_{\ih} \dermeasaux{\Phi_{t_1'}} m + \dermeasaux{\Phi_u} m + (1,m,0)
    = \dermeasaux {\Phi_{t_1}} m$.
  \item If $\ndc = \ndc' \cut x u$, then we have
    $\dem {\Phi_{t_0'}} {\Gamma';x:\MM}{\ctx{\ndc'}{t_0'}}{\sigma}$
    and $\dem {\Phi_u} {\Delta}{u}{\MM}$.
    By the \ih\ there is
    $\dem {\Phi_{t_1'}} {\Gamma';x:\MM}{\ctx{\ndc'}{t_1'}}{\sigma}$,
    so $\dem {\Phi_{t_1}} {\Gamma'\inter\Delta}{\ctx{\ndc'}{t_1'} \cut x u}{\sigma}$. We distinguish three different cases:
    \begin{itemize}
      \item If $t'_0 \rew \ndlr t'_1$  is a $\db$-step: then we know by the \ih\ that
        $\dermeasaux{\Phi_{t_0'}} m > \dermeasaux{\Phi_{t_1'}} m $ strictly decreases
        the first component of the first $3$-tuple. We then have
        \begin{align*}
          \dermeasaux{\Phi_{t_1}} m
            &= \dermeasaux {\Phi_{t'_1}} m + \dermeasaux {\Phi_u}{m + \lv x {\ctx{\ndc'}{t'_1}} + 1} \\
            & =_{L.~\ref{l:more_weight}} \dermeasaux {\Phi_{t'_1}} m +
            \dermeasaux {\Phi_u} m + (0, (\lv x {\ctx{\ndc'}{t'_1}}+1) * \dersz {\Phi_u}, 0) \\
            & <_{\ih} \dermeasaux {\Phi_{t'_0}} m +  \dermeasaux {\Phi_u}{m}  +
            (0, (\lv x {\ctx{\ndc'}{t'_0}}+1) * \dersz {\Phi_u}, 0) \\
            & =_{L.~\ref{l:more_weight}} \dermeasaux {\Phi_{t'_0}} m +  \dermeasaux {\Phi_u}{m + \lv x {\ctx{\ndc'}{t'_0}} + 1} \\
            & = \dermeasaux{\Phi_{t_0}} m
        \end{align*}
      \item If $t'_0 \rew \ndlr t'_1$  is an $\ndlrdist$-step, then
        $t'_0 \rewn{\nrsub}  t'_1$, so that
        $\ctx{\ndc'}{t'_0}\rewn \nrsub\ctx{\ndc'}{t'_1}$,
        and thus 
        $\lv x {\ctx{\ndc'}{t'_0}} \geq \lv x {\ctx{\ndc'}{t'_1}}$
        holds by \autoref{l:stability-levels}. We then conclude by:
        \begin{align*}
          \dermeasaux{\Phi_{t_1}} m
            &=  \dermeasaux {\Phi_{t'_1}} m + \dermeasaux {\Phi_u}{m + \lv x {\ctx{\ndc'}{t'_1}} + 1} \\
            & <_{\ih} \dermeasaux {\Phi_{t'_0}} m + \dermeasaux {\Phi_u}{m + \lv x {\ctx{\ndc'}{t'_1}} + 1} \\ 
            & \leq \dermeasaux {\Phi_{t'_0}} m + \dermeasaux {\Phi_u}{m + \lv x {\ctx{\ndc'}{t'_0}} + 1} \\ 
            & = \dermeasaux{\Phi_{t_0}} m
        \end{align*}
      \item If $t'_0 \rew \ndlr t'_1$ is a $\ndlrabs$-step, then we
        know that $\ctx {\ndc'}{t'_0}\rew \ndlr\ctx
        {\ndc'}{t'_1}$ also holds, then $\lv x {\ctx
        {\ndc'}{t'_0}} \geq \lv x {\ctx {\ndc'}{t'_1}}$.
        We conclude as before. 
    \end{itemize}
  \item If $\ndc = \ctxnc{\ndc_1} x \es x {\ndc_2}$, then
    we have $\dem {\Phi_1} {\Delta;x:\MM}{\ctxnc{\ndc_1} x}{\sigma}$
    and $\dem {\Phi_{t_0'}} {\Gamma'}{\ctx{\ndc_2}{t_0'}}{\MM}$.
    By the \ih\ there is
    $\dem {\Phi_{t_1'}} {\Gamma'}{\ctx{\ndc_2}{t_1'}}{\MM}$,
    so $\dem {\Phi_{t_1}} {\Gamma'\inter \Delta}{\ctxnc{\ndc_1} x \cut x
    {\ctx{\ndc_2}{t_1'}}}{\sigma}$.
    Moreover, $\dermeasaux {\Phi_{t_0}} m
    = \dermeasaux{\Phi_1} m + \dermeasaux{\Phi_{t_0'}}{m + \lv x
    {\ctxnc{\ndc_1} x} + \isES{\cut x {\ctx{\ndc_2}{t_0'}}}}
    >_{\ih} \dermeasaux{\Phi_1} m + \dermeasaux{\Phi_{t_1'}}{m + \lv x
    {\ctxnc{\ndc_1} x} + \isES{\cut x {\ctx{\ndc_2}{t_1'}}}}
    = \dermeasaux {\Phi_{t_1}} m$.
    \qedhere
\end{enumerate}
\end{proof}

\begin{exa}
  Consider the following reduction sequence:
  \[ (\id (x_1 \id)) \es {x_1} {\ly. \id y}
    \rew{\db}   x_2 \es {x_2} {x_1 \id} \es {x_1} {\ly. \id y} 
    \rew{\ndlrdist}  x_2 \es  {x_2} {x_1 \id} \dist {x_1} {\ly. z y}
  \es z \id \]
  We have $\dem {\Phi_1} \emptyset {(\id (x_1 \id)) \es {x_1} {\ly. \id y}} \ans$
  with $\Phi_1$ of the form \[
    \inferrule*[Right=\ruleCutR]{
      \inferrule*[right=\ruleAppR]{
      \Phi_\id \\ \\ \\ \Phi_{x_1\id}}{
    \seq{x_1:\mult{\mult{\ans} \to \ans}}{\id (x_1 \id)} \ans} \\
    \inferrule*[Right=\ruleAppR]{
      \Phi_\id \\
      \inferrule*[Right=\many]{
        \inferrule*[Right=\ruleAxR]{ }{
      \seq{y:\mult{\ans}}{y}{\ans}}}{
  \seq{y:\mult{\ans}}{y}{\mult{\ans}}}}{
  \inferrule*[Right=\ruleAbsR]{
  \seq{y:\mult{\ans}}{\id y}{\ans}}{
  \inferrule*[Right=\many]{\seq\emptyset{\ly. \id y}{\mult{\ans} \to \ans}}{
\seq\emptyset{\ly. \id y}{\mult{\mult{\ans} \to \ans}}}}}}{
\seq\emptyset{(\id (x_1 \id)) \es {x_1} {\ly. \id y}}\ans}\]
where \[ \Phi_\id =
  \inferrule*[right=\ruleAbsR,vcenter]{
    \inferrule*[Right=\ruleAxR]{ }{
  \seq{x:\mult\ans} x \ans}}{
\seq\emptyset{\id}{\mult{\ans} \to \ans}} \]
  and
  \[ \Phi_{x_1\id} =
\inferrule*[Right=\ruleAppR,vcenter]{
  \inferrule*[right=\ruleAxR]{ }{
  \seq{x_1}{\mult{\mult{\ans} \to \ans}}{x_1}{\mult{\ans} \to \ans}} \\
  \inferrule*[Right=\many]{
    \inferrule*[Right=\ruleAnsR]{ }{
\seq\emptyset{\id}{\ans}}}{\seq\emptyset{\id}{\mult{\ans}}}}{
\inferrule*[Right=\many]{
\seq{x_1:\mult{\mult{\ans} \to \ans}}{x_1 \id}{\ans}}{
\seq{x_1:\mult{\mult{\ans} \to \ans}}{x_1 \id}{\mult{\ans}}}} \]
We also have $\dem {\Phi_2} \emptyset {x_2 \es {x_2} {x_1 \id} \es {x_1}
{\ly. \id y}} \ans$ with $\Phi_2$ of the form
\[ \inferrule*[Right=\ruleCutR]{
    \inferrule*[right=\ruleCutR]{
      \inferrule*[right=\ruleAxR]{ }{
      \seq{x_2: \mult{\ans}}{x_2}{\ans}} \\
    \Phi_{x_1\id}}{
  \seq{x_1:\mult{\mult{\ans} \to \ans}}{x_2 \es {x_2} {x_1 \id}}{\ans}} \\
  \inferrule*[Right=\ruleAppR]{
    \Phi_\id \\
    \inferrule*[Right=\many]{
      \inferrule*[Right=\ruleAxR]{ }{
    \seq{y:\mult{\ans}}{y}{\ans}}}{
\seq{y:\mult{\ans}}{y}{\mult{\ans}}}}{
\inferrule*[Right=\ruleAbsR]{
\seq{y:\mult{\ans}}{\id y}{\ans}}{
\inferrule*[Right=\many]{
\seq\emptyset{\ly. \id y}{\mult{\ans} \to \ans}}{
\seq\emptyset{\ly. \id y}{\mult{\mult{\ans} \to \ans}}}}}}{
\seq\emptyset{x_2 \es {x_2} {x_1 \id} \es {x_1} {\ly. \id y}}{\ans}}\]
Concerning the measures we have
$\dermeas {\Phi_1} = (7,10,4) >  (5,13,4) = \dermeas {\Phi_2} $.
The first element of the $3$-tuple decreases from $7$ to $5$ because
we lost an abstraction and an application constructors
during $\db$-reduction. Note also that in $\Phi_1$ we
have $\dermeasaux {\Phi_{x_1\id}} 1 = (2,2,1)$ while in $\Phi_2$ we have
$\dermeasaux {\Phi_{x_1\id}} 2 = (2,4,1) = \dermeasaux {\Phi_{x_1\id}} 1 + (0,\dersz {\Phi_{x_1\id}},0)$. 
Besides, we have $\dem {\Phi_3} \emptyset {x_2 \es  {x_2} {x_1 \id} \dist {x_1} {\ly. z y}
\es z \id} \ans$ where $\Phi_3$ is of the form
\[ \inferrule*[Right=\ruleCutR]{
    \inferrule*[right=\ruleCutR]{
      \inferrule*[right=\ruleCutR]{
        \inferrule*[right=\ruleAxR]{ }{
        \seq{x_2: \mult{\ans}}{x_2}{\ans}} \and
      \Phi_{x_1\id}}{
    \seq{x_1:\mult{\mult{\ans} \to \ans}}{x_2 \es {x_2} {x_1 \id}}{\ans}}
    \and  \Phi'_3}{\seq{z:\mult{\mult{\ans} \to \ans}}{x_2 \es {x_2} {x_1 \id} \dist {x_1} {\ly. z
y}}{\ans}} \and
\Phi_\id}{
\seq\emptyset{x_2 \es  {x_2} {x_1 \id} \dist {x_1} {\ly. z y} \es z
  \id}{\ans}} \]
where $\Phi'_3$ is
\[ \inferrule*[Right=\ruleAppR]{
      \inferrule*[right=\ruleAxR]{ }{
      \seq{z:\mult{\mult{\ans} \to \ans}}{z}{ \mult{\ans} \to \ans}} \and
      \inferrule*[Right=\many]{
        \inferrule*[Right=\ruleAxR]{ }{
      \seq{y:\mult{\ans}}{y}{\ans}}}{
  \seq{y:\mult{\ans}}{y}{\mult{\ans}}}}{
  \inferrule*[Right=\ruleAbsR]{
  \seq{z:\mult{\mult{\ans} \to \ans}; y:\mult{\ans}}{z y}{\ans}}{
  \inferrule*[Right=\many]{
  \seq{z:\mult{\mult{\ans} \to \ans}}{\ly. z y}{\mult{\ans} \to \ans}}{
\seq{z:\mult{\mult{\ans} \to \ans}}{\ly. z y}{\mult{\mult{\ans} \to \ans}}}}}\]
Therefore $\dermeas{\Phi_3} = (5,11,5) <   (5,13,4) = \dermeas{\Phi_2}$,
where the second element of the $3$-tuple
has decreased  from $13$ to $11$ because
two nodes of the term $\ly.\id y$, namely the binder and
the application, have moved from the explicit substitution of level $3$
to the distributor of level
$2$.
\end{exa}

\begin{thm}[Typability implies $\whlr$-Normalization]
  \label{t:soundness-name}%
  Let $\dem {\Phi_t} \Gamma t \sigma$.
  Then $t$ is $\whlr$-normalizing.
  Moreover, the first element of $\dermeas {\Phi_t}$ is an upper bound for
  the number of $\db$-steps to $\whlr$-nf.
\end{thm}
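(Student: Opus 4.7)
The plan is to combine the two Weighted Subject Reduction results, \autoref{l:sr_whdblr} for $\rew\whdblr$ and \autoref{l:sr_nrsub} for $\rew\nrsub$, with termination of $\rew\nrsub$ from \autoref{l:s_term}. The key structural observation is that $\rew\whlr = \rew\whdblr \cup \rew\whslr$, and that $\rew\whslr \subseteq \rew\nrsub$ by definition, so that no infinite $\whlr$-reduction can consist solely of $\whslr$-steps.

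I would argue by strong induction on the first component of the $3$-tuple $\dermeas{\Phi_t}$, call it $n$. If $t$ is already in $\whlr$-nf, the statement holds trivially with zero $\db$-steps. Otherwise, given any $\whlr$-reduction sequence starting at $t$, by termination of $\rew\nrsub$ (\autoref{l:s_term}) any initial maximal prefix of $\whslr$-steps is finite. It ends either in a $\whlr$-nf, or in a term $t'$ whose next step is necessarily a $\whdblr$-step. Along this $\whslr$-prefix, \autoref{l:sr_nrsub} preserves typability and guarantees $\dermeasaux{\Phi_t}{m} \geq \dermeasaux{\Phi_{t'}}{m}$; in particular the first component remains at most $n$. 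When the $\whdblr$-step fires, producing some $t''$, \autoref{l:sr_whdblr} provides a derivation $\Phi_{t''}$ with strictly smaller first component (by the explicit claim in its proof that it is the first component that decreases). The induction hypothesis applied to $\Phi_{t''}$ gives $\whlr$-normalization of $t''$ with at most $n - 1$ further $\db$-steps, so that concatenation yields a $\whlr$-nf with at most $n$ $\db$-steps in total.

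The main subtlety is that $\rew\whlr$ is non-deterministic, so the bound must hold for \emph{every} reduction sequence, not just one. However, both subject reduction lemmas apply independently to every single step, so the inductive bound is uniform across all $\whlr$-paths from $t$. A secondary point is that measures are compared lexicographically on $3$-tuples, but since \autoref{l:sr_whdblr} explicitly ensures strict decrease in the first component for each $\whdblr$-step and \autoref{l:sr_nrsub} only uses $\geq$, no interference between the two components can occur in the induction, and the first component is precisely what counts $\db$-steps.
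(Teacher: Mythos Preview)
Your proof is correct and uses exactly the same ingredients as the paper: termination of $\rew\nrsub$ (\autoref{l:s_term}), weighted subject reduction for $\rew\whdblr$ with strict decrease in the first component (\autoref{l:sr_whdblr}), and weighted subject reduction for $\rew\nrsub$ with non-increase (\autoref{l:sr_nrsub}). The only difference is presentational: the paper argues by contradiction---an infinite $\whlr$-sequence would contain infinitely many $\db$-steps, each strictly decreasing the first component, contradicting well-foundedness---while you phrase the same argument as a direct strong induction on that first component.
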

\begin{proof}
  Suppose $t$ is not $\whlr$-normalizing.
  Since $\rew\nrsub$ is terminating by \autoref{l:s_term},
  then every infinite $\rew\whlr$-reduction sequence starting at $t$
  must necessarily have an infinite number of $\db$-steps.
  Moreover, all terms in such an infinite sequence are typed
  by \autoref{l:sr_whdblr} and
  \autoref{l:sr_nrsub}.
  Therefore, these lemmas
  guarantee that all
  $\db$/$\nrsub$ reduction steps involved in such
  $\rew\whlr$-reduction sequence do not increase the measure $\dermeas
  \cdot$, and that, in particular, $\db$-steps strictly decrease it by
  decreasing the first element of the triple.
  This leads to a contradiction because the order $>$ on 3-tuples $\dermeas
  \cdot$ is well-founded. Then $t$ is necessarily $\whlr$-normalizing.
\end{proof}

\begin{thm}[Typability implies $\ndlr$-Normalization]
  \label{t:soundness-need}%
  Let $\dem {\Phi_t} \Gamma t \sigma$.
  Then $t$ is $\ndlr$-normalizing. Moreover, the first element of
  $\dermeas {\Phi_t}$ is an upper bound for the number of $\db$-steps to $\ndlr$-nf.
\end{thm}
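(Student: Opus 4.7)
The plan is to mirror the proof strategy used for Theorem \ref{t:soundness-name} (typability implies $\whlr$-normalization), leveraging the weighted subject reduction lemma \autoref{l:sr_flneed} for the $\ndlr$-strategy. The overall skeleton is to assume $t$ is not $\ndlr$-normalizing, extract an infinite $\ndlr$-reduction sequence starting at $t$, and derive a contradiction with the well-foundedness of the lexicographic order on $3$-tuples $\dermeas{\cdot}$.

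More precisely, suppose for contradiction that there exists an infinite reduction $t = t_0 \rew\ndlr t_1 \rew\ndlr t_2 \rew\ndlr \cdots$. By iterated application of \autoref{l:sr_flneed}, each $t_i$ remains typable, with derivations $\Phi_{t_i}$ satisfying $\dermeasaux{\Phi_{t_i}} m > \dermeasaux{\Phi_{t_{i+1}}} m$ for every $m \in \mathbb N$, and in particular for $m = 1$. This yields an infinite strictly decreasing sequence in the lexicographic order on $\mathbb{N}^3$, contradicting well-foundedness. Hence no infinite $\ndlr$-reduction starts at $t$, and $t$ is $\ndlr$-normalizing. Note that, unlike the $\whlr$ case, we do not need to invoke termination of $\rew\nrsub$ separately: \autoref{l:sr_flneed} already provides strict decrease for \emph{every} $\ndlr$-step (be it $\db$, $\ndlrdist$, or $\ndlrabs$).

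For the quantitative bound on the number of $\db$-steps, I will track the first component of the $3$-tuple more carefully along the sequence. Inspecting the proof of \autoref{l:sr_flneed}: a $\db$-step strictly decreases the first component (by losing an $\ruleAbsR$ and an $\ruleAppR$ rule, which become a $\ruleCutR$ rule that does not count towards the first component); an $\ndlrdist$-step leaves the first component unchanged and decreases instead the second component (the strict inequality at the sub-derivation level comes from $(1, m+n+1, 0) > (1, m+n, 1)$, equal on the first coordinate); an $\ndlrabs$-step leaves the first component unchanged and decreases the third component by $\msetsz{\MN} > 0$. Since each $\db$-step strictly decreases the first component while the other rules preserve it, the total number of $\db$-steps in any $\ndlr$-reduction sequence starting from $t$ is bounded above by the first element of $\dermeas{\Phi_t}$.

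No serious obstacle is expected, since the core combinatorial work has already been done in \autoref{l:sr_flneed}. The main care needed is the \emph{componentwise} tracking for the upper bound claim, for which I would either re-examine the three cases of \autoref{l:sr_flneed} and state a refined corollary specifying which component strictly decreases in each case, or simply recall the relevant strict inequalities inline as sketched above.
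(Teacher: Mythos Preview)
Your proposal is correct and follows exactly the paper's approach: the paper's proof is the single sentence ``The property trivially holds by \autoref{l:sr_flneed} since the lexicographic order on $3$-tuples is well-founded,'' and your argument is a faithful expansion of this, including the componentwise tracking for the $\db$-bound that the paper leaves implicit. One tiny nit: for $\ndlrdist$ and $\ndlrabs$ the first component is \emph{non-increasing} rather than necessarily unchanged (the $\rew\skelss$ sub-steps inside $\ndlrdist$, handled via \autoref{l:sr_nrsub}, may strictly decrease it), but this only strengthens your bound.
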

\begin{proof} The property trivially holds
  by \autoref{l:sr_flneed} since the lexicographic order
  on $3$-tuples is well-founded.
\end{proof}

\paragraph{Completeness}%
\label{sec:completeness}

We address here completeness of system $\sysWR$ with respect to
$\rew\whlr$ and $\rew\ndlr$. More precisely, we show that
normalizing terms in each strategy are typable.  The
basic property in showing that
consists in guaranteeing that normal forms are typable.

\begin{restatable}[$\ndlr$-nfs are Typable]{lem}{ndlrnormtyp}
  \label{l:ndlr-norm-typ}%
  Let $t$ be in $\ndlr$-nf.
  Then there exists a derivation $\dem \Phi \Gamma t \tau$
  such that for any $x \notin \ndv t$, $\Gamma(x) = \emult$.
\end{restatable}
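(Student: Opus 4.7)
The plan is to proceed by induction on the grammar of $\ndlr$-normal forms, handling the abstraction case $\ctx\lc{\lam x t}$ separately from the neutral case $\neutg$. For the former, I would apply \ruleAnsR to obtain $\seq{\emptyset}{\lam x t}{\ans}$ (no premises), and then extend through the list context $\lc$ by closing each cut with \ruleCutR, typing the body of every cut with the empty multi-type $\emult$ via \many applied to an empty index set. The resulting environment is empty, which trivially satisfies the constraint since $\ndv{\ctx\lc{\lam x t}} = \emptyset$.

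For the neutral case, the delicate step is $\neutg \esub x {\neutg_0}$ with $x \in \ndv\neutg$: to close the cut via \ruleCutR, we must give $x$ a non-empty multi-type $\MM$, at which we then type $\neutg_0$ by the IH. I would therefore strengthen the induction hypothesis for neutrals to the following mutual statement: for every neutral $\neutg$ and every type $\tau$, there exists $\dem{\Phi}{\Gamma}{\neutg}{\tau}$ such that $\Gamma(y) = \emult$ for $y \notin \ndv\neutg$ and $\Gamma(h) \neq \emult$, where $h$ is the unique head variable of $\neutg$, i.e. the single element of $\ndv\neutg$. A brief auxiliary check on the neutral grammar confirms that $\ndv\neutg$ is always a singleton, so $h$ is well-defined.

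With this strengthening, the inductive cases are routine. The base $\neutg = x$ uses \ruleAxR with $x:\mult\tau$. The case $\neutg' t$ invokes the IH on $\neutg'$ at target type $\emult \to \tau$ and types $t$ with $\emult$ via \many on empty $I$, then closes with \ruleAppR. The case $\neutg' \cut x u$ with $x \notin \ndv{\neutg'}$ uses the IH on $\neutg'$ at type $\tau$ (so $\Gamma(x) = \emult$ already), types $u$ with $\emult$, and closes with \ruleCutR. The critical case $\neutg' \esub x {\neutg_0}$ with $x \in \ndv{\neutg'}$ uses the IH on $\neutg'$ at type $\tau$ to read off the non-empty $\Gamma(x) = \mult{\sigma_i}_\iI$, invokes the IH on $\neutg_0$ once per $\sigma_i$ to obtain $(\dem{\Phi_i}{\Delta_i}{\neutg_0}{\sigma_i})_\iI$, aggregates via \many, and closes with \ruleCutR. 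The invariants propagate because $\ndv{\neutg' \esub x {\neutg_0}} = \ndv{\neutg_0}$, and $(+_\iI \Delta_i)(h(\neutg_0)) \neq \emult$ follows from the per-branch IH while $(+_\iI \Delta_i)(y) = \emult$ for $y \notin \ndv{\neutg_0}$ is preserved under multiset union. The main obstacle is precisely this propagation of the head-typing invariant through the $\esub$ case, which is why the original statement of the lemma is not directly amenable to induction and the strengthened mutual form must be adopted.
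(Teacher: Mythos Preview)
Your proposal is correct and follows essentially the same induction as the paper: type answers with $\ans$ via \ruleAnsR\ and empty \ruleCutR/\many\ along $\lc$, and strengthen the neutral case to ``for every target type $\tau$'' so that the $\esub$ case can invoke the IH on $\neutg_0$ at each $\sigma_i$. The extra head-typing invariant $\Gamma(h)\neq\emult$ you carry is harmless but not actually needed: even if $\Gamma(x)=\emult$ in the $\esub$ case, the nullary \many\ rule still closes the cut and the resulting environment is empty (since $\ndv{\neutg'}=\{x\}$ forces $\Gamma'=\emptyset$), so the paper's lighter strengthening already suffices.
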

\begin{proof}
  By \autoref{l:need_nf} we can reason by induction on the grammar $\normg$
  \seeappendix{p:ndlr-norm-typ}.
\end{proof}

\begin{exa}
  Remember that $\ndv{(xy_1) \esub x z {y_1}} = \{z\}$ and note that $\ndv{xy_1}
  = \{x\}$.
  \begin{mathpar}
    \inferrule{
      \inferrule{
        \inferrule{ }{\seq{x:\emult \to \tau} x {\emult \to \tau}}
      \and \inferrule{ }{\seq \emptyset {y_1} \emult}}
      {\seq{x:\emult \to \tau} {xy_1} \tau}
      \and \inferrule{
        \inferrule{ }{\seq{z:\emult \to \emult \to \tau} z {\emult \to \emult \to \tau}}
      \and \inferrule{ }{\seq \emptyset {y_2} \emult}}
    {\seq{z:\emult \to \emult \to \tau}{zy_2}{\emult \to \tau}}}
    {\seq{z:\emult \to \emult \to \tau}{(xy_1) \esub x {zy_2}} \tau}
  \end{mathpar}
\end{exa}

Because $\whlr$-nfs are also $\ndlr$-nfs,
we infer the following corollary for free.
\begin{cor}[$\whlr$-nfs are Typable]
  \label{l:whlr-norm-typ}%
  Let $t$ be in $\whlr$-nf. Then there is a derivation $\dem \Phi \Gamma t \tau$.
\end{cor}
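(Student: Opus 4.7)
The plan is to derive this corollary immediately from the previous one by showing the inclusion $\whlr\text{-nf} \subseteq \ndlr\text{-nf}$, which is remarked in the paper just before \autoref{l:need_nf}. Concretely, once this inclusion is in hand, any $\whlr$-nf $t$ is in particular a $\ndlr$-nf, so \autoref{l:ndlr-norm-typ} applies directly and furnishes the desired derivation $\dem \Phi \Gamma t \tau$.

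To justify the inclusion, I would compare the two characterizations of normal forms. By \autoref{l:name_nf}, a term $t\in\purelterms$ is in $\whlr$-nf iff $t\in\nmnormg$, namely $t=\lam x p$ or $t\in\nmneutg$, where $\nmneutg\Coloneqq x \mid \nmneutg\,t$. By \autoref{l:need_nf}, a term $t\in\purelterms$ is in $\ndlr$-nf iff $t\in\normg$, namely $t=\ctx\lc{\lam x t'}$ or $t\in\neutg$. It suffices to observe that every $\lam x p$ can be written as $\ctx\ec{\lam x p}\in\normg$, and that $\nmneutg\subseteq\neutg$ is immediate by induction on the $\nmneutg$-grammar (the base case $x$ and the inductive case $\nmneutg\,t$ both fit the corresponding clauses of $\neutg$).

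Once the inclusion is established, the proof closes in one line: if $t$ is in $\whlr$-nf, then $t$ is in $\ndlr$-nf, hence \autoref{l:ndlr-norm-typ} gives a derivation $\dem \Phi \Gamma t \tau$.

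There is essentially no obstacle here: the argument is a straightforward application of the preceding lemma together with a routine comparison of the normal-form grammars. The only mild care needed is to make sure that the contextual clauses of the call-by-need grammar (allowing $\ctx\lc{\lam x t'}$ and $\neutg\cut x u$ / $\neutg\esub x {\neutg_0}$) properly subsume the simpler call-by-name grammar on $\purelterms$, which follows directly from the shape of $\nmnormg$ and $\nmneutg$.
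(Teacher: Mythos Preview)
Your proposal is correct and follows exactly the paper's approach: the paper states just before the corollary that ``$\whlr$-nfs are also $\ndlr$-nfs'' and infers the result ``for free'' from \autoref{l:ndlr-norm-typ}. Your justification of the inclusion via the grammars $\nmnormg\subseteq\normg$ (using \autoref{l:name_nf} and \autoref{l:need_nf}) is precisely the routine check the paper leaves implicit.
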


We need lemmas stating the behavior of partial and full (anti-)substitution
w.r.t. typing.

\begin{restatable}[Partial Anti-Substitution]{lem}{partialasub}
  \label{l:partial-anti-substitution}%
  Let $\ctxnc\fc x$ and $u$ be terms s.t. $x \notin \fv{u}$ and $\dem \Phi
  \Gamma {\ctxnc\fc u} \sigma$.
  Then $\exists \Gamma'$, $\exists \Delta$, $\exists \MM$, $\exists \Phi'$, $\exists \Phi_u$
  s.t. $\Gamma = \Gamma' \inter \Delta$, $\dem {\Phi'} {\Gamma' \inter
  x:\MM}{\ctxnc\fc x} \sigma$ and $\dem {\Phi_u} \Delta u \MM$.
\end{restatable}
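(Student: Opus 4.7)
The plan is to proceed by induction on the structure of the term context $\fc$, inverting the typing derivation $\Phi$ according to the top constructor of $\fc$, applying the induction hypothesis to the subderivation that types the subterm $\ctxnc{\fc'}{u}$ corresponding to the deeper occurrence of the hole, and then reassembling the two halves using the same typing rule that we just inverted. Since $\ctxnc{\fc}{-}$ denotes capture-free context application and $x \notin \fv{u}$, no $\alpha$-renaming issue will arise when pushing the hole inside binders.

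In the base case $\fc = \ec$ we have $\ctxnc{\fc}{u} = u$ and $\ctxnc{\fc}{x} = x$. Take $\MM \eqdef \mult{\sigma}$, $\Gamma' \eqdef \emptyset$, $\Delta \eqdef \Gamma$; let $\Phi'$ be the instance of $\ruleAxR$ typing $x:\mult{\sigma} \vdash x : \sigma$, and let $\Phi_u$ be the $\many$ rule with single premise $\Phi$, so that $\Gamma \vdash u : \mult{\sigma}$. The required equality $\Gamma = \Gamma' \inter \Delta$ holds trivially.

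The inductive cases where $\fc$ has the form $\lx.\fc'$, $\fc' \cdot t$, $\fc' \cut y t$ (or the symmetric positions) are all routine: invert the corresponding rule ($\ruleAbsR$, $\ruleAppR$, $\ruleCutR$), apply the \ih\ to the subderivation of $\ctxnc{\fc'}{u}$, and put $\Phi'$ back together with the same rule. The environments split as $\Gamma' \inter \Delta$ by taking $\Delta$ from the \ih\ and $\Gamma'$ to be the remaining pieces coming from the untouched subderivations.

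The point that deserves real care is the case where the hole lies in a position typed via the $\many$ rule, namely $\fc = t \cdot \fc'$ or $\fc = t \cut y \fc'$ (and the distributor variants). Here the derivation $\Phi$ contains a block
\[
\inferrule*[right=\many]{\bigl(\demn{\Phi^i}{\Delta_i'}{\ctxnc{\fc'}{u}}{\rho_i}{}\bigr)_{\iI}}{\seq{+_{\iI}\Delta_i'}{\ctxnc{\fc'}{u}}{\multii{\rho_i}}}
\]
and I would apply the \ih\ to \emph{each} $\Phi^i$ to obtain $\Delta_i' = \Gamma_i' \inter \Delta_i$, derivations $\demn{\Phi'^i}{\Gamma_i' \inter x:\MM_i}{\ctxnc{\fc'}{x}}{\rho_i}{}$ and $\demn{\Phi_u^i}{\Delta_i}{u}{\MM_i}{}$. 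Then I would set $\MM \eqdef \multunion_{\iI}\MM_i$, merge the $\Gamma_i'$ into $\Gamma'$ together with the environment of the outer subderivation of $t$, merge the $\Delta_i$ into $\Delta$, and rebuild $\Phi'$ with a single $\many$ block followed by the outer $\ruleAppR$ or $\ruleCutR$, and likewise $\Phi_u$ as a $\many$ block over the $\Phi_u^i$. The corner case $I = \emptyset$ is handled by taking $\MM \eqdef \emult$, $\Delta \eqdef \emptyset$ and letting $\Phi_u$ be the empty $\many$; then $\ctxnc{\fc'}{x}$ receives the empty $\many$ as well, which is consistent since $x \notin \fv{u}$ imposes no constraint. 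The main (mild) obstacle is precisely bookkeeping this $\many$ case: one must check that the multiset union of the $\MM_i$ and the environment union of the $\Gamma_i'$ combine to yield exactly the environment $\Gamma$ of the original derivation, which follows from associativity and commutativity of $\inter$.
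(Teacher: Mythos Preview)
Your plan is correct and matches the paper's proof, which is the same induction on $\fc$ with the same bookkeeping in the $\many$ cases (including the empty $I$ subcase). One small omission: when $\fc = \lambda y.\fc'$, the last rule of $\Phi$ need not be $\ruleAbsR$; it can also be $\ruleAnsR$ (typing the abstraction at $\ans$ with empty environment), and you must cover that subcase separately by taking $\MM = \emult$, $\Gamma' = \Delta = \emptyset$, $\Phi_u$ the nullary $\many$, and $\Phi'$ another instance of $\ruleAnsR$ on $\lambda y.\ctxnc{\fc'}{x}$.
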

\begin{proof}
  By induction on $\fc$
  \seeappendix{p:partial-anti-substitution}.
\end{proof}

\begin{cor}[Anti-Substitution]
  \label{l:anti-substitution}%
  Let $u$ be a term s.t. $x \notin \fv{u}$ and
  $\dem \Phi \Gamma {t \msub x u} \sigma$. Then $\exists \Gamma'$,
  $\exists \Delta$, $\exists \MM$, $\exists \Phi'$,  $\exists \Phi_u$
  s.t. $\Gamma = \Gamma' \inter \Delta$, $\dem {\Phi'} {\Gamma';x:\MM} t \sigma$
  and $\dem {\Phi_u} \Delta u \MM$.
\end{cor}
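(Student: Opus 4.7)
The plan is to derive Corollary~\ref{l:anti-substitution} by iterating the Partial Anti-Substitution \autoref{l:partial-anti-substitution}, with an outer induction on $n := \nbocc x t$, the number of free occurrences of $x$ in $t$. For the base case $n = 0$, the substitution is a no-op: $t \msub x u = t$, and by the Relevance \autoref{l:relevance} one has $x \notin \dom\Gamma$. Take $\Gamma' := \Gamma$, $\Delta := \emptyset$, $\MM := \emult$, $\Phi' := \Phi$, and $\Phi_u$ the empty $\many$-rule deriving $\seq\emptyset u \emult$.

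For $n \geq 1$, isolate one free occurrence of $x$ and write $t = \ctxnc\fc x$, the remaining $n - 1$ occurrences lying inside $\fc$. Under Barendregt's convention, substitution distributes over capture-free context application, hence $t \msub x u = \ctxnc{\fc \msub x u} u$. Applying Partial Anti-Substitution to $\dem{\Phi}{\Gamma}{\ctxnc{\fc\msub x u} u}{\sigma}$ produces $\Gamma = \Gamma_1 \inter \Delta_1$, together with $\dem{\Phi_1}{\Gamma_1 \inter x:\MM_1}{\ctxnc{\fc\msub x u} x}\sigma$ and $\dem{\Phi_u^1}{\Delta_1} u {\MM_1}$. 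Pick a fresh variable $y$ and set $t_y := \ctxnc\fc y$, so that $t_y \msub x u = \ctxnc{\fc\msub x u} y$ and $\nbocc x {t_y} = n - 1$. An $\alpha$-renaming of the hole occurrence of $x$ to $y$ in $\Phi_1$ yields $\dem{\Phi_1}{\Gamma_1; y:\MM_1}{t_y \msub x u}\sigma$ (the disjoint ``;'' is well-formed because $\Gamma_1$ does not bind the fresh $y$). The induction hypothesis applied to this judgment then provides $\Gamma_1; y:\MM_1 = \Gamma_2 \inter \Delta_2$, $\dem{\Phi_2}{\Gamma_2; x:\MM_2}{t_y}\sigma$, and $\dem{\Phi_u^2}{\Delta_2} u {\MM_2}$.

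To assemble the result, freshness of $y$ together with the Relevance \autoref{l:relevance} give $y \notin \dom{\Delta_2}$, whence $\Gamma_2 = \tilde\Gamma; y:\MM_1$ with $\tilde\Gamma \inter \Delta_2 = \Gamma_1$. $\alpha$-renaming $y$ back to $x$ in $\Phi_2$ collapses the $y:\MM_1$ and $x:\MM_2$ entries into a single $x$ entry and produces a derivation $\dem{\Phi'}{\tilde\Gamma; x:\MM_1 \multunion \MM_2}{t}\sigma$. Setting $\Gamma' := \tilde\Gamma$, $\MM := \MM_1 \multunion \MM_2$, $\Delta := \Delta_1 \inter \Delta_2$, and combining $\Phi_u^1, \Phi_u^2$ via the $\many$ rule into $\dem{\Phi_u}{\Delta} u \MM$ closes the construction, with $\Gamma = \Gamma_1 \inter \Delta_1 = (\tilde\Gamma \inter \Delta_2) \inter \Delta_1 = \Gamma' \inter \Delta$ as required. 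The main technical obstacle is the $\alpha$-renaming bookkeeping: reading $\Phi_1$ as a derivation for $t_y \msub x u$ (rather than $\ctxnc{\fc\msub x u} x$) so that the induction hypothesis can fire on a term with strictly fewer free $x$-occurrences, and then merging $\MM_1$ and $\MM_2$ under $x$ after renaming $y$ back, while preserving every ``;'' disjointness condition throughout.
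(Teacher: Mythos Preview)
Your proof is correct and follows essentially the same route as the paper: induction on $\nbocc x t$, invoking Partial Anti-Substitution (\autoref{l:partial-anti-substitution}) for one occurrence, the induction hypothesis for the remaining $n-1$, and then merging the two multi-types for $x$. The only cosmetic difference is that the paper applies \autoref{l:partial-anti-substitution} directly with the fresh variable $y$ (rather than with $x$ followed by a rename), and it makes your final ``$\alpha$-rename $y$ back to $x$'' step precise by invoking the Partial Substitution \autoref{l:partial-substitution}---what you call $\alpha$-renaming here is really free-variable substitution $t_y\msub y x$, and that lemma is the formal warrant for collapsing $y{:}\MM_1$ and $x{:}\MM_2$ into $x{:}\MM_1\multunion\MM_2$.
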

\begin{proof}
  The proof is by induction on $\nbocc x t$.
  \begin{itemize}
    \item If $\nbocc x t = 0$ then $t \msub x u = t$ and, by
      \autoref{l:relevance}, $x \notin \dom{\Gamma}$ then $\Gamma =
      \Gamma;x:\emult$. Therefore, for $\Gamma' := \Gamma$, $\Delta :=
      \emptyset$, $\MM = \emult$, $\Phi' := \Phi$ and $\Phi_u := \inferrule{
      }{\seq{ } u \emult}$ the result holds.
    \item If $\nbocc x t \geq 1$ then let $\ctxnc\fc x$ such that $t\msub{x}{u} = \ctxnc\fc u$.
      For any fresh $y$, we have that $t\msub{x}{u} = \ctxnc\fc y \msub{y}{u}$
      where $\ctxnc\fc y = t'\msub{x}{u}$ s.t. $t = t'\msub{y}{x}$. Note that
      $\nbocc x {t'} < \nbocc x t$. Then by
      \autoref{l:partial-anti-substitution} $\exists \Gamma''$, $\exists
      \Delta'$, $\exists \MN$, $\exists \Phi''$,  $\exists \Phi'_u$ s.t. $\Gamma
      = \Gamma'' \inter \Delta'$, $\dem {\Phi''} {\Gamma'' \inter y:\MN}{\ctxnc\fc y}
      \sigma$ and $\dem {\Phi'_u} {\Delta'} u \MN$ where, by freshness of $y$, $\Gamma'' \inter y:\MN = \Gamma'';y:\MN$. Therefore, by the \ih\ on $\Phi''$ $\exists \Gamma'''$,
      $\exists \Delta''$, $\exists \MN'$, $\exists \Phi'''$,  $\exists \Phi''_u$
      s.t. $\Gamma'';y:\MN = \Gamma''' \inter \Delta''$, $\dem {\Phi'''}
      {\Gamma''';x:\MN'}{t'} \sigma$
      and $\dem {\Phi''_u} {\Delta''} u \MN'$. By freshness of $y$ and
      relevance,
      we have $y \notin \dom{\Delta''}$.
      Then $\Gamma''' = \Gamma^{iv};y:\MN$
      where $\Gamma'' = \Gamma^{iv} \inter \Delta''$. From $\Phi'''$ and
      \autoref{l:partial-substitution}   we have $\dem {\Phi'}
      {(\Gamma^{iv};x:\MN') \inter x:\MN} t \sigma$ while from $\Phi'_u$ and
      $\Phi''_u$ we obtain $\dem {\Phi_u} {\Delta' \inter \Delta''} u
      \MN\multunion\MN'$. Finally, for $\Gamma' := \Gamma^{iv}$, $\Delta :=
      \Delta' \inter \Delta''$, $\MM = \MN\multunion\MN'$ the result holds, since $(\Gamma^{iv};x:\MN') \inter x:\MN = \Gamma';x:\MM$ and $\Gamma' \inter \Delta = \Gamma^{iv} \inter \Delta'' \inter \Delta' = \Gamma'' \inter \Delta' = \Gamma$.
      \qedhere
  \end{itemize}
\end{proof}

To achieve completeness, we show that typing is preserved by anti-reduction.
\begin{lem}[Subject Expansion]
  \label{l:se-general}%
  Let $\dem {\Phi_{t_1}} \Gamma {t_1} \sigma$.
  If $t_0 \rew\redrule t_1$, where $\redrule \in \{\permlr, \nrsub, \whdblr, \ndlr\}$,
  then there exists $\dem {\Phi_{t_0}} \Gamma {t_0} \sigma$.
\end{lem}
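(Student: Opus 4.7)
The plan is to mirror the structure of the Subject Reduction lemmas (Lemmas \ref{l:sr_permlr}, \ref{l:sr_nrsub}, \ref{l:sr_whdblr}, \ref{l:sr_flneed}), proceeding by case analysis on the reduction relation $\redrule$ and, within each case, by induction on the reduction context where the root step occurs. The key tools are the Anti-Substitution Corollary \ref{l:anti-substitution}, the Partial Anti-Substitution Lemma \ref{l:partial-anti-substitution}, and the Split Lemma \ref{l:split}, which play roles dual to those of \autoref{l:qsub}, \autoref{l:partial-substitution}, and \autoref{l:split} in the forward direction.

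For $\permlr$, the root cases are handled by reorganizing the derivation that types $t_1$ into one that types $t_0$: the operations of union of typing environments and $\many$ collection are symmetric, so each rewrite of $\Phi_{t_0}$ done in \autoref{l:sr_permlr} can be reversed. The contextual inductive cases are immediate by the \textit{i.h.} For $\nrsub$, we use the decomposition $t_0 \rew{\purelr\nrsub} t' \rewn\permlr t_1$ (the dual reading of the observation from \autoref{sec:operational-semantics}), so it suffices to treat the four root rules of $\rew{\purelr\nrsub}$. For $\purelr\absrr$ and $\purelr\varrr$, both of which unfold a pure term via full substitution, the derivation of $t_1 = t \msub x p$ yields by \autoref{l:anti-substitution} a typing of $t$ with $x:\MM$ and a typing of $p$ with $\MM$; applying $\ruleCutR$ (after an $\ruleAbsR$/$\ruleAnsR$ collection via $\many$ in the $\purelr\absrr$ case) gives the derivation of $t_0$. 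For $\purelr\apprr$ and $\purelr\distrr$, we destructure the cascade of cuts on the right-hand side, regroup the relevant typings via \autoref{l:split}, and reassemble a single $\ruleCutR$ whose right premise types the compound term $us$ (resp. $\lam y u$).

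For $\whdblr$, the only meaningful root case is $\db$: a derivation of $\ctx\lc{t \es x u}$ yields, after pushing $\lc$ outside with the $\permlr$ case already proved, a derivation of $t \es x u$ whose $\ruleCutR$ root exposes the typings of $t$ with $x:\MM$ and of $u$ with $\MM$; combining them with $\ruleAbsR$ followed by $\ruleAppR$ produces the derivation of $(\lx.t)u$, and pushing $\lc$ back in by the $\permlr$ case yields $\ctx\lc{\lx.t}u$. The contextual cases are handled by the \textit{i.h.} For $\ndlr$ we have three root rules. The $\db$ case is handled as above. For $\ndlrabs$, given $\dem{}{\Gamma}{\ctxnc\ndc v \dist x v} \sigma$, we destructure the $\ruleCutR$ at the root, apply \autoref{l:partial-anti-substitution} to the typing of $\ctxnc\ndc v$ to extract a typing of $\ctxnc\ndc x$ with $x:\MN$ and a typing of $v$ with $\MN$, and then combine the $\MN$-typing of $v$ with the $\MM$-typing of $v$ coming from the distributor (using \autoref{l:split} in reverse, i.e., $\many$) to form a single premise typing $v$ with $\MM \multunion \MN$; the derivation of $\ctxnc\ndc x \dist x v$ follows by one application of $\ruleCutR$. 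The contextual cases proceed by straightforward induction on $\ndc$.

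The main obstacle will be the $\ndlrdist$ rule. Here $\ctxnc\ndc x \es x {\lam y p} \rew\ndlrdist \ctx\llc{\ctxnc\ndc x \dist x {\lam y {p'}}}$, where $\lam y {z \es z p} \skelbs{}_\skelss \lam y {\ctx\llc{p'}}$ is a $\skelss$-chain. Starting from a derivation of the right-hand side, we first use the $\permlr$ case to absorb the outer $\llc$ back into the distributor, obtaining $\ctxnc\ndc x \dist x {\lam y {\ctx\llc{p'}}}$; then, because $\rew\skelss \subseteq \rew\nrsub$, we iterate the $\nrsub$ case already proved over the reversed $\skelss$-chain to recover a derivation typing $\lam y {z \es z p}$; finally, one further $\nrsub$ anti-reduction eliminates the auxiliary cut $z \es z p$, yielding $\lam y p$, and we replace the $\dist$ at the top by an $\es$ (which is simply a change of constructor at the derivation root, since both use the $\ruleCutR$ rule). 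The subtlety lies in verifying that the intermediate derivations can indeed be chained through the $\skelss$-sequence while preserving the outer environment $\Gamma$, which follows because each anti-reduction step preserves environments and types by the $\nrsub$ case of this same lemma, so the overall induction is well-founded by the termination of $\skelss$ (\autoref{l:sw-conf-sn}).
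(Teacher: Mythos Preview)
Your approach is essentially the same as the paper's: induction on the reduction relation, handling $\permlr$ by reversing the derivation rewrites of \autoref{l:sr_permlr}, handling the $\nrsub$ root rules via \autoref{l:anti-substitution} and \autoref{l:partial-anti-substitution}, and reducing $\ndlrdist$ to the $\permlr$ and $\nrsub$ cases through the $\skelss$-chain. The paper likewise treats $\ndlrabs$ with \autoref{l:partial-anti-substitution} and merges the two $v$-typings, and handles $\ndlrdist$ by pushing $\llc$ back in and invoking subject expansion for $\nrsub$ along $\rew\skelss$.

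One correction: there is no ``dual reading'' of the decomposition in \autoref{sec:operational-semantics}. The factorisation is $t_0 \rewn\permlr t' \rew{\purelr\nrsub} t_1$, not $t_0 \rew{\purelr\nrsub} t' \rewn\permlr t_1$; the latter generally fails because $t_0$ need not be a $\purelr\nrsub$-redex when the list context $\lc$ is nonempty. For subject expansion you still traverse the \emph{same} factorisation, just backwards: from a derivation of $t_1$, first apply expansion for the single $\purelr\nrsub$ root step to obtain a derivation of $t'$, then apply expansion for $\permlr$ (iterated) to reach $t_0$. Your conclusion that it suffices to treat the four root rules of $\purelr\nrsub$ plus $\permlr$ is correct, but the justification needs this fix. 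Also, in your $\ndlrdist$ paragraph, the ``one further $\nrsub$ anti-reduction'' is precisely anti-$\distrr$, which simultaneously removes the auxiliary cut $z \es z p$ \emph{and} turns the distributor into an explicit substitution; phrasing it as two separate operations is misleading.
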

\begin{proof}
  The proof is by induction on $\rew\redrule$ and uses
  \autoref{l:partial-anti-substitution} and \autoref{l:anti-substitution}.
  We detail some interesting cases of the proof.
  In all the cases shown, we suppose that the list context $\lc$ of the
  general rule is empty ($\lc = \ec$), since we can use subject expansion for
  $\rew\permlr$ to manipulate it.

  \begin{itemize}
    \item $t_0 = t \es {x} {us}
      \rrule\nrsub t \msub x {yz} \es y u \es z s = t_1$.
      Then $\Phi_{t_1}$ is of the form
      \[\inferrule*{ \inferrule*{ \dem {\Phi} {\Gamma';z:\MN_s;y:\MN_u}{t \msub {x} {yz}}{
            \sigma} \\ \dem {\Phi_u} {\Delta_u}{u}{\MN_u}}{\seq{(\Gamma' \inter \Delta_u);z:\MN_s}{t
        \msub {x} {yz} \es y u}{\sigma}} \\ \dem {\Phi_s} {\Delta_s}{s}{\MN_s}}{\seq{\Gamma' \inter
  \Delta_u \inter \Delta_s}{t \msub {x} {yz} \es y u \es z s}{ \sigma}}\]
  where $\Gamma = \Gamma' \inter \Delta_u \inter \Delta_s$. Also $(\Gamma' ;z:\MN_s) \inter \Delta_u =
  (\Gamma' \inter \Delta_u);z:\MN_s$ since $z \notin \dom{\Delta_u}$ by the Relevance \autoref{l:relevance}.
  By \autoref{l:anti-substitution} $\exists \Gamma''$,
  $\exists \Delta$, $\exists \MM$, $\exists \Phi'$,  $\exists \Phi_{yz}$ s.t.
  $\Gamma';z:\MN_s;y:\MN_u = \Gamma'' \inter \Delta$, $\dem {\Phi'} {\Gamma'';x:\MM}{t}{
  \sigma}$ and $\dem {\Phi_{yz}} \Delta {yz} \MM$. By freshness of $y,z$ and
  \autoref{l:relevance} we have that $y,z \notin \dom{\Gamma''} \cup \{x\}$.
  Then $\Gamma'' = \Gamma'$ and $\Delta = z:\MN_s;y:\MN_u$. From $\Phi_{yz}$,
  $\Phi_u$, $\Phi_s$ and \autoref{l:partial-substitution} we obtain $\dem
  {\Phi_{us}} {\Delta_u \inter \Delta_s}{us}{\MM}$ and construct $\Phi_{t_0}$ as:
  \[\Phi_{t_0} = \inferrule{\dem {\Phi'} {\Gamma';x:\MM}{t}{ \sigma} \\ \dem {\Phi_{us}}
  {\Delta_u \inter \Delta_s}{us}{\MM}}{\seq{\Gamma' \inter \Delta_u \inter \Delta_s}{t \es {x} {us}}{\sigma}}\]

\item If $t_0 = (\lam x t) u \rew{\db} t \es x u = t_1$.
  Then $\Phi_{t_1}$ is of the form
  \[ \inferrule*[right=\ruleCutR]{\dem {\Phi_t} {\Gamma'; x:\MM}{t}{ \sigma} \\
    \dem {\Phi_u} {\Gamma_u}{u }{ \MM}}
  {\seq{\Gamma' \inter  \Gamma_u}{ t\es x u}{ \sigma} } \]
  Therefore, we construct $\Phi_{t_0}$ as follows:
  \[\inferrule*[right=\ruleAppR]{
      \inferrule*[right=\ruleAbsR]{
      \dem {\Phi_t} {\Gamma'; x:\MM}{t}{ \sigma}}{
    \seq{\Gamma'}{\lam x t}{ \MM \ft \sigma}}
  \\ \dem {\Phi_u} {\Gamma_u}{u }{ \MM}}{
\seq{\Gamma' \inter  \Gamma_u}{ (\lam x t)u}{ \sigma}} \]

    \item $t_0 = \ctxnc\ndc x \es x {\lam y p} \rew\ndlrdist
      \ctx\llc{\ctxnc\ndc x \dist x {\lam y {p'}}} = t_1$,
      where $\lam y {z \es z p} \skelbs{}_\skelss \lam y {\ctx\llc{p'}}$.
      By subject expansion for $\rew\permlr$, there is $\dem {\Phi_{t'_1}}
      \Gamma {\ctxnc\ndc x \dist x {\lam y {\ctx\llc{p'}}}} \sigma$ and
      it is of the form
      \[\inferrule*[right=\ruleCutR]{
          \dem {\Phi} {\Gamma';x:\MN}{\ctxnc\ndc x}{ \sigma}
          \\ \inferrule*[Right=\many]{
          (\dem {\Phi_i} {\Delta_i}{\lam y {\ctx\llc{p'}}}{ \sigma_i})_{\iI}}{
      \seq \Delta {\lam y {\ctx\llc{p'}}} \MN}}{
      \seq {\Gamma' \inter \Delta}{\ctxnc\ndc x \dist x {\lam y
  {\ctx\llc{p'}}}}{\sigma}} \]
  where $\Del = \inter_{\iI} \Del_i$ and $\MN = \mult{\sig_i}_{\iI}$ where, by \autoref{l:typed-in-ndc}, $\MN \neq \emult$.
  Then, for each $\iI$ we have by subject expansion for $\rew\nrsub$
(of which $\rew\skelss$ is a subrelation) that $\dem {\Phi'_i} {\Delta_i}{\lam y {z \es z p} }{ \sigma_i}$ which has
  two different shapes, depending on $\sig_i$.
  \begin{enumerate}
    \item If $\sig_i = \MM_i \ft \tau_i$ then $\Phi'_i$ is of the form
      \[\inferrule*[right=\ruleAbsR]{
          \inferrule*[Right=\ruleCutR]{
            \inferrule*[right=\ruleAxR]{ }{\seq{z:\mult{\tau_i}}{z}{\tau_i}}
          \\ \dem {\Phi_p^i} {\Delta_i;y:\MM_i}{p}{\tau_i}}{
      \seq{\Delta_i;y:\MM_i}{z \es z p}{\tau_i}}}{
  \seq{\Delta_i}{\lam y {z \es z p}}{\MM_i \ft \tau_i}} \]
  Therefore we have $\Psi_i$ of the form \[\inferrule*[right=\ruleAbsR]{
    \dem {\Phi_p^i} {\Delta_i;y:\MM_i}{p}{\tau_i}}{
\seq{\Delta_i}{\lam y p }{\MM_i \ft \tau_i}} \]
\item If $\sigma_i = \ans$ then $\Del_i = \emptyset$ and we obtain $\Psi_i$ of the form
  $\inferrule*[right=\ruleAnsR,vcenter]{ }{\seq{}{\lam y p}{\ans}}$.
\end{enumerate}
We can then construct $\Phi_{t_0}$ as follows
\[\inferrule*[right=\ruleCutR]{
    \dem {\Phi} {\Gamma';x:\MN}{\ctxnc\ndc x}{ \sigma}
    \and \inferrule*[Right=\many]{
    (\dem {\Psi_i} {\Delta_i}{\lam y p}{ \sigma_i})_{\iI}}{
\seq \Delta {\lam y p}{ \MN}}}{
\seq {\Gamma' \inter \Delta}{\ctxnc\ndc x \es x {\lam y p}}{\sigma}} \]

\item $t_0 = \ctxnc\ndc x \dist x v \rew\ndlrabs \ctxnc\ndc v \dist x v = t_1$. Then $\Phi_{t_1}$ is of the form
  \[ \inferrule*[right=\ruleCutR]{
      \dem {\Phi} {\Gamma';  x:\MN'} {\ctxnc\ndc v}{\sigma} \\
    \dem {\Phi'_{v}} {\Del'}{v}{\MN'}}
  {\seq{\Gamma' \inter \Del'}{\ctxnc\ndc v \dist x v}{\sigma}} \]
  By \autoref{l:partial-anti-substitution} $\exists \Gam''$,
  $\exists\Del''$, $\exists\MN''$, $\exists\Phi'$,  $\exists \Phi''_v$
  s.t. $\Gamma';  x:\MN = \Gam'' \inter \Del'$, $\dem {\Phi'} {\Gam'' \inter
  x:\MN''}{\ctxnc\ndc x}{ \sigma}$ and $\dem {\Phi''_v} {\Delta''} {v}{
  \MN''}$. From $x \notin \fv v$ and the Relevance \autoref{l:relevance} we
  have that $x \notin \dom{\Del''}$. Thus $\Gam'' = \Gam'''; x:\MN'$ and then
  $\Gam'' \inter x:\MN' = \Gam''';x:\MN$ where $\MN = \MN' \sqcup \MN''$. From $\Phi'_v$
  and $\Phi''_v$ derivations we obtain $\dem {\Phi_v} {\Delta} {v}{
  \MN}$, where $\Delta = \Del' \inter \Del'' $. Then $\Phi_{t_0}$ is of the form
  \[\inferrule*[right=\ruleCutR]{
      \dem {\Phi'} {\Gamma'''; x: \MN}{\ctxnc\ndc x}{\sigma}
    \and \dem {\Phi_v} {\Delta}{v}{\MN}}{
\seq{\Gamma''' \inter \Delta }{\ctxnc\ndc x \dist x v}{\sigma}} \]
where $\Gamma''' \inter \Delta = \Gam' \inter \Del'$.
\qedhere
\end{itemize}
\end{proof}

\begin{thm}[$\whlr$-Normalization implies Typability]
  \label{t:completeness-name}%
  Let $t$ be a term. If $t$ is $\whlr$-normalizing,
  then $t$ is $\sysWR$-typable.
\end{thm}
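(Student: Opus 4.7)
The plan is to combine two previously established ingredients: \autoref{l:whlr-norm-typ} (typability of $\whlr$-normal forms) and \autoref{l:se-general} (subject expansion for $\rew\whlr$-steps, via its treatment of $\rew\whdblr$ and $\rew\nrsub$).

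First, I would unfold the hypothesis: since $t$ is $\whlr$-normalizing, there is no infinite $\whlr$-reduction starting at $t$, so there exists a finite reduction $t = t_0 \rew\whlr t_1 \rew\whlr \cdots \rew\whlr t_n$ with $t_n$ in $\whlr$-nf. Invoking \autoref{l:whlr-norm-typ} then yields a derivation $\dem{\Phi_n}{\Gamma}{t_n}{\sigma}$ for some environment $\Gamma$ and type $\sigma$.

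Next I would propagate typability backwards along this sequence by reverse induction on $i$. At each step $t_i \rew\whlr t_{i+1}$ the reduction is either a $\whdblr$-step or a $\whslr$-step. The first case is directly covered by \autoref{l:se-general}; for the second, it suffices to note that $\rew\whslr$ is a contextual closure of $\rrule\nrsub$ and hence a subrelation of $\rew\nrsub$, which is also covered. Either way, subject expansion produces a derivation $\dem{\Phi_i}{\Gamma}{t_i}{\sigma}$ from $\dem{\Phi_{i+1}}{\Gamma}{t_{i+1}}{\sigma}$, preserving both the environment and the type. Iterating down to $i = 0$ yields $\dem{\Phi_0}{\Gamma}{t}{\sigma}$, which is exactly $\sysWR$-typability of $t$.

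No real obstacle is anticipated: all the hard work has already been absorbed into \autoref{l:whlr-norm-typ} and \autoref{l:se-general}. The only minor point to check is the inclusion $\rew\whslr \subseteq \rew\nrsub$, which is immediate from the inductive definition of $\rew\whslr$ in \autoref{f:whlr}.
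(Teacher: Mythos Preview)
Your proposal is correct and follows essentially the same approach as the paper: induction on the length of a $\whlr$-reduction sequence to normal form, using \autoref{l:whlr-norm-typ} for the base case and \autoref{l:se-general} (together with the inclusion $\rew\whslr \subseteq \rew\nrsub$) for the inductive step.
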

\begin{proof}
  Let $t$ be $\whlr$-normalizing.
  Then $t \rew\whlr^n u$
  and $u$ is a $\whlr$-nf.  We reason by induction on $n$.
  If $n = 0$, then $t = u$ is typable by \autoref{l:whlr-norm-typ}.
  Otherwise, we have $t \rew{\whlr} t' \rew{\whlr}^{n-1} u$.
  By the \ih\ $t'$ is typable and thus by
  \autoref{l:se-general} (because $\rew\whslr$ is included in $\rew\nrsub$),
  $t$ turns out to be  also typable.
\end{proof}

\begin{thm}[$\ndlr$-Normalization implies Typability]
  \label{t:completeness-need}%
  Let $t$ be a term. If $t$ is $\ndlr$-normalizing, then
  $t$ is $\sysWR$-typable.
\end{thm}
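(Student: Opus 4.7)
The plan is to mirror the proof of Theorem~\ref{t:completeness-name} for the $\whlr$ case, exploiting the two ingredients already established for $\rew\ndlr$: typability of normal forms (Lemma~\ref{l:ndlr-norm-typ}) and subject expansion (Lemma~\ref{l:se-general}, which is stated uniformly for $\redrule \in \{\permlr, \nrsub, \whdblr, \ndlr\}$). Since $t$ is $\ndlr$-normalizing, by determinism (Lemma~\ref{l:need_determinism}) there is a unique finite reduction sequence $t = t_0 \rew\ndlr t_1 \rew\ndlr \cdots \rew\ndlr t_n = u$ with $u$ a $\ndlr$-nf, so it suffices to induct on the length $n$ of this sequence.

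For the base case $n = 0$, we have $t = u$ in $\ndlr$-nf, and Lemma~\ref{l:ndlr-norm-typ} directly gives a type derivation $\dem{\Phi_u}{\Gamma}{u}{\tau}$, establishing $\sysWR$-typability. For the inductive step $n \geq 1$, write the sequence as $t \rew\ndlr t' \rew\ndlr^{n-1} u$. By the induction hypothesis, $t'$ is $\sysWR$-typable, say $\dem{\Phi_{t'}}{\Gamma}{t'}{\sigma}$. Applying subject expansion (Lemma~\ref{l:se-general}) to the step $t \rew\ndlr t'$ yields a derivation $\dem{\Phi_t}{\Gamma}{t}{\sigma}$, so $t$ is $\sysWR$-typable.

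There is no real obstacle: both Lemma~\ref{l:ndlr-norm-typ} and Lemma~\ref{l:se-general} have already been proved, and the argument is strictly analogous to the $\whlr$ case. The only thing worth double-checking is that subject expansion is indeed available for each possible shape of a $\rew\ndlr$-step ($\db$, $\ndlrdist$, $\ndlrabs$), which is precisely what Lemma~\ref{l:se-general} provides.
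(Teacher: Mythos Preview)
Your proof is correct and matches the paper's approach exactly: induction on the length of a normalizing $\rew\ndlr$-sequence, with the base case handled by Lemma~\ref{l:ndlr-norm-typ} and the inductive step by subject expansion (Lemma~\ref{l:se-general}). The appeal to determinism is harmless but unnecessary---any finite $\rew\ndlr$-sequence to normal form suffices for the induction.
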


\begin{proof}
  Similar to the previous proof but using \autoref{l:ndlr-norm-typ}
  instead of \autoref{l:whlr-norm-typ}.
\end{proof}

Summing up, Theorems~\ref{t:soundness-name}, \ref{t:completeness-name}, \ref{t:soundness-need} and \ref{t:completeness-need} give:

\begin{thm}
  \label{l:need_correct}
  $t \in \rterms$ is $\whlr$-normalizing
  \textit{iff} $t$ is $\ndlr$-normalizing
  \textit{iff} $t$ is $\sysWR$-typable.
\end{thm}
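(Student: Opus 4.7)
The statement is a direct corollary of the four theorems explicitly announced in the paragraph immediately preceding it, so the plan is essentially a bookkeeping combination rather than a new argument. The approach is to establish the three-way equivalence by routing both normalization properties through $\sysWR$-typability as a common hub, thereby avoiding any direct comparison between $\whlr$-reduction and $\ndlr$-reduction.

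Concretely, I would first observe that \autoref{t:soundness-name} (typability implies $\whlr$-normalization) together with \autoref{t:completeness-name} ($\whlr$-normalization implies typability) yields the equivalence
\[
  t \text{ is } \whlr\text{-normalizing} \iff t \text{ is } \sysWR\text{-typable}.
\]
Symmetrically, \autoref{t:soundness-need} and \autoref{t:completeness-need} provide
\[
  t \text{ is } \ndlr\text{-normalizing} \iff t \text{ is } \sysWR\text{-typable}.
\]
Chaining these two biconditionals through the common middle property gives all three equivalences at once.

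There is essentially no obstacle here, since all the technical work has been done earlier: the subject reduction lemmas \autoref{l:sr_whdblr}, \autoref{l:sr_nrsub} and \autoref{l:sr_flneed} drive the soundness direction (together with termination of $\rew\nrsub$ from \autoref{l:s_term}), while typability of normal forms (\autoref{l:ndlr-norm-typ} and its corollary \autoref{l:whlr-norm-typ}) combined with subject expansion (\autoref{l:se-general}) drives the completeness direction. The only point worth stressing in the write-up is the role of the common pivot: it is precisely because the same quantitative type system $\sysWR$ characterizes termination of two a priori unrelated strategies that one obtains observational equivalence between CBN and fully lazy CBN for free, without having to relate their reduction sequences directly. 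This conceptual remark is what the section is really about; the proof itself is a one-line combination of the cited results.
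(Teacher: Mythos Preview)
Your proposal is correct and matches the paper's own argument exactly: the theorem is stated as an immediate corollary of Theorems~\ref{t:soundness-name}, \ref{t:completeness-name}, \ref{t:soundness-need} and \ref{t:completeness-need}, combined by routing both normalization properties through $\sysWR$-typability. There is nothing to add.
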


All the technical tools are now available to
conclude observational equivalence between our two
evaluation strategies based on node replication.
Let $\rewrel$ be any reduction
notion on $\rterms$. Then, two terms $t,u \in \rterms$  are said to
be \deft{$\rewrel$-observationally equivalent}, written $t \obseq_{\rewrel} u$, if for any context $\fc$, $\ctx\fc t$ is $\rewrel$-normalizing
\textit{iff} $\ctx\fc u$ is $\rewrel$-normalizing.

\begin{thm}
  For all terms $t, u \in \rterms$, $t$ and $u$
  are $\whlr$-observationally equivalent iff $t$ and $u$
  are $\ndlr$-observationally equivalent.
\end{thm}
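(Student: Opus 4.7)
The plan is to derive this theorem as a direct consequence of Theorem~\ref{l:need_correct}, which identifies the class of $\whlr$-normalizing terms, the class of $\ndlr$-normalizing terms, and the class of $\sysWR$-typable terms. Since observational equivalence is defined by quantifying a uniform normalization condition over all term contexts $\fc$, the equivalence between the two relations $\obseq$ will follow by instantiating the characterization pointwise at each $\ctx{\fc}{t}$ and $\ctx{\fc}{u}$.

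More concretely, first I would unfold the definition of $\whlr$-observational equivalence: $t \obseq u$ with respect to $\whlr$ means that for every context $\fc \in \rterms$, $\ctx{\fc}{t}$ is $\whlr$-normalizing iff $\ctx{\fc}{u}$ is $\whlr$-normalizing. Then I would apply Theorem~\ref{l:need_correct} to the terms $\ctx{\fc}{t}$ and $\ctx{\fc}{u}$ individually: each is $\whlr$-normalizing iff it is $\sysWR$-typable iff it is $\ndlr$-normalizing. Substituting these equivalences into the quantified statement yields: for every context $\fc$, $\ctx{\fc}{t}$ is $\ndlr$-normalizing iff $\ctx{\fc}{u}$ is $\ndlr$-normalizing, which is exactly the definition of $\ndlr$-observational equivalence of $t$ and $u$. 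The converse direction is symmetric.

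There is no real obstacle here: the work has already been done in establishing the soundness and completeness theorems (\ref{t:soundness-name}, \ref{t:completeness-name}, \ref{t:soundness-need}, \ref{t:completeness-need}) that combine into Theorem~\ref{l:need_correct}. The only subtle point worth mentioning is that the definition of observational equivalence uses the \emph{same} notion of context $\fc \in \rterms$ in both cases, so the quantifier ranges coincide and the pointwise rewriting of the characterization is sound. This is why the argument reduces to a single chain of iff's applied inside a universal quantifier, producing a proof of just a few lines.
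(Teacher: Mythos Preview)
Your proposal is correct and follows essentially the same route as the paper: both unfold the definition of observational equivalence, apply Theorem~\ref{l:need_correct} pointwise to $\ctx{\fc}{t}$ and $\ctx{\fc}{u}$ (going through $\sysWR$-typability as the common pivot), and conclude by the resulting chain of iff's inside the universal quantifier over contexts.
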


\begin{proof} The proof uses \autoref{l:need_correct}.
  Indeed, we have $t \obseq_\whlr u$
   \textit{iff} ($\ctx\fc t$ is $\whlr$-normalizing
   \textit{iff} $\ctx\fc u$ is $\whlr$-normalizing for  any context $\fc$)
   \textit{iff} 
   ($\ctx\fc t$ is $\ndlr$-normalizing
   \textit{iff} $\ctx\fc u$ is $\ndlr$-normalizing for  any context $\fc$)
   \textit{iff} $t \obseq_\ndlr u$. 
\end{proof}

\section{Related Works and Conclusion}%
\label{sec:conclusion-atomic}

Several calculi with explicit substitutions (ES) bridge the gap between formal higher-order calculi and
concrete implementations of programming languages (see a survey
in~\cite{kesner07a}).  The first of such calculi,
\eg~\cite{abadi90,BlooRoselx}, were all based on \emph{structural}
substitution, in the sense that the ES operator is syntactically
propagated step-by-step through the term structure until a variable is reached, when the
substitution finally takes place.  The correspondence between ES and
Linear Logic Proof-Nets~\cite{DCKP03} led to the more recent notion of
calculi \emph{at a
distance}~\cite{accattoli10,accattoli14a,PNESAccattoli18},
enlightening a natural and new application of the Curry-Howard
interpretation.  These calculi implement linear/partial substitution
\emph{at a distance}, where the search of variable occurrences is abstracted
out with context-based rewriting rules, and thus no ES propagation
rules are necessary.  A third model was introduced by the seminal work
of Gundersen, Heijltjes, and Parigot~\cite{gundersen13a,gundersen13b},
introducing the atomic $\lambda$-calculus to implement node replication.

Inspired by the last approach we introduced the calculus $\rcalc$,
capturing the essence of node replication.  In contrast
to~\cite{gundersen13a}, we work with an implicit (structural)
mechanism of weakening and contraction, a design choice which aims at
focusing and highlighting the node replication model, which is the
core of our calculus, so that we obtain a rather simple and natural
formalism used in particular to specify evaluation strategies. Indeed,
besides the proof of the main operational meta-level properties of our
calculus (confluence, termination of the substitution calculus,
simulations), we use linear and non-linear versions of $\rcalc$ to
specify evaluation strategies based on node replication, namely
call-by-name and call-by-need evaluation strategies. In
particular, we provided simple tools to prove correctness of these
reduction strategies. This was achieved in the framework of our
concise calculus $\rcalc$, based not only on an implicit
treatment of weakening and contraction, but also on the notion of
commuting conversions by means of \emph{distance}. Indeed, the
treatment of weakening, contraction, and commuting conversions
result in a heavy machinery for the atomic $\lambda$-calculus, which
would make the correctness of these strategies much more involved. 

Moreover, characterisation of termination of different strategies based on $\rcalc$ were achieved with a rather standard type system and, surprisingly, no deep inference system was necessary at this point. This is of interest, since our type system is equipping full-laziness with a well-known denotational semantics. We think that this would be difficult to achieve in the framework of the atomic  $\lambda$-calculus.

The first description of call-by-need was given by
Wadsworth~\cite{wadsworth71}, where reduction is performed on
\emph{graphs} instead of terms. Weak call-by-need on \emph{terms} was
then introduced by Ariola and Felleisen~\cite{ariola97}, and by
Maraist, Odersky and Wadler~\cite{maraist98,maraist99}. Reformulations
were introduced by Accattoli, Barenbaum and Mazza
~\cite{accattoli14b} and by Chang and Felleisen~\cite{chang14}. Our
call-by-need strategy is inspired by the calculus
in~\cite{accattoli14b}, which uses the distance
paradigm~\cite{accattoli10} to gather together meaningful
and permutation rules, by clearly separating
\emph{multiplicative} from \emph{exponential} rules, in the sense of
Linear Logic~\cite{LL}.

Full laziness has been formalized in different ways.  Pointer
graphs~\cite{wadsworth71,shivers10} are DAGs allowing for an elegant
representation of sharing.  Labeled calculi~\cite{LevyPhD,blanc05} implement
pointer graphs by adding annotations to $\lambda$-terms, which makes
the syntax more difficult to handle.
Lambda-lifting~\cite{hughes84,peytonjones87} implements full laziness
by resorting to translations from $\lambda$-terms to supercombinators.
In contrast to all the previous formalisms, our calculus is defined on
standard $\lambda$-terms with explicit cuts, without the use of any
complementary syntactical tool.  So is Ariola and Felleisen's
call-by-need~\cite{ariola97}, however, their notion of full laziness
relies on external (ad-hoc) meta-level operations used to extract the
skeleton.  Our specification of call-by-need enables fully lazy
sharing, where the skeleton extraction operation is internally encoded
in the term calculus operational semantics.  Last but not least, our
calculus has strong links with proof-theory, notably deep inference.

Balabonski~\cite{balabonski12a,balabonski12b} relates many
formalisms of full laziness and shows that they are equivalent when
considering the number of $\beta$-steps to a normal form. It would
then be interesting to
understand if his unified approach,
(abstractly) stated by means of the theory of
residuals~\cite{LevyPhD,levy80}, applies to our own strategy.

We have also studied the calculus from a semantical point of view, by
means of intersection types. Indeed, the type system can be seen as a
model of our implementations of call-by-name and call-by-need, in the
sense that typability and normalization turn out to be equivalent.

Intersection types go back to~\cite{coppo78} and
have been
used to provide characterizations of qualitative~\cite{BDS13}
as well as quantitative~\cite{Carvalho07} models of the
$\lambda$-calculus, where typability and
normalization coincide. Quantitative models
specified by means of non-idempotent
types~\cite{gardner94,Kfoury00}  were first applied to
the $\lambda$-calculus (see a survey in~\cite{bucciarelli17}) and to several other
formalisms ever since, such as
call-by-value~\cite{ehrhard2012,CarraroG14},
call-by-need~\cite{kesner16,accattoli19},
call-by-push-value~\cite{GM18,BucciarelliKRV20} and classical
logic~\cite{kesner20a}. In the present work, we achieve for the first time a quantitative
characterization of fully lazy normalization, which provides upper
bounds for the length of reduction sequences to normal forms.

Characterizations provided by intersection type systems
sometimes lead to observational
equivalence results (\eg~\cite{kesner16}).
In this work we succeed to
prove  observational equivalence related to  a fully lazy implementation
of weak call-by-need, a result which would be extremely involved
to prove by means of syntactical tools of rewriting,
as done for weak call-by-need in~\cite{ariola97}.
Moreover, our result implies that our node replication implementation
of full laziness is observationally equivalent to standard call-by-name
and to weak call-by-need (see~\cite{kesner16}), as well as to
the more semantical notion of neededness (see~\cite{KesnerRV18}).

A Curry-Howard interpretation of the logical \emph{switch} rule of
deep inference is given in~\cite{sherratt19,sherratt20} as an
end-of-scope operator, thus introducing the \emph{spinal atomic
$\lambda$-calculus}. The calculus implements a refined optimization of
call-by-need, where only the \emph{spine} of the abstraction (tighter
than the skeleton) is duplicated.  It would be interesting to adapt
$\rcalc$ to spine duplication by means of an appropriate end-of-scope operator,
such as the one in~\cite{hendriks03}. Further optimizations might also be
considered.

Finally, this paper only considers weak evaluation strategies, \ie\ with
reductions forbidden
under abstractions, but it would be interesting to extend our notions
to full (strong) evaluations too~\cite{gregoire02,balabonski17}.
Extending full laziness to classical logic would be another
interesting research direction, possibly taking preliminary ideas
from~\cite{he18}. We
would also like to investigate (quantitative) \emph{tight} types for
our fully lazy strategy, as done for weak call-by-need
in~\cite{accattoli19}, which does not seem evident in our node
replication framework.

\section*{Acknowledgment}
\noindent Last author was partially supported by CNPq Universal 430667/2016-7 grant. 

\bibliographystyle{alphaurl}
\bibliography{bib}

\appendix
\section{Proofs}

\levelsubstitution*
\label{p:level-substitution}
\begin{proof}
  If $x \notin \fv t$, then $t \msub x p = t$ and the property holds in both
  cases since $\lv x t = 0$.
  Let $x \in \fv t$. We show the two cases.
  \begin{enumerate}
    \item $z \notin \fv p$. We reason by induction on $t$.
      \begin{itemize}
        \item If $t = x$, then
          $\lv z {x \msub x p} = \lv z p = 0 = \lv z x$.
        \item If $t = \lam y {t'}$, then
          $\lv z {(\lam y {t'}) \msub x p}
          = \lv z {\lam y {t' \msub x p}}
          = \lv z {t' \msub x p}
          =_{\ih} \lv z {t'}
          = \lv z {\lam y {t'}}$.
        \item If $t = t_1 t_2$, then
          $\lv z {(t_1 t_2) \msub x p}
          = \lv z {t_1 \msub x p t_2 \msub x p}$.
          Then, 
          $\lv z {(t_1 t_2) \msub x p}
          = \max (\lv z {t_1 \msub x p}, \lv z {t_2 \msub x p})$
          $=_{\ih} \max (\lv z {t_1}, \lv z {t_2})
          = \lv z {t_1 t_2}$.
        \item If $t = t' \cut y u$, then $t \msub x p = t' \msub x p \cut y {u \msub x p}$.
          There are two cases.
          \begin{itemize}
            \item If $z \notin \fv {u \msub x p}$, that is $z \notin \fv u$, then
              $\lv z {t' \msub x p \cut y {u \msub x p}}
              = \lv z {t' \msub x p}
              =_{\ih} \lv z {t'}
              = \lv z {t' \cut y u}$.
            \item Otherwise,
              $\lv z {t' \msub x p \cut y {u \msub x p}}
              = \max (\lv z {t' \msub x p}, \lv y {t' \msub x p} + \lv z {u
              \msub x p} + \isES{\cut y u})
              =_{\ih} \max (\lv z {t'}, \lv y {t'} + \lv z u + \isES{\cut y u})
              = \lv z {t' \cut y u}$.
          \end{itemize}
      \end{itemize}

    \item $z \in \fv p$. By induction on $t$.
      \begin{itemize}
        \item If $t = x$, then
          $t\msub {x}{p} = p$ and
          \[ \lv z {t\msub {x}{p}}
          = \lv z p
          = 0
          = \max (0, 0)
        = \max (\lv z x, \lv x x). \]

        \item If $t = \ly.t'$, then
          $t\msub {x}{p} = \ly.t'\msub {x}{p}$ and
          \[ \lv z {\ly.t'\msub {x}{p}} = \lv z {t'\msub {x}{p}}
          =_{\ih}  \max (\lv z {t'}, \lv x {t'})
        = \max (\lv z {\ly.t'}, \lv x {\ly.t'}). \]

        \item If $t = t_1t_2$ then $t\msub {x}{p} =  t_1\msub {x}{p}t_2\msub
          {x}{p}$. For $i \in
          \{1,2\}$, one has either $\lv z {t_i\msub
          {x}{p}} = \max(\lv z {t_i}, \lv x {t_i})$ by \ih\ if $x
          \in \fv {t_i}$, or $\lv z {t_i \msub x p} = \lv z {t_i}$ by
          Point \ref{l:level-substitution-void} otherwise.
          \begin{itemize}
            \item If $x \in \fv {t_1} \cap \fv {t_2}$ then,
              \begin{align*}
                \lv z {t\msub {x}{p}} & = \max(\lv z {t_1\msub {x}{p}},\lv z {t_2\msub {x}{p}}) \\
                                      & =_{\ih} \max(\max(\lv z {t_1}, \lv x
                                      {t_1}), \max(\lv z {t_2}, \lv x {t_2}) ) \\
                                      & = \max(\lv z {t_1}, \lv x
                                      {t_1}, \lv z {t_2}, \lv x {t_2}) \\
                                      & = \max(\max(\lv z {t_1}, \lv z {t_2}), \max(\lv x {t_1}, \lv x {t_2})) \\
                                      & = \max(\lv z {t_1 t_2}, \lv x {t_1 t_2})
              \end{align*}
            \item If $x \notin \fv {t_2}$, then
              \begin{align*}
                \lv z {t\msub {x}{p}}
                & = \max(\lv z {t_1\msub {x}{p}},\lv z {t_2\msub {x}{p}}) \\
                & =_{\ih + P.\ref{l:level-substitution-void}} \max(\max(\lv z {t_1}, \lv x {t_1}), \lv z {t_2}) \\
                & = \max(\lv z {t_1}, \lv x {t_1}, \lv z {t_2}) \\
                & = \max(\max(\lv z {t_1}, \lv z {t_2}), \max(\lv x {t_1}, 0)) \\
                & = \max(\lv z {t_1 t_2}, \lv x {t_1 t_2})
              \end{align*}
            \item If $x \notin \fv{t_1}$ the case is as above.
          \end{itemize}

        \item If $t = t_1 \cut y {t_2}$ then
          $t \msub x p = t_1 \msub x p \cut y {t_2 \msub x p}$.
          By $\alpha$-conversion we can assume
          $y \notin \fv{p}$. By \ih\, one has $\lv z {t_i\msub {x}{p}}
          = \max(\lv z {t_i}, \lv x {t_i})$ for $i \in \{1,2\}$,
          if $x \in \fv {t_i}$, $\lv z {t_i \msub x p} = \lv z {t_i}$ otherwise.
          There are two cases.

          \begin{enumerate}
            \item If $z \notin \fv{t_2\msub {x}{p}}$ then $z \notin
              \fv{t_2}$ and
              necessarily $x \notin \fv{t_2}$ since $z \in \fv{p}$.
              Therefore,
              \begin{align*}
                \lv z {t\msub {x}{p}}
                & = \lv z {t_1\msub {x}{p}} \\
                & =_{\ih} \max(\lv z {t_1}, \lv x {t_1}) \\
                & = \max (\lv z {t_1 \cut y {t_2}}, \lv x {t_1 \cut y {t_2}})
              \end{align*}

            \item  If $z \in \fv{t_2\msub {x}{p}}$ then,
              \begin{enumerate}
                \item If $x \notin \fv {t_1}$, then $x \in \fv {t_2}$.
                  \begin{align*}
                    \lv z {t \msub x p} &=
                    \max (\lv z { t_1\msub{x}{p}}, \lv y {t_1\msub{x}{p}} + \lv
                    z {t_2 \msub x p} + \isES{\cut y {t_2}}) \\
                                        &=_{P.\ref{l:level-substitution-void}}
                                        \max (\lv z {t_1}, \lv y {t_1} + \lv z
                                        {t_2 \msub x p} + \isES{\cut y {t_2}}) \\
                                        &=_{\ih} \max (\lv z {t_1}, \lv y {t_1}
                                        + \max (\lv z {t_2}, \lv x {t_2}) +
                                        \isES{\cut y {t_2}}) \\
                                        &= \max (\lv z {t_1}, \lv y {t_1} + \lv
                                        z {t_2} + \isES{\cut y {t_2}}, \lv y {t_1} + \lv x
                                        {t_2} + \isES{\cut y {t_2}}) \\
                                        &= \begin{cases}
                                          \max (\max (\lv z {t_1}, \lv y {t_1} +
                                          \lv z {t_2} + \isES{\cut y {t_2}}),\\
                                          \phantom{(\max(\max} \lv y {t_1} + \lv x {t_2} + \isES{\cut
                                          y {t_2}} ) & z \in \fv {t_2} \\
                                          \max (\lv z {t_1}, \lv y {t_1} + \lv x
                                          {t_2} + \isES{\cut y {t_2}} ) & z \notin \fv {t_2}
                                        \end{cases} \\
                                        &= \max (\lv z {t_1 \cut y {t_2}}, \lv x {t_1 \cut y {t_2}} )
                  \end{align*}
                \item If $x \notin \fv {t_2}$, then $x \in \fv {t_1}$ and $z \in \fv {t_2}$:
                  \begin{align*}
                    \lv z {t \msub x p}
              &= \max (\lv z {t_1 \msub x p}, \lv y {t_1 \msub x p} + \lv z {t_2
              \msub x p} + \isES{\cut y {t_2}}) \\
              &=_{\ih + P.\ref{l:level-substitution-void}} \max (\lv z {t_1}, \lv x
              {t_1} , \lv y {t_1}  + \lv z {t_2} + \isES{\cut y {t_2}}) \\
              &= \max (\max (\lv z {t_1}, \lv y {t_1} + \lv z {t_2} + \isES{\cut y {t_2}}), \lv x {t_1 \cut y {t_2}} ) \\
              &= \max (\lv z {t_1 \cut y {t_2}}, \lv x {t_1 \cut y {t_2}} )
                  \end{align*}

                \item If $x \in \fv {t_1} \cap \fv {t_2}$:
                  \begin{align*}
                    \lv z {t\msub {x}{p}}
              & = \max(\lv z {t_1\msub {x}{p}}, \lv y {t_1\msub {x}{p}} + \lv z {t_2\msub
              {x}{p}} + \isES{\cut y {t_2}}) \\
              &=_{\ih} \max(\max(\lv z {t_1}, \lv x {t_1} ), \lv y {t_1} +
              \max (\lv z {t_2}, \lv x {t_2} ) + \isES{\cut y {t_2}})  \\
              &= \max(\lv z {t_1}, \lv x {t_1} , \lv y {t_1} +
              \lv z {t_2}  + \isES{\cut y {t_2}},\\
              &\phantom{=\max(}\lv y {t_1} +
              \lv x {t_2} + \isES{\cut y {t_2}}) \\
              &= \begin{cases}
                \max (\lv z {t_1}, \lv y {t_1} + \lv z {t_2} + \isES{\cut y {t_2}},\\
                \phantom{\max(}\lv x {t_1}, \lv y {t_1} + \lv x {t_2} + \isES{\cut y {t_2}} ) & z \in \fv {t_2} \\
                \max (\lv z {t_1}, \lv x {t_1} , \lv y {t_1} + \lv x {t_2} +
                \isES{\cut y {t_2}}) & z \notin \fv {t_2}
              \end{cases} \\
              &= \max (\lv z {t_1 \cut y {t_2}}, \lv x {t_1 \cut y {t_2}} )
                  \qedhere
                  \end{align*}
              \end{enumerate}
          \end{enumerate}
          \end{itemize}
      \end{enumerate}
\end{proof}

\stabilitylevels*
\label{p:stability-levels}
\begin{proof} \hfill
  \begin{enumerate}
    \item Let $t_0 =  \ctx{\fc}{o}$ and $t_1 = \ctx{\fc}{o'}$, where  $o \rew{\permlr} o'$ is a root step.
      We reason by induction on $\fc$.
      First we consider the base cases, where $\fc = \ec$.
      \begin{itemize}
        \item $t_0 = \lam y {t \cut x u} \rew\permlr
          (\lam y t) \cut x u = t_1$, where $y \notin \fv{u}$.
          We have two cases:
          \begin{enumerate}
            \item If $w \notin \fv u$.
              \[ \lv w {\lam y {t \cut x u}} =
                \lv w {t \cut x u} =
                \lv w {t} =
                \lv w {\lam y t} =
              \lv w {(\lam y t) \cut x u} \]
            \item If $w \in \fv u$.
              \begin{align*}
                \lv w {\lam y {t \cut x u}}
                &= \lv w {t \cut x u}\\
                &= \max(\lv w t, \lv x t + \lv w u + \isES{\cut x u})\\
                &= \max(\lv w {\lam y t}, \lv x {\lam y t} + \lv w u +\isES{\cut x
                u})\\
                &= \lv w {(\lam y t) \cut x u}
              \end{align*}
        \end{enumerate}

      \item $t_0 = t \cut x u s \rew\permlr (ts) \cut x u = t_1$, where $x \notin \fv{s}$.
        We have two cases:
        \begin{enumerate}
          \item If $w \notin \fv u$.
            \begin{align*}
              \lv w {{t \cut x u} s}
              &= \max(\lv w {{t \cut x u}}, \lv w {s} )\\
              &= \max(\lv w {t}, \lv w {s} )\\
              &= \lv w {ts}\\
              &= \lv w {{(ts) \cut x u}}
            \end{align*}
          \item If $w \in \fv u$.
            \begin{align*}
              &\lv w {{t \cut x u} s}\\
              &= \max(\lv w {{t \cut x u}}, \lv w {s} )\\
              &= \max(\lv w {t}, \lv x t + \lv w u +\isES{\cut x u}, \lv w {s} )\\
              &= \max(\lv w {t}, \lv w {s}, \lv x {t} + \lv w u +\isES{\cut x u})\\
              &= \max(\lv w {t}, \lv w {s}, \max(\lv x {t}, 0) + \lv w u +\isES{\cut x u}) & (x \notin \fv{s})\\
              &= \max(\lv w {t}, \lv w {s}, \max(\lv x {t}, \lv x s) + \lv w u +\isES{\cut x u})\\
              &= \max(\lv w {ts}, \lv x {ts} + \lv w u +\isES{\cut x u})\\
              &= \lv w {{(ts) \cut x u}}
            \end{align*}
        \end{enumerate}

      \item $t_0 = t s \cut x u \rew\permlr (ts) \cut x u = t_1$, where $x \notin \fv{t}$.
        We have two cases:
        \begin{enumerate}
          \item If $w \notin \fv u$.
            \begin{align*}
              \lv w {ts \cut x u}
              &= \max(\lv w t, \lv w {s \cut x u})\\
              &= \max(\lv w t, \lv w s)\\
              &= \lv w {ts}\\
              &= \lv w {{(ts) \cut x u}}
            \end{align*}
          \item If $w \in \fv u$.
            \begin{align*}
              \lv w {t s \cut x u}
              &= \max(\lv w t, \lv w {s \cut x u})\\
              &= \max(\lv w t, \lv w s, \lv x s + \lv w u + \isES{\cut x u})\\
              &= \max(\lv w t, \lv w s, \max(0, \lv x s) + \lv w u + \isES{\cut x u})\\
              &= \max(\lv w t, \lv w s, \max(\lv x t, \lv x s) + \lv w u +
              \isES{\cut x u})\\
              &= \max(\lv w {ts}, \lv x {ts} + \lv w u + \isES{\cut x u})\\
              &= \lv w {(ts) \cut x u}
            \end{align*}
        \end{enumerate}

      \item Let $t_0 = t \cut  y {s \cut x
        u} \rew\permlr t \cut y s \cut x u = t_1$, where $x \notin \fv{t}$.
        We have four cases:
        \begin{enumerate}
          \item If $w \notin \fv s \cup \fv u$:
            \[ \lv w {t \cut y {s \cut x u}}
              = \lv w t
              = \lv w {t \cut y s}
            = \lv w {t \cut y s \cut x u} \]
          \item If $w \in \fv s, w \notin \fv u$:
            \begin{align*}
              \lv w {t \cut y {s \cut x u}}
              &= \max (\lv w t, \lv y t + \lv w {s \cut x u} + \isES{\cut y s})\\
              &= \max (\lv w t, \lv y t + \lv w s + \isES{\cut y s})\\
              &= \lv w {t \cut  y s}\\
              &= \lv w {t \cut  y s \cut x u}
            \end{align*}
          \item If $w \notin \fv s, w \in \fv u$.
            \begin{align*}
              &\lv w {t \cut  y {s \cut x u}}\\
              &= \max (\lv w t, \lv y t + \lv w {s \cut x u} + \isES{\cut y s})\\
              &= \max (\lv w t, \lv y t + \lv x s + \lv w u + \isES{\cut x u} +
              \isES{\cut y s})\\
              &\geq \max (\lv w {t \cut  y s}, \lv x {t \cut  y s} +
              \lv w u + \isES{\cut x u})\\
              &= \lv w {t \cut y s \cut x u}
            \end{align*}
          \item If $w \in \fv s \cap \fv u$:
            \[ \begin{array}{lll}
              &\lv w {t \cut y {s \cut x u}}\\
              &= \max (\lv w t, \lv y t + \lv w {s \cut x u} + \isES{\cut y
              s})\\
              &= \max (\lv w t, \lv y t + \max (\lv w s, \lv x s + \lv w u +
              \isES{\cut x u}) + \isES{\cut y s})\\
              &= \max (\lv w t, \lv y t + \lv w s + \isES{\cut y s},\\
              &\qquad \lv y t + \lv x s + \lv w u + \isES{\cut x u} + \isES{\cut
              y s})\\
              &= \max (\max (\lv w t, \lv y t + \lv w s + \isES{\cut y s}),\\
              &\qquad\max (\lv x t, \lv y t + \lv x s + \isES{\cut y s}) + \lv w
              u + \isES{\cut x u})\\
              &\geq \max (\lv w {t \cut y s}, \lv x {t \cut  y s} + \lv w u +
              \isES{\cut x u})\\
              &= \lv w {t \cut  y s \cut x u}
          \end{array} \]
      \end{enumerate}
  \end{itemize}

  The inductive cases are the following:
  \begin{itemize}
    \item If $\fc  = \lam x {\fc'}$, where $x \neq w$, then
      $\lv w {\ctx{\lam x {\fc'}}{o}} =
      \lv w {\ctx{{\fc'}}{o}} \geq_{\ih}
      \lv w {\ctx{{\fc'}}{o'}} =
      \lv w {\ctx{\fc}{o'}}$. 
    \item If $\fc  = \fc' u$, then
      $\lv w {\ctx{\fc'}{o} u} =
      \max(\lv w {\ctx{\fc'}{o}}, \lv w u) \geq_{\ih}
      \max(\lv w {\ctx{\fc'}{o'}}, \lv w u) =
      \lv w {\ctx{\fc}{o'}}$.
    \item If $\fc = u \fc'$, then
      $\lv w {u \ctx{\fc'}{o}} =
      \max(\lv w u, \lv w {\ctx{\fc'}{o}}) \geq_{\ih}
      \max(\lv w u, \lv w {\ctx{\fc'}{o'}}) =
      \lv w {\ctx{\fc}{o'}}$.
    \item If $\fc  = \fc'  \cut x u$, then
      \begin{enumerate}
        \item If $w \notin \fv u$, then
          $\lv w {\ctx{\fc'}{o}  \cut x u}=
          \lv w {\ctx{\fc'}{o}} \geq_{\ih}
          \lv w {\ctx{\fc'}{o'}} =
          \lv w {\ctx{\fc}{o'}}$. 
        \item If $w \in \fv u$, then
          $\lv w {\ctx{\fc'}{o}\cut x u } =
          \max (\lv w {\ctx{\fc'}{o}}, \lv x {\ctx{\fc'}{o}} + \lv w u + \isES{\cut x u})
          \geq_{\ih} \max (\lv w {\ctx{\fc'}{o'}}, \lv x {\ctx{\fc'}{o'}} + \lv w u + \isES{\cut x u})
          = \lv w {\ctx{\fc'}{o'} \cut x u}
          = \lv w {\ctx{\fc}{o'}}$.
      \end{enumerate}
    \item If $\fc  = u \cut x {\fc'}$, then
      \begin{enumerate}
        \item If $w \notin \fv {\ctx{\fc'}{o}}$, then
          $\lv w {u \cut x {\ctx{\fc'}{o}}} =
          \lv w u =
          \lv w {u \cut x {\ctx{\fc'}{o'}}} =
          \lv w {\ctx{\fc}{o'}}$.
        \item If $w \in \fv {\ctx{\fc'}{o}}$, then
          $\lv w {u \cut x {\ctx{\fc'}{o}}} =
          \max (\lv w u, \lv x u + \lv w {\ctx{\fc'}{o}} + \isES{\cut x
          {\ctx{\fc'} o}}) \geq_{\ih}
          \max (\lv w u, \lv x u + \lv w {\ctx{\fc'}{o'}} + \isES{\cut x
          {\ctx{\fc'} o}}) =
          \lv w {u \cut x {\ctx{\fc'}{o'}}} =
          \lv w {\ctx{\fc}{o'}}$.
      \end{enumerate}
  \end{itemize}

\item We reason by induction on the reduction
  relation, \ie\ by  induction on the context $\fc$ where
  the root reduction takes place. 
  We detail the base case which is $\fc=\ec$.
  In all such cases we use Point~\ref{l:stability-levels-permutation}
  to push $\lc$ outside, \ie\ we can write $t_0
  \rew{\nrsub} t_1$ as
  $t_0 \rew\permlr \ctx{\lc}{t'_0} \rew{\nrsub'} \ctx{\lc}{t'_1} = t_1$,
  where $t'_0 \rew{\nrsub'}  t'_1$ does not push any list context outside.
  We then show the property for steps $t'_0 \rew{\nrsub'}  t'_1$
  not pushing any substitution outside
  and we conclude by
  $\lv w {t_0} \geq_{P.\ref{l:stability-levels-permutation}} \lv w {\ctx{\lc}{t'_0}} \geq \lv w {\ctx{\lc}{t'_1}} = \lv w {t_1}$.
  The inductive cases for $\fc$ are treated as in Point~\ref{l:stability-levels-permutation}.
  \begin{itemize}
    \item $t'_0 = {t \es {x} {us}} \rew{\apprr} {t \msub {x} {yz} \es y u \es z s}=t'_1 $, where $y$ and $z$ are fresh variables.

      \begin{enumerate}
        \item If $w \notin \fv{us}$ (\ie\ $w \notin \fv{u}$ and $w \notin \fv{s}$):
          \begin{align*}
            \lv w {t \es {x} {us}}
            &= \lv w t\\
            & =_{L.\ref{l:level-substitution}:\ref{l:level-substitution-void}}
            \lv w {t \msub {x} {yz}}\\
            &= \lv w {t \msub {x} {yz} \es y u}\\
            &= \lv w {t \msub {x} {yz} \es y u \es z s}
          \end{align*}
        \item   If $w \in \fv{u}$ and $ w \notin \fv{s}$. There are two  cases.
          \begin{enumerate}
            \item   If $x \notin \fv{t}$:
              \begin{align*}
                \lv w {t \es {x} {us}}
                &= \max( \lv w t, \lv x t + \lv w {us} +1)\\
                &= \max( \lv w t, \lv x t + \lv w u +1)\\
                &= \max( \lv w t, 0 + \lv w u +1)\\
                &= \max( \lv w t, \lv y t  + \lv w u + 1)\\
                & =_{L.\ref{l:level-substitution}:\ref{l:level-substitution-void}}
                \max( \lv w t, \lv y {t \msub {x} {yz} } + \lv w u + 1)\\
                & =_{L.\ref{l:level-substitution}:\ref{l:level-substitution-void}}
                \max( \lv w {t \msub {x} {yz} }, \lv y {t \msub {x} {yz} } + \lv
                w u + 1)\\
                &= \lv w {t \msub {x} {yz} \es y u}\\
                &= \lv w {t \msub {x} {yz} \es y u \es z s}
              \end{align*}

            \item   If $x \in \fv{t}$:
              \begin{align*}
                \lv w {t \es {x} {us}}
                &= \max( \lv w t, \lv x t + \lv w {us} +1)\\
                &= \max( \lv w t, \lv x t   + \lv w u + 1)\\
                &= \max( \lv w t, \lv x t + 0  + \lv w u + 1)\\
                &= \max( \lv w t, \max (0, \lv x t + 0)  + \lv w u + 1)\\
                & = \max( \lv w t, \max (\lv y t, \lv x t + \lv y {yz})  + \lv w u + 1)\\
                &=_{L.\ref{l:level-substitution}:\ref{l:level-substitution-non-void}}
                \max( \lv w t, \lv y {t \msub {x} {yz} } + \lv w u + 1)\\
                &=_{L.\ref{l:level-substitution}:\ref{l:level-substitution-void}}
                \max( \lv w {t \msub {x} {yz} }, \lv y {t \msub {x} {yz} } + \lv
                w u + 1)\\
                &= \lv w {t \msub {x} {yz} \es y u}\\
                &= \lv w {t \msub {x} {yz} \es y u \es z s}
              \end{align*}
          \end{enumerate}

        \item If $w \notin \fv{u}$ and $w \in \fv{s}$. There are two cases.
          \begin{enumerate}
            \item  If $x \notin \fv{t}$:
              \begin{align*}
                \lv w {t \es {x} {us}}
                &= \max( \lv w t, \lv x t + \lv w {us} +1)\\
                &= \max( \lv w t, \lv x t + \lv w s +1)\\
                &= \max( \lv w t, 0 + \lv w s +1)\\
                &= \max( \lv w t, \lv z t  + \lv w s + 1)\\
                &=_{L.\ref{l:level-substitution}:\ref{l:level-substitution-void}}
                \max( \lv w t, \lv z {t \msub {x} {yz} } + \lv w s + 1)\\
                &=_{L.\ref{l:level-substitution}:\ref{l:level-substitution-void}}
                \max( \lv w {t \msub {x} {yz} }, \lv z {t \msub {x} {yz} } + \lv
                w s + 1)\\
                &= \max( \lv w {t \msub x {yz} \es y u}, \lv z {t \msub x {yz}
                \es y u} + \lv w s + 1)\\
                &= \lv w {t \msub x {yz} \es y u \es z s}
              \end{align*}

            \item   If $x \in \fv{t}$:
              \begin{align*}
                \lv w {t \es {x} {us}}
                &= \max( \lv w t, \lv x t + \lv w {us} +1)\\
                &= \max( \lv w t, \lv x t   + \lv w s + 1)\\
                &= \max( \lv w t, \lv x t + 0    + \lv w s + 1)\\
                &= \max( \lv w t, \max (0, \lv x t + 0)  + \lv w s + 1)\\
                &= \max( \lv w t, \max (\lv z t, \lv x t + \lv z {yz}  )  + \lv
                w s + 1)\\
                &=_{L.\ref{l:level-substitution}:\ref{l:level-substitution-non-void}}
                \max( \lv w t, \lv z {t \msub {x} {yz} } + \lv w s + 1)\\
                &=_{L.\ref{l:level-substitution}:\ref{l:level-substitution-void}}
                \max( \lv w {t \msub {x} {yz} }, \lv z {t \msub {x} {yz} } + \lv
                w s + 1)\\
                &= \max( \lv w {t \msub x {yz} \es y u}, \lv z {t \msub x {yz}
                \es y u} + \lv w s + 1)\\
                &= \lv w {t \msub {x} {yz} \es y u \es z s}
              \end{align*}
          \end{enumerate}

        \item If $w \in \fv{u}$ and $w \in \fv{s}$. There are two  cases.
          \begin{enumerate}
            \item  If $x \notin \fv{t}$:
              \begin{align*}
                &\lv w {t \es {x} {us}}\\
                &= \max( \lv w t, \lv x t + \lv w {us} +1)\\
                &= \max( \lv w t, \lv x t + \max( \lv w u, \lv w s) +1)\\
                &= \max( \lv w t, \max( \lv w u, \lv w s) +1))\\
                &= \max (\lv w {t}, \lv w u +1, \lv w s +1 )\\
                &= \max (\lv w {t}, \lv y {t} + \lv w u +1, \lv z {t } + \lv w s +1)\\
                &=_{L.\ref{l:level-substitution}:\ref{l:level-substitution-void}}
                \max (\lv w {t \msub {x} {yz}}, \lv y {t \msub {x} {yz} } + \lv
                w u +1, \lv z {t \msub {x} {yz}} + \lv w s +1 )\\
                &= \max (\lv w {t \msub {x} {yz}}, \lv y {t \msub {x} {yz} } +
                \lv w u +1, \lv z {t \msub {x} {yz} \es y u} + \lv w s +1 )\\
                &= \max (\lv w {t \msub {x} {yz} \es y u}, \lv z {t \msub {x}
                {yz} \es y u} + \lv w s +1 )\\
                &= \lv w {t \msub {x} {yz} \es y u \es z s}
              \end{align*}

            \item   If $x \in \fv{t}$:
              \begin{align*}
                &\lv w {t \es {x} {us}}\\
                &= \max( \lv w t, \lv x t + \lv w {us} +1)\\
                &= \max( \lv w t, \lv x t + \lv w u +1, \lv x t + \lv w {s} +1)\\
                &= \max( \lv w t, \lv x t + \lv w u +1, \max(0, \lv x t + 0) +
                \lv w {s} +1)\\
                &= \max( \lv w t, \lv x t + \lv w u +1, \max(\lv z t, \lv x
                t + \lv z {yz}) + \lv w {s} +1)\\
                &=_{L.\ref{l:level-substitution}:\ref{l:level-substitution-non-void}}
                \max(\lv w {t}, \lv x t   + \lv w u +1, \lv z {t \msub {x} {yz}}
                + \lv w s + 1)\\
                &= \max(\lv w {t}, \max(0, \lv x t + 0)  + \lv w u +1, \lv z {t
                \msub {x} {yz}} + \lv w s + 1)\\
                & = \max(\lv w {t}, \max( \lv y t, \lv x t + \lv y {yz})  + \lv
                w u +1, \lv z {t \msub {x} {yz}} + \lv w s + 1)\\
                &=_{L.\ref{l:level-substitution}:\ref{l:level-substitution-non-void}}
                \max(\lv w {t}, \lv y {t \msub {x} {yz}} + \lv w u +1, \lv z {t
                \msub {x} {yz}} + \lv w s + 1)\\
                &= \max(\lv w {t \msub {x} {yz}}, \lv y {t \msub {x} {yz}} + \lv
                w u +1, \lv z {t \msub {x} {yz}} + \lv w s + 1)\\
                &= \max(\lv w {t \msub {x} {yz} \es y u}, \lv z {t \msub {x}
                {yz}} + \lv w s + 1)\\
                &= \max(\lv w {t \msub {x} {yz} \es y u}, \lv z {t \msub {x} {yz} \es y u} + \lv w s + 1)\\
                &= \lv w {t \msub {x} {yz} \es y u \es z s}
              \end{align*}
          \end{enumerate}
      \end{enumerate}

    \item $t'_0 =  {t \es {x} {\lam y u}} \rew{\distrr} {t \dist {x} {\lam y z \es {z} u}}=t'_1 $.
      There are two cases.
      \begin{enumerate}
        \item $w \notin \fv{\lam y u}$:
          \[ \lv w {t \es x {\lam y u}}
            = \lv w t
          = \lv w {t \dist {x} {\lam y z \es {z} u}} \]
        \item $w \in \fv{\lam y u}$ (\ie\ $w \in \fv{u}$ and $w \neq y$)
          \begin{align*}
            \lv w {t \es {x} {\lam y u}}
            &= \max( \lv w {t}, \lv x t + \lv w {\lam y u} +1 )\\
            &= \max( \lv w {t}, \lv x t + \lv w {u} +1 )\\
            &= \max( \lv w {t}, \lv x t + \max(0, 0 + \lv w u +1))\\
            &= \max( \lv w {t}, \lv x t + \max(\lv w z, \lv z z + \lv w u +1))\\
            &= \max( \lv w {t}, \lv x t + \lv w {z \es {z} u}  )\\
            &= \max( \lv w {t}, \lv x t + \lv w {\lam y z \es {z} u}  )\\
            &= \lv w {t \dist {x} {\lam y z \es {z} u}}
          \end{align*}
      \end{enumerate}

    \item $t'_0 = {t \dist {x} {\lam y u}} \rew{\absrr} t \msub
      {x} {\lam y u} =t'_1 $,
      where $u$ is pure.  There are two cases:
      \begin{enumerate}
        \item If  $w \notin \fv{\lam y u}$ or $x \notin \fv t$:
          \[
            \lv w {t \dist {x} {\lam y u}} =  \lv w t =_{L.\ref{l:level-substitution}.\ref{l:level-substitution-void}}
            \lv w { {t \msub {x} {\lam y {u'}}}} =
            \lv w {\ctx \lc {t \msub {x} {\lam y {u'}}}}
          \]
        \item If  $w \in \fv{\lam y u}$ and $x \in \fv t$: 
          \[
            \lv w {t \dist {x} {\lam y u}} =
            \max(\lv w t, \lv x t) =_{L.\ref{l:level-substitution}.\ref{l:level-substitution-non-void}}
            \lv w {t \msub {x} {\lam y u}}
          \]
      \end{enumerate}

    \item  $t'_0 =  {t \es {x} {y}} \rew{\varrr} {t \msub {x} y}=t'_1 $.
      \begin{enumerate}
        \item If  $w \neq y$:
          \[ \lv w  {t \es x y}
            = \lv w t
            =_{L.\ref{l:level-substitution}.\ref{l:level-substitution-void}}
          \lv w  {t \msub x y} \]
        \item If $w =y$ and $x  \notin \fv{t}$:
          \begin{align*}
            \lv w  {t \es {x} y}
            &= \max( \lv w  {t}, \lv x t + \lv w y + 1)\\
            &= \max( \lv w  {t}, 1)\\
            &\geq \lv w t\\
            &=_{L.\ref{l:level-substitution}.\ref{l:level-substitution-void}}
            \lv w {t \msub {x} y}
          \end{align*}
        \item If $w = y$ and $x \in \fv t$:
          \begin{align*}
            \lv w {t \es x y}
            &= \max (\lv w t, \lv x t + \lv w y + 1)\\
            &\geq \max (\lv w t, \lv x t)\\
            &=_{L.\ref{l:level-substitution}.\ref{l:level-substitution-non-void}}
            \lv w {t \msub x y}
            \qedhere
          \end{align*}
      \end{enumerate}
  \end{itemize}
\end{enumerate}
\end{proof}

\mssubstitution*
\label{p:ms-substitution}
\begin{proof}
  By induction on $t$.
  In this proof, $\level(o)$ denotes the first element of an object
  $o \in \subobj$: $\level(\subes k n) = k$ and $\level(\subdist k) = k$.
  If  $\subes k n \in \subobj$ we also define $\snd(\subes k
    n) = n$.
  \begin{itemize}
    \item If $t = y$, then $\submeas{y} = \submeas{y \msub x p } =  \emult$ so
      the property is straightforward for any $n \in \mathbb{N}$.
    \item If $t = \lam y u$, then
      $\submeas{t \msub x p} = \submeas{u \msub x p}$
      and $\lv x t = \lv x u$.
      The  property trivially  holds by the \ih\ 
    \item If $t = u_1 u_2$, then
      we have $\submeas{t \msub x p} =
      \submeas {u_1 \msub x p} \sqcup \submeas {u_2 \msub x p}$
      and $\lv x {u_1 u_2} = \max(\lv x {u_1} , \lv x {u_2} )$.
      Let $o \in\submeas{t \msub x p}$ thus
      $o\in\submeas{u_1 \msub x p} \sqcup \submeas{u_2 \msub x p}$. 
      Suppose w.l.o.g. that $o\in\submeas{u_1 \msub x p}$.
      Let $K_1 = \lv x {u_1} \leq K$. By the \ih\ we have either
      (1) $o \in \submeas {u_1}_{\subdistname}$, 
      (2) $o \in \submeas {u_1}^{ > K_1}_{\subesname}$, or
      (3) $o = \subes k n$ where $k \leq K_1$ and $n \leq N_1$
      for some $N_1 \in \mathbb{N}$. 
      If (1) holds, then $o \in \submeas t_{\subdistname}$ and we are done. Otherwise,
      $o=  \subes  {k} n$, and  we consider two cases.
      \begin{itemize}
        \item If $k  > K$, then (2)  implies
          $o \in \submeas {u_1}^{ > K}_{\subesname}$ which implies 
          $o \in \submeas {t}^{ > K}_{\subesname}$ while
          (3)   implies $k \leq K$ which leads to a contradiction. 
        \item If $k \leq K$,  we are done.
        \end{itemize}
        Considering that $o\in\submeas{u_2 \msub x p}$ we have
        a similar result for some $N_2 \in \mathbb{N}$. We thus have the
        result for $N = \max(N_1,N_2)$.
    \item If $t = u_1 \es y {u_2}$, then
      we can assume by $\alpha$-conversion that $y \notin \fv{p}$.
      Therefore,
      \begin{align*}
        \submeas t
        &= \submeas {u_1} \sqcup (\lv y {u_1} + 1) \cdot
        \submeas {u_2} \sqcup [\subes{\lv y {u_1} + 1}{\tmsz {u_2}}] \text{ and}\\
        \submeas{t \msub x p}
        &= \submeas{u_1 \msub x p}
        \sqcup (\lv y {u_1 \msub x p} + 1) \cdot \submeas{u_2 \msub x
        p}\\
        &\sqcup \mult{\subes {\lv y {u_1 \msub x p} + 1}{\tmsz{u_2 \msub x p}}}   \\
        &=_{L.\ref{l:level-substitution}:\ref{l:level-substitution-void}} \submeas{u_1 \msub x p}
        \sqcup (\lv y {u_1} + 1) \cdot \submeas{u_2 \msub x p}
        \sqcup \mult{\subes {\lv y {u_1} + 1}{\tmsz{u_2 \msub x p}}}
      \end{align*}
      There are two cases:
      \begin{enumerate}
        \item If $x \notin \fv{u_2}$, then
          $\lv x t = \lv x {u_1}$.
          Moreover, $\submeas{u_2 \msub x p} = \submeas{u_2}$ and  $\tmsz{u_2 \msub
          x p} = \tmsz{u_2}$.  Let  $o \in \submeas{t \msub x p}$.
          \begin{itemize}
            \item If $o \in \submeas{u_1 \msub x p}$, then let $K_1 = \lv x
              {u_1} = \lv x t = K$, so that the \ih\ gives either (1) $
              o \in \submeas {u_1}_{\subdistname}$,  (2) $o \in \submeas{u_1}^{>
              K_1}_{\subesname}$, or (3) $o = \subes {k} n$ where $k \leq
              K_1$ and $n \leq N_1$ for some $N_1 \in \mathbb{N}$. If (1) holds, then $o \in \submeas t_{\subdistname}$ and we are
              done. If (2) holds, then $o \in \submeas{u_1}^{> K}_{\subesname}$ since
              $K_1 = K$, which implies $o \in \submeas{t}^{> K}_{\subesname}$ and we
              are done. Otherwise, (3) holds and $k \leq K_1 =K$ as
              required.
            \item If $o \in (\lv y {u_1} + 1) \cdot \submeas{u_2 \msub x
              p} = (\lv y {u_1} + 1) \cdot \submeas{u_2}$, then $o \in
              \submeas t = \submeas t_{\subdistname} \sqcup  \submeas
              t^{>K}_{\subesname} \sqcup
              \submeas t^{\leq K}_{\subesname}$, which particularly implies in the
              last case that $o = \subes k n$ and $k \leq
              K$. Note that, since $\submeas t^{\leq
                  K}_{\subesname}$ is finite, we can take $N_2 = \max\{\snd(o)
                \;|\; o \in \submeas t^{\leq
                  K}_{\subesname}\}$.
            \item If $o = \subes {\lv y {u_1} + 1}{\tmsz{u_2 \msub x
              p}} = \subes {\lv y {u_1} + 1}{\tmsz{u_2}}$, then $o
              \in \submeas{t}$ and thus either $o \in \submeas t^{> K}$ or $o
              \in \submeas t^{\leq K}$, which particularly implies in the
              last case that $\level(o) \leq K$ and
                $\snd(o)\leq N_2$, with $N_2$ defined as above.
            \end{itemize}
            The result then holds for $N = \max(N_1,N_2)$.
        \item If $x \in \fv{u_2}$, then $\lv x t = \max(\lv x {u_1},
          \lv y {u_1} + \lv x {u_2} +1)$. Let  $o \in \submeas{t \msub x p}$.
          \begin{itemize}
            \item If $o \in \submeas{u_1 \msub x p}$, then let $K_1 = \lv x
              {u_1} = \lv x t \leq K$, so that the \ih\ gives either (1)
              $o \in \submeas {u_1}_{\subdistname}$, (2) $o \in \submeas{u_1}^{>
              K_1}_{\subesname}$, or (3) $o = \subes k n$ where $k
              \leq K_1$ and $n \leq N_1$ for some $N_1 \in \mathbb{N}$.  If (1) holds, then $o \in \submeas t_{\subdistname}$ and we
              are done.  Otherwise $o = \subes k n$ and we consider
              two cases.
              \begin{itemize}
                \item If $k  > K$, then (2)  implies
                  $o \in \submeas{u_1}^{> K}_{\subesname}$, and thus $o \in
                  \submeas{t}^{> K}_{\subesname}$,
                  while (3)  implies $k  \leq K$
                  which leads to a contradiction. 
                \item If $k \leq  K$, then we are done. 
              \end{itemize}
            \item If $o \in (\lv y {u_1} + 1) \cdot \submeas{u_2 \msub x
              p}$, then there is $o' \in \submeas{u_2 \msub x p}$ such that
              $\level(o) = \level(o') + (\lv y {u_1} + 1)$. Let $K_2 =
              \lv x {u_2}$, so that the \ih\ gives either (1) $o' \in
              \submeas {u_2}_{\subdistname}$,  (2) $o' \in \submeas{u_2}^{>
              K_2}_{\subesname}$,
              or (3) $o' = \subes k n$ where $k \leq K_2$ and
                $n \leq N_2$ for some $N_2 \in \mathbb{N}$.  If (1)
              holds, then $o \in (\lv y {u_1} + 1) \cdot \submeas
              {u_2}_{\subdistname}$, thus $o \in \submeas t_{\subdistname}$ and we are done.
              If (2) holds, then $o \in (\lv y {u_1} + 1) \cdot \submeas
              {u_2}^{> K_2}_{\subesname}$ and thus $\level(o) > K_2 + (\lv y
              {u_1} + 1)$.  We consider two cases.
              \begin{itemize}
                \item If $\level(o)  > K \geq K_2 + \lv y {u_1} + 1$,
                  then (2) implies $o \in \submeas{t}^{> K}_{\subesname}$ while (3) leads
                  to a contradiction.
                \item If $\level(o) \leq  K$, then we are done. 
              \end{itemize}
            \item If $o = \subes {\lv y {u_1} + 1}{\tmsz{u_2 \msub x p}}$, then
              $\level(o) = \lv y {u_1} + 1 \leq K$ and $\snd(o) = \tmsz{u_2 \msub x p}$.
          \end{itemize}
          The result then holds for $N = \max(N_1,N_2,\tmsz{u_2 \msub x p})$.
      \end{enumerate}
    \item If $t = u_1 \dist y {u_2}$, the analysis is similar.
      \qedhere
  \end{itemize}
\end{proof}

\mspi*
\label{p:ms-stable-by-pi}
\begin{proof}
  Let $t = \ctx \fc {t_0} \rew\permlr \ctx \fc {t_1} = t'$,
  where $t_0 \rew\permlr t_1$ is a reduction step at the root
  position. We proceed by induction on $\fc$.
  We detail the base case where the context $\fc$ is $\ec$,
  by inspecting the cases where the explicit cuts are explicit substitutions,
  as the remaining cases for explicit distributors are
  similar.
  \begin{enumerate}
    \item If $t= t_0 = \lam y {t \es x u} \rew\permlr (\lam y t) \es x u = t_1  =t'$,
      where $y \notin \fv u$:
      \begin{align*}
        \submeas{t}
        &= \submeas{t \es x u} \\
        &=  \submeas t \sqcup (\lv x t + 1) \cdot \submeas u
        \sqcup \mult{\subes{\lv x t + 1}{\tmsz u}}  \\
        &= \submeas{\lam{y}{t}} \sqcup (\lv x {\lam{y}{t}} + 1) \cdot \submeas u
        \sqcup \mult{\subes {\lv x {\lam{y}{t}} + 1}{\tmsz u}} \\
        &= \submeas{t'}
      \end{align*}
    \item If $t=t_0 = t \es x u s \rew\permlr (ts) \es x u = t_1=t'$,
      where $x \notin \fv s$:
      \begin{align*}
        \submeas{t}
        &= \submeas{t \es x u} \sqcup \submeas s \\
        &= \submeas t \sqcup (\lv x t + 1) \cdot \submeas u
        \sqcup \submeas s \\
        &= \submeas{ts} \sqcup (\lv x {ts} + 1) \cdot \submeas u
        \sqcup \mult{\subes{\lv x {ts} + 1}{\tmsz u}} \\
        &= \submeas{t'}
      \end{align*}
    \item If $t=t_0 = t s \es x u \rew\permlr (ts) \es x u = t_1=t'$,
      where $x \notin \fv t$:
      \begin{align*}
        \submeas{t}
        &= \submeas t \sqcup \submeas{s \es x u} \\
        &= \submeas t \sqcup
        \submeas s \sqcup (\lv x s + 1) \cdot \submeas u
        \sqcup \mult{\subes{\lv x s + 1}{\tmsz u}} \\
        &= \submeas{ts} \sqcup (\lv x {ts} + 1) \cdot \submeas u
        \sqcup \mult{\subes{\lv x {ts} + 1}{\tmsz u}} \\
        &= \submeas{t'}
      \end{align*}
    \item If $t= t_0 = t \es y {s \es x u} \rew\permlr t \es y s \es x u = t_1=t'$,
      where $x \notin \fv t$:
      \begin{align*}
        \submeas{t}
        &= \submeas t \sqcup (\lv y t + 1) \cdot \submeas{s \es x u}
        \sqcup \mult{\subes{\lv y t + 1}{\tmsz {s \es x u}}} \\
        &= \submeas t \sqcup (\lv y t + 1) \cdot
        \left(\submeas s \sqcup (\lv x s + 1) \cdot \submeas u
        \sqcup \mult{\subes{\lv x s + 1}{\tmsz u}} \right) \\
        &\quad \sqcup \mult{\subes{\lv y t + 1}{\tmsz {s \es x u}}} \\
        &= \submeas t \sqcup (\lv y t + 1) \cdot \submeas s 
        \sqcup (\lv y t + \lv x s + 2) \cdot \submeas u \\
        &\quad \sqcup \mult{
          \subes{\lv y t + \lv x s + 2}{\tmsz u},
        \subes{\lv y t + 1}{\tmsz{s \es x u}}} \\
        &= \left( \submeas t \sqcup (\lv y t + 1) \cdot \submeas s
        \sqcup \mult{\subes{\lv y t + 1}{\tmsz{s \es x u}}} \right) \\
        &\quad \sqcup (\lv y t + \lv x s + 2) \cdot \submeas u
        \sqcup \mult{\subes{\lv y t + \lv x s + 2}{\tmsz u}} \\
        & \subgemult{} \left( \submeas t \sqcup (\lv y t + 1) \cdot \submeas s
        \sqcup \mult{\subes{\lv y t + 1}{\tmsz s}} \right) \\
        &\quad \sqcup (\lv x {t \es y s} + 1) \cdot \submeas u
        \sqcup \mult{\subes{\lv x {t \es y s} + 1}{\tmsz u}} \\
        &= \submeas{t'}
        \end{align*}
        The $\subgemult{}$ inequation is justified by the following facts:
        \begin{itemize}
          \item $\tmsz {s\es x u} > \tmsz s$.
          \item $\lv y t + \lv x s +2  =  \max(0,\lv y t + \lv x s +1)+1 = \lv x {t \es y s} +1$.
        \end{itemize}
    \end{enumerate}
    The inductive cases are straightforward.
  \end{proof}


\diamond*
\label{p:diamond}
\begin{proof}
  We split the statement above in three different properties, each one proved by induction
  on the involved relation relations.
  \begin{enumerate}
    \item If $t \rew\whdblr u$ and $t \rew\whdblr s$, then  there exists $t'$ such that $u \rew\whdblr t'$ and $s \rew\whdblr t'$.
      We consider the following cases:
      \begin{itemize}
        \item  (\whlrdbapp, \whlrdbapp)
          We then have $t = t_0 t_1$ such that $t \rew\whdblr u_0 t_1 = u$ and
          $t \rew\whdblr s_0 t_1 = s$, where $t_0 \rew \whdblr u_0$ and $t_0 \rew \whdblr s_0$.
          By the \ih\  there is $t'_0$ such that $s_0 \rew\whdblr t'_0$ and $u_0 \rew\whdblr t'_0$.
          Therefore  $s \rew\whdblr t_0't_1=t'$ and $u \rew\whdblr t'$.

        \item (\whlrdbsub, \whlrdbsub)
          We then have $t = t_0 \cut x {t_1}$ such that
          $t \rew\whdblr u_0 \cut x {t_1} = u$ and
          $t \rew\whdblr s_0 \cut x {t_1} = s$, where $t_0 \rew\whdblr u_0$ and
          $t_0 \rew\whdblr s_0$.
          By the \ih\ there is $t'_0$ such that $s_0 \rew\whdblr t'_0$ and $u_0 \rew\whdblr t_0'$.
          Therefore $s \rew\whdblr t_0' \cut x {t_1} = t'$
          and $u \rew\whdblr t'$.

        \item (\whlrdbroot, \whlrdbroot), (\whlrdbroot, \whlrdbapp), (\whlrdbroot, \whlrdbsub) and (\whlrdbapp, \whlrdbsub) are impossible cases.
      \end{itemize}

    \item If $t \rew\whslr u$ and $t \rew\whslr s$, then there exists $t'$
      such that $u \rew\whslr t'$ and $s \rew\whslr t'$.
      We consider the following cases:
      \begin{itemize}
        \item (\whlrsubroot, \whlrsubroot) Impossible since
          $u$ and $s$ are assumed to be different.
        \item (\whlrsubroot, \whlrsubsub) 
          We have $t \in \purelterms$
          then $t = t_0 \dist x {\lam y {\ctx{\llc} p \cut z
          {t_1}}}$, where $y \notin \fv{\llc}
          \cup \fv {t_1}$ and such that
          $t \rrule\nrsub \ctx{\llc}{t_0 \msub x {\lam y p}} \cut z {t_1} = u$.
          There are three cases for $t$.
          \begin{itemize}
          \item If $\cut z {t_1} = \es z {\ctx\lc w}$ or $\cut z {t_1} = \es z {\ctx\lc {p_1 p_2}}$,
            then the only possibility in each case
            is $\whlrsubroot$ on term $\ctx\llc p \es z {t_1}$. We
            then have $t \rew\whslr t_0 \dist x {\lam y
                {\ctx{\lc'}{\ctx{\llc} p \msub z {q}}}} = s$
              for some $\lc'$ and some pure term $q$.
              So that $u \rew\whslr \ctx{\lc'}{\ctx\llc{t_0 \msub x {\lam y p}} \msub z {q}} = u'$
              and $s \rew\whslr \ctx{\lc'}{\ctx\llc{t_0 \msub x {\lam y {p \msub z {q}}}}} = s'$. The equality
              $u' = s'$ holds because we can assume $z \neq y$ by $\alpha$-equivalence,
              and $z \notin \fv \llc$ by definition.
          \item If $\cut z {t_1} = \es z {\lam w {t'_1}}$, then the only possible case
            is $\whlrsubroot$ on term $\ctx\llc p \es z {\lam w {t'_1}}$. We then have
             $t = t_0 \dist x {\lam y {\ctx\llc p \es z {\lam w {t'_1}}}}
              \rew\whslr t_0 \dist x {\lam y {\ctx\llc p \dist z {\lam w {w' \es {w'} {t'_1}}}}} = s$. And we close the diagram with 
              $u \rew\whslr \ctx\llc{t_0 \msub x {\lam y p}} \dist z {\lam w {w' \es {w'} {t'_1}}} = t'$
              and $s \rew\whslr t'$.
            \item If $\cut z {t_1} = \dist z {\lam w {t_1'}}$, then we have two
              different cases:

          \begin{itemize}
          \item  If the reduction happens inside $t_1'$, then $t = t_0 \dist x {\lam y {\ctx\llc p \dist z {\lam w {t_1'}}}}
              \rew\whslr t_0 \dist x {\lam y {\ctx\llc p \dist z {\lam w {s_1}}}} = s$,
              where $t_1' \rew\whslr s_1$.
              Then we close by 
              $u \rew\whslr \ctx\llc{t_0 \msub x {\lam y p}} \dist z {\lam w {s_1}} = t'$
              and $s \rew\whslr t'$.
          \item Otherwise, the $\whlrsubroot$ case for $\ctx\llc p \dist z {\lam w {t'_1}}$ gives
              $t \rew\whslr t_0 \dist x {\lam y {\ctx{\lc}{\ctx{\llc} p \msub z {v}}}} = s$ for some $\lc$ and some value $v$.
              So that $u \rew\whslr \ctx\lc{\ctx\llc{t_0 \msub x {\lam y p}} \msub z {v}} = u'$
              and $s \rew\whslr \ctx\lc{\ctx\llc{t_0 \msub x {\lam y {p \msub z {v}}}}} = s'$. The equality
              $u' = s'$ holds because we can assume 
              $y \notin \fv{v} \cup \{z\}$ by $\alpha$-equivalence,
              and $z \notin \fv \llc$ by definition.
          \end{itemize}
          
          \end{itemize}
        \item  (\whlrsubapp, \whlrsubapp)
          We then have $t = t_0 t_1$ such that $t \rew\whslr u_0 t_1 = u$ and
          $t \rew\whslr s_0 t_1 = s$, where $t_0 \rew \whslr u_0$ and $t_0 \rew \whslr s_0$.
          By the \ih\  $s_0 \rew\whslr t'_0$ and $u_0 \rew\whslr t'_0$.
          Therefore  $u \rew\whslr t_0't_1 = t'$ and $s \rew\whslr t'$.
        \item (\whlrsubsub, \whlrsubsub)
          We have $t = t_0 \dist x {\lam y {t_1}}$ such that
          $t \rew\whslr t_0 \dist x {\lam y {u_1}}=u$ and
          $t \rew\whslr t_0 \dist x {\lam y {s_1}}=s$,
          where $t_1 \rew\whslr u_1$ and
          $t_1 \rew\whslr s_1$.
          By the \ih\ $s_1 \rew\whslr t_1'$ and $u_1 \rew\whslr t_1'$. Therefore
          $u \rew\whslr t_0 \dist x {\lam y {t_1'}} = t'$
          and $s \rew\whslr t'$.
        \item (\whlrsubroot, \whlrsubapp) and (\whlrsubsub, \whlrsubapp) are impossible cases.
      \end{itemize}

    \item If $t \rew\whdblr u$ and $t \rew\whslr s$, then there exists $t'$ such that $u \rew\whslr t'$ and $s \rew\whdblr t'$.
      We consider the following cases:
      \begin{itemize}
        \item (\whlrdbroot, \whlrsubapp) 
          We have $t = \ctx\lc{\lam x {t_0}} \cut y {t_2} t_1$ such that
          $t \rew\whdblr \ctx\lc{t_0 \es x {t_1}} \cut y {t_2} = u$.
          There are three cases for $t \rew\whslr s$.
          \begin{itemize}
            \item If $t = \ctx\lc{\lam x {t_0}} \dist y {\lam z {t'_2}} t_1
              \rew\whslr \ctx\lc{\lam x {t_0}} \dist y {\lam z {t'_3}} t_1 = s$, where $t_2=\lam z {t'_2}$ and $t'_2 \rew\whslr t'_3$, 
              then $u \rew\whslr \ctx\lc{t_0 \es x {t_1}} \dist y {\lam z {t'_3}} = t'$ and $s \rew\whdblr t'$.
            \item If $t = \ctx\lc{\lam x {t_0}} \es y {\lam z {t'_2}} t_1
              \rew\whslr \ctx\lc{\lam x {t_0}} \dist y {\lam z {w \es w {t'_2}}} t_1 = s$,
              where $t_2=\lam z {t'_2}$,
              then $u \rew\whslr \ctx\lc{t_0 \es x {t_1}} \dist y {\lam z {w \es w {t'_2}}} = t'$ and $s \rew\whdblr t'$.
            \item Otherwise, we have
              $t \rew\whslr \ctx{\lc'}{\ctx\lc{\lam x {t_0}} \msub y p} t_1 = s$,
              for some $\lc'$ and some pure term $p$.
              Therefore, $u \rew\whslr \ctx{\lc'}{\ctx\lc{t_0 \es x {t_1}} \msub y p} = t'$
              and $s \rew\whdblr t'$ because $y \notin \fv {t_1}$.
              Note that  $y$ may be free in $\lc$.
          \end{itemize}
        \item (\whlrdbapp, \whlrsubapp)
          We have $t = t_0t_1$ such that $t \rew\whdblr u_0t_1 = u$ and $t \rew\whslr s_0t_1 = s$,
          where $t_0 \rew\whdblr u_0$ and $t_0 \rew\whslr s_0$.
          By \ih\ there exists $t_0'$ such that  $s_0 \rew\whdblr t_0'$
          and $u_0 \rew\whslr t_0'$.
          Therefore, $u \rew\whslr t_0't_1 = t'$ and $s \rew\whdblr t'$.
        \item (\whlrdbsub, \whlrsubroot)
          We have $t = t_0 \cut x {t_1}$ such that $t \rew\whdblr u_0 \cut x {t_1} = u$, where $t_0 \rew\whdblr u_0$.
          If $t = t_0 \es x {\ctx\lc{\lam y {t_2}}} \rew\whslr
          \ctx\lc {t_0 \dist x {\lam y {z \es z {t_2}}}} = s$, where $t_1 = \lam y {t_2}$, then
          $s \rew\whdblr \ctx\lc {u_0 \dist x {\lam y {z \es z {t_2}}}} = t'$
          and $u \rew\whslr t'$.
          Otherwise, $t \rew\whslr \ctx\lc{t_0 \msub x p} = s$ for some $\lc$ and some pure term $p$.
          We show that $t_0 \msub x p \rew\whdblr u_0 \msub x p$
          by induction on $t_0 \rew\whdblr u_0$.
          From this, we can deduce $s \rew\whdblr \ctx\lc{u_0 \msub x p} = t'$
          and conclude because $u \rew\whslr t'$.
          \begin{enumerate}
            \item If $t_0 = \ctx{\lc'}{\lam y q} t_2 \rew\db \ctx{\lc'}{q \es y {t_2}}=u_0$,
              then w.l.o.g. we can assume by $\alpha$-conversion that
                  $y \notin \fv{p} \cup \{x\}$, then $t_0 \msub x p = \ctx{\lc'\msub x p}{\lam y q\msub x p} t_2\msub x p \rew\whdblr \ctx{\lc'\msub x p}{q \msub x p \es y {t_2 \msub x p}} = u_0 \msub x p$.
                \item If $t_0 = t_0' t_2 \rew\whdblr u_0' t_2 = u_0$
                  from $t_0' \rew\whdblr u_0' $,
                  then by the \ih\ and $\whlrdbapp$ rule we can conclude $t_0 \msub x p = t_0' \msub x p t_2 \msub x p \rew\whdblr u_0' \msub x p t_2 \msub x p = u_0 \msub x p$.
                \item If $t_0 = t_0' \cut y {t_2} \rew\whdblr u_0' \cut y
                  {t_2} = u_0$ from $t_0' \rew\whdblr u_0' $,
                  then w.l.o.g. we can assume by $\alpha$-conversion that
                  $x \neq y$, then by \ih
                  and the $\whlrdbsub$ rule we conclude  $t_0 \msub x p = t_0' \msub x p \cut y {t_2 \msub x p} \rew\whdblr u_0' \msub x p \cut y {t_2 \msub x p} = u_0 \msub x p$.
              \end{enumerate}
            \item (\whlrdbsub, \whlrsubsub)
              We have $t = t_0 \dist x {\lam y {t_1}}$ such that
              $t \rew\whdblr u_0 \dist x {\lam y {t_1}} = u$ and
              $t \rew\whslr t_0 \dist x {\lam y {s_1}} = s$,
              where $t_0 \rew\whdblr u_0$ and
              $t_1 \rew\whslr s_1$. 
              Therefore $u \rew\whslr u_0 \dist x {\lam y {s_1}} = t'$
              and $s \rew\whdblr t'$.
            \item  (\whlrdbroot, \whlrsubroot), (\whlrdbroot, \whlrsubsub), (\whlrdbapp, \whlrsubroot), (\whlrdbapp, \whlrsubsub) and (\whlrdbsub, \whlrsubapp) are impossible cases.
              \qedhere
          \end{itemize}
      \end{enumerate}
\end{proof}

\ndvprop*
\label{p:ndv_prop}
\begin{proof}
  \begin{enumerate}
    \item $x \in \ndv t$. By induction on $t$.
      \begin{itemize}
        \item $t = x$. Then we take $\ndc = \ec$.
        \item $t = t'u$. By the \ih\ there exists $\ndc'$ such that $t' = \ctxnc{\ndc'} x$.
          We then take $\ndc = \ndc' u$.
        \item $t = t' \es y u$.
         By $\alpha$-conversion we can assume $x \neq y$. Either $x \in \ndv{t'}$ or ($x \in \ndv u$ and $y \in \ndv {t'}$).
          In the first case, there exists by the \ih\ on $t'$ a context $\ndc'$ such that $t' = \ctxnc{\ndc'} x$.
          We then take $\ndc = \ndc' \es y u$.
          In the second case, there exists by the \ih\ on $t'$ a context $\ndc_1$ such that $t' = \ctxnc{\ndc_1} y$.
          By the \ih\ on $u$ we have $u = \ctxnc{\ndc_2} x$.
          We then take $\ndc = \ctxnc{\ndc_1} y \es y {\ndc_2}$.
        \item $t = t' \dist x u$.
          By the \ih\ there exists $\ndc'$ such that $t' = \ctxnc{\ndc'} x$.
          We then take $\ndc = \ndc' \dist x u$.
      \end{itemize}
    \item $t = \ctxnc\ndc x$. By induction on $\ndc$.
      \begin{itemize}
        \item $\ndc = \ec$. Then $t = x$ and $\ndv t = \{x\}$.
        \item $\ndc = \ndc' u$. Then 
          $t = t'u$ and by the \ih\ $x \in \ndv{t'}$, so $x \in \ndv t$ by definition.
        \item $\ndc = \ndc' \cut x u$. Then 
          $t = t' \cut x u$ and by the \ih\ $x \in \ndv{t'}$, so $x \in \ndv t$ by definition.
        \item $\ndc = \ctxnc{\ndc_1} y \es y {\ndc_2}$.
          Then  $t = t' \es y u$, where $y \in \fv{t'}$.
          By the \ih\ $x \in \ndv u$, so $x \in \ndv t$.
          \qedhere
      \end{itemize}
  \end{enumerate}
\end{proof}

\neednf*
\label{p:need_nf}
\begin{proof}
  We first show that $t \in \neutg$ iff $t$ is in $\ndlr$-nf and $t$ is not an answer.
  \begin{description}
    \item[$\Rightarrow$] We reason by induction on $t \in \neutg$.
      \begin{itemize}
        \item $t = x$. This case is straightforward.
        \item $t = t'u$ where $t' \in \neutg$.
          By the \ih $t'$ is
          in $\ndlr$-nf and is not an answer,
          so it is not  possible to  apply any $\db$-reduction at the root. Then
          $t$ is in $\ndlr$-nf, and since it is an application it is not
          an answer.
        \item $t = t' \cut x u$ where $t' \in \neutg$ and $x \notin
          \ndv{t'}$.  By the \ih $t'$ is in
          $\ndlr$-nf and it is not an answer.  Moreover, we
          cannot apply rules $\rew\ndlrdist$ nor $\rew\ndlrabs$ because by
          \autoref{l:ndv_prop} there is no context $\ndc$
          surrounding $x$. Then $t$ is in $\ndlr$-nf and is not an answer.
        \item $t = t' \es x u$ where $t', u \in \neutg$ and $x \in \ndv{t'}$.
          By the \ih $t'$ and $u$ are in $\ndlr$-nf and are  not answers.
          We cannot  apply rule $\rew\ndlrdist$ because
          $u$ is not an answer. Then $t$ is in $\ndlr$-nf and is not an answer.
      \end{itemize}
    \item[$\Leftarrow$] We reason by induction on $t$.
      \begin{itemize}
        \item $t = x$ is immediate.
        \item $t = t'u$. Then $t'$ is in $\ndlr$-nf and is not an answer (otherwise $\db$ would be applicable).
          By the \ih $t' \in \neutg$ and thus $t \in \neutg$.
        \item $t = t' \es x u$. Then $t'$ is in $\ndlr$-nf and is not an answer.
          By the \ih $t' \in \neutg$. There are two cases.
          If $x \notin \ndv{t'}$, then $t \in \neutg$ by
          definition and we are done.
          Otherwise $x \in \ndv{t'}$, and we get $t' = \ctxnc\ndc x$ by
          \autoref{l:ndv_prop}. Thus 
            $u$ cannot be an answer because $\rew\ndlrdist$ would apply.
            Moreveor, $u$ is in $\ndlr$-nf because
            otherwise $t$ would not be in $\ndlr$-nf.
            Thus,  $u \in \neutg$ by the \ih and
            we get $t \in \neutg$ by definition.
        \item $t = t' \dist x u$.
          We have $x \notin \ndv{t'}$, because $\rew\ndlrabs$ does not apply. By
          the \ih $t' \in \neutg$, so that $t \in \neutg$.
      \end{itemize}
  \end{description}
  Neutral terms are also normal.
  Answers are normal because the calculus is weak and they belong to the grammar $\normg$.
\end{proof}

\vskip 0cm plus 1cm 
\moreweight*
\label{p:more_weight}
\begin{proof}
  By induction on $\Phi$.
  \begin{itemize}
    \item $\Phi = \inferrule{ }{\seq {x: \mult \sigma}{x}{ \sigma}}$,
      then, $\dermeasaux \Phi m = (0,0,1) = (0,0,1) + (0, 0 + (m-n) * 0,0)$.
    \item If $\Phi = \inferrule{
        \dem {\Phi_t} {\Gamma; y:\MM}{t}{\tau}}{
        \seq{\Gamma}{\lam y t}{ \MM \ft \tau}}$, then 
      \begin{align*}
        \dermeasaux \Phi m
        &= \dermeasaux {\Phi_t} m + (1,m,0) \\
        &=_{\ih} \dermeasaux {\Phi_t} n + (0,(m-n)*\dersz{\Phi_t},0) + (1,n,0) + (0,m-n,0) \\
        &= \dermeasaux \Phi n + (0,(m-n)*\dersz{\Phi_t},0) + (0,m-n,0) \\
        &= \dermeasaux \Phi n + (0,(m-n)*(\dersz{\Phi_t}+1),0)\\
        &= \dermeasaux \Phi n + (0,(m-n)*\dersz{\Phi},0)
      \end{align*}
    \item If $\Phi = \inferrule{ }{\seq {}{\lam x t}{ \ans}}$,
      then, $\dermeasaux \Phi m = (1,m,0) = \dermeasaux \Phi n + (0,(m-n)*\dersz\Phi,0)$.
    \item If $\Phi = \inferrule{
        \dem {\Phi_t} {\Gamma}{t}{\MM \ft \tau} \and
        \dem {\Phi_u} {\Delta}{u}{\MM}}{
        \seq{\Gamma \inter \Delta}{t u}{\tau}}$, then 
      \begin{align*}
        \dermeasaux \Phi m
        &= \dermeasaux {\Phi_t} m + \dermeasaux {\Phi_u} m + (1,m,0) \\
        &=_{\ih} \dermeasaux {\Phi_t} n + (0,(m-n)*\dersz{\Phi_t},0)\\
        &\quad+ \dermeasaux {\Phi_u} n + (0,(m-n)*\dersz{\Phi_u},0) + (1,n,0) + (0,m-n,0) \\
        &= \dermeasaux \Phi n + (0,(m-n)*(\dersz{\Phi_t}+\dersz{\Phi_u}+1),0) \\
       &= \dermeasaux \Phi n + (0,(m-n)*\dersz{\Phi},0)
      \end{align*}
    \item If $\Phi = \inferrule{
        \dem {\Phi_t} {\Gamma; x:\MM}{t}{\sigma} \and
        \dem {\Phi_u} {\Delta}{u}{\MM}}{
      \seq{\Gamma \inter   \Delta}{t \cut x u}{\tau}}$, then
      \begin{align*}
        \dermeasaux \Phi m
        &= \dermeasaux {\Phi_t} m + \dermeasaux {\Phi_u}{m + \lv x t +
          \isES{\cut x u}} \\
        &=_{\ih} \dermeasaux {\Phi_t} n + (0,(m-n)*\dersz{\Phi_t},0) + \dermeasaux
                    {\Phi_u}{n + \lv x t + \isES{\cut x u}} \\
                    & \sep \sep + (0,(m-n)*\dersz{\Phi_u},0) \\
        &= \dermeasaux \Phi n + (0,(m-n)*(\dersz{\Phi_t}+\dersz{\Phi_u}),0) \\
        &= \dermeasaux \Phi n + (0,(m-n)*\dersz{\Phi},0)
      \qedhere
      \end{align*}
  \end{itemize}
\end{proof}

\partialsub*
\label{p:partial-substitution}
\begin{proof}
  By induction on $\Phi$.
  \begin{itemize}
    \item If $\Phi = \inferrule{ }{\seq {x: \mult\sigma} x \sigma}$,
      then $\MN = \mult\sigma$ and $\Psi = \dem {\Phi_u} \Delta u {\mult\sigma}$.
      So, $\dermeasaux \Psi m
      = \dermeasaux {\Phi_u} m
      = (0, 0, 1) + \dermeasaux{\Phi_u}{m + 0} - (0, 0, 1)
      = \dermeasaux \Phi m + \dermeasaux{\Phi_u}{m + \lv \ec C} - (0,0,\msetsz\MN)$.

    \item If $\Phi = \inferrule{
        \dem {\Phi'} {\Gamma; x:\MM; y:\MM_y}{\ctxnc{\fc'} x} \tau}{
        \seq{\Gamma; x:\MM}{\lam y {\ctxnc{\fc'} x}}{\MM_y \ft \tau}}$,
      then we can assume by $\alpha$-conversion
      that $y \notin \dom\Delta$ so that $(\Gam; y:\MM_y)\inter\Delta =
      \Gam \inter \Delta;y:\MM_y$. By using the \ih\ we can then construct
      $ \Psi = \inferrule{
        \dem {\Psi'} {\Gamma \inter \Delta; x:\MM \setminus \MN;
      y:\MM_y}{\ctxnc{\fc'} u}\tau}{
        \seq{\Gamma \inter \Delta; x:\MM \setminus \MN}{\lam y \ctxnc{\fc'} u}{\MM_y \ft \tau}}$.
      Then,
      \begin{align*}
        \dermeasaux \Psi m &= \dermeasaux {\Psi'} m + (1,m,0) \\
        &=_{\ih} \dermeasaux {\Phi'} m + \dermeasaux {\Phi_u}{m + \lv \ec {\fc'}} - (0,0,\msetsz\MN) + (1,m,0) \\
        &= \dermeasaux \Phi m + \dermeasaux {\Phi_u}{m + \lv \ec {\lam y {\fc'}}} - (0,0,\msetsz\MN)
      \end{align*}

    \item If $\Phi = \inferrule{ }{\seq \emptyset {\lam y \ctxnc{\fc'} x} \ans}$,
      we can build $\Psi = \inferrule{ }{\seq \emptyset {\lam y \ctxnc{\fc'} u}
      \ans}$.
      In particular, we have $\MM = \MN = \emult$, and thus
      $\Phi_u$ comes from the application of the $\many$ rule to $0$ premises, so that $\dermeasaux{\Phi_u}{m +\lv\ec\fc} = (0,0,0)$.
      We have $\dermeasaux \Phi m = \dermeasaux \Psi m = \dermeasaux \Psi m + (0,0,0) - (0,0,0)$.

    \item If $\Phi = \inferrule{
        \dem {\Phi_1} {\Gamma_1; x:\MM_1}{\ctxnc{\fc'} x}{\MM' \ft \sigma} \and
        \dem {\Phi_2} {\Gamma_2; x:\MM_2} t {\MM'}}{
        \seq{\Gamma_1 \inter \Gamma_2; x:\MM}{\ctxnc{\fc'} x t}\sigma}$,
      by \ih\ there is $\MN \sqsubseteq \MM_1$ such that we can construct
      \[ \Psi = \inferrule{
        \dem {\Psi_1} {\Gamma_1 \inter \Delta; x:\MM_1 \setminus \MN}{\ctxnc{\fc'}
        u}{\MM' \ft \sigma} \and
        \dem {\Phi_2} {\Gamma_2;x:\MM_2} t {\MM'}}{
        \seq{\Gamma_1 \inter \Gamma_2 \inter \Delta;x:\MM \setminus\MN}{\ctxnc{\fc'} u t}\sigma} \]
      because $\MM \setminus \MN = \MM_1 \setminus \MN \sqcup \MM_2$.
      We have
      \begin{align*}
        \dermeasaux \Psi m &= \dermeasaux {\Psi_1} m + \dermeasaux {\Phi_2} m + (1,m,0) \\
        &=_\ih \dermeasaux{\Phi_1} m + \dermeasaux{\Phi_u}{m + \lv \ec {\fc'}} - (0,0,\msetsz\MN) + \dermeasaux{\Phi_2} m + (1,m,0) \\
        &= \dermeasaux \Phi m + \dermeasaux{\Phi_u}{m + \lv \ec {\fc' t}} - (0,0,\msetsz\MN)
      \end{align*}

    \item If \[
        \Phi = \inferrule{
          \dem {\Phi_1} {\Gamma_1;x:\MM_1} t {\mult{\tau_i}_{\iI} \ft \sigma} \and
          \inferrule{
            \left(\dem{\Phi_2^i} {\Gamma_2^i; x:\MM_2^i}{\ctxnc{\fc'} x}{\tau_i}\right)_\iI}{
            \seq{\Gamma_2; x:\MM_2}{\ctxnc{\fc'} x}{\mult{\tau_i}_{\iI}}}}{
          \seq{\Gamma_1 \inter \Gamma_2; x:\MM}{t \ctxnc{\fc'} x} \sigma} \]
      where $\MM_2 = \sqcup_\iI \MM_2^i$ and $\Gam_2 = \inter_{\iI} \Gam_2^i$.
      \autoref{l:split} gives $\dem{\Phi_u^i} {\Delta^i} u {\MN^i}$
      such that $\MN^i \sqsubseteq \MM_2^i$ for all
      $\iI$ and $\MN = \sqcup_\iI \MN^i$. Moreover, $\dermeasaux {\Phi_u} m  = \sum_{\iI} \dermeasaux {\Phi^i_u} m$.
      By using the \ih\ we can construct
      \[ \Psi = \inferrule{
        \dem {\Phi_1} {\Gamma_1;x:\MM_1} t {\mult{\tau_i}_{\iI} \ft \sigma} \and
        \inferrule{
          \left( \dem{\Psi_2^i} {\Gamma_2^i + \Delta^i;x:\MM_2^i \setminus
        \MN^i}{\ctxnc{\fc'} u}{\tau_i}\right)_\iI}{
          \seq{\Gamma_2 \inter \Delta; x : \MM_2 \setminus \MN}{\ctxnc{\fc'} u}{\mult{\tau_i}_{\iI}}}}{
        \seq{\Gamma_1 \inter \Gamma_2 \inter \Delta; x:\MM \setminus \MN}{t \ctxnc{\fc'}
    u}\sigma} \]
      where $\MM \setminus \MN = \MM_1 \sqcup \MM_2 \setminus \MN$.
      We have
      \begin{align*}
        \dermeasaux \Psi m &= \dermeasaux {\Phi_1} m + \sum_\iI \dermeasaux {\Psi_2^i} m + (1,m,0) \\
        &=_\ih \dermeasaux{\Phi_1} m + \sum_\iI \left(\dermeasaux{\Phi_2^i} m + \dermeasaux{\Phi_u^i}{m + \lv \ec {\fc'}} - (0,0,\msetsz{\MN^i})\right) + (1,m,0) \\
        &= \dermeasaux \Phi m + \dermeasaux{\Phi_u}{m + \lv \ec {t \fc'}} - (0,0,\msetsz\MN)
      \end{align*}

    \item If $\Phi = \inferrule{
        \dem{\Phi_1} {\Gamma_1; x:\MM_1; y:\MM_y}{\ctxnc{\fc'} x} \sigma \and
        \dem {\Phi_2} {\Gamma_2;x:\MM_2} t {\MM_y}}{
        \seq{\Gamma_1 \inter \Gamma_2; x:\MM}{\ctxnc{\fc'} x \cut y t}\sigma}$,
      then we can assume by $\alpha$-conversion
      that $x \notin \fv{u}$ and $y \notin
      \fv{u}$ thus, by the Relevance
      \autoref{l:relevance}, $y \notin \dom{\Del}$ so that
      in particular $(\Gam_1; y:\MM_y)\inter \Del =
      \Gam_1\inter \Del;y:\MM_y$.
      By using the \ih\ we can then construct
      \[ \Psi = \inferrule{
        \dem {\Psi_1} {\Gamma_1 \inter \Delta; x:\MM_1 \setminus \MN;
        y:\MM_y}{\ctxnc{\fc'} u} \sigma \and
        \dem {\Phi_2} {\Gamma_2;x:\MM_2} t {\MM_y}}{
        \seq{\Gamma_1 \inter \Gamma_2 \inter \Delta; x:\MM \setminus \MN}{\ctxnc{\fc'} u \cut y t}\sigma} \]
      because $\MM \setminus \MN = \MM_1 \setminus \MN \sqcup \MM_2$.
      We have:
      \begin{align*}
        \dermeasaux \Psi m &= \dermeasaux{\Psi_1} m + \dermeasaux{\Phi_2}{m +
        \lv y {\ctxnc{\fc'} u} + \isES{\cut y t}} \\
        &=_\ih \dermeasaux{\Phi_1} m + \dermeasaux{\Phi_u}{m + \lv \ec {\fc'}}
        - (0,0,\msetsz\MN)\\
        &\qquad + \dermeasaux{\Phi_2}{m + \lv y {\ctx{\fc'} x}} +
        \isES{\cut y t} \\
        &= \dermeasaux \Phi m + \dermeasaux{\Phi_u}{m + \lv \ec {\fc' \cut y t}} - (0,0,\msetsz\MN)
      \end{align*}

    \item If \[
        \Phi = \inferrule{
          \dem {\Phi_1} {\Gamma_1;x:\MM_1;y:\mult{\tau_i}_\iI} t \sigma \and
          \inferrule{
            \left(\dem{\Phi_2^i} {\Gamma_2^i; x:\MM_2^i}{\ctxnc{\fc'} x}{\tau_i}\right)_\iI}{
            \seq{\Gamma_2; x:\MM_2}{\ctxnc{\fc'} x}{\mult{\tau_i}_\iI}}}{
          \seq{\Gamma_1 \inter \Gamma_2; x:\MM}{t \cut y {\ctxnc{\fc'} x}}\sigma} \]
      where $\MM = \MM_1 \sqcup \MM_2$, $\MM_2 = \sqcup_\iI \MM_2^i$ and $\Gam_2 = \inter_{\iI} \Gam_2^i$.
      Lem.~\ref{l:split} gives $\dem {\Phi_u^i} {\Delta^i} u {\MN^i}$
      for all $\iI$.  Moreover, $\dermeasaux {\Phi_u} m  = \sum_{\iI} \dermeasaux {\Phi^i_u} m$.
      By using the \ih\ we can construct
      \[ \Psi = \inferrule{
        \dem {\Phi_1} {\Gamma_1;x:\MM_1;y:\mult{\tau_i}_\iI} t \sigma \and
        \inferrule{
          \left(\dem{\Psi_2^i} {\Gamma_2^i \inter \Delta^i;x:\MM_2^i \setminus
        \MN^i}{\ctxnc{\fc'} u}{\tau_i}\right)_\iI}{
          \seq{\Gamma_2 \inter \Delta; x:\MM_2 \setminus \MN}{\ctxnc{\fc'} u}{ \mult{\tau_i}_\iI}}}{
        \seq{\Gamma_1 \inter \Gamma_2 \inter   \Delta; x:\MM \setminus \MN}{t \cut y
    {\ctxnc{\fc'} u}}\sigma} \]
      because $\MM \setminus \MN = \MM_1 \sqcup \MM_2 \setminus \MN$,
      where $\MN = \sqcup_\iI \MN^i$.
      We have
      \begin{align*}
        \dermeasaux \Psi m &= \dermeasaux {\Phi_1} m + \sum_\iI \dermeasaux
        {\Psi_2^i}{m + \lv y t + \isES{\cut y {\ctxnc{\fc'} u}}} \\
        &=_\ih \dermeasaux{\Phi_1} m + \sum_\iI (
        \dermeasaux{\Phi_2^i}{m + \lv y t + \isES{\cut y {\ctxnc{\fc'}
        u}}}\\
        &\qquad + \dermeasaux{\Phi_u^i}{m
      + \lv y t + \isES{\cut y {\ctxnc{\fc'} u}} + \lv \ec {\fc'}} - (0,0,\msetsz{\MN^i})) \\
        &= \dermeasaux \Phi m + \dermeasaux{\Phi_u}{m + \lv \ec {t \cut y {\fc'}}} - (0,0,\msetsz\MN)
      \end{align*}
      Notice that  a special case is when $y \notin \fv t$.
      Then, $I = \emptyset$, $\Gamma = \Gamma_1$, $\MN = \emult$ and $\dem
      {\Phi_u} \emptyset u \emult$ is made only of a nullary $\many$ rule. Hence, $\Phi = \Phi_1 = \Psi$.
      \qedhere
  \end{itemize}
\end{proof}

\srpermlr*
\label{p:sr_permlr}
\begin{proof}
  Let $t_0 = \ctx\fc{t_0'}$ and $t_1 = \ctx\fc{t_1'}$,
  where $t_0' \rew\permlr t_1'$ is a root step.
  We reason by induction on $\fc$.
  We first consider the base cases where $\fc = \ec$.
  \begin{enumerate}
    \item $t_0' = \lam y {t \cut x u} \rrule\permlr (\lam y t) \cut x u =t_1'$,
      where $y \notin \fv u$.
      There are two possible typing derivations.
      \begin{enumerate}
        \item The typing derivation is of the form
          \begin{mathpar}
            \Phi =
            \inferrule*[Right=\ruleCutR]{
              \dem {\Phi_t} {\Gamma';y:\MN;x:\MM} t \tau
            \and \dem{\Phi_u} {\Delta_u} u \MM}{
            \inferrule*[right=\ruleAbsR]{
            \seq{\Gamma' \inter \Delta_u;y:\MN}{t \cut x u}{\tau}}{
        \seq{\Gamma' \inter \Delta_u}{\lam y {t \cut x u}}{\MN \ft \tau}}}
      \end{mathpar}
      We construct the following derivation.
      \begin{mathpar}
        \Psi = \inferrule*[right=\ruleCutR,vcenter]{
          \inferrule*[right=\ruleAbsR]{
          \dem {\Phi_t} {\Gamma';y:\MN;x:\MM} t \tau}{
        \seq{\Gamma';x:\MM}{\lam y t}{\MN \ft \tau}} \and
      \dem {\Phi_u} {\Delta_u} u \MM}{
    \seq{\Gamma' \inter \Delta_u}{(\lam y t) \cut x u}{\MN \ft \tau}}
  \end{mathpar}
  Moreover,
  \[\begin{array}{lll}
    \dermeasaux  \Phi m & = &
    \dermeasaux{\Phi_t} m  + \dermeasaux{\Phi_u}{m+\lv x {t}+\isES{\cut x u}}+ (1,m,0) \\
                        & = &
                        \dermeasaux{\Phi_t} m + (1,m,0) +
                        \dermeasaux{\Phi_u}{m+\lv x {\lam y
                        t}+\isES{\cut x u}} \\
                        & = & \dermeasaux\Psi m.
  \end{array} \]
\item The typing derivation is of the form
  \begin{mathpar}
    \Phi = \inferrule*[right=\ruleAnsR,vcenter]{ }{
    \seq{}{\lam y {t \cut x u}}\ans}
  \end{mathpar}
  We construct the following derivation that has the same measure.
  \begin{mathpar}
    \Psi = \inferrule*[right=\ruleCutR,vcenter]{
      \inferrule*[right=\ruleAnsR]{ }{\seq{}{\lam y t}{ \ans}}
    \and \inferrule*[Right=\many]{ }{\seq{}{u}{ \emult}}}{
  \seq{}{(\lam y t) \cut x u}{ \ans}}
\end{mathpar}
\end{enumerate}

\item $t_0' = t \cut x u s \rrule\permlr (ts) \cut x u = t_1'$,
  where $x \notin \fv{s}$.
  The typing derivation is of the form
  \begin{mathpar}
    \Phi = \inferrule*[right=\ruleAppR,vcenter]{
      \inferrule*[right=\ruleCutR]{
        \dem {\Phi_t} {\Gamma';x:\MM} t {\MN \ft \sigma} \and
      \dem {\Phi_u} {\Delta_u} u \MM}{
    \seq{\Gamma' \inter \Delta_u}{t \cut x u}{\MN \ft \sigma}}
  \and \dem {\Phi_s} {\Delta_s} s \MN}{
\seq{\Gamma' \inter\Delta_u \inter\Delta_s}{t \cut x u s}{ \sigma}}
\end{mathpar}
We construct the following derivation.
\begin{mathpar}
  \Psi = \inferrule*[right=\ruleCutR,vcenter]{
    \inferrule*[right=\ruleAppR]{
      \dem {\Phi_t} {\Gamma';x:\MM} t {\MN \ft \sigma}
    \and \dem {\Phi_s} {\Delta_s} s \MN}{
  \seq{\Gamma' \inter\Delta_s;x:\MM}{ts}{\sigma}}
\and \dem {\Phi_u} {\Delta_u} u \MM}{
{\Gamma' \inter \Delta_u \inter\Delta_s}{(ts)\cut x u: \sigma}}
\end{mathpar}
Moreover, since $\lv x {t} = \lv x {ts}$,
\[ \dermeasaux \Phi m =
  \dermeasaux{\Phi_t} m + \dermeasaux{\Phi_s} m + (1,m,0) +
\dermeasaux{\Phi_u}{m+\lv x {ts} + \isES{\cut x u}} = \dermeasaux \Psi m. \]

\item $t_0' = t s \cut x u  \rrule\permlr (ts) \cut x u = t_1'$,
  where $x \notin \fv{t}$.
  Let \[ \Phi_{s \cut x u} =
    \inferrule*[Right=\many,vcenter]{
      \left( \inferrule*[right=\ruleCutR,vcenter]{
          \dem {\Phi^i_s} {\Delta_s^i;x:\MM_i} s {\rho_i}
          \and
          \inferrule*[Right=\many]{
            \left(\dem {\Phi_u^{i,j}} {\Delta_u^{i,j}} u {\delta_j}
          \right)_{\jJi}}{
      \seq{\Delta_u^i}{u}{\MM_i}}}{
    \seq{\Delta_u^i \inter \Delta_s^i}{s \cut x u}{\rho_i}}
\right)_{\iI}}{
\seq{\Delta_u \inter \Delta_s}{s \cut x u}{\MN}} \]
  The typing derivation $\Phi$ is of the form
  \begin{mathpar}
    \inferrule*[right=\ruleAppR,vcenter]{
      \dem {\Phi_t} {\Gamma'} t {\MN \ft \sigma}
      \and
    \dem{\Phi_{s \cut x u}}{\Delta_u \inter \Delta_s}{s \cut x u } \MN}{
\seq{\Gamma' \inter \Delta_u \inter \Delta_s}{t s \cut x u}{\sigma}}
\end{mathpar}
where $\MM_i=\mult{\delta_j}_{\jJi}$,
$\MN=\mult{\rho_i}_{\iI}$,
$\Delta_u^i=\inter_{\jJi} \Delta_u^{i,j}$,
$\Delta_u=\inter_{\iI} \Delta_u^i $,
and $\Delta_s=\inter_{\iI} \Delta_s^i$.

Now, let
\begin{mathpar}
  \Phi_s = \inferrule*[right=\ruleAppR,vcenter]{
    \inferrule*[Right=\many]{
    \left( \dem{\Phi_s^i} {\Delta_s^i; x:\MM_i} s {\rho_i} \right)_{\iI}}{
\seq{\Delta_s; x:\MM}{s}{\MN}}}{
\seq{\Gamma' \inter \Delta_s; x:\MM}{ts}{\sigma}}
\and \Phi_u =
\inferrule*[Right=\many,vcenter]{
\left( \dem{\Phi_u^{i,j}} {\Delta_u^{i,j}} u {\delta_j} \right)_{\jJi, \iI}}{
\seq{\Delta_u}{u}{\MM}}
\end{mathpar}
We construct the following derivation $\Psi$.
\[ \inferrule*[right=\ruleCutR]{
    \dem {\Phi_t} {\Gamma'} t {\MN \ft \sigma}
    \and \dem{\Phi_s}{\Gamma' \inter \Delta_s; x:\MM}{ts}{\sigma}
  \and \dem{\Phi_u}{\Delta_u} u \MM
}{
\seq{\Gamma' \inter \Delta_u \inter \Delta_s}{(ts) \cut x u }{ \sigma}} \]
where $\MM=\multunion_{\iI}\MM_i$, so that $\MM=\mult{\delta_j}_{\jJi, \iI}$.
Moreover, because $\lv x s = \lv x {ts}$,
\begin{align*}
  \dermeasaux \Phi m
  &= \dermeasaux{\Phi_t} m + (1,m,0)\\
  &+ \sum_{\iI} \left(\dermeasaux{\Phi_s^i} m
    + \sum_{\jJi}
  \dermeasaux{\Phi_u^{i,j}}{m + \lv x s + \isES{\cut x u}} \right)\\
  &= \dermeasaux \Psi m
\end{align*}

\item $t_0' = t \cut x {u \cut y s} \rrule\permlr t \cut x u \cut y s =t_1'$,
  where $y \notin \fv{t}$.
  Let \[ \Phi_{u \cut y s} =
      \inferrule*[Right=\many,vcenter]{
        \left( \inferrule*[right=\ruleCutR]{
            \dem {\Phi_u^i} {\Delta_u^i;y:\MN_i} u {\rho_i}
            \and \inferrule*[Right=\many]{
              \left(\dem{\Phi_s^{i,j}} {\Delta_s^{i,j}} s {\delta_j}
            \right)_{\jJi}}{
        \seq{\Delta_s^i}{s}{\MN_i}}}{
      \seq{\Delta_u^i \inter \Delta_s^i}{u \cut y s}{\rho_i}}
    \right)_{\iI}
}{\seq{\Delta_u \inter \Delta_s}{u \cut y s}{\MM}} \]
  The typing derivation $\Phi$ is of the form
  \begin{mathpar}
    \inferrule*[right=\ruleCutR,vcenter]{
      \dem {\Phi_t} {\Gamma'; x:\MM} t \sigma
      \and \dem{\Phi_{u \cut y s}}{\Delta_u \inter \Delta_s}{u \cut y s}{\MM}
}{\seq{\Gamma'\inter\Delta_u\inter\Delta_s}{t \cut x {u \cut y s}}{\sigma}}
\end{mathpar}
where $\MM = \mult{\rho_i}_{\iI}$, $\MN_i= \mult{\delta_j}_{\jJi}$,
$\Delta_u = \inter_{\iI} \Delta_u^i$,  $\Delta_s^i =  \inter_{\jJi} \Delta_s^{i,j}$, and
$\Delta_s = \inter_{\iI} \Delta_s^i$.

Now, let \[ \Phi_{t \cut x u} =
    \inferrule*[right=\ruleCutR,vcenter]{
      \dem {\Phi_t} {\Gamma'; x:\MM} t \sigma
      \and \inferrule*[Right=\many]{
        \left(
          \dem {\Phi_u^i} {\Delta_u^i;y:\MN_i} u {\rho_i}
      \right)_{\iI}}{
  \seq{\Delta_u;y:\MN}{u}{\MM}}}{
\seq{\Gamma' \inter \Delta_u; y:\MN}{t \cut x u}{\sigma}} \]
We then construct the following derivation $\Psi$.
\begin{mathpar}
  \inferrule*[right=\ruleCutR]{
    \dem{\Phi_{t \cut x u}}{\Gamma' \inter \Delta_u; y:\MN}{t \cut x u}{\sigma}
\and \inferrule*[Right=\many]{
  \left(
    \dem {\Phi_s^{i,j}} {\Delta_s^{i,j}} s {\delta_j}
\right)_{\jJi, \iI}}{
\seq{\Delta_s}{s}{\MN}}}{
\seq{\Gamma' \inter \Delta_u \inter \Delta_s}{t \cut x u \cut y s}{\sigma}}
\end{mathpar}
where $\MN = \multunion_{\iI} \MN_i$, so that
$\MN = \mult{\delta_j}_{\jJi, \iI}$. Moreover, because $y \notin \fv t$, we have that
$\lv y {t \cut x u} = \lv x t + \lv y u + \isES{\cut x u}$ if $y \in
\fv u$, and 
$\lv y {t \cut x u} = 0$ otherwise.
Now, we show that $\dermeasaux{\Phi_s^{i,j}}{m+\lv x t + \isES{\cut x u} + \lv y
u+ \isES{\cut y s}} =
\dermeasaux{\Phi_s^{i,j}}{m + \lv y {t \cut x u} + \isES{\cut y s}}$.
If $y \in \fv u$, this is immediate.
Otherwise, by the Relevance Lemma~\ref{l:relevance} we
have $J_i = \emult$ for any $i$ thus $s$ is not typed,
so that both measures are equal to (0,0,0).
Then, \begin{align*}
  \dermeasaux \Phi m
        &= \dermeasaux{\Phi_t} m
        + \sum_{\iI} \dermeasaux{\Phi_u^i}{m+\lv
          x t+\isES{\cut x u}}\\
        &\qquad + \sum_{\iI} \sum_{\jJi} \dermeasaux{\Phi_s^{i,j}}{m+\lv x t + \isES{\cut x
      u} + \lv y u + \isES{\cut y s}} \\
        & = \dermeasaux{\Phi_t} m + \sum_{\iI}
        \dermeasaux{\Phi_u^i}{m+\lv x t+\isES{\cut x u}}\\
        &\qquad + \sum_{\iI} \sum_{\jJi} \dermeasaux{\Phi_s^{i,j}}{m+\lv y {t \cut x
        u}+\isES{\cut y s}} \\
        &= \dermeasaux \Psi m
\end{align*}
\end{enumerate}

Now, we analyse all the
inductive cases:
\begin{enumerate}
  \item If $\fc = \lam x {\fc'}$, then
    we have $\sigma = \MM \ft \tau$ and $\dem {\Phi'}
    {\Gamma;x:\MM}{\ctx{\fc'}{o}} \tau$.
    By the \ih\ there is $\dem {\Psi'} {\Gamma;x:\MM}{\ctx{\fc'}{o'}} \tau$
    and therefore $\dem {\Psi} \Gamma {\lam x {\ctx{\fc'}{o'}}} \tau$.
    Moreover, $\dermeasaux \Phi m = \dermeasaux{\Phi'} m + (1,m,0) =_{\ih} \dermeasaux{\Psi'} m + (1,m,0) = \dermeasaux \Psi m$.
  \item If $\fc = \fc' u$, then
    we have $\dem {\Phi'} {\Gamma'}{\ctx{\fc'}o}{\MN \ft \sigma}$
    and $\dem {\Phi_u} \Delta u \MN$.
    By the \ih\  there is $\dem {\Psi'} {\Gamma'}{\ctx{\fc'}{o'}}{\MN \ft \sigma}$,
    so $\dem \Psi {\Gamma'\inter\Delta}{\ctx{\fc'}{o'}u} \sigma$.
    Moreover, $\dermeasaux \Phi m = \dermeasaux{\Phi'} m + \dermeasaux{\Phi_u} m + (1,m,0) =_{\ih} \dermeasaux{\Psi'} m + \dermeasaux{\Phi_u} m + (1,m,0) = \dermeasaux \Psi m$.
  \item If $\fc = u \fc'$, the case is similar.
  \item If $\fc = \fc'  \cut x u$, then
    we have $\dem {\Phi'} {\Gamma';x:\MM}{\ctx{\fc'} o} \sigma$
    and $\dem {\Phi_u} \Delta u \MM$.
    By the \ih\   there is
    $\dem {\Psi'} {\Gamma';x:\MM}{\ctx{\fc'}{o'}} \sigma$,
    so $\dem \Psi {\Gamma'\inter\Delta}{\ctx{\fc'}{o'} \cut x u} \sigma$.
    Moreover, $\dermeasaux \Phi m
    = \dermeasaux{\Phi'} m + \dermeasaux{\Phi_u}{m + \lv x t + \isES{\cut x
    u}}
    =_{\ih} \dermeasaux{\Psi'} m + \dermeasaux{\Phi_u}{m + \lv x t +
    \isES{\cut x u}}
    = \dermeasaux \Psi m$.
  \item If $\fc = u \cut x {\fc'}$, then
    we have $\dem {\Phi_u} {\Delta;x:\MM} u \sigma$
    and $\dem {\Phi'} {\Gamma'}{\ctx{\fc'}o} \MM$.
    By the \ih\  there is
    $\dem {\Psi'} {\Gamma'}{\ctx{\fc'}{o'}}{\MM}$,
    so $\dem \Psi {\Gamma'\inter \Delta}{u \cut x {\ctx{\fc'}{o'}}} \sigma$.
    Moreover, $\dermeasaux \Phi m
    = \dermeasaux{\Phi_u} m + \dermeasaux{\Phi'}{m + \lv x u + \isES{\cut x
    u}}
    =_{\ih} \dermeasaux{\Phi_u} m + \dermeasaux{\Psi'}{m + \lv x u +
    \isES{\cut x u}}
    = \dermeasaux \Psi m$.
    \qedhere
\end{enumerate}
\end{proof}

\ndlrnormtyp*
\label{p:ndlr-norm-typ}
\begin{proof}
  First, we show that if $t$ is an answer $\ctx\lc{\lam x p}$, we can type it with type $\ans$ and $\Gamma = \emptyset$.
  We reason by induction on $\lc$. If $\lc = \ec$, this is immediate.
  Otherwise, using the induction hypothesis, we build:
  \[ \inferrule*[right=\ruleCutR]{
    \seq \emptyset {\ctx\lc{\lam x p}}\ans
    \and \inferrule*[Right=\many]{ }{\seq\emptyset u \emult}}{
    \seq \emptyset {\ctx\lc{\lam x p} \cut y u} \ans} \]
  The statement is then trivial since $\Gamma = \emptyset$.
  For neutral terms, we use induction on $\neutg$ with a stronger hypothesis:
  there exists a derivation for any given type $\tau$.
  \begin{itemize}
    \item $t = x$. We can build $\dem \Phi {x:\mult\tau} x \tau$. Note that $x \in \ndv t$.
    \item $t = t'u$, where $t' \in \neutg$.
      By the \ih\ there is a derivation $\dem {\Phi'} \Gamma {t'} {\emult \ft \tau}$ verifying the statement.
      We then build:
      \[ \inferrule*[right=\ruleAppR]{
        \dem {\Phi'} \Gamma {t'} {\emult \ft \tau}
        \and \inferrule*[Right=\many]{ }{\seq \emptyset u \emult}}{
        \seq{\Gamma}{t'u} \tau} \]
      The statement holds by the \ih\ because $\ndv t = \ndv{t'}$.
    \item $t = t' \cut x u$, where $t' \in \neutg$.
      By the \ih\ there is a derivation $\dem {\Phi'} {\Gamma_{t'}} {t'} \tau$
      verifying the statement.
      Let $\Gamma_{t'} = \Gamma'; x:\mult{\sigma_i}_\iI$.
      There are two cases.
      \begin{itemize}
        \item If $x \notin \ndv{t'}$, then by the \ih\ $I = \emult$.
          We can then build the following derivation.
          \[ \inferrule*[right=\ruleCutR]{
            \dem {\Phi'} {\Gamma'}{t': \tau}
            \and \inferrule*[Right=\many]{ }{
              \seq \emptyset u \emult}}{
            \seq{\Gamma'}{t' \cut x u} \tau} \]
          The  property holds for $\Gamma = \Gamma'$ because $\ndv t = \ndv{t'}$.
        \item Otherwise, $t = t' \es x u$, and $u \in \neutg$. We apply
          the \ih\ on u.
          There are derivations $\dem {\Phi_u^i} {\Delta_i} u {\sigma_i}$.
          We take $\Gamma = \Gamma_{t'} \inter_\iI \Delta_i$ and we  build:
          \[ \inferrule*[right=\ruleCutR]{
            \dem {\Phi'} {\Gamma_{t'}}{t'} \tau
            \and \inferrule*[Right=\many]{
              (\dem {\Phi_u^i} {\Delta_i} u {\sigma_i})}{
              \seq{\inter_\iI \Delta_i} u {\mult{\sigma_i}_\iI}}}{
            \seq \Gamma {t' \es x u}\tau} \]
          where $\Gamma = \Gamma' \inter   _\iI \Delta_i$.
          Moreover, $\ndv t = (\ndv{t'} \setminus x) \cup \ndv u$ so the second property holds on $\Gamma$ by the two induction hypothesis.
          \qedhere
      \end{itemize}
  \end{itemize}
\end{proof}

\partialasub*
\label{p:partial-anti-substitution}
\begin{proof}
  By induction on the structure of $\fc$.
  \begin{itemize}
    \item If $\fc = \ec$ then the property trivially holds taking 
      $\Gamma' = \emptyset$, $\Delta = \Gamma$, $\MM = \mult{\sig}$,
      $\dem {\Phi'} {x:\mult{\sig}} x \sigma$ and $\Phi_u = \Phi$.

    \item  If $\fc = \lam y \fc'$ then $y \notin \fv u$ and by $\alpha$-conversion we can assume that $x \neq y$. There are two cases:
      \begin{enumerate}
        \item If $\Phi = \infer{
          \dem {\Phi_0} {\Gamma; y:\MM_y}{\ctxnc{\fc'} u}\tau}{
        \seq \Gamma {\lam y {\ctxnc{\fc'} u}}{\MM_y \ft \tau}}$ then by
        \ih\ there are
        $\Gamma', \Delta, \MM, \Phi'_0$ and $\Phi_u$ such that
        $\Gamma; y:\MM_y = \Gamma'_0 \inter \Del$, $\dem {\Phi'_0} {\Gamma'_0 \inter
        x:\MM}{\ctxnc{\fc'} x} \tau$ and $\dem {\Phi_u} \Delta u \MM$.
        By the Relevance \autoref{l:relevance} $y \notin \dom\Del$ thus $\Gamma'_0 = \Gamma'; y:\MM_y$.  Therefore,
        $\Gamma'_0 \inter
        x:\MM = (\Gamma' \inter x:\MM); y:\MM_y$ and
        \[
          \Phi' = \infer{\dem {\Phi'_0}
          {(\Gamma' \inter x:\MM); y:\MM_y}{\ctxnc{\fc'} x} \tau}{\seq{\Gamma' \inter
        x:\MM}{\lam y {\ctxnc{\fc'} x}} {\MM_y \ft \tau}}
      \]

    \item If $\Phi = \infer{ }{\seq\emptyset{\lam {y} \ctxnc{\fc'} u}\ans}$ then
      taking $\Gamma', \Delta = \emptyset$, $\MM = \emult$ and $\dem {\Phi_u}
      \emptyset u \emult$ we have
      \[
        \Phi' = \infer{ }{\seq\emptyset{\lam {y}
        \ctxnc{\fc'} x}{\ans}}
      \]
  \end{enumerate} 

\item  If $\fc = \fc' t$ then $\Phi = \infer{
    \dem {\Phi_1} {\Gamma_1}{\ctxnc{\fc'} u}{\MM' \ft \sigma} \and
  \dem {\Phi_2} {\Gamma_2} t {\MM'}}{
\seq{\Gamma_1 \inter \Gamma_2}{\ctxnc{\fc'} u t}\sigma}$, where $\Gamma = \Gamma_1
\inter \Gamma_2$. By \ih\ there are
$\Gamma'_1, \Delta, \MM, \Phi'_1$ and $\Phi_u$ such that
$\Gamma_1 = \Gamma'_1 \inter \Del$,
$\dem {\Phi'_1} {\Gamma'_1 \inter
x:\MM}{\ctxnc{\fc'} x}{\MM' \ft \sigma}$ and $\dem {\Phi_u} \Delta u \MM$.
Therefore, taking $\Gamma' = \Gamma'_1 \inter \Gamma_2$ we have
\[
  \Phi' = \infer{
    \dem {\Phi'_1} {\Gamma'_1 + x:\MM}{\ctxnc{\fc'} x}{\MM' \ft \sigma} \and
  \dem {\Phi_2} {\Gamma_2} t {\MM'}}{
\seq{(\Gamma'_1 \inter x:\MM) \inter \Gamma_2}{\ctxnc{\fc'} x t}\sigma}
\]
where $(\Gamma'_1 \inter x:\MM) \inter \Gamma_2 = \Gamma' \inter x:\MM$. 

\item  If $\fc =  t \fc' $ then $\Phi$ is of the form
  \[
    \infer{
      \dem {\Phi_1} {\Gamma_1} t {\mult{\tau_i}_{\iI} \ft \sigma} \and
      \infer{
      \left(\dem {\Phi_i} {\Gamma_i}{\ctxnc{\fc'} u}{\tau_i}\right)_\iI}{
  \seq{\Gamma_2}{\ctxnc{\fc'} u:\mult{\tau_i}_{\iI}}}}{
\seq{\Gamma_1 \inter \Gamma_2}{t \ctxnc{\fc'} u}\sigma} \] where $\Gamma_2 = \inter_{\iI} \Gamma_i$ and $\Gamma = \Gamma_1 \inter \Gamma_2$. There are two cases:
\begin{enumerate}
  \item If $I \neq \emptyset$ then by \ih\ $\exists \Gamma'_i$,
    $\exists \Del_i$, $\exists \MM_i$, $\exists \Phi'_i$,  $\exists \Phi_u^i$
    s.t. $\Gamma_i = \Gamma'_i \inter \Del_i$, $\dem {\Phi'_i} {\Gamma'_i \inter
    x:\MM_i}{\ctxnc{\fc'} x}{\tau_i}$ and $\dem {\Phi_u^i} {\Delta_i} u {\MM_i}$,
    for all $\iI$. Let $\Del = \inter_{\iI} \Del_i$ and $\MM = \sqcup_{\iI}\MM_i$
    then from Split \autoref{l:split} we have $\Phi_u = \infer{\left (\dem
    {\Phi_u^i} {\Delta_i} u {\MM_i} \right)_{\iI}}{\seq \Delta u \MM}$. Let $\Gamma'_2 = \inter_{\iI} \Gamma'_i$ then $\Gamma'_2 \inter \Del = \Gamma_2$ and $\Phi'$ is defined by 
    \[
      \infer{
        \dem {\Phi_1} {\Gamma_1} t {\mult{\tau_i}_{\iI} \ft \sigma} \and
        \infer{
        \left(\dem {\Phi'_i} {\Gamma'_i \inter x:\MM_i}{\ctxnc{\fc'} x}{\tau_i}\right)_\iI}{
    \seq{\Gamma'_2 \inter x:\MM}{\ctxnc{\fc'} x}{\mult{\tau_i}_{\iI}}}}{
\seq{(\Gamma_1 \inter \Gamma'_2) \inter x:\MM}{t \ctxnc{\fc'} x} \sigma} \]
where $\Gam' = \Gamma_1 \inter \Gamma'_2$.

\item If $I = \emptyset$, then $\mult{\tau_i}_{\iI} = \emult$, $\Gamma_2 =
  \emptyset$ and $\Gamma = \Gamma_1$. Therefore, taking $\Gamma' = \Gamma_1$,
  $\Delta = \emptyset$, $\MM = \emult$, $\Phi_u = \infer{}{\seq\emptyset u
  \emult}$,
  we have $\Gamma_1 = \Gamma_1 \inter x:\emult = \Gamma' \inter x:\emult$ and $\Gamma' \inter \Delta = \Gamma_1 \inter \emptyset = \Gamma$.
  We take
  \[
    \Phi' = \inferrule{
      \dem{\Phi_1} {\Gamma_1} t {\emult \ft \sigma} \and 
    \seq{\emptyset}{\ctxnc{\fc'} x}\emult}{
  \seq{\Gamma_1}{t \ctxnc{\fc'} x}\sigma}.
  \]
\end{enumerate}

\item  If $\fc = \fc'  \cut y t$ then $\Phi = \infer{
    \dem {\Phi_1} {\Gamma_1; y:\MM_y}{\ctxnc{\fc'} u}\sigma \and
  \dem {\Phi_2} {\Gamma_2} t {\MM_y}}{
\seq{\Gamma_1 \inter \Gamma_2}{\ctxnc{\fc'} u \cut y t}\sigma}$ where $\Gamma =
\Gamma_1 \inter \Gamma_2$. Moreover, $y \notin \fv u$ and by $\alpha$-conversion we
can assume that $x \neq y$.
By \ih\ there are $\Gamma'_1, \Delta, \MM, \Phi'_1$ and $\Phi_u$ such that $\Gamma_1; y:\MM_y =
\Gamma'_1 \inter \Del$, $\dem {\Phi'_1} {\Gamma'_1 \inter
x:\MM}{\ctxnc{\fc'} x} \sigma$ and $\dem {\Phi_u} \Delta u \MM$.
By the Relevance \autoref{l:relevance} $y \notin \dom\Del$ thus $\Gamma'_1 = \Gamma''; y:\MM_y$, $\Gamma'_1 \inter x:\MM = (\Gamma'' \inter x:\MM);y:\MM_y$ and $\Gamma'' \inter \Del = \Gamma_1$. Therefore, taking $\Gamma' = \Gamma'' \inter \Gamma_2$ we have
\[
  \Phi' = \infer{
    \dem {\Phi'_1} {(\Gamma'' \inter x:\MM);y:\MM_y}{\ctxnc{\fc'} x} \sigma \and
  \dem {\Phi_2} {\Gamma_2} t {\MM_y}}{
\seq{(\Gamma'' \inter x:\MM) \inter \Gamma_2}{\ctxnc{\fc'} x \cut y t}\sigma}
\]
where $(\Gamma'' \inter x:\MM) \inter \Gamma_2 = \Gamma' \inter x:\MM$. 

\item  If $\fc = t \cut y {\fc'}$ then $\Phi$ is of the form
  \[ \infer{
      \dem {\Phi_1} {\Gamma_1;y:\mult{\tau_i}_\iI} t \sigma \and
      \infer{
      \left(\dem{\Phi_i} {\Gamma_i}{\ctxnc{\fc'} u}{\tau_i}\right)_\iI}{
  \seq{\Gamma_2}{\ctxnc{\fc'} u}{\mult{\tau_i}_\iI}}}{
\seq{\Gamma_1 \inter \Gamma_2}{t \cut y {\ctxnc{\fc'} u}}\sigma} \]
where $\Gamma_2 = \inter_{\iI} \Gamma_i$ and $\Gamma = \Gamma_1 \inter \Gamma_2$. There are two cases:
\begin{enumerate}
  \item If $I \neq \emptyset$ then by \ih\ there are
    $\Gamma'_i, \Delta_i, \MM_i, \Phi'_i$ and $\Phi_u^i$
    such that $\Gamma_i = \Gamma'_i \inter \Del_i$, $\dem {\Phi'_i} {\Gamma'_i \inter
    x:\MM_i}{\ctxnc{\fc'} x}{\tau_i}$ and $\dem {\Phi_u^i} {\Delta_i} u {\MM_i}$,
    for all $\iI$. Let $\Del = \inter_{\iI} \Del_i$ and $\MM = \sqcup_{\iI}\MM_i$
    then from Split \autoref{l:split} we have $\Phi_u = \infer{\left (\dem
    {\Phi_u^i} {\Delta_i} u \MM_i \right)_{\iI}}{\seq \Delta u \MM}$. Let $\Gamma'_2 = \inter_{\iI} \Gamma'_i$ then $\Gamma'_2 \inter \Del = \Gamma_2$ and $\Phi'$ is defined by
    \[
      \infer{
        \dem {\Phi_1} {\Gamma_1; y:\mult{\tau_i}_{\iI}} t \sigma \and
        \infer{
        \left(\dem {\Phi'_i} {\Gamma'_i \inter x:\MM_i}{\ctxnc{\fc'} x}{\tau_i}\right)_\iI}{
    \seq{\Gamma'_2 \inter x:\MM}{\ctxnc{\fc'} x}{\mult{\tau_i}_{\iI}}}}{
\seq{(\Gamma_1 \inter \Gamma'_2) \inter x:\MM}{t \cut y {\ctxnc{\fc'} x}}\sigma} \]
where $\Gamma' = \Gamma_1 \inter \Gamma'_2$.

\item If $I = \emptyset$ then $\mult{\tau_i}_{\iI} = \emult$, $\Gamma_2 =
  \emptyset$ and $\Gamma = \Gamma_1$. Moreover, $y \notin \dom {\Gamma_1}$.
  Therefore, taking $\Gamma' = \Gamma_1$, $\Delta = \emptyset$, $\MM =
  \emult$, $\Phi_u = \infer{}{\seq\emptyset u \emult}$,
  we have $\Gamma_1 = \Gamma_1 \inter x:\emult = \Gamma' \inter x:\emult$ and
  $\Gamma' \inter \Delta = \Gamma_1 \inter    \emptyset = \Gamma$.
  We take
  \[
    \Phi' = \infer{\dem
      {\Phi_1} {\Gamma_1} t \sigma \and \seq\emptyset{\ctxnc{\fc'}
    x}\emult}{\seq{\Gamma_1}{t \cut y {\ctxnc{\fc'} x}}\sigma}
    \qedhere
  \]
\end{enumerate}
\end{itemize}
\end{proof}

\end{document}